\documentclass[a4paper,unpublished,onecolumn,11pt]{quantumarticle}
\pdfoutput=1

\usepackage{amsmath}
\usepackage{amsthm}
\usepackage{thmtools}
\usepackage{amssymb}
\usepackage{amsfonts}
\usepackage{mathtools}
\usepackage{physics}    
\usepackage{mathdots}   
\usepackage{mathrsfs}   
\usepackage{stmaryrd}   

\usepackage{tikz}
\usetikzlibrary{decorations.markings}
\usepackage{tikzit}
\usetikzlibrary{cd}

\usepackage[
  style=alphabetic,
  sorting=nyt,
  backend=biber,
  maxnames=6,
  maxcitenames=2,
  url=false,
  doi=true,
  isbn=false,
  eprint=false
]{biblatex}

\usepackage{hyperref}
\hypersetup{
  colorlinks=true,
  linkcolor=cyan,
  citecolor=gray
}
\newtheorem{theorem}{Theorem}
\newtheorem{definition}[theorem]{Definition}
\newtheorem{proposition}[theorem]{Proposition}
\newtheorem{lemma}[theorem]{Lemma}
\newtheorem{corollary}[theorem]{Corollary}

\declaretheoremstyle[
headfont=\normalfont\bfseries\color{dark-gray},
bodyfont=\normalfont,
notefont=\normalfont\bfseries\color{dark-gray},
notebraces={}{},
headpunct={.},
qed=\qedsymbol,
mdframed={
  linewidth=1.5,
  linecolor=gray,
  hidealllines=true,
  leftline=true,
  skipabove=0,
  innerleftmargin=2mm,
  innerrightmargin=0,
  innertopmargin=0,
  innerbottommargin=.7mm
}
]{line_proof}

\declaretheorem[name=Proof,style=line_proof,numbered=no]{lproof}


\newcommand{\N}{\mathbb{N}}
\newcommand{\Z}{\mathbb{Z}}

\newcommand{\R}{\mathbb{R}}
\newcommand{\C}{\mathbb{C}}

\DeclareMathSymbol{\minu}{\mathbin}{AMSa}{"39}

\newcommand{\interp}[1]{\left\llbracket #1 \right\rrbracket}
\newcommand{\ZXp}{\mathsf{ZX}_p^{\mathrm{Stab}}}
\newcommand{\Stab}{\mathsf{Stab}_p}

\newcommand{\AffCoIsoRel}{\mathsf{AffCoIsoRel}_{\Z_p}}

\newcommand{\ZXeq}{\text{\textsc{\textnormal{zx}}}_p}

\addbibresource{style/bibliography.bib}


\tikzstyle{gn}=[font={\scriptsize\boldmath}, inner sep=1mm, outer sep=-1.8mm, scale=0.8, tikzit shape=circle, draw=black, fill={zx_green}, tikzit draw=black, tikzit fill=green, tikzit category=ZX, shape=circle]
\tikzstyle{rn}=[font={\scriptsize\boldmath}, inner sep=1mm, outer sep=-1.8mm, scale=0.8, tikzit shape=circle, draw=black, fill={zx_red}, tikzit fill=red, tikzit draw=black, shape=circle, tikzit category=ZX]
\tikzstyle{had}=[fill=yellow, draw=black, shape=rectangle, tikzit category=ZX, tikzit fill=yellow, tikzit draw=black, inner sep=2.5pt, font={\scriptsize\boldmath}]
\tikzstyle{gphase}=[rounded rectangle, rounded rectangle arc length=120, fill={zx_green}, inner sep=2pt, font={\tiny\boldmath}, label distance=1mm, fill opacity=.8, text opacity=1, tikzit category=ZX]
\tikzstyle{rphase}=[rounded rectangle, rounded rectangle arc length=120, fill={zx_red}, inner sep=2pt, font={\tiny\boldmath}, label distance=1mm, fill opacity=.6, text opacity=1, tikzit category=ZX]
\tikzstyle{mphase}=[rounded rectangle, rounded rectangle arc length=120, fill=gray, inner sep=2pt, font={\tiny\boldmath}, label distance=1mm, fill opacity=.6, text opacity=1, tikzit category=ZX]
\tikzstyle{rmat}=[draw, signal, fill={zx_grey}, signal to=east, signal from=west, inner sep=1pt, minimum height=6pt, font={\scriptsize\boldmath}, tikzit category=GLA]
\tikzstyle{lmat}=[draw, signal, fill={zx_grey}, signal to=west, signal from=east, inner sep=1pt, minimum height=6pt, font={\scriptsize\boldmath}, tikzit category=GLA]
\tikzstyle{umat}=[draw, signal, fill={zx_grey}, signal to=north, signal from=south, inner sep=1pt, minimum width=6pt, font={\scriptsize\boldmath}, tikzit category=GLA]
\tikzstyle{dmat}=[draw, signal, fill={zx_grey}, signal to=south, signal from=north, inner sep=1pt, minimum width=6pt, font={\scriptsize\boldmath}, tikzit category=GLA]
\tikzstyle{graph_vertex}=[fill=black, draw=black, shape=circle, tikzit category=mbqc, minimum size=2.4mm, inner sep=.8mm]
\tikzstyle{graph_weight}=[fill=white, draw=none, shape=rectangle, tikzit category=mbqc, inner sep=2pt, scale=.8]
\tikzstyle{graph_state}=[fill=yellow, draw=none, shape=rectangle, tikzit category=ZX, rounded corners=1.3mm, minimum height=1.7cm, minimum width=1.3cm, opacity=.7, text opacity=1]
\tikzstyle{box}=[fill=white, draw=black, shape=rectangle, inner sep=2.5pt, font={\scriptsize\boldmath}]
\tikzstyle{new style 0}=[fill={rgb,255: red,191; green,191; blue,191}, draw={rgb,255: red,191; green,191; blue,191}, shape=circle, inner sep=0.5mm]

\tikzstyle{dash_edge}=[-, dashed]
\tikzstyle{new edge style 0}=[-, draw=red, fill=red]
\tikzstyle{new edge style 1}=[-, draw={rgb,255: red,191; green,191; blue,191}]

\definecolor{zx_grey}{RGB}{211,211,211}
\definecolor{zx_red}{RGB}{232,165,165}
\definecolor{zx_green}{RGB}{216,248,216}
\definecolor{dark-gray}{gray}{0.40}

\usetikzlibrary{circuits.ee.IEC}
\usetikzlibrary{shapes.gates.logic.US,shapes.gates.logic.IEC}

\newcommand{\ground}{%
	\begin{tikzpicture}[circuit ee IEC,yscale=1.0,xscale=1.0]
		\draw[solid,arrows=-] (0,1ex) to (0,0) node[anchor=center,ground,rotate=-90,xshift=.66ex] {};
	\end{tikzpicture}
}
\newcommand{\bvdots}{ \tikz[baseline, every node/.style={inner sep=0}]{ \node at (0,0){.}; \node at (0,-6pt){.}; \node at (0,6pt){.}; } }

\title{Complete ZX-calculi for the stabiliser fragment in odd prime
  dimensions}

\author{Robert I. Booth}
\affiliation{
  University of Edinburgh, United Kingdom
}
\affiliation{
  Sorbonne Universit\'e, CNRS, LIP6,
  4 place Jussieu, \mbox{F-75005} Paris, France
}
\affiliation{
  LORIA CNRS, Inria Mocqua, Universit\'e de Lorraine, \mbox{F-54000} Nancy,
  France
}
\author{Titouan Carette}
\affiliation{
  Université Paris-Saclay, Inria, CNRS, LMF, 91190, Gif-sur-Yvette, France
}

\allowdisplaybreaks

\begin{document}

\maketitle

\begin{abstract}
  We introduce a family of ZX-calculi which axiomatise the stabiliser fragment of
  quantum theory in odd prime dimensions. These calculi recover many of the nice
  features of the qubit ZX-calculus which were lost in previous proposals for
  higher-dimensional systems. We then prove that these calculi are complete, i.e.
  provide a set of rewrite rules which can be used to prove any equality of
  stabiliser quantum operations. Adding a discard construction, we obtain a
  calculus complete for mixed state stabiliser quantum mechanics in odd prime
  dimensions, and this furthermore gives a complete axiomatisation for the
  related diagrammatic language for affine co-isotropic relations. 
\end{abstract}

The ZX-calculus is a powerful yet intuitive graphical language for reasoning
about quantum computing, or, more generally, about operations on quantum
systems \cite{coecke_picturing_2017, van_de_wetering_zx-calculus_2020}. It
allows one to represent such quantum operations pictorially, and comes equipped
with a set of rules which, in principle, make it possible to derive any equality
between those pictures \cite{backens_zx-calculus_2014, jeandel_complete_2017,
  ng_universal_2017}. It has now had several applications in quantum information
processing, from MBQC \cite{hutchison_rewriting_2010, backens_there_2021},
through quantum error correction codes \cite{duncan_verifying_2014,
  de_beaudrap_zx_2017, chancellor_graphical_2018, garvie_verifying_2018}. More
recently, it has been used to obtain state-of-the-art optimisation techniques
for quantum circuits \cite{kissinger_reducing_2020, de_beaudrap_fast_2020,
  duncan_graph-theoretic_2020} and faster classical simulation algorithms for
general quantum computations \cite{kissinger_classical_2022}.

Despite its origins in categorical quantum mechanics and the diagrammatic
language for finite-dimensional linear spaces \cite{abramsky_categorical_2008,
  coecke_interacting_2011, coecke_picturing_2017} the literature on the
ZX-calculus has been concerned almost exclusively with small-dimensional quantum
systems, and even then mostly with the case of two-dimensional quantum systems,
or qubits. The qubit ZX-calculus is remarkable in its simple treatment of
stabiliser quantum mechanics, along with the fact that any diagram can be
treated purely graph-theoretically, without concern to its overall layout, and
without losing its quantum-mechanical interpretation. Those proposals that go
beyond qubits lose many of these nice features, and are significantly more
complicated than the qubit case \cite{ranchin_depicting_2014, wang_qutrit_2014,
  bian_graphical_2015, wang_qufinite_2021, wang_non-anyonic_2021}. In
particular, they eschew the prised ``Only Connectivity Matters'' (OCM)
meta-rule, often cited as one of the key features in the qubit case. In these
calculi, which can represent any linear map between the corresponding Hilbert
spaces, it is also not necessarily obvious (at least, to us) how to pick out and
work with the stabiliser fragment. 

Stabiliser quantum mechanics is a simple yet particularly important fragment of
quantum theory. While much less powerful than the full fragment---it can be
efficiently classically simulated, even in odd prime dimensions
\cite{de_beaudrap_linearized_2012}---it has seen significant study
\cite{gheorghiu_standard_2014, hausmann_consolidating_2021} and forms the basis
for a number of key methods in quantum information theory
\cite{gottesman_fault-tolerant_1999}. Operationally, it can be described as the
fragment of quantum mechanics which is obtained if one allows only state
preparation in the computational basis, and unitary operations from the
\emph{Clifford groups} \cite{gottesman_fault-tolerant_1999}. In the qubit case,
the stabiliser fragment of the ZX-calculus was proved complete in
\cite{backens_zx-calculus_2014} while ignoring global scalars, and extended to
include scalars in \cite{backens_making_2015}. A simplified calculus based on
these results but further reducing the set of axioms of the calculus was
presented in \cite{backens_simplified_2017}.

In this article, we present a simple ZX-calculus for stabiliser quantum
mechanics in odd prime dimensions, and which recovers as many of the nice
features of the qubit calculus as possible. In odd prime dimensions, stabiliser
quantum mechanics can be given a particularly nice graphical presentation, owing
to the group-theory underlying the corresponding Clifford groups
\cite{nenhauser_explicit_2002, appleby_properties_2009,
  de_beaudrap_linearized_2012}. We then give this calculus a set of rewrite
rules that is complete, i.e. rich enough to derive any equality of stabiliser
quantum operations. In particular, it is a design priority to recover OCM, and
to make explicit the stabiliser fragment and its group-theoretical
underpinnings. Adding a discard construction
\cite{DBLP:conf/icalp/CaretteJPV19}, we obtain a universal and complete calculus
for mixed state stabiliser quantum mechanics in odd prime dimensions. By
previous work \cite{comfort_graphical_2021}, this gives a complete
axiomatisation for the related diagrammatic language for affine co-isotropic
relations, while still maintaining OCM.

Although we do not do so here, these calculi can naturally be extended to
represent much larger fragments of quantum theory, up to the entire theory in
odd prime dimensions \cite{wang_qutrit_2014}. However, finding a complete
axiomatisation for such calculi will presumably be a much more complicated task,
and we leave this for future work.

\paragraph{Acknowledgements}
We would like to thank Cole Comfort and Simon Perdrix for enlightening
discussions. We also thank John van de Wetering, Boldizsar Poor, and Lia Yeh as
well as our reviewers for their numerous suggestions for improvement. RIB was
funded by the ANR VanQuTe project (ANR-17-CE24-0035), as well as the Cisco
University Research Program Fund.

\section{Stabiliser quantum mechanics in odd prime dimensions}
\label{sec:preliminaries}
Throughout this paper, \(p\) denotes an arbitrary odd prime, and \(\Z_p =
\Z/p\Z\) the ring of integers with arithmetic modulo \(p\). We also put \(\omega
\coloneqq e^{i\frac{2\pi}{p}}\), and let \(\Z_p^*\) be the group of units of
\(\Z_p\). Since \(p\) is prime, \(\Z_p\) is a field and \(\Z_p^* = \Z_p
\setminus \{0\}\) as a set. We also have need of the following definition:
\begin{equation}
  \label{eq:legendre_characteristic}
  \chi_p(x) =
  \begin{cases}
    1 \qif \text{there is no } y \in \Z_p \text{ s.t. } x = y^2; \\
    0 \quad \text{otherwise};
  \end{cases}
\end{equation}
which is just the characteristic function of the complement of the set of
squares in \(\Z_p\).

The Hilbert space of a qupit \cite{gottesman_fault-tolerant_1999,
  wang_qudits_2020} is \(\mathcal{H} = \operatorname{span}\{\ket{m} \mid m \in
\Z_p\} \cong \C^p\), and we write \(U(\mathcal{H})\) the group of unitary
operators acting on \(\mathcal{H}\). We have the following standard operators on
\(\mathcal{H}\), also known as the clock and shift operators:
\begin{equation}
  Z \ket{m} \coloneqq \omega^{m} \ket{m} \qand
  X \ket{m} \coloneqq \ket{m+1}
  \qq{for any} m \in \Z_p.
  \label{eq:pauli} 
\end{equation}
In particular, note that \(ZX = \omega XZ\).
We call any operator of the form \(\omega^k X^a Z^b\) for \(k,a,b \in \Z_p\) a
\emph{Pauli operator}. We say a Pauli operator is \emph{trivial} if it is
proportional to the identity. The collection of all Pauli operators is denoted
\(\mathscr{P}_1\) and called the \emph{Pauli group}. For \(n \in \N^*\), the
\emph{generalised Pauli group} is \(\mathscr{P}_n \coloneqq \bigotimes_{k=1}^n
\mathscr{P}_1\).

Of particular importance to us are the \emph{(generalised) Clifford groups}.
These are defined, for each \(n \in \N^*\), as the normaliser of
\(\mathscr{P}_n\) in \(U(\mathcal{H}^{\otimes n})\): \(C\) is a Clifford
operator if for any \(P \in P_n\), \(CPC^\dagger \in \mathscr{P}_n\). It is clear that
that every Pauli operator is Clifford, but there are non-Pauli Clifford operators.
Some important examples are the \emph{Hadamard} gate:
\begin{equation} 
  H\ket{m} = \frac{1}{\sqrt{d}} \sum_{n \in \Z_p} \omega^{mn} \ket{n}
  \qq{s.t.} HXH^\dagger = Z \qand HZH^\dagger = X^{-1},
  \label{eq:hadamard}
\end{equation}
and the \emph{phase} gate:
\begin{equation} 
  S\ket{m} = \omega^{2^{-1}m(m+1)} \ket{m}
  \qq{s.t.} SXS^\dagger = \omega XZ \qand SZS^\dagger = Z.
  \label{eq:phase}
\end{equation}
Yet another important example is the \emph{controlled-phase} gate, which acts on
\(\mathcal{H} \otimes \mathcal{H}\),
\begin{equation}
  E \ket{m} \ket{n} \coloneqq \omega^{mn} \ket{m}\ket{n}.
\end{equation}

It is important to emphasise a key difference between the qupit and the qubit
case: when \(p \neq 2\), none of these operators are self-inverse.
In fact, if \(Q\) is a Pauli and \(I\) the identity operator on \(\mathcal{H}\),
we have:
\begin{equation}
  Q^p = I, \quad E^p = I \otimes I \qand H^4 = I.
\end{equation}

As a side note, equations \eqref{eq:pauli}, \eqref{eq:hadamard} and
\eqref{eq:phase} imply that both \(X\) and \(Z\), and in fact every Pauli, have
spectrum \(\{\omega^k \mid k \in \Z_p\}\). As a result, we denote \(\ket{k :
  Q}\) the eigenvector of a given Pauli \(Q\) associated with eigenvalue
\(\omega^k\), and furthermore use the notation
\begin{equation}
  \ket{\underbrace{k,\dots, k}_{n\text{ times}} : Q } = \bigotimes_{k=1}^n \ket{k:Q}.
\end{equation}
It follows from equation~\eqref{eq:pauli} that we can identify \(\ket{k:Z} =
\ket{k}\).

Now, for any \(\alpha \in [0,2\pi)\), the operator \(e^{i\alpha} I\) is Clifford.
However, we want to construct calculi with a finite axiomatisation. As a result,
the diagrams in the calculus are countable and this makes it impossible for us
to represent all such phases \(e^{i\alpha}\). Unfortunately, finding a group of
phases that behaves well diagrammatically is somewhat inconvenient, and involves
some elementary number theory. For an odd prime \(p\), we consider the group of
phases given by:
\begin{itemize}
\item if \(p \equiv 1 \mod 4\),
  \begin{equation}
    \label{eq:Pp_1}
    \mathbb{P}_p \coloneqq \left\{ (-1)^s \omega^t \mid s,t \in \Z \right\};
  \end{equation}
\item if \(p \equiv 3 \mod 4\),
  \begin{equation}
    \label{eq:Pp_3}
    \mathbb{P}_p \coloneqq \left\{ i^s \omega^t \mid s,t \in \Z \right\}.
  \end{equation}
\end{itemize}
This of course covers all cases since \(p\) is odd. Then, we restrict our
attention to the \emph{reduced} Clifford group,
\begin{equation}
  \mathscr{C}_n = \left\{ \lambda U \mid \lambda \in \mathbb{P}_p, U \text{ is Clifford and special unitary} \right\}.
\end{equation}
We call \(\mathscr{C}_1^{\otimes n}\) the \emph{local Clifford group} on \(n\)
qupits. It is clear from these examples that \(\mathscr{C}_n\) is strictly
larger than \(\mathscr{C}_1^{\otimes n}\), but it turns out to not be that much
larger:
\begin{proposition}[\cite{clark_valence_2006, nenhauser_explicit_2002}]
  \label{proposition:Cn_generators}
  The reduced Clifford group \(\mathscr{C}_n\) is generated by the gate-set
  \(\{H_j, S_j, E_{j,k} \mid j,k = 1, \dots, n\}\).
\end{proposition}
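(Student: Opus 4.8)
The plan is to use the standard description of the Clifford group as an extension of a symplectic group by the Paulis and the scalars. First I would set up the symplectic representation: conjugation by $C \in \mathscr{C}_n$ permutes the Pauli operators and, being compatible with products, induces an automorphism of $\mathscr{P}_n$ modulo scalars; identifying the latter group with $\Z_p^{2n}$ via $\omega^k X^{\vec a} Z^{\vec b} \mapsto (\vec a, \vec b)$, the Pauli commutator descends to a non-degenerate alternating form that is preserved, giving a homomorphism $\Phi \colon \mathscr{C}_n \to \mathrm{Sp}_{2n}(\Z_p)$. The defining representation of $\mathscr{P}_n$ is irreducible, so by Schur's lemma $\ker \Phi$ is exactly the subgroup $K$ of scalar-times-Pauli operators lying in $\mathscr{C}_n$. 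One also checks $H_j, S_j, E_{j,k} \in \mathscr{C}_n$; the only delicate point is that their determinants must admit a $p$-th root inside $\mathbb{P}_p$, which is precisely why $\mathbb{P}_p$ is defined with a $(\bmod\ 4)$ case distinction — $\det H$ is a quadratic Gauss sum, equal to $1$ when $p \equiv 1$ and to $\pm i$ when $p \equiv 3 \pmod 4$.

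It then suffices to prove that (i) $\Phi$ maps $G := \langle H_j, S_j, E_{j,k} \rangle$ onto $\mathrm{Sp}_{2n}(\Z_p)$, and (ii) $K \subseteq G$: for any $C \in \mathscr{C}_n$, part (i) supplies a word $g \in G$ with $\Phi(g) = \Phi(C)$, whence $C g^{-1} \in K \subseteq G$ and so $C \in G$. For (i) I would use that $\mathrm{Sp}_{2n}(\Z_p)$ is generated by its symplectic transvections: reading off the conjugation actions, $\Phi(S_j)$ is a transvection, $\Phi(E_{j,k})$ a product of commuting transvections, and $\Phi(H_j)$ the symplectic Fourier rotation on the $j$-th coordinate pair, and conjugating the available transvections by the $\Phi(H_j)$ yields enough of them to generate the group; for $n = 1$ this is the classical fact that $\left(\begin{smallmatrix} 0 & -1 \\ 1 & 0 \end{smallmatrix}\right)$ and $\left(\begin{smallmatrix} 1 & 0 \\ 1 & 1 \end{smallmatrix}\right)$ generate $\mathrm{SL}_2(\Z_p)$, and the general case is the content of the cited references.

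For (ii), note that $K$ is generated by the Pauli operators $X_j, Z_j$ together with the scalars lying in $\mathscr{C}_n$. Since $H_j X_j H_j^\dagger = Z_j$, for the Paulis it is enough to exhibit each $X_j$ as an explicit word in the generators: one takes a relator of $\mathrm{Sp}_{2n}(\Z_p)$ in $\Phi(H_j), \Phi(S_j), \Phi(E_{j,k})$ and checks that the corresponding word in $H_j, S_j, E_{j,k}$ evaluates to $X_j$ up to a phase in $\mathbb{P}_p$; equivalently, the extension $1 \to \Z_p^{2n} \to \mathscr{C}_n / \langle\text{scalars}\rangle \to \mathrm{Sp}_{2n}(\Z_p) \to 1$, which splits because $p$ is odd, is not split by the classes of $H_j, S_j, E_{j,k}$, and tracking the discrepancy on a relator produces the Paulis. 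Once the Paulis are available, $X_j Z_j X_j^{-1} Z_j^{-1} = \omega^{\pm 1} I$ gives the $p$-th roots of unity, and the remaining generator of $\mathbb{P}_p$ ($-1$ when $p \equiv 1$, $i$ when $p \equiv 3 \pmod 4$) is recovered as the residual scalar of a relation such as $(H_j S_j)^3$, again a Gauss sum.

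I expect step (ii) — and within it the phase and scalar bookkeeping — to be the main obstacle: showing that the generators reach all of $K$, rather than merely a proper subgroup, is exactly where the elementary number theory (Gauss sums, the $p \bmod 4$ dichotomy) cannot be avoided, whereas step (i) is essentially classical and can largely be imported from the cited references.
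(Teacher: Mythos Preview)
The paper does not give its own proof of this proposition: it is stated with citations to \cite{clark_valence_2006, nenhauser_explicit_2002} and then used as a black box in the proof of universality. There is therefore nothing in the paper to compare your argument against beyond the choice of references.

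That said, your outline is the standard route taken in those references and is sound. The decomposition into (i) surjectivity of the symplectic image of \(\langle H_j,S_j,E_{j,k}\rangle\) onto \(\mathrm{Sp}_{2n}(\Z_p)\) and (ii) containment of the kernel \(K\) is exactly how these generation results are proved; for (i) the cited papers indeed reduce to the classical generation of \(\mathrm{SL}_2(\Z_p)\) by a rotation and a shear, together with the \(E_{j,k}\) to move between coordinate blocks. Your identification of (ii) as the place where the arithmetic of \(\mathbb{P}_p\) and Gauss sums enters is also accurate, and matches the reason the paper introduces the \(p \bmod 4\) case split in the definition of \(\mathbb{P}_p\). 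One point you could make more precise: rather than invoking non-splitness of the extension abstractly, it is cleaner to exhibit the Paulis directly, e.g.\ \(Z_j = S_j^2 \,(\text{diagonal phase})^{-1}\) or via the commutation relation \(S_j X_j S_j^{-1} X_j^{-1} \propto Z_j\) once \(X_j\) is obtained from a word in \(H_j,S_j\) whose symplectic image is trivial; after that, \(\omega = [X_j,Z_j]\) and the residual scalar of \((H_jS_j)^3\) (or equivalently \(H_j^4=I\) combined with the Gauss-sum determinant of \(H_j\)) give the remaining generator of \(\mathbb{P}_p\). With that sharpening, your proposal is a faithful reconstruction of the cited proof.
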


\emph{Stabiliser quantum mechanics} can be operationally described as the
fragment of quantum mechanics in which the only operations allowed are
initialisations and measurements in the eigenbases of Pauli operators, and
unitary operations from the generalised Clifford groups. As before, we restrict
our attention to the fragment of stabiliser quantum mechanics where only unitary
operations from the reduced Clifford groups are allowed. Scalars are then taken
from the monoid \(\mathbb{G}_p \coloneqq \left\{0, \sqrt{p^r} \lambda \mid r \in
  \Z, \lambda \in \mathbb{P}_p \right\}\). Little is lost for the description of
quantum algorithms, since we can always simplify by a global phase to make the
Clifford generators special unitary. Thus, we can embed any stabiliser circuit
into the calculus, and then calculate the relative phases of different branches
of a computation without restriction.

\begin{definition}
  The symmetric monoidal category \(\mathsf{Stab}_p\) has as objects \(\C^{pn}\)
  for each \(n \in \N\), and morphisms generated by:
  \begin{itemize}
  \item \(\C \to \C^p : \lambda \mapsto \lambda\ket{0}\);
  \item \(\C^{pn} \to \C^{pn} : \ket{\psi} \mapsto U\ket{\psi}\) for any \(U \in
    \mathscr{C}_n\);
  \item \(\C^p \to \C : \ket{\psi} \mapsto \ip{0}{\psi}\).
  \end{itemize}
  The monoidal product is given by the usual tensor product of linear spaces.
\end{definition}

It is clear that \(\mathsf{Stab}_p\) is a subcategory of the category
\(\mathsf{FLin}\) of finite dimensional \(\C\)-linear spaces; it is also a PROP.

\section{A ZX-calculus for odd prime dimensions}
\label{sec:calculus}
In this section, we present our family of ZX-calculi, with one for each odd
prime. Relying on some of the group theoretical properties of the qupit Clifford
groups, we can give a relatively simple presentation of the calculi, which
avoids the need to explicitly consider rotations in \(p\)-dimensional space,
significantly simplifing the presentation compared to previous work
\cite{ranchin_depicting_2014}. These calculi are also constructed in order to
satisfy the property of \emph{flexsymmetry}, proposed in
\cite{carette_when_2021, carette_wielding_2021}, and which allows one to recover
the OCM meta-rule. OCM is an intuitively desirable feature for the design of a
graphical language; anecdotally, it greatly simplifies the human manipulation of
diagrams, including in the proofs of this paper. More formally, it means that
the equational theory can be formalised in terms of double pushout rewriting
over graphs rather than over hypergraphs as is necessary in the more general
theory \cite{bonchi_string_2022, bonchi_string_2021, bonchi_string_2022-1}.

Another key concern is the issue of completeness, which we begin to address in
this article. Outside of qubits \cite{jeandel_complete_2017, ng_universal_2017,
  vilmart_near-optimal_2018}, there has so far been a complete axiomatisation
only for the stabiliser fragment in dimension \(p=3\) \cite{wang_qutrit_2018}.
We present an axiomatisation which is complete for the stabiliser fragment for
any odd prime \(p\), and leave the general case for future work.

\subsection{Generators}

For any odd prime \(p\), consider the symmetric monoidal category \(\ZXp\) with
objects \(\N\) and morphisms (or, diagrams) generated by:
\begin{align*}
  \tikzfig{figures/generators/id} &: 1 \to 1 & \quad\quad\quad
  \tikzfig{figures/generators/g-id} &: 1 \to 1 & \quad\quad\quad
  \tikzfig{figures/generators/r-inv} &: 1 \to 1 & \\
  \tikzfig{figures/generators/hadamard} &: 1 \to 1 & \quad
  \tikzfig{figures/generators/g-unit} &: 0 \to 1 & \quad
  \tikzfig{figures/generators/r-unit} &: 0 \to 1 & \\
  \tikzfig{figures/generators/cup} &: 0 \to 2 & \quad
  \tikzfig{figures/generators/g-counit} &: 1 \to 0 & \quad
  \tikzfig{figures/generators/r-counit} &: 1 \to 0 & \\
  \tikzfig{figures/generators/cap} &: 2 \to 0 & \quad
  \tikzfig{figures/generators/g-mult} &: 2 \to 1 & \quad
  \tikzfig{figures/generators/r-mult} &: 2 \to 1 & \\
  \tikzfig{figures/generators/braid} &: 2 \to 2 & \quad
  \tikzfig{figures/generators/g-comult} &: 1 \to 2 & \quad
  \tikzfig{figures/generators/r-comult} &: 1 \to 2 &
\end{align*}
where \(x,y \in \Z_p\). We also introduce a generator \(\star : 0 \to 0\) to
simplify the calculus; it will correspond to a scalar whose representation in
terms of the other generators depends non-trivially on the dimension \(p\).
Finally, the empty diagram \(\tikzfig{figures/generators/empty_small} : 0 \to
0\) is also a morphism in the language. Morphisms are composed by connecting
output wires to input wires, and the monoidal product is given on objects by \(n
\otimes m = n+m\) and on morphisms by vertical juxtaposition of diagrams.

We extend this elementary notation with a first piece of syntactic sugar, which
is standard for the ZX-calculus family: \emph{green spiders} are defined
inductively, for any \(m,n \in \N\), by
\begin{equation}
  \tikzfig{figures/generators/g-spider-def}
\end{equation}
It is clear that these diagrams have types \(m+1 \to 1\), \(1 \to n+1\) and \(m
\to n\) respectively. This construction can be justified by remarking that, under
the equational theory to be presented in section~\ref{ssec:axiomatisation}, the
unlabelled green fragment of the language forms a special commutative Frobenius
algebra, which therefore admits a canonical form in terms of such ``spiders''
\cite{coecke_picturing_2017}.

\emph{Red spiders} are defined analogously, with a small twist:
\begin{equation}
  \tikzfig{figures/generators/r-spider-def}
\end{equation}
The addition of these \(1 \to 1\) red vertices (called \emph{antipodes}) when
compared to green spiders mimicks a construction of the ZH-calculus
\cite{backens_zh_2019, backens_completeness_2021}, where they are sometimes
dubbed ``harvestmen''. They are crucial in order for the red spiders to be
flexsymmetric in our equational theory and thus to recover the OCM meta-rule
mentioned above.

We can then define labelled spiders by passing the label of a \(0 \to 1\)
diagram along an input wire:
\begin{equation}
  \tikzfig{figures/generators/spiders_labelled}
\end{equation}

In order to avoid clutter, we also use the following shorthand:
\begin{equation}
  \tikzfig{figures/equations/hadamard_inverse}
\end{equation}
which we shall see represents the compositional inverse of
\(\tikzfig{figures/generators/hadamard}\).

\subsection{Standard interpretation and universality}

The standard interpretation of a \(\ZXp\)-diagram is a symmetric
monoidal functor \(\interp{-} : \ZXp \to \mathsf{FLin}\) (the category
of finite-dimensional \(\C\)-linear spaces). It is defined on objects as
\(\interp{m} \coloneqq \C^{p \times m}\), and on the generators
of the morphisms as:
\begin{align*}
    \interp{\tikzfig{figures/generators/g-unit}} & = \sum_{k\in\Z_p} \omega^{2^{-1} (xk + yk^2)} \ket{k:Z} & \quad
    \interp{\tikzfig{figures/generators/g-counit}} & = \sum_{k\in\Z_p} \omega^{2^{-1} (xk + yk^2)} \bra{k:Z}  \\
    \interp{\tikzfig{figures/generators/g-comult}} & = \sum_{k\in\Z_p} \dyad{k:Z}{k,k:Z}&
    \interp{\tikzfig{figures/generators/g-mult}} &= \sum_{k\in\Z_p} \dyad{k,k:Z}{k:Z} & \\
    \interp{\tikzfig{figures/generators/g-id}} &= \sum_{k\in\Z_p} \dyad{k:Z}{k:Z} & \quad
    \interp{\tikzfig{figures/generators/r-inv}} &= \sum_{k\in\Z_p} \dyad{-k:X}{k:X} & \quad \\
    \interp{\tikzfig{figures/generators/r-unit}} &= \sum_{k\in\Z_p} \omega^{2^{-1} (xk + yk^2)} \ket{-k:X} & \quad
    \interp{\tikzfig{figures/generators/r-counit}} &= \sum_{k\in\Z_p} \omega^{2^{-1} (xk + yk^2)}\bra{k:X} & \\
    \interp{\tikzfig{figures/generators/r-comult}} &= \sum_{k\in\Z_p} \dyad{-k,-k:X}{k:X} &
    \interp{\tikzfig{figures/generators/r-mult}} &= \sum_{k\in\Z_p} \dyad{-k:X}{k,k:X} & \\
    \interp{\tikzfig{figures/generators/id}} &= \sum_{k\in\Z_p} \dyad{k:Z}{k:Z} & \quad
    \interp{\tikzfig{figures/generators/hadamard}} &= \sum_{k\in\Z_p} \dyad{k:X}{k:Z} & \\
    \interp{\tikzfig{figures/generators/cup}} &= \sum_{k\in\Z_p} \ket{kk:Z} & \quad
    \interp{\tikzfig{figures/generators/cap}} &= \sum_{k\in\Z_p} \bra{kk:Z} & \\
    \interp{\tikzfig{figures/generators/braid}} &= \sum_{k,\ell\in\Z_p} \dyad{k,\ell:Z}{\ell,k:Z} & \quad
    \interp{\scalebox{1.4}{$\star$}} &= -1 &
\end{align*}
By the functoriality of the standard interpretation, we then deduce that
\begin{equation}
  \interp{\tikzfig{figures/generators/g-spider}} = \sum_{k\in\Z_p} \omega^{2^{-1} (xk + yk^2)} \ket{k:Z}^{\otimes n} \bra{k:Z}^{\otimes m},
\end{equation}
and
\begin{equation}
  \interp{\tikzfig{figures/generators/r-spider}} = \sum_{k\in\Z_p} \omega^{2^{-1} (xk + yk^2)} \ket{-k:X}^{\otimes n} \bra{k:X}^{\otimes m}.
\end{equation}

There are a couple of peculiarities in this standard interpretation that must be
remarked upon. Firstly, note that the red \(1 \to 1\) spider is not the
identity, but rather maps the label of any input \(X\)-eigenstate to its
additive inverse. This feature is repeated on every red spider, and is another
of the side-effects of imposing that the OCM rule must be sound for this
interpretation.

Secondly, and much more subtly, we have made an unconventional choice for the
interpretation of the label of spiders as a phase: there is an additional factor
\(2^{-1}\) which one might not expect. This might seem like an inconsequential
choice which ought to be removed, but it has an important consequence for the
equational theory. There is a specific rule in the axiomatisation,
\textsc{(Gauss)}, which is sound for this choice of phases but not for the
simpler one which omits the factor \(2^{-1}\). 

\begin{theorem}[Universality] \label{thm:universality}
  The standard interpretation \(\interp{-}\) is universal for the qupit
  stabiliser fragment, i.e. for any stabiliser operation \(C :
  \C^{p m} \to \C^{p n}\) there is a diagram \(D \in \ZXp\) such that
  \(\interp{D} = C\). Put formally, the co-restriction of \(\interp{-}\) to
  \(\mathsf{Stab}_p\) is full.
\end{theorem}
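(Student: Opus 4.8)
The plan is to reduce universality to a finite check on the generators. By definition, the category \(\Stab\) is generated, as a symmetric monoidal category, by the state \(\ket{0}\), the effect \(\bra{0}\), and the reduced Clifford unitaries \(U \in \mathscr{C}_n\); and by Proposition~\ref{proposition:Cn_generators} every such \(U\) is a composite of monoidal products of \(H\), \(S\), \(E\) and identities (together with symmetries, used to route \(E\) between non-adjacent qupits), while the global phases \(\lambda \in \mathbb{P}_p\) already occur inside \(\mathscr{C}_1\) as \(\lambda\,\mathrm{id}\). Since \(\interp{-} : \ZXp \to \FLin\) is a symmetric monoidal functor, its image is closed under composition, monoidal product and the symmetry; hence it suffices to exhibit a diagram of \(\ZXp\) whose standard interpretation is each of \(\ket{0}\), \(\bra{0}\), \(H\), \(S\), \(E\), every element of \(\mathbb{P}_p\), the number \(0\), and \(\sqrt{p}\) and \(1/\sqrt{p}\). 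Fullness onto \(\Stab\) follows immediately.

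The gate and state generators are essentially read off the interpretation table. The Hadamard box represents \(H\) up to inversion, and since \(H^4 = \mathrm{id}\) both \(H\) and \(H^{-1} = H^3\) are then available. Comparing \(S\ket{m} = \omega^{2^{-1}m(m+1)}\ket{m}\) with the interpretation \(\sum_{k}\omega^{2^{-1}(xk + yk^2)}\dyad{k:Z}{k:Z}\) of the green \(1 \to 1\) spider shows that \(S\) is exactly that spider with label \((x,y) = (1,1)\); this is precisely where the unconventional factor \(2^{-1}\) in the phase convention earns its keep. The gate \(E\) is obtained by the usual Hadamard-edge construction --- a green \(1 \to 2\) spider on each of the two wires, their free legs joined through a Hadamard box --- which expands against the interpretation formulas to \(p^{-1/2}E^{\pm1}\), hence yields \(E\) up to the representable scalar \(\sqrt{p}\). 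Finally \(\ket{0} = \ket{0:Z}\) is not itself a generator, but the red \(0 \to 1\) spider with zero label interprets to \(\sqrt{p}\,\ket{0:Z}\) by a Kronecker-delta computation, so \(\ket{0}\) --- and dually \(\bra{0}\) --- is representable as soon as the scalar \(1/\sqrt{p}\) is.

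It remains to realise the scalars. The generator \(\star\) gives \(-1\); the closed green \(1\)-spider with label \((x,0)\) interprets to \(\sum_{k \in \Z_p}\omega^{2^{-1}xk}\), which is \(0\) for \(x = 1\) and, as \(x\) ranges over \(\Z_p\), realises every power of \(\omega\); and composing the zero-label green unit (interpreted as \(\sum_k \ket{k}\)) with the zero-label red counit gives \(\sqrt{p}\). The delicate case is \(p \equiv 3 \bmod 4\), where \(\mathbb{P}_p\) also contains \(i\): here the closed green \(1\)-spider with a suitable quadratic label interprets to a quadratic Gauss sum \(\sum_{k \in \Z_p}\omega^{k^2}\), which by the classical evaluation of Gauss sums equals \(\sqrt{p}\) when \(p \equiv 1\) and \(i\sqrt{p}\) when \(p \equiv 3 \bmod 4\), the Legendre symbol (the sign \((-1)^{\chi_p(\cdot)}\)) recording the correction if one uses a label whose coefficient is not a square. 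Multiplying by \(1/\sqrt{p}\) then produces \(i\) precisely in the second case, and with \(i^2 = -1\) we obtain all of \(\mathbb{P}_p\). This leaves \(1/\sqrt{p}\) itself, which one checks to be the interpretation of a short closed diagram built from a few green and red spiders and Hadamard boxes (and then corrected by an element of \(\mathbb{P}_p\)); with the above, every element of \(\mathbb{G}_p\) is realised, and the reduction of the first paragraph completes the argument.

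The step I expect to be the main obstacle is exactly this scalar bookkeeping: on the one hand one must produce \(1/\sqrt{p}\) --- and hence \(\ket{0}\) itself --- as an honest diagram, and on the other hand one must check that the quadratic Gauss sum comes out with the correct argument, so that \(i\) is available in precisely the dimensions where \(\mathbb{P}_p\) requires it (and, relatedly, that the \(2^{-1}\) twist in the phase convention does not disturb this evaluation). Everything else --- identifying \(H\), \(S\), \(E\) and \(\ket{0}\), and invoking functoriality to assemble arbitrary \(\Stab\)-morphisms from them --- is routine given the interpretation formulas and Proposition~\ref{proposition:Cn_generators}.
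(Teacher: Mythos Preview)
Your approach matches the paper's: invoke Proposition~\ref{proposition:Cn_generators} to reduce to the gate set \(\{H,S,E\}\) together with \(\ket{0}\) and \(\bra{0}\), exhibit a diagram for each, and appeal to functoriality of \(\interp{-}\). The paper's own proof is much terser than yours---it simply writes down specific diagrams (already including the requisite scalar corrections) whose interpretations are exactly \(S\), \(H\), \(E\), \(\ket{x}\), \(\bra{x}\), and then invokes functoriality---so your extended accounting of the scalar monoid \(\mathbb{G}_p\) is in the right spirit but goes beyond what the paper actually spends on this theorem.

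One small slip in that accounting: the diagram you propose for the powers of \(\omega\)---the ``closed green \(1\)-spider with label \((x,0)\)''---interprets to \(\sum_{k\in\Z_p}\omega^{2^{-1}xk}\), which equals \(p\) when \(x=0\) and \(0\) otherwise, never a nontrivial root of unity. To realise \(\omega^t\) you need a bi-coloured scalar diagram (e.g.\ a red unit capped by a green counit with a Pauli label, as appears in the paper's scalar normal forms in equations~\eqref{eq:scalar_normal_form_p1}--\eqref{eq:scalar_normal_form_p3}). This is a local fix and does not affect the overall argument.
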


\subsection{Axiomatisation}
\label{ssec:axiomatisation}

\begin{figure}
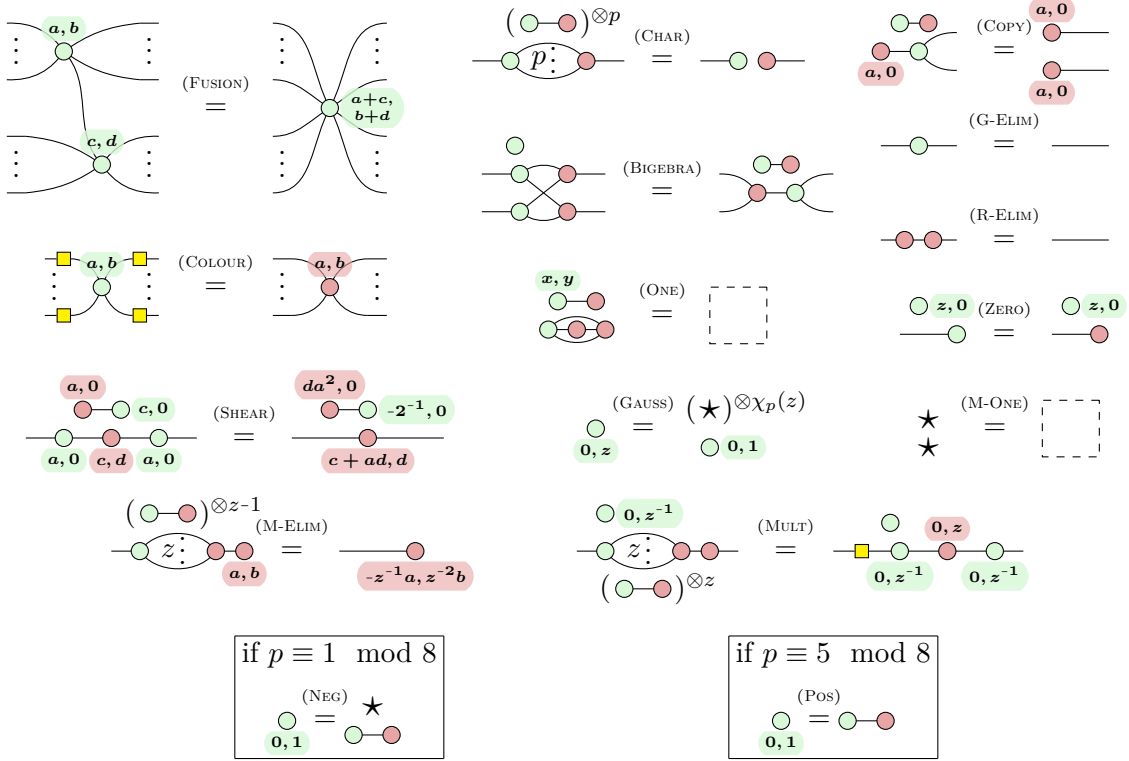

  \centering
  \tikzfig{figures/axioms_simplified}
  \caption{A presentation of the equational theory \(\ZXeq\), which is
  sound and complete for the stabiliser fragment. The equations hold for any
  \(a,b,c,d \in \Z_p\) and \(z \in \Z_p^*\). \(\chi_p\) is the characteristic
  function of the complement of the set of squares in \(\Z_p\), defined in
  equation~\eqref{eq:legendre_characteristic}}
  \label{fig:axioms}
\end{figure}

We now begin to introduce rewrite rules with which to perform purely
diagrammatic reasoning. By doing so we are in fact describing a PROP by
generators and relations \cite{baez_props_2018}, thus the swap is required to
satisfy the following properties:
\begin{equation}
 \tikzfig{figures/universality/swaprules}
\end{equation}
Note that the last equation is required to hold for any diagram $D:n\to m$. This
property states that our diagrams form a symmetric monoidal category.
Furthermore, we want this category to be self-dual compact-closed, hence the cup
and cap must satisfy:
\begin{equation}
\tikzfig{figures/universality/cupcaprule}
\end{equation}

Furthermore, as long as the connectivity of the diagram remains the same, vertices can be
freely moved around without changing the standard interpretation of the diagram.
This is a consequence of the fact that we require our generators to be
\emph{flexsymmetric}, as shown in \cite{carette_when_2021,
  carette_wielding_2021}. This amounts to imposing that all generators except
the swap satisfy:
\begin{equation}
    \tikzfig{figures/universality/flexsymmetry}
\end{equation}
where $\sigma:n+m\to n+m$ stands for any permutation of the wires involving swap
maps. We will consider all the previous rules as being purely structural and
will not explicitly state their use. Using these rules, we can in fact deduce
that both the green and red spiders (and their labelled varieties) are
themselves flexsymmetric. This means that the language follows the OCM
meta-rule, and we can formally treat any \(\ZXp\)-diagram as a
graph\footnote{Note that the graphs in question must be allowed to have loops
  and parallel edges, so are perhaps better called \emph{pseudographs} or
  \emph{multigraphs}.} whose vertices are the spiders, and whose edges are
labelled by the \(1 \to 1\) generators of the language.\footnote{There is a
  small ambiguity: \(1 \to 1\) spiders can be treated as either edges or
  vertices. When considering diagrams, it matters little which, since any given
  graph is always to be understood as one of the many equivalent
  \(\ZXp\)-diagrams constructed formally out of the generators.
  Any computer implementation of the calculus will have to carefully resolve
  this ambiguity in its internal representation.}


Figure~\ref{fig:axioms} presents the remainder of the equational theory
\(\ZXeq\), which as we shall see axiomatises the stabiliser fragment of quantum
mechanics in the qupit ZX-calculus. Firstly though, we must be sure that all of
these rules are sound for the standard interpretation, i.e. it should not be
possible to derive an equality of diagrams whose quantum mechanical
interpretations are different.

\begin{theorem}[Soundness]
  \label{thm:soundness}
  The equational theory \(\ZXeq\) is sound for \(\interp{-}\), i.e., for any
  \(A,B \in \ZXp\), \(\ZXeq \vdash A = B\) implies \(\interp{A} = \interp{B}\).
  Put formally, \(\interp{-}\) factors through the projection \(\ZXp \to \ZXp /
  \ZXeq\).
\end{theorem}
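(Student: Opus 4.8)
The plan is to verify soundness generator-by-generator and rule-by-rule, exploiting the fact that $\interp{-}$ is defined as a monoidal functor, so it suffices to check that each equation in the axiomatisation, together with the structural rules (symmetric monoidal laws, compact-closure, and flexsymmetry), is preserved under $\interp{-}$. Since composition of diagrams becomes composition of linear maps and the monoidal product becomes the tensor product, once every individual axiom is shown sound, soundness for arbitrary derivations follows by induction on the length of the derivation. Concretely, I would first record the standard-interpretation images of the green and red spiders (already given in the excerpt), note that the interpretation of the labelled $1 \to 1$ generators factors the phase through the input wire, and then treat each rule of Figure~\ref{fig:axioms} as a claimed identity between two sums over $\Z_p$ of rank-one operators expressed in the $Z$- or $X$-eigenbasis.

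First I would dispatch the purely structural requirements: the swap rules make $\interp{-}$ a symmetric monoidal functor, which is immediate since $\interp{\tikzfig{figures/generators/braid}}$ is the genuine tensor swap on $\C^p \otimes \C^p$; the cup/cap rules hold because $\interp{\tikzfig{figures/generators/cup}} = \sum_k \ket{kk:Z}$ and $\interp{\tikzfig{figures/generators/cap}} = \sum_k \bra{kk:Z}$ satisfy the snake equations; and flexsymmetry of each generator is checked by observing that conjugating its interpretation by a permutation of tensor factors leaves it invariant, which in turn follows from the symmetry of the defining sums (for the red generators this is exactly why the antipode $\ket{-k:X}\bra{k:X}$ was inserted, and I would remark on this). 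Then I would verify the spider fusion rules (green with green, red with red), the bialgebra/Hopf-type rules relating green and red, the rules governing the Hadamard generator (in particular $H$ conjugating $Z$-spiders to $X$-spiders via $HXH^\dagger = Z$, $HZH^\dagger = X^{-1}$ from equation~\eqref{eq:hadamard}), the rules for the antipode and the $\star$ scalar (using $\interp{\star} = -1$), and the colour-change and identity-removal rules; each of these reduces to an elementary equality of finite sums of characters $\omega^{\,2^{-1}(xk + yk^2)}$, typically by re-indexing the summation variable.

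The main obstacle will be the \textsc{(Gauss)} rule, which the excerpt explicitly flags as the subtle one: here soundness depends on evaluating a quadratic Gauss sum $\sum_{k \in \Z_p} \omega^{2^{-1}(xk + yk^2)}$ in closed form. I would complete the square inside the exponent — legitimate because $2$ and (for the relevant instances) $y$ are invertible in the field $\Z_p$ — to reduce to the classical quadratic Gauss sum $\sum_{k} \omega^{k^2}$, whose value is $\sqrt{p}$ when $p \equiv 1 \bmod 4$ and $i\sqrt{p}$ when $p \equiv 3 \bmod 4$, with the general quadratic residue case contributing the Legendre symbol, i.e. a sign governed by $\chi_p$ as defined in equation~\eqref{eq:legendre_characteristic}. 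This is precisely where the unconventional factor $2^{-1}$ in the phase convention, the choice of phase group $\mathbb{P}_p$ (equations~\eqref{eq:Pp_1}--\eqref{eq:Pp_3}), and the scalar monoid $\mathbb{G}_p$ all have to line up: the $\sqrt{p}$ and the fourth root of unity produced by the Gauss sum must land inside $\mathbb{G}_p$, and the $\chi_p$-dependent sign must match the right-hand side of the rule exactly. Once \textsc{(Gauss)} is verified, the remaining rules are routine character-sum manipulations, and assembling the per-rule checks with the inductive argument over derivation length completes the proof that $\interp{-}$ factors through $\ZXp \to \ZXp / \ZXeq$.
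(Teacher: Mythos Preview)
Your proposal is correct and follows essentially the same approach as the paper: the paper's proof also verifies each axiom of Figure~\ref{fig:axioms} individually as an equality of linear maps (sums over $\Z_p$ in the $Z$- or $X$-basis), invoking functoriality of $\interp{-}$ to extend soundness to all derivations. For \textsc{(Gauss)} the paper likewise reduces to the evaluation of a quadratic Gauss sum (citing a closed-form expression from the literature rather than completing the square explicitly), so your treatment of that case is aligned with theirs.
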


This set of rewriting rule also turns out to be also complete:
\begin{theorem}[Completeness]
  \label{thm:completeness}
  The equational theory \(\ZXeq\) is complete for \(\interp{-}\), i.e., for any
  \(A,B \in \ZXp\), \(\interp{A} = \interp{B}\) implies \(\ZXeq \vdash A = B\).
\end{theorem}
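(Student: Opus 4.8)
The plan is to prove completeness by reducing the problem to a known complete
axiomatisation of an algebraically-presented category, following the strategy
that has become standard in the ZX literature. The key observation from the
excerpt is that $\Stab_p$ should be equivalent (as a PROP) to a category built
from linear-algebraic data over $\Z_p$ --- concretely, the category
$\AffLagRel$ of affine Lagrangian relations over $\Z_p$, a phase group, or a
stabiliser-tableau presentation. Since $\ZXp$ is already shown to be universal
for $\Stab_p$ (Theorem~\ref{thm:universality}) and $\ZXeq$ is sound
(Theorem~\ref{thm:soundness}), completeness amounts to showing that the induced
functor $\ZXp / \ZXeq \to \Stab_p$ is faithful, equivalently injective on
hom-sets. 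The standard route to this is to exhibit a normal form: a class of
$\ZXp$-diagrams such that (i) every diagram can be rewritten using $\ZXeq$ into
normal form, and (ii) the standard interpretation is injective on normal forms.
Then $\interp{A} = \interp{B}$ forces the normal forms of $A$ and $B$ to be
equal (by (ii)), and since both $A$ and $B$ are provably equal to their normal
forms (by (i)), we get $\ZXeq \vdash A = B$.

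The concrete steps I would carry out are as follows. First, I would fix a normal
form --- most likely an "affine with phases" (AP) form analogous to Backens's
qubit stabiliser normal form: a layer of green unit/phase states, a layer
encoding an affine subspace via green spiders and Hadamards (or $E$-gates),
possibly a scalar correction using the $\star$ generator, and boundary
structure. Second, I would prove a \emph{reduction lemma}: any generator, when
composed onto a normal form, can be absorbed back into normal form using only
$\ZXeq$. This is done generator-by-generator ($H$, $S$, $E$, Pauli phases, cups,
caps, the spider (co)multiplications), using the spider fusion rules, the
colour-change/Hadamard rules, the bialgebra-type rule, and crucially the
\textsc{(Gauss)} rule highlighted in the text (which handles the quadratic
$2^{-1}(xk+yk^2)$ phases and the Gauss-sum scalars involving $\chi_p$). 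Since by
Proposition~\ref{proposition:Cn_generators} the Cliffords are generated by
$\{H,S,E\}$ together with the state/effect generators, closure under these
suffices to rewrite \emph{every} diagram into normal form. Third, I would show
the interpretation is injective on normal forms: a normal form is determined by
finitely many $\Z_p$-parameters (an affine subspace, i.e.\ a system
$Ax = b$; a quadratic form; a scalar in $\mathbb{G}_p$), and one reads these
parameters back off the linear map $\interp{D}$ --- the support of the matrix
recovers the affine subspace, the phases recover the quadratic form, and the
overall coefficient recovers the scalar. Distinct parameter tuples that are
genuinely distinct as normal forms (after quotienting by the redundancies
encoded in $\ZXeq$, e.g.\ row operations on $Ax=b$) give distinct maps.

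I expect the main obstacle to be step two, the reduction lemma, and specifically
two sub-difficulties. The first is \emph{scalars}: because this calculus tracks
the full scalar monoid $\mathbb{G}_p = \{0, \sqrt{p^r}\lambda\}$ rather than
working up to global phase, every rewrite must balance scalar factors exactly,
and the Gauss-sum identities $\sum_k \omega^{2^{-1}(xk+yk^2)} = \pm\sqrt{p}\,
(\text{root of unity})$ must be reproduced diagrammatically --- this is precisely
what the $\star$ generator and the $\chi_p$-dependent \textsc{(Gauss)} axiom are
engineered to do, and verifying that they are \emph{enough} (for all residues of
$p$ mod $4$, all values of the quadratic coefficient including the non-residue
case) is the delicate combinatorial heart of the argument. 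The second is that
the antipodes on red spiders and the sign twists in the interpretation (e.g.\
$\ket{-k:X}$) mean the bialgebra and copy rules do not hold in their naive qubit
form; one must track these $-1$'s carefully through every manipulation, which is
exactly where flexsymmetry and the structural rules pay off by letting us ignore
diagram layout. Once the reduction lemma is in hand, injectivity on normal forms
is essentially linear algebra over the field $\Z_p$ and should be routine; and
the final assembly --- "$A \xrightarrow{\ZXeq} \mathrm{NF}(A) = \mathrm{NF}(B)
\xleftarrow{\ZXeq} B$" --- is immediate.
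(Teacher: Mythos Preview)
Your proposal is a plausible high-level strategy, but it takes a substantially different route from the paper, and you have misidentified Backens's method. Backens's qubit completeness proof --- which the paper adapts almost step-for-step to odd prime $p$ --- does \emph{not} use an ``affine with phases'' tableau normal form together with a generator-absorption reduction lemma. Instead the paper works via \emph{graph states}: every stabiliser state is rewritten (Proposition~\ref{prop:GS+LC}) into GS+LC form (a $\Z_p$-edge-weighted graph state with single-qupit Clifford vertex operators), then tightened (Proposition~\ref{prop:rGS+LC}) to rGS+LC form by absorbing vertex operators into the graph using the $\Z_p$ analogues of local scaling and local complementation (Propositions~\ref{prop:local_scaling} and~\ref{prop:local_complementation}). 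Crucially, rGS+LC is \emph{not} a unique normal form; the paper never claims injectivity of $\interp{-}$ on it. Instead, given two rGS+LC diagrams with the same interpretation, the pair is further \emph{simplified} so that their ``marked'' vertices coincide (Propositions~\ref{prop:rGS+LC_simple} and~\ref{prop:marked}), after which a short semantic argument forces the underlying graphs and phases to agree. Scalars are handled entirely separately (Theorem~\ref{thm:scalar_completeness}): the main argument is run only up to invertible elementary scalars, which are normalised in a second pass --- this is where \textsc{(Gauss)}, \textsc{(M-One)} and the $\star$ generator do their real work, not inside the state-level reduction you sketch.

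Your AP/tableau route could in principle be pushed through --- qupit stabiliser states do admit an (affine subspace, quadratic form, scalar) description --- but the step you call ``routine linear algebra'' hides the actual difficulty: many distinct triples $(V,Q,\lambda)$ name the same state, and you would have to show that \emph{every} such redundancy (row operations, completing the square, symplectic changes of basis) is derivable in $\ZXeq$, which is essentially the completeness problem restated. The graph-state approach sidesteps this by never demanding a unique normal form, trading uniqueness for the pair-simplification argument; the price is having to develop the $\Z_p$-graph-state machinery (weighted H-edges, $\gamma$-scaling, $\gamma$-complementation) and the single-qupit Clifford normal form (Proposition~\ref{prop:c1_completeness}) on which all of the above rests.
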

The proof of Theorem \ref{thm:completeness} will be the object of the following
sections.

\section{Some useful structures in \(\ZXp\)}
\label{sec:structures}
The set of axioms in figure~\ref{fig:axioms} is somewhat minimalistic. In this section we present how it will be manipulated in practice.

\subsection{Elementary derivations}

While we will see that the equations in figure~\ref{fig:axioms} are enough to
derive any equality of quantum operation in the stabiliser fragment, we describe
here some rules which are derivable in \(\ZXeq\) and particularly useful. We start by the interaction between some $1\to 1$ maps:
\begin{equation}
		\tikzfig{hadanteul}
\end{equation}

The following rules, very similar to the one of \cite{bonchi_interacting_2017}, will be central:
\begin{equation}
	\tikzfig{glaeq}
\end{equation}

Here we see that the $1\to 1 $ red spider play the role of the antipode of Hopf algebra, we will then refer to it as the antipode. Finally, the pauli phases can easily move around diagrams:

\begin{equation}
	\tikzfig{paulimove}
\end{equation}

\subsection{Meta-rules}

From the equational theory follows more general patterns that we will often use as meta-rules.

\paragraph{Changing colours} The qubit ZX-calculus admits a particularly elegant
meta-rule: take any valid equation, and swap the colour of every spider while
keeping everything else identical. Then the resulting equation is also derivable
in the calculus. In fact this color swap transformation is equivalent to the functor mapping a diagram $D$ to $H^{\otimes m} D \circ H^{\otimes n}$. One might hope that such a rule would hold also in \(\ZXeq\).
Unfortunately, the picture is a little more intricate and less pretty since now the analog of the Hadamard gate is not an involution. Using
\textsc{(Colour)}, one can see what happen while Hadamrd gates are going through a diagram:

\begin{equation}
  \tikzfig{figures/equations/colour_commutation}
\end{equation}

This allows to formulate a variation of the color swap rules that holds for qudits as well:

\begin{proposition}[Colour change]
  If \(A \in \ZXp[n,m]\) is a diagram, we can consider any spider which is not the antipode as vertices linked by the following four kind of edges:
  
  \begin{center}
      \tikzfig{figures/equations/edges}
  \end{center}
  
  let \(S(A)\) be the \(\ZXp\)-diagram obtained from \(A\) by :
  \begin{enumerate}
  \item swapping the colour of every vertex;
  \item applying the following transformation on the edges: 
  
  \begin{center}
      \tikzfig{figures/equations/swapedges};
  \end{center}
  \item adding a minus to the Pauli part of every green spider.
  \end{enumerate}
  Then $\ZXeq \vdash S(A) = (\tikzfig{figures/generators/hadamard})^{\otimes m}
  \circ A \circ (\tikzfig{figures/generators/hadamard})^{\otimes n}$.
\end{proposition}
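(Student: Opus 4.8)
The plan is to prove the statement by structural induction on the diagram $A$, using the fact that $\ZXp$-diagrams are generated by the elementary generators together with composition and monoidal product. The functor $D \mapsto H^{\otimes m} \circ D \circ H^{\otimes n}$ is visibly a symmetric monoidal functor (call it $\Phi$), since the Hadamards on internal wires cancel with their inverses up to the shorthand introduced for $\tikzfig{figures/generators/hadamard}$; so it suffices to check that the combinatorial operation $S$ agrees with $\Phi$ on each generator, and that $S$ is itself functorial in the appropriate sense. Concretely, I would first record the key commutation lemma, the derived equation \textsc{(Colour)} together with \eqref{eq:colour_commutation}, which tells us exactly how a Hadamard box slides through a green or red spider: it changes its colour, rescales the labels, and introduces antipodes/minus signs on the legs. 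This is the engine of the whole argument.

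First I would fix the bookkeeping: the four ``kinds of edge'' in the statement encode the four ways a $1\to1$ generator between two spiders can look after Hadamards have been pushed in from both endpoints (plain wire, single Hadamard, Hadamard-inverse, and the composite $HH$ which is a nontrivial $1\to1$ scalar-free map in odd prime dimension), and the edge-transformation in step (2) is precisely the rule for how $H \cdot (\text{edge}) \cdot H$ simplifies using \textsc{(Colour)} and the $H^4 = \mathrm{id}$ relation, i.e. \texttt{hadanteul}. So the proof is: push one Hadamard onto each wire-end of $A$; every spider absorbs the Hadamards on its legs, flipping colour (step 1), acquiring a minus on its Pauli part (step 3 — this is where the $2^{-1}$ convention and the antipode-on-red-legs feature matter, and one must track the green-vs-red asymmetry carefully), and leaving behind one residual Hadamard per leg; then resolve the residual Hadamards living on each internal edge against each other, which is exactly the edge-rewrite of step (2). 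For the boundary legs, the residual Hadamards are exactly the $H^{\otimes n}$, $H^{\otimes m}$ on the outside, giving the claimed right-hand side.

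To make the induction go through cleanly I would actually prove a slightly stronger statement carrying the Hadamards explicitly: for every $A$, one has $\ZXeq \vdash H^{\otimes m} \circ A \circ H^{\otimes n} = S(A)$, and verify it on the generators (green/red units, counits, (co)multiplications, identities, antipode, cup, cap, braid, the $\star$ scalar, and the Hadamard generator itself), then check compatibility with $\circ$ and $\otimes$. Compatibility with $\otimes$ is immediate since $S$ acts vertex- and edge-locally. Compatibility with $\circ$ is the step that needs the edge-transformation to be an involution-like operation that is stable under plugging: when we compose $A : n \to k$ with $B : k \to m$, the $H^{\otimes k}$ produced on $A$'s outputs meets the $H^{\otimes k}$ on $B$'s inputs, and $H\cdot H$ on the connecting wires is exactly the fourth edge type — so the two locally-defined $S$'s glue to $S(B\circ A)$ provided the edge-merging table in (2) is associative/consistent with this gluing. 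The generator base cases for the coloured spiders are direct applications of \eqref{eq:colour_commutation}; the cup, cap and braid are handled by their flexsymmetry and self-duality.

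The main obstacle I expect is not any single hard computation but the combinatorial bookkeeping of signs and antipodes: getting step (3) exactly right (a single global minus on the Pauli label of each \emph{green} spider of $S(A)$, and none extra) requires carefully combining (i) the minus signs that $H X H^\dagger = Z$, $H Z H^\dagger = X^{-1}$ force on the labels — note the inverse only on one side — with (ii) the antipodes that red spiders carry on their legs by construction, and (iii) the $2^{-1}$-factor convention in the interpretation of labels, which is what makes \textsc{(Gauss)} and hence \textsc{(Colour)} sound. I would therefore first nail the single-generator lemma for one green and one red labelled spider with all legs exposed, in full generality in $a,b \in \Z_p$, and only then assemble the global statement; once the per-generator statement is pinned down, the induction is routine. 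A secondary subtlety is that the four edge types must be checked to be closed under the step-(2) transformation and under concatenation along a shared wire, so that the inductive gluing is well-defined — I would verify this by a small finite table, using \texttt{hadanteul} and $H^4=\mathrm{id}$.
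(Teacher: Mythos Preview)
Your second paragraph already contains the whole proof, and it matches the paper's (implicit) argument: the paper simply displays the commutation equations~\eqref{eq:colour_commutation} showing how a Hadamard slides past a green or red spider, and treats the proposition as immediate from applying this at every spider of the graph presentation of $A$. No explicit proof is given in the paper; the push-the-Hadamards-through description you give is exactly what is intended.

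The structural-induction framing you wrap around this, however, has a genuine error. You claim that $\Phi(D) = H^{\otimes m} \circ D \circ H^{\otimes n}$ is ``visibly a symmetric monoidal functor''. It is not: for $A : n \to k$ and $B : k \to m$,
\[
\Phi(B)\circ\Phi(A) = H^{\otimes m} B \,H^{\otimes k} H^{\otimes k}\, A\, H^{\otimes n}
= H^{\otimes m} B \,(H^2)^{\otimes k}\, A\, H^{\otimes n},
\]
and $H^2$ is the antipode, not the identity, in odd prime dimension. The paper makes this point explicitly right after the proposition: ``although $S$ is not functorial, it can be easily made into a functor, simply by post-composing with antipodes''. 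So neither $\Phi$ nor $S$ respects $\circ$ on the nose, and your inductive step ``$S(B)\circ S(A)$ glues to $S(B\circ A)$'' does not hold as stated. You partially see this when you note that the internal $H\cdot H$ becomes ``the fourth edge type'', but that observation undermines the induction rather than supporting it: the gluing law would have to insert antipodes on every composition wire, which is exactly the modification the paper mentions. There is also a definitional problem with the base cases: $S$ is specified only for diagrams already presented as spiders-and-edges, so it is not clear what $S(\text{cup})$, $S(\text{Hadamard})$, or $S(\star)$ should even mean.

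In short: drop the induction and keep your direct argument. Work at the graph level, apply the commutation equation once per spider, and read off the edge table from what remains on each internal wire. That is complete and is what the paper does.
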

Application of this rule will simply be referred to as \textsc{(Colour)}, since
the colour change rule presented in figure \ref{fig:axioms} is a subcase. Note
that, although \(S\) is not functorial, it can be easily made into a functor,
simply by post-composing with antipodes.

Note that the original colour change rule cannot hold for \(\ZXp\), since it
does if and only if the antipode is trivial. To show this, it suffices to colour
change the \textsc{(G-Elim)} rule.

\paragraph{Spider wars} Just like in the qubit case, there are rules for fusing
spiders of the same colour, and splitting spiders of opposite colours. The
fusion rules are pretty much the same, up to the use of antipodes for red
spiders:
\begin{equation}
  \tikzfig{figures/equations/spider_fusion}
\end{equation}
They are a straightforward consequence of \textsc{(Fusion)} and
lemma~\ref{lem:loop} which eliminates self-loops. The splitting rules are more
complicated, since unlike the qubit case, we now have two elementary rules for
splitting: the Hopf identity (lemma \ref{lem:hopf}) and \textsc{(Char)}. Using a
sequential application of these rules, it is straightforward to see that, for
any \(x,y \in \N\) such that \(x \geqslant y\), we must have:
\begin{equation}
  \label{eq:spider_wars}
  \tikzfig{figures/equations/spider_wars_1}
  \quad\quad\quad
  \tikzfig{figures/equations/spider_wars_2}
\end{equation}
We dub this collection of meta-rules \textsc{(Spider)}, along with deformations
resulting from flexsymmetry.

\subsection{Syntactic sugar for multi-edges}

Before we move into the proof of completeness, we introduce some syntactic sugar
to the language. Given equation~\eqref{eq:spider_wars}, in \(\ZXp\), unlike the
qubit case, spiders of opposite colours can be connected by more that one edge,
and these multi-edges cannot be simplified. We therefore add some syntactic
sugar to represent such multi-edges. These constructions add no expressiveness
to the language, and are simply used to reduce the size of some recurring
diagrams. They are shamelessly stolen from previous work
\cite{bonchi_interacting_2017, zanasi_interacting_2018,
  carette_szx-calculus:_2019, carette_colored_2020}, and we use them to obtain a
particularly nice representation of qupit graph states \cite{zhou_quantum_2003}.
Graph states play a central role in our proof of completeness, as they have in
previous completeness results of the stabiliser fragments for dimensions \(2\)
and \(3\) \cite{backens_zx-calculus_2014, duncan_graph-theoretic_2020,
  wang_qutrit_2018}. In particular, these constructions permit a nice
presentation of how graph states evolve under local Clifford operations.

Firstly, we extend \(\ZXp\) by \emph{multipliers}, which are defined inductively
by:
\begin{equation}
  \tikzfig{figures/generators/multiplier}\quad.
\end{equation}
We also define inverted multipliers, using the standard notation for graphical
languages based on symmetric monoidal categories:
\begin{equation}
  \tikzfig{figures/generators/multiplier_inverted}\quad.
\end{equation}

Now, it is straightforward using \textsc{(Char)} to see that, for any \(m \in
\N\),
\begin{equation}
  \label{eq:multiplier_modular}
  \tikzfig{figures/equations/multiplier_modular}\quad,
\end{equation}
and we can restrict the labels of multipliers to \(\Z_p\). Explicitly, then, for
\(x \in \Z_p^*\),
\begin{equation}
  \label{eq:multiplier_explicit}
  \tikzfig{figures/equations/multiplier_explicit}\quad.
\end{equation}

\begin{proposition}
  \label{prop:multiplier}
  Multipliers satisfy the following equations under \(\ZXeq\): for any
  \(x,y\in\Z_p\) and \(z\in\Z_p^*\),
  \begin{equation}
    \tikzfig{figures/equations/multiplier}
  \end{equation}
  which amounts to saying that the multipliers form a presentation of the field
  \(\Z_p\). They also satisfy the following useful copy and elimination
  identities:
  \begin{equation}
    \tikzfig{figures/equations/multiplier2}
  \end{equation}
\end{proposition}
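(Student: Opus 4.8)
The plan is to argue entirely inside \(\ZXeq\), by induction on labels, which is legitimate because equations~\eqref{eq:multiplier_modular} and~\eqref{eq:multiplier_explicit} have already reduced multiplier labels to elements of \(\Z_p\). First I would unfold the inductive definition of the multiplier so as to read off the recursion relating the multiplier labelled \(n{+}1\) to the one labelled \(n\); combined with \textsc{(Fusion)} this immediately pins down the two base cases — the multiplier labelled \(1\) is the identity wire, and the multiplier labelled \(0\) is a green counit composed with a green unit — and, by an easy induction, gives the \emph{additivity} law: a multiplier by \(x\) and a multiplier by \(y\) placed in parallel between a green comultiplication and a green multiplication equal a multiplier by \(x{+}y\), with any carry past \(p\) absorbed via~\eqref{eq:multiplier_modular}. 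Commutativity and associativity of this ``addition'' cost nothing, being inherited from the special commutative Frobenius structure of the green spiders.

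Next I would prove \emph{multiplicativity}: the sequential composite of a multiplier by \(x\) with one by \(y\) equals a multiplier by \(xy\). This goes by induction on \(y\), each step distributing the multiplier by \(x\) over the recursion above and re-merging the pieces with additivity — it is precisely diagrammatic distributivity — after which associativity and commutativity of the product, together with the rest of the ring axiomatics, follow formally. The additive inverse is then the multiplier labelled \(-1\), characterised at once from additivity and the base cases and, up to the usual bookkeeping of \textsc{(Colour)}, the same as pre- or post-composition by an antipode. Together these establish all of the equations expressing that multipliers present the field \(\Z_p\), except for the multiplicative inverse.

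The multiplicative inverse is the step where fieldness of \(\Z_p\) is genuinely invoked, and I expect it to be the main obstacle. Concretely, for \(z \in \Z_p^*\) one must show that the inverted multiplier — the compact-closed transpose of the multiplier by \(z\) — is a multiplier by \(z^{-1}\), equivalently that a multiplier by \(z\) post-composed with its transpose is the identity wire. Diagrammatically this is the statement that a weight-\(z\) connection between a green and a red node can be unwound, which only \textsc{(Char)} (with, if needed, an appeal to \textsc{(Gauss)}) can do, since \textsc{(Char)} is exactly the rule encoding invertibility of nonzero scalars; the delicate part is tracking the scalars thrown off in the process — the factors of \(\star\), \(\sqrt p\) and \(\chi_p(\cdot)\) — checking that they cancel to leave a bare multiplier, and ensuring the induction on \(z\) never circularly re-uses the inverse law it is proving.

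Finally, the copy and elimination identities are obtained from the green–red bialgebra and Hopf laws (the interacting-Hopf-algebra-style rules, cf.~\cite{bonchi_interacting_2017}) together with \textsc{(Fusion)}: since a multiplier is assembled purely from green (co)multiplications and a single red node, it is copied unaltered through a green comultiplication and annihilated at a green unit, just as the antipode-free ``scalars'' behave in interacting Hopf algebras; one then closes the argument with another short induction on the label using additivity.
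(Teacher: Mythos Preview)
Your overall plan matches the paper's: the proposition is assembled from lemmas~\ref{lem:multiplier_sum} (additivity via \textsc{(Spider)}), \ref{lem:multiplier_elim}, \ref{lem:multiplier_product} (multiplicativity for natural-number labels via the bialgebra structure), and \ref{lem:multiplier_inverse}, exactly along the inductive lines you sketch. The base cases, additivity, multiplicativity, the identification of the antipode with the \((-1)\)-multiplier, and the copy/elimination part are all fine.

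Where your account goes wrong is the multiplicative inverse. You describe \textsc{(Char)} as ``exactly the rule encoding invertibility of nonzero scalars'' and anticipate needing \textsc{(Gauss)} and careful bookkeeping of \(\star\), \(\sqrt{p}\), and \(\chi_p(\cdot)\). None of that is correct. \textsc{(Char)} encodes only that the characteristic is \(p\); invertibility is a consequence of \(p\) being prime, and diagrammatically it drops out for free once you already have multiplicativity. Concretely: from your own multiplicativity step and equation~\eqref{eq:multiplier_modular} (the one place \textsc{(Char)} enters), for \(z\in\Z_p^*\) you get \(\mathrm{mult}_z \circ \mathrm{mult}_{z^{-1}} = \mathrm{mult}_{zz^{-1}\bmod p} = \mathrm{mult}_1 = \mathrm{id}\) with no scalar debris whatsoever. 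The remaining point---that the \emph{inverted} multiplier (the compact-closed transpose) coincides with \(\mathrm{mult}_{z^{-1}}\)---is handled in the paper via the \textsc{(Colour)} meta-rule (lemma~\ref{lem:multiplier_inverse}), not via \textsc{(Gauss)}; quadratic residues play no role in any of the multiplier identities. So the step you flag as the main obstacle is in fact the easiest, and your proposed mechanism for it would lead you astray.
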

It is worth noting at this point, that multipliers are \emph{not} flexsymmetric,
thus OCM is technically lost and the diagrams cannot be treated as decorated
graphs if multipliers are included. This is one reason, beyond simplicity, that
multipliers are treated as syntactic sugar rather than as generators of the
language. Howerer, we can somewhat recover OCM if we are prepared to allow
\emph{directed} graphs with labelled edges instead.

Multipliers also satisfy the following equation:
\begin{equation}
  \label{eq:hadamard_multiplier}
  \tikzfig{figures/equations/hadamard_multiplier}\quad,
\end{equation}
so we can also unambiguously extend the calculus with \emph{H-boxes}:
\begin{equation}
  \label{eq:H_box_definition}
  \tikzfig{figures/generators/hadamard_weighted}\quad.
\end{equation}
H-boxes exactly match the labelled Hadamard boxes introduced for the qutrit case
\cite{townsend-teague_classifying_2021} when \(p = 3\). More explicitly, H-boxes
amount to repeated Hadamard wires:
\begin{equation}
  \tikzfig{figures/equations/weighted_hadamard_multiedge}
\end{equation}
We have that for any \(x\in\Z_p\),
\begin{equation}
  \label{eq:hadamard_OCM}
  \tikzfig{figures/equations/hadamard_inverted}
  \qand
  \tikzfig{figures/equations/hadamard_edge}\quad.
\end{equation}
so that H-boxes are flexsymmetric, unlike multipliers.

\begin{proposition}
  \label{prop:weighted_hadamard}
  \(\ZXeq\) proves the following equations:
  \begin{equation}
    \tikzfig{figures/equations/weighted_hadamard}
  \end{equation}
\end{proposition}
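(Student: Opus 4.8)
The plan is to prove each identity by unfolding the H-box into its definition as a Hadamard generator composed with an appropriate multiplier (equation~\eqref{eq:H_box_definition}), and then appealing only to facts already established for multipliers and Hadamard wires. Concretely, I would first record two ``normal forms'' for a labelled H-box: the defining one, in which the $x$-multiplier sits on one side of the Hadamard, and a mirrored one obtained by commuting the multiplier through the Hadamard using equation~\eqref{eq:hadamard_multiplier}, which produces the suitably relabelled multiplier on the other side. Having both forms available lets me attack an H-box from whichever leg is convenient in a given equation. Throughout, I would use OCM/flexsymmetry (in particular the deformations of equation~\eqref{eq:hadamard_OCM}) freely to ignore wire layout, and reduce modular labels via equation~\eqref{eq:multiplier_modular} so that all H-box labels may be taken in $\Z_p$.

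For the absorption-type identities in the statement — an H-box being swallowed by, or copied through, a spider to which it is attached — I would replace the Hadamard part by a colour change via the meta-rule \textsc{(Colour)}, turning the neighbouring green spider into a red one (or vice versa) and leaving a bare multiplier adjacent to a same-coloured spider; that multiplier is then eliminated or copied using the copy and elimination identities of Proposition~\ref{prop:multiplier}, after which the colour change is undone. For the composition-type identities — two H-boxes in series, and the special case that an H-box with label $1$ is just the Hadamard — I would unfold both H-boxes, commute one multiplier past the intervening Hadamard using equation~\eqref{eq:hadamard_multiplier}, fuse the two now-adjacent multipliers using the field-presentation equations of Proposition~\ref{prop:multiplier} (recall the multipliers present $\Z_p$), and simplify any residual $H\circ H$ or antipode using the derived $1\to1$-interaction identities of Section~\ref{sec:structures}. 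The degenerate case of label $0$ is handled separately: the $0$-multiplier disconnects into a counit/unit pair (Proposition~\ref{prop:multiplier}, together with the Hopf identity of lemma~\ref{lem:hopf}), so the corresponding H-box collapses to the expected cap/cup product after a single application of \textsc{(Colour)}.

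The step I expect to be the main obstacle is the relabelling bookkeeping. In odd prime dimension the Hadamard is not self-inverse ($H^4 = I$) and the colour-change functor $S$ is twisted — it rewires the edge types and flips the sign of the Pauli part of green spiders — so pushing a multiplier through a Hadamard or through a colour change turns $x$ into one of $-x$, $x^{-1}$, $-x^{-1}$ depending on the side and the colour involved. Getting every label in the figure to come out exactly right, and checking that the unconventional $2^{-1}$ factor in the phase interpretation does not leak in, is the delicate part. I would fix this once and for all by tracking a single eigenvector (say $\ket{k:Z}$) through each elementary move to determine the relabelling, using soundness (Theorem~\ref{thm:soundness}) purely as a sanity check on the constants rather than as part of the syntactic derivation. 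Once the relabellings are pinned down, each equation in the proposition reduces to a short two- or three-step diagrammatic computation.
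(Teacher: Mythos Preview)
Your proposal is correct and takes essentially the same approach as the paper: unfold each H-box via its definition~\eqref{eq:H_box_definition} into a Hadamard composed with a multiplier, shuttle the multiplier across using~\eqref{eq:hadamard_multiplier}, and then invoke the multiplier identities. The paper's own proof is in fact a single sentence stating that Lemmas~\ref{lem:multiplier_sum}--\ref{lem:multiplier_inverse} suffice for both Proposition~\ref{prop:multiplier} and Proposition~\ref{prop:weighted_hadamard} simultaneously, so your detailed plan is precisely the intended unfolding (and the extra semantic bookkeeping around the $2^{-1}$ factor is unnecessary, since the derivation is purely syntactic).
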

It follows from this that the inverse Hadamard is just the \(-1\)-weighted
Hadamard box:
\begin{equation}
    \tikzfig{figures/equations/hadamard_inverse_weight}\quad.
\end{equation}

\section{Representing Clifford states as graphs}
\label{sec:states}
Graph states are a now familiar tool of quantum information theory. Their
properties are intricately linked with those of the Clifford group and more
generally the stabiliser fragment. The original proof of completeness of the
qubit stabiliser ZX-calculus also made extensive use of graph states. The qubit
graph states have been generalised to the case of arbitrary (finite) dimension,
and we present them here through the lens of \(\ZXp\).

The idea is to associate a state in \(L^2(\Z_p)^{\otimes V}\) to any graph on a
set \(V\) of vertices. Unlike in the qubit case, it is more natural for qupits
to consider \(\Z_p\)-edge-weighted graphs. We make the choice of identifying
such a graph with its adjacency matrix \(G \in \Z_p^{V \times V}\). Then, the
corresponding graph state is defined to be
\begin{equation}
  \ket{G} = \prod_{\substack{(u,v) \in V \\ u \neq v}} E_{u,v}^{G_{u,v}}
  \bigotimes_{u\in{V}} \ket{0:X}_u.
\end{equation}
In other words, to obtain the graph state, first initialise each qupit in the
state \(\ket{0:X}\). Then, for each edge in \(G\) with weight \(G_{uv}\), apply
the entangling operation \(E_{uv}^{G_{uv}}\). Since \(E\) gates commute, they
can be applied in any order.

As a simple example, the graph
\begin{equation}
  \label{eq:simple_graph}
  \tikzfig{figures/simple_graph} \quad \qq{has adjacency matrix}
  \begin{pmatrix}
    0 & 3 & 1 \\
    3 & 0 & 1 \\
    1 & 1 & 0
  \end{pmatrix},
\end{equation}
and is associated with the graph state \(E_{1,2} E_{1,3}^3 E_{2,3} (\ket{0:X}
\otimes \ket{0:X} \otimes \ket{0:X})\).

\subsection{Graph states in \(\ZXp\)}

The associated graph state is represented in \(\ZXp\) as:
\begin{equation}
  \label{eq:simple_graph_state}
  \tikzfig{figures/simple_graph_state}
\end{equation}
where the equality is obtained solely by fusing the green spiders. One
immediately recognises the graph from equation~\eqref{eq:simple_graph} in the
RHS of equation~\eqref{eq:simple_graph_state}.

Here is a slightly more involved example:
\begin{equation}
  \tikzfig{figures/graph}
  \quad \qq{is associated to the state} \quad
  \tikzfig{figures/graph_state}
\end{equation}
where each spider is connected to an output wire.

Hopefully, it should be clear that from these examples that for any given graph
\(G \in \Z_p^{V \times V}\), one can obtain the \(\ZXp\)-diagram for the corresponding
graph state by identifying each vertex with a green spider, and each edge with a
correspondingly weight H-edge. More formally:
\begin{definition}
  A \(\ZXp\)-diagram is a graph state diagram if
  \begin{enumerate}
  \item it contains only green spiders;
  \item every spider is connected to a single output by a plain wire;
  \item the spiders are connected only by H-edges.
  \end{enumerate}
\end{definition}

Every graph state diagram is uniquely associated to a graph, and vice-versa.
Since we are going to reason about such diagrams in generality and without
referring to a specific graph, we use the following informal notation to
represent the \(\ZXp\)-diagram associated to \(G\):
\begin{equation}
  \tikzfig{figures/graph_state_notation}\quad.
\end{equation}
In order to avoid having to track uninteresting scalars, we assume that the
interpretation of this diagram is normalised. This amounts to including, along
with the purely graph-theoretical part, a pair of scalars
\begin{equation}
  \left( \tikzfig{figures/equations/soundness/sqroot_d_inv} \right)^{\otimes\abs{V}}
  \qand
  \left( \tikzfig{figures/equations/soundness/sqroot_d_unlabelled} \right)^{\otimes\abs{E}},
\end{equation}
where \(V\) is the set of vertices in \(G\) and \(E\) the set of edges. These
are simply the normalisation factors for states and entangling gates in G (see
equations~\eqref{eq:C1_generators} and \eqref{eq:stabiliser_states}). It should
now be clear that
\begin{equation}
  \ket{G} = \interp{\tikzfig{figures/graph_state_notation}}\quad.
\end{equation}
Note that this notation for graph states can be formalised using the
``scalable'' construction \cite{carette_colored_2020}, but this would involve
introducing rather more machinery than we really need.

\subsection{Manipulating graph states}

Having shown how to represent graph states within \(\ZXp\), we now gives some
natural rules for manipulating diagrams involving them under \(\ZXeq\).

Two graph states \(\ket{G}\) and \(\ket{H}\) are said to be \emph{local Clifford
equivalent} is there is a sequence of local Clifford operations that maps
\(\ket{G}\) to \(\ket{H}\). It was show in \cite{bahramgiri_efficient_2007} that
any local Clifford equivalence can be decomposed as a sequence of elementary
local Clifford operations called \emph{local scaling} and \emph{local
  complementation}. These are particularly nice operations since their actions
on a graph state \(\ket{G}\) are straightforward to understand at the level of
the graph \(G\) defining the graph state.

\subsubsection*{Pauli stabilisers}

It is well-known that graph states admit simple Pauli stabilisers, namely, for
each \(v \in V\),
\begin{equation}
  X_v^\gamma \prod_{w \in N(v)} Z_w^{\gamma G_{vw}} \ket{G} = \ket{G}.
\end{equation}
We can give a simple formulation of these rules within \(\ZXp\), and they can be
derived in \(\ZXeq\):
\begin{proposition}
  \label{prop:pauli_stabiliser}
  \(\ZXeq\) proves the Pauli stabiliser rules for graph states:
  \begin{equation*}
    \tikzfig{figures/equations/pauli_stabiliser},
  \end{equation*}
  where the red spiders is connected to vertex \(w\) in the graph \(G\), and
  each neighbour \(k\) of \(w\) is connected to a green vertex with phase
  \((\gamma G_{kw},0)\).
\end{proposition}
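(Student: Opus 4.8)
The plan is to unfold the graph-state notation and reason entirely locally around the vertex $w$. Once unfolded, the left-hand side is a green spider with phase $(0,0)$ (the vertex $w$) carrying the red Pauli spider $X^\gamma$ on its external wire, joined by $\deg(w)$ H-edges — that is, weight-$G_{wk}$ H-boxes — to the green spiders of its neighbours $k\in N(w)$, each of which carries the extra phase $(\gamma G_{kw},0)$; the right-hand side is the bare graph-state diagram for $G$, in which $w$ and all its neighbours have phase $(0,0)$. So it suffices to show that, working over the green spider at $w$ and its incident H-edges, the red spider together with the extra phases $(\gamma G_{kw},0)$ can be rewritten away. Since the red $1\to1$ Pauli spider and the H-boxes along edges of $G$ are all unitary (edge weights of an adjacency matrix are nonzero on edges), the normalisation scalars of the two graph-state diagrams already agree and need not be tracked.

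I would carry out two elementary moves. First, the \emph{Pauli-copy rule}: a red Pauli spider $X^\gamma$ on one leg of a green spider with trivial phase can be pushed through, leaving that leg bare and depositing a copy of $X^\gamma$ on every other leg, with no compensating change to the green phase. This is where the antipode carried by the red $1\to1$ spider pays off: because, morally, $H^2$ is the antipode (a fact among the elementary identities relating the $1\to1$ generators), the red Pauli spider is an involution and the copy is clean, unlike the qubit $\pi$-copy which flips a green phase. This rule is derivable from Figure~\ref{fig:axioms} together with \textsc{(Fusion)}, \textsc{(Spider)}, the Hopf and bialgebra-style interactions, and Proposition~\ref{prop:multiplier}. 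Applying it at $w$ moves the red spider off the external wire and onto each of the $\deg(w)$ H-edges. Second, commute each such copy through its incident H-box: by Proposition~\ref{prop:weighted_hadamard} (equivalently the \textsc{(Colour)} commutation rule and equation~\eqref{eq:hadamard_multiplier}), a red Pauli $X^\gamma$ on the $w$-side of a weight-$G_{wk}$ H-box becomes a green Pauli phase $(-\gamma G_{kw},0)$ on the neighbour side — using symmetry $G_{wk}=G_{kw}$ — possibly with a residual $1\to1$ antipode left on the edge, which is reabsorbed into the H-box via $H^2=$ antipode and the $1\to1$ identities. Finally, \textsc{(Fusion)} fuses the new green phase $(-\gamma G_{kw},0)$ with the $(\gamma G_{kw},0)$ already present on neighbour $k$, leaving $(0,0)$; doing this for every $k\in N(w)$ returns the bare graph-state diagram for $G$.

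The main obstacle is bookkeeping rather than any deep idea: tracking the antipodes and the signs of the $\Z_p$-labels as Pauli phases are pushed through green spiders and through weighted Hadamard boxes. The choice of $(\gamma G_{kw},0)$ rather than $(-\gamma G_{kw},0)$ in the statement is dictated precisely by the sign that emerges from pushing $X^\gamma$ across a weight-$t$ H-box, namely $Z^{-t\gamma}$. I would therefore first isolate and verify the two auxiliary identities — the Pauli-copy rule on a single green spider, and ``$X^\gamma$ across a weight-$t$ H-box equals a $Z^{-t\gamma}$ phase times an antipode'' on a two-vertex diagram — each of which is a small finite derivation from Figure~\ref{fig:axioms} and the elementary derivations (and in any case forced by Theorem~\ref{thm:soundness}), and then apply them uniformly over all neighbours of $w$ to assemble the local argument above.
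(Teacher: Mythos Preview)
Your proposal is correct and follows essentially the same route as the paper: copy the red Pauli through the green spider at $w$ (lemma~\ref{lem:pauli_copy_phase}), commute each copy across its weighted H-edge so that it becomes a green Pauli phase on the neighbour, then fuse and cancel with the prescribed $(\gamma G_{kw},0)$. The paper's proof is a single diagrammatic chain doing exactly this, and your verbal unpacking of the two auxiliary moves matches it step for step.
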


\subsubsection*{Local scaling}

For any \(\gamma \in \Z_p^*\), the \emph{\(\gamma\)-scaling} about a vertex
\(w\) in a graph \(G\) is given by:
\begin{equation}
  (G \overset{\gamma}{\circ} w)_{uv} \coloneqq
  \begin{cases}
    \gamma G_{uv} \quad \text{if} \quad u = w \quad \text{or} \quad v = w;\\
    G_{uv} \quad \text{otherwise}.
  \end{cases}
\end{equation}

In other words, we apply a multiplicative scaling to all of the edges in the
neighbourhood of \(w\). For example:
\begin{equation}
  \tikzfig{figures/local_scaling_example_lhs}
  \quad \overset{\overset{\gamma}{\circ} w}{\longmapsto} \quad
  \tikzfig{figures/local_scaling_example_rhs} \quad.
\end{equation}

\begin{proposition}
  \label{prop:local_scaling}
  Local scaling is derivable in \(\ZXeq\): for any graph \(G \in \Z_p^{N
    \times N}\), \(\gamma \in \Z_p\) and \(w \in \{1,2,\cdots,N\}\), \(\ZXeq\)
  proves that
  \begin{equation}
    \tikzfig{figures/equations/local_scaling} \quad,
  \end{equation}
  where in the LHS, the multiplier is connected to vertex \(w\).
\end{proposition}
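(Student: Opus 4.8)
The plan is to absorb the multiplier sitting on vertex \(w\) into the H-edges of its neighbourhood, by pushing it through the green spider at \(w\) using the copy law for multipliers. First I would make the local structure explicit: in the graph-state diagram for \(G\), the vertex \(w\) is a green spider one of whose legs carries the output wire (this is where the multiplier of the LHS is attached) and one of whose legs, for each neighbour \(k \in N(w)\), runs through an H-box of weight \(G_{wk}\) to the spider of \(k\). So the LHS is this spider with a \(\gamma\)-multiplier inserted on the output leg, and the target RHS is the graph-state diagram in which the spider at \(w\) has a bare output leg while the edge to each \(k\in N(w)\) has its H-box reweighted by \(\gamma\).

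Assume first \(\gamma\in\Z_p^*\). I would establish the auxiliary identity that a \(\gamma\)-multiplier on the output leg of a green spider equals that same spider with a \(\gamma^{-1}\)-multiplier on each of its other legs and a bare output leg; this follows by applying the copy identity of Proposition~\ref{prop:multiplier} to each of the comultiplications out of which the spider at \(w\) is built, the inversion \(\gamma\mapsto\gamma^{-1}\) being the one point that needs care and reflecting the reindexing \(k\mapsto\gamma k\) in the interpretation of the green spider. After this step, the leg of \(w\) towards each neighbour \(k\) reads: the spider at \(w\), then a \(\gamma^{-1}\)-multiplier, then an H-box of weight \(G_{wk}\). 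Unfolding the H-box via its definition~\eqref{eq:H_box_definition}, using that multipliers compose by multiplication in \(\Z_p\) (Proposition~\ref{prop:multiplier}) together with~\eqref{eq:hadamard_multiplier} and Proposition~\ref{prop:weighted_hadamard}, this composite rewrites to a single H-box of weight \(\gamma^{-1}G_{wk}\) — equivalently \(\gamma G_{wk}\) once the multiplier-label convention of~\eqref{eq:multiplier_explicit} is matched. Carrying this out for every neighbour and rereading the resulting diagram, it is precisely the graph-state diagram for \(G \overset{\gamma}{\circ} w\), with no scalar correction needed, since the numbers of vertices and edges — hence the normalisation factors attached to graph-state diagrams — are unchanged.

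The main obstacle is not conceptual but a matter of bookkeeping: getting the copy-through-a-spider step to land with the correct inverse label, and carefully tracking the conventions relating multipliers, H-boxes and Hadamard wires in~\eqref{eq:hadamard_multiplier} and~\eqref{eq:H_box_definition}, so that the final edge weights come out as \(\gamma G_{wk}\) rather than \(\gamma^{-1}G_{wk}\) or off by a global scalar. Everything else is a routine chain of rewrites using Propositions~\ref{prop:multiplier} and~\ref{prop:weighted_hadamard} together with spider fusion; the degenerate case \(\gamma=0\), if it is to be included at all, is checked directly from the elimination identity for multipliers in place of the copy identity.
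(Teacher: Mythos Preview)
Your proposal is correct and follows essentially the same route as the paper: push the multiplier through the green spider at \(w\) using the copy law for invertible multipliers (what the paper isolates as Lemma~\ref{lem:multiplier_copy}, which is the spider-level form of the copy identity you invoke from Proposition~\ref{prop:multiplier}), then absorb the resulting multipliers into the adjacent H-boxes via the definition~\eqref{eq:H_box_definition} and the multiplicativity rules of Proposition~\ref{prop:multiplier}. The paper's own proof is a one-line reference to exactly these three ingredients, so your write-up is simply an explicit unpacking of that sentence; your caution about the \(\gamma\) versus \(\gamma^{-1}\) bookkeeping and the separate treatment of \(\gamma=0\) are reasonable but not points of divergence.
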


\subsubsection*{Local complementation}

For any \(\gamma \in \Z_d^*\), the \emph{\(\gamma\)-weighted local
  \(\Z_d\)-complementation} or \(\gamma\)-complementation about a vertex \(w\)
in a graph \(G \i  \Z_d^{V \times V}\) is defined as:
\begin{equation}
  (G \overset{\gamma}{\star} w)_{uv} \coloneqq
  \begin{cases}
    G_{uv} + \gamma G_{uw} G_{wv} \qif u \neq v; \\
    G_{uv} \qq{otherwise.}
  \end{cases}
\end{equation}

This operation is somewhat harder than local scaling to get a good intuition
for. It essentially operates on ``cones'' in \(G\) with summit \(w\). The
simplest example is the following:
\begin{equation}
  \tikzfig{figures/local_complementation_triangle_lhs}
  \quad \overset{\overset{\gamma}{\star} w}{\longmapsto} \quad
  \tikzfig{figures/local_complementation_triangle_rhs} \quad.
\end{equation}
In a more complicated graph, local complementation about \(w\) performs this
simple operation for every such ``cone'' with summit \(w\). For example,
\begin{equation}
  \tikzfig{figures/local_complementation_example_lhs}
  \quad \overset{\overset{\gamma}{\star} w}{\longmapsto} \quad
  \tikzfig{figures/local_complementation_example_rhs} \quad.
\end{equation}

\begin{proposition}
  \label{prop:local_complementation}
  Local complementation is derivable in \(\ZXeq\): for any graph \(G \in \Z_p^{N
    \times N}\), \(\gamma \in \Z_p\) and \(w \in \{1,2,\cdots,N\}\), \(\ZXeq\)
  proves that
  \begin{equation}
    \tikzfig{figures/equations/local_complementation} \quad,
  \end{equation}
  where in the LHS, the red phase is connected to vertex \(w\), and each
  neighbour \(v\) of \(w\) is connected to a green phase \((0,-\gamma G_{vw}^2)\).
\end{proposition}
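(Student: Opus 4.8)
The plan is to localise the identity around the vertex \(w\) and then reduce it to a single application of \textsc{(Gauss)}, which is precisely the rule encoding the relevant Gauss sum. \emph{First}, I would unfold both sides into graph-state form: by flexsymmetry and the graph-theoretic reading of diagrams it suffices to analyse the ``star'' around \(w\), i.e. the green spider at \(w\) carrying the red \((0,\gamma)\) phase, the H-edges joining \(w\) to its neighbours \(v_1,\dots,v_k\) (which by equation~\eqref{eq:hadamard_multiplier} and the definition of H-boxes carry multipliers \(G_{v_iw}\)), and the green phases on the \(v_i\). Every vertex other than \(w\) and every edge not incident to \(w\) is left untouched by \(\gamma\)-complementation; and the new H-edges of weight \(\gamma G_{v_iw}G_{v_jw}\) that the rewrite will create between \(v_i\) and \(v_j\) merge with the pre-existing edge \(G_{v_iv_j}\) by \textsc{(Spider)}, producing exactly \((G\overset{\gamma}{\star}w)_{v_iv_j}\).

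\emph{Second}, I would rewrite the star so that \(w\) becomes a \emph{red} spider wired by plain edges to the green neighbours \(v_i\), using \textsc{(Colour)} to push the Hadamards off the legs of \(w\), the multiplier identities of Proposition~\ref{prop:multiplier} to record the edge weights \(G_{v_iw}\) as multipliers on those wires, and the Hopf identity (lemma~\ref{lem:hopf}) together with the antipode rules to absorb the antipode corrections that \textsc{(Colour)} introduces. \emph{Third}, and this is the heart of the matter, I would apply \textsc{(Gauss)}: its graphical content is a single-vertex local complementation — a red quadratic phase \((0,\gamma)\) sitting on a \(k\)-legged green spider rewrites into that spider carrying at most a Pauli phase, a green quadratic phase \((0,-\gamma)\) on each leg, and a \(\gamma\)-weighted H-edge between every pair of legs, up to a scalar controlled by \(\chi_p\) via \textsc{(Char)}. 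Combining with \textsc{(Fusion)}, \textsc{(Char)}, lemma~\ref{lem:loop} to delete self-loops, and \textsc{(Spider)} puts the result back into graph-state shape. Concretely this is the diagrammatic incarnation of expanding \(\big(\sum_i G_{v_iw}\ell_i\big)^2 = \sum_i G_{v_iw}^2\ell_i^2 + 2\sum_{i<j}G_{v_iw}G_{v_jw}\ell_i\ell_j\) inside the Gauss sum \(\sum_k\omega^{2^{-1}\gamma k^2}\), where the leading \(2^{-1}\) cancels the cross-term factor of \(2\).

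\emph{Fourth}, I would reassemble: pushing the multiplier \(G_{v_iw}\) back through the freshly created green \((0,-\gamma)\) phase scales it quadratically to \((0,-\gamma G_{v_iw}^2)\) — precisely the decoration claimed on neighbour \(v_i\) — and pushing the multipliers through a new \(\gamma\)-weighted H-edge turns it into weight \(\gamma G_{v_iw}G_{v_jw}\), which merges with \(G_{v_iv_j}\) as above; undoing the \textsc{(Colour)} moves of the second step restores the graph-state shape of the whole diagram, now built on \(G\overset{\gamma}{\star}w\), and a final scalar check (at most \(\binom{k}{2}\) edges change, each matched by the normalisation factors and the \(\star\)/\(\chi_p\) bookkeeping) closes the argument. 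The main obstacle will be this third step together with the scalar and antipode bookkeeping: pinning down the exact single-vertex form of \textsc{(Gauss)} — in particular the sign on the induced green phases and the \(\chi_p\)-dependent scalar — and checking that the antipode corrections produced by the repeated uses of \textsc{(Colour)} cancel cleanly against the quadratic rescaling of phases under multipliers. The remaining steps are routine, and mirror the reasoning already used in the proofs of Propositions~\ref{prop:pauli_stabiliser} and~\ref{prop:local_scaling}.
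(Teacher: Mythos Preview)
There is a genuine gap in your third step. You describe the ``graphical content'' of \textsc{(Gauss)} as a single-vertex local complementation on a \(k\)-legged spider, but in this calculus \textsc{(Gauss)} is purely a \emph{scalar} axiom: it evaluates the single Gauss sum \(\sum_{k\in\Z_p}\omega^{2^{-1}zk^{2}}\) (see the soundness calculation for \textsc{(Gauss)} in the appendix). What you attribute to \textsc{(Gauss)} --- a red \((0,\gamma)\) phase on a \(k\)-legged green spider rewriting to green \((0,-\gamma)\) phases on each leg together with \(\gamma\)-weighted H-edges between every pair of legs --- is exactly the star-graph instance of the proposition you are trying to prove (this is the paper's lemma~\ref{lem:local_complementation_tree}), so invoking it as a primitive is circular.

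The paper closes this gap by an \emph{induction on the number of neighbours of \(w\)}. The base case (two leaves, lemma~\ref{lem:local_complementation_triangle}) is handled directly from the Euler decomposition of the Hadamard and \textsc{(Mult)}; the inductive step (lemma~\ref{lem:local_complementation_tree}) splits off one leaf of the star using the arbitrary-arity bigebra law (lemma~\ref{lem:bigebra_mn}), applies the induction hypothesis on the remaining \(k-1\) leaves, and then copies the resulting H-edges back through the split using multiplier-copy. Your first, second and fourth steps --- localising to the star around \(w\), colour-changing at \(w\), pushing multipliers through to get the \(G_{vw}^2\) scaling on phases and the \(\gamma G_{v_iw}G_{v_jw}\) factor on the new edges, and merging parallel H-edges by additivity --- are all correct and match the paper's outline. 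But none of these routine moves can replace the missing inductive/bigebra argument that actually derives the \(k\)-legged identity from the axioms. The semantic picture you sketch (expanding \(\big(\sum_i G_{v_iw}\ell_i\big)^2\) inside the Gauss sum) is the right reason the identity \emph{holds}; it is not yet a derivation in \(\ZXeq\).
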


Combining this with propositions~\ref{prop:local_scaling} and
\ref{prop:pauli_stabiliser}, we get
\begin{corollary}
  \label{cor:GS_red_rotation}
  For any graph \(G \in \Z_p^{N \times N}\), \(w \in \{1,2,\cdots,N\}\) and
  \(x,y \in \Z_p\), \(\ZXeq\) proves that
  \begin{equation}
    \tikzfig{figures/equations/GS_red_rotation} \quad,
  \end{equation}
  where in the LHS, the red phase is connected to vertex \(w\), and each
  neighbour \(v\) of \(w\) is connected to a green phase with labels
  proportional to the edge weight \(G_{vw}\).
\end{corollary}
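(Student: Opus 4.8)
The plan is to reduce the claim to the three preceding propositions by splitting the red phase into its Pauli and non-Pauli components. By the red-spider fusion rule of \textsc{(Spider)} (equivalently, \textsc{(Fusion)} applied in reverse, up to antipodes), the red $(x,y)$-spider attached to vertex $w$ factors as a red $(0,y)$-spider sitting on the wire at $w$, followed by a red $(x,0)$-spider further along that wire. When $y\neq 0$ the first factor is a genuinely non-Pauli Clifford, while the second is always a Pauli; I will absorb them one at a time.

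First I would handle the non-Pauli factor using Proposition~\ref{prop:local_complementation}, taking the weight $\gamma$ to be the value for which the red phase appearing there equals $(0,y)$ (the precise relation is dictated by the $2^{-1}$ convention of $\interp{-}$). This rewrites $\ket{G}$ into $\ket{G\overset{\gamma}{\star}w}$, puts a green phase $(0,-\gamma G_{vw}^2)$ on each neighbour $v$ of $w$, and leaves a single $1\to1$ map on the wire at $w$ which one checks is a multiplier. Because local complementation about $w$ leaves the neighbourhood $N(w)$ and every weight $G_{vw}$ unchanged (as $G_{ww}=0$), the data attached to the neighbours is still governed by the original edge weights. The leftover multiplier at $w$ is then removed by Proposition~\ref{prop:local_scaling}, which identifies it with a local $z$-scaling and so replaces $\ket{G\overset{\gamma}{\star}w}$ by the graph state $\ket{(G\overset{\gamma}{\star}w)\overset{z}{\circ}w}$, now carrying no decoration on $w$ itself.

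It remains to push the Pauli factor onto the neighbours. The red $(x,0)$-spider on $w$ --- rescaled by the multiplier that was just commuted past it, so really a red $(xz,0)$-spider --- is eliminated by Proposition~\ref{prop:pauli_stabiliser}, which turns it into a green Pauli phase on each neighbour $v$ of $w$ whose label is proportional to the corresponding edge weight. Fusing this with the green phase $(0,-\gamma G_{vw}^2)$ already present there leaves, on each neighbour $v$, a single green spider both of whose labels are determined by $G_{vw}$ --- linear in $G_{vw}$ for the Pauli part, quadratic for the Clifford part --- which is exactly the form claimed. Reassembling the three steps, modulo the structural deformations afforded by flexsymmetry, gives the stated identity.

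The work here is bookkeeping rather than ideas. The main obstacle is getting the coefficients right: the $2^{-1}$ factors relating the spider labels $x,y$ to the weights $\gamma,z$ of Propositions~\ref{prop:local_complementation} and~\ref{prop:local_scaling}; the antipodes generated when the red spider is split by fusion; and, above all, the verification that the $1\to1$ residue at $w$ after local complementation really is a multiplier (rather than some other local Clifford), so that Proposition~\ref{prop:local_scaling} genuinely applies. One should also check that treating the Pauli and non-Pauli parts in this order creates no cross terms, which holds because commuting the Pauli factor through the local-complementation step only rescales the already-$G_{vw}$-proportional green phases on $N(w)$.
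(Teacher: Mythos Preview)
Your approach is essentially the paper's: the corollary is stated immediately after Proposition~\ref{prop:local_complementation} with the one-line justification ``Combining this with propositions~\ref{prop:local_scaling} and \ref{prop:pauli_stabiliser}'', and you have unpacked exactly that combination---split the red $(x,y)$ phase into its Pauli and strictly-Clifford parts, absorb the latter via local complementation, clean up at $w$ via local scaling, and push the Pauli through via the stabiliser rule. The caveats you flag (coefficient bookkeeping, antipodes from the red fusion, and whether the residue at $w$ is genuinely a multiplier) are precisely the details the paper suppresses, so your proposal is at least as explicit as the original.
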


\section{Completeness}
\label{sec:completeness}
We now have all the necessary tools to show that our equational theory is
complete. The structure of our proof is similar to the one used to show the
completeness in the qubit case \cite{backens_zx-calculus_2014}. However, if the
overall scheme is very similar, each step separately can involve different
approaches more suited to the qupit situation. The plan is as follows. We
identify a family of scalars, called \emph{elementary scalars}, which correspond
to those which appear when applying the rewrites of \(\ZXeq\). We first show the
completeness up to non-zero elementary scalars, which allows us to work with
simpler diagrams without taking care of all the invertible scalars appearing
along the way. Then, we show the completeness for elementary scalars
independently, leading to a general proof of completeness. The proof of
completeness up to non-zero elementary scalars, goes like this:

\begin{itemize}
	\item Take two diagrams with the same interpretation.
	
	\item Put them in GS+LC form.
	
	\item Define the notion of rGS+LC form for a pair of diagrams in which some vertices are marked. Then show that in a rGS+LC pair if a vertex is marked on one side, it must also be marked on the other side, else the two diagrams cannot have the same interpretation.
	
	\item Show that two diagram forming an rGS+LC pair such that their marked vertices matches and having the same interpretation are equal modulo the equational theory. 
	
\end{itemize}

\subsection{Elementary scalars}

The following is standard from categorical quantum mechanics:
\begin{lemma}
  If \(A,B \in \ZXp[0,0]\), then \(\interp{A \otimes B} = \interp{A} \cdot
  \interp{B} = \interp{A \circ B}\), where \(\cdot\) is the usual multiplication
  on \(\C\) restricted to the monoid \(\mathbb{G}\).
\end{lemma}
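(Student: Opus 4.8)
The plan is to deduce this from the standard categorical fact that the endomorphism monoid of the monoidal unit is commutative and that $\otimes$ and $\circ$ coincide on it, transported across the monoidal functor $\interp{-}$ and then checked explicitly in $\FLin$.

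First I would note that $\interp{0} = \C^{p\times 0} = \C$, which is the monoidal unit of $\FLin$, and that $\interp{-}$ is a symmetric monoidal functor which is strict on objects, since $\interp{n\otimes m} = \C^{p(n+m)} = \interp{n}\otimes\interp{m}$. Hence both $\interp{A\otimes B} = \interp{A}\otimes\interp{B}$ and $\interp{A\circ B} = \interp{A}\circ\interp{B}$ hold, and in these equations $\interp{A}, \interp{B}$ are $\C$-linear endomorphisms of the one-dimensional space $\C$, hence multiplication by scalars $a, b$; since the generators of $\ZXp$ are interpreted as stabiliser operations (cf.\ Theorem~\ref{thm:universality}), these scalars in fact lie in $\mathbb{G}_p$, so that the restricted multiplication $\cdot$ in the statement makes sense.

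It then remains to compute the two composites. Composition gives the map $z \mapsto a(bz) = (ab)z$, i.e.\ multiplication by $ab$. For the tensor product, $\interp{A}\otimes\interp{B} : \C\otimes\C \to \C\otimes\C$ acts by $z\otimes w \mapsto (az)\otimes(bw) = (ab)(z\otimes w)$, so under the unitor $\C\otimes\C\xrightarrow{\sim}\C$ of $\FLin$ (which is natural) it is also multiplication by $ab$. Therefore $\interp{A\otimes B} = ab = \interp{A}\cdot\interp{B} = \interp{A\circ B}$.

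I do not expect any real obstacle: the only subtlety is that $\FLin$ is not strictly monoidal, so one must invoke the coherence isomorphism $\C\otimes\C\cong\C$ and naturality of the unitors explicitly. Alternatively, and fully coordinate-free, one can run the Eckmann--Hilton interchange argument directly in $\mathrm{End}_{\FLin}(\C)$, which yields at once that $\otimes$ and $\circ$ agree on scalars and that the resulting monoid is commutative.
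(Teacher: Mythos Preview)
Your argument is correct. The paper does not actually give a proof of this lemma; it simply states it as ``standard from categorical quantum mechanics'' and moves on, so your spelled-out Eckmann--Hilton / monoidal-functor argument is exactly the kind of justification the paper is implicitly invoking. One minor quibble: the fact that the scalars land in $\mathbb{G}_p$ follows from the generators being stabiliser operations and functoriality of $\interp{-}$, not from Theorem~\ref{thm:universality} (which is the converse, fullness); but you essentially say this already.
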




Now, as when we were defining the group of phases \(\mathbb{P}_p\), the set of
normal forms for phases must depend on the prime \(p\) in question:
\begin{definition}
  An \emph{elementary scalar} is a diagram \(A \in \ZXp[0,0]\) which is a tensor
  product of diagrams from the collection \(O_p \cup P \cup Q\): where
  \begin{itemize}
  \item if \(p \equiv 1 \mod 4\),
    \begin{equation}
      O_p = \left\{
        \tikzfig{figures/universality/empty},
        \scalebox{1.4}{$\star$}
      \right\} \quad,
    \end{equation}
  \item if \(p \equiv 3 \mod 4\),
    \begin{equation}
      O_p = \left\{
        \tikzfig{figures/universality/empty},
        \tikzfig{figures/equations/imaginary_scalar_normal_form},
        \scalebox{1.4}{$\star$},
        \tikzfig{figures/equations/imaginary_scalar_minus_normal_form}
      \right\} \quad,
    \end{equation}
  \end{itemize}
  \begin{equation}
    P = \left\{
      \tikzfig{figures/universality/empty},
      \tikzfig{figures/equations/omega_scalar_normal_form}
      \mid s \in \Z_p^* \right\} \quad,
  \end{equation}
  and
  \begin{equation}
    Q = \left\{
      \tikzfig{figures/universality/sqrt_d},
      \tikzfig{figures/universality/empty},
      \tikzfig{figures/universality/sqrt_d_inv}
    \mid r \in \Z \right\} \quad. 
  \end{equation}
  
  If \(A,B \in \ZXp\), we say that \(A\) and \(B\) are equal \emph{up to an
    elementary scalar} if there is an elementary scalar \(C\) such that \(A = B
  \otimes C\). In that case, we write \(A \simeq B\).
\end{definition}
Comparing with the definition of \(\mathbb{G}\), the interpretation of the
elements of \(P\) correspond to powers of \(\omega\), the elements of \(Q\) to
(possible negative) powers of \(\sqrt{p}\), and the elements of \(O_p\) to
powers of \(-1\) or \(i\) depending on the value of \(p\). This remark will
naturally lead to the normal for for scalars in a few sections.

Now, as written, equality up to an elementary scalar might seem like a relation
that is not symmetric and therefore not an equivalence relation.

\begin{proposition}
  Every elementary scalar \(C \in \ZXp[0,0]\) has a multiplicative inverse, i.e.
  an elementary scalar \(C^{-1} \in \ZXp[0,0]\) such that
  \begin{equation}
    C \otimes C^{-1} = C \circ C^{-1} = \tikzfig{figures/universality/empty}\quad.
  \end{equation}
\end{proposition}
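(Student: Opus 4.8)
The statement claims every elementary scalar has an elementary-scalar inverse. Since an elementary scalar is by definition a tensor product of diagrams drawn from the three collections $O_p$, $P$, and $Q$, and since tensor product of scalars is multiplication in $\mathbb{G}$ which is commutative and associative (by the preceding lemma), it suffices to show that each \emph{generator} appearing in these collections has an elementary-scalar inverse, and then take the tensor product of the individual inverses. So the plan is to treat the three families separately.

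First I would handle $Q$: the claim is that $\tikzfig{figures/universality/sqrt_d}$ and $\tikzfig{figures/universality/sqrt_d_inv}$ are mutually inverse. This should follow from the definitions of these scalars together with a ``$\sqrt{p}\cdot\sqrt{p}^{-1}=1$'' derivation in $\ZXeq$ — presumably one of the elementary scalar identities provable from the axioms in Figure~\ref{fig:axioms}, via some contraction of a green and a red spider (the standard ``bialgebra + copy'' scalar cancellation). Next, for $P$: the generator $\tikzfig{figures/equations/omega_scalar_normal_form}$ for a given $s \in \Z_p^*$ is a single spider scalar whose interpretation is a power of $\omega$; its inverse is the same diagram with label $-s$ (i.e. $p-s$), which is again in $P$. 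The derivability $\ZXeq \vdash C_s \otimes C_{-s} = \text{(empty)}$ should follow from \textsc{(Fusion)} — fusing the two scalar spiders adds their phases to get $0$ — together with an elimination rule (\textsc{(G-Elim)} or similar) that removes a trivial scalar spider. Finally, for $O_p$: when $p \equiv 1 \bmod 4$, $O_p = \{\text{empty}, \star\}$ and $\interp{\star}=-1$, so $\star$ is self-inverse; I would check that $\ZXeq \vdash \star \otimes \star = \text{(empty)}$, which I expect is either an axiom or an easy consequence. When $p \equiv 3 \bmod 4$, $O_p$ contains the ``$i$'' and ``$-i$'' scalars (call them $J$ and $J'$) in addition to $\star$; here the inverse of $J$ is $J'$ (since $i \cdot (-i) = 1$), inverse of $J'$ is $J$, inverse of $\star$ is $\star$, and I would verify $\ZXeq \vdash J \otimes J' = \text{(empty)}$ using whichever rule governs the imaginary-scalar normal form.

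Having shown each generator $g$ has an elementary-scalar inverse $g^{-1}$, for a general elementary scalar $C = g_1 \otimes \cdots \otimes g_k$ I would set $C^{-1} \coloneqq g_1^{-1} \otimes \cdots \otimes g_k^{-1}$, which is again a tensor product of elements of $O_p \cup P \cup Q$ hence an elementary scalar. Then $C \otimes C^{-1}$ equals, by the symmetry and associativity of $\otimes$ on scalars (and flexsymmetry/the structural rules), $\bigotimes_i (g_i \otimes g_i^{-1}) = \bigotimes_i \text{(empty)} = \text{(empty)}$ in $\ZXeq$; the equality $C \otimes C^{-1} = C \circ C^{-1}$ is the preceding lemma. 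The only genuine content, and the step I expect to be the main obstacle, is checking case-by-case that each single-generator cancellation is actually \emph{derivable} in $\ZXeq$ rather than merely true under $\interp{-}$ — in particular pinning down the exact form of the $\sqrt{p}$-cancellation for $Q$ and the imaginary-unit cancellation for $O_p$ when $p \equiv 3 \bmod 4$, since these are the scalars whose diagrammatic representation ``depends non-trivially on $p$'' (this is precisely why the generator $\star$ was added). Everything else is routine spider fusion and elimination.
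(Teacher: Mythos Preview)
Your proposal is correct and matches the paper's approach almost exactly: the paper's proof simply cites the relevant technical lemmas (\ref{lem:scalar_elementary}, \ref{lem:scalar_phase}, \ref{lem:scalar_i_elim}, \ref{lem:scalar_omega_elim}) together with the axioms \textsc{(Gauss)} and \textsc{(M-One)}, each of which handles precisely one of the generator-cancellation cases you identified (the $Q$-scalars, the $P$-scalars, the imaginary scalars in $O_p$ for $p\equiv 3\bmod 4$, and $\star\otimes\star$ respectively). Your diagnosis that the only real work is checking these individual cancellations are derivable in $\ZXeq$ is exactly right, and the paper discharges that work via the cited lemmas rather than inline.
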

\begin{lproof}
  This is the content of lemmas~\ref{lem:scalar_elementary},
  \ref{lem:scalar_phase}, \ref{lem:scalar_i_elim}, \ref{lem:scalar_omega_elim}
  and axioms \textsc{(Gauss)} and \textsc{(M-One)}.
\end{lproof}

In light of this fact, if \(A \simeq B\), there is an elementary scalar \(C\)
such that \(A = B \otimes C\), and then \(B = B \otimes C \otimes C^{-1} = A
\otimes C^{-1}\), so that \(B \simeq A\).

\begin{proposition}
  Every equation in \(\ZXeq\) can be loosened to equality up an elementary
  scalar by erasing every part of the LHS and RHS diagrams which is disconnected
  from the inputs and outputs.
\end{proposition}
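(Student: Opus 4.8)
The plan is to verify the claim axiom by axiom. Since \(\ZXeq\) is presented by the structural rules (the symmetric-monoidal and compact-closed equations, and flexsymmetry) together with the finitely many equations in Figure~\ref{fig:axioms}, it suffices to prove the following for each such axiom \(A = B\): writing \(A = A_{\mathrm c}\otimes s_A\) and \(B = B_{\mathrm c}\otimes s_B\), where \(A_{\mathrm c}, B_{\mathrm c}\) are the maximal subdiagrams still touching an input or output and \(s_A, s_B\in\ZXp[0,0]\) collect the components disjoint from the entire boundary, one has \(A_{\mathrm c}\simeq B_{\mathrm c}\). Granting this, the reduction is immediate: in each case \(s_A\) and \(s_B\) are provably equal to elementary scalars, so by the preceding proposition they admit elementary-scalar inverses; since \(\ZXeq\vdash A_{\mathrm c}\otimes s_A = B_{\mathrm c}\otimes s_B\), we get \(\ZXeq\vdash A_{\mathrm c} = B_{\mathrm c}\otimes(s_B\otimes s_A^{-1})\), and \(s_B\otimes s_A^{-1}\) is a tensor product of elementary scalars, hence an elementary scalar, so \(A_{\mathrm c}\simeq B_{\mathrm c}\).

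Most axioms are dispatched immediately. All the structural rules, as well as every equation of Figure~\ref{fig:axioms} having no boundary-free component on either side (fusion, the bialgebra- and Hopf-type interactions, and so on), satisfy \(A_{\mathrm c}=A\) and \(B_{\mathrm c}=B\): the ``loosened'' equation is literally the axiom itself, witnessed by the empty diagram \(\tikzfig{figures/universality/empty}\).

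There remain exactly the axioms carrying a scalar. The purely scalar identities — \textsc{(Gauss)}, \textsc{(M-One)}, and any other \(0\to 0\) equation — are trivial: both \(A_{\mathrm c}\) and \(B_{\mathrm c}\) are the empty diagram. For the genuinely ``mixed'' axioms, such as \textsc{(G-Elim)} and \textsc{(Char)}, one reads \(s_A\) and \(s_B\) off the figure. The boundary-free factors that occur are empty diagrams, copies of \(\star\), the scalars representing \(\pm\sqrt{p}\), legless phase-only green spiders (quadratic Gauss sums), and the \(\omega\)- and \(i\)-valued scalar diagrams; by lemmas~\ref{lem:scalar_elementary}, \ref{lem:scalar_phase}, \ref{lem:scalar_i_elim} and \ref{lem:scalar_omega_elim} each is provably equal to a tensor product of diagrams from \(O_p\cup P\cup Q\), i.e.\ to an elementary scalar. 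Feeding this into the reduction above yields \(A_{\mathrm c}\simeq B_{\mathrm c}\) in every remaining case.

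I expect the only real work to be in this last step, specifically the book-keeping for \textsc{(Char)} and the axioms whose statement splits on \(p\bmod 4\): one must confirm that the scalar actually produced lies among the prescribed generators of the elementary scalars. This amounts to evaluating a quadratic Gauss sum, which yields \(\sqrt{p}\) times a root of unity when the relevant label is a nonsquare (as recorded by \(\chi_p\)) and a signed \(\sqrt{p}\) otherwise, and then checking that the root of unity is covered by \(\mathbb{P}_p\) for the correct residue of \(p\) modulo \(4\) — rather than requiring a genuinely new square root of \(\pm i\). This is exactly why \(\mathbb{P}_p\) was defined by a case split on \(p\bmod 4\), so it is a short classical computation with no conceptual difficulty.
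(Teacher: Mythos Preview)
Your approach is correct and is essentially the same as the paper's: the paper's proof is a two-line sketch (``structural induction, and every scalar diagram appearing in the axiomatisation can be straightforwardly rewritten under \(\ZXeq\) to an elementary scalar''), and you have simply unpacked this into an explicit case analysis over the axioms together with the inverse argument from the preceding proposition.

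A couple of minor inaccuracies in your write-up are worth flagging, though they do not affect the argument. First, \textsc{(G-Elim)} carries no disconnected scalar on either side, so it belongs to your ``dispatched immediately'' list rather than the ``mixed'' list; the axioms that genuinely carry scalar factors are \textsc{(Shear)}, \textsc{(Char)}, \textsc{(Bigebra)}, \textsc{(Copy)}, and \textsc{(Mult)}. Second, the quadratic Gauss sum you are worried about does not arise in \textsc{(Char)} (whose scalar is a plain power of \(\sqrt{p}\)) but in \textsc{(Mult)}; and it is reduced to an elementary scalar directly by the axiom \textsc{(Gauss)} rather than by an independent number-theoretic computation, so no further ``short classical computation'' is needed beyond invoking that axiom.
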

\begin{lproof}
  This comes from a straightforward structural induction, and noting that every
  scalar diagram which appears in the axiomatisation (figure~\ref{fig:axioms})
  can be straightforwardly rewritten under \(\ZXeq\) to an elementary scalar,
  and thus can be ignored.
\end{lproof}

Probably the most important case of equality up to elementary scalars is the
completeness of the single-qupit Clifford groups, on which the entire proof of
completeness of the calculus rests. The fragment of \(\ZXp\) which corresponds
to \(\mathscr{C}_1\) is that generated by the \(1 \to 1\) diagrams under
sequential composition only:
\begin{equation*}
 \tikzfig{figures/generators/r-phase} \quad\quad
 \tikzfig{figures/generators/g-phase} \quad\quad
 \tikzfig{figures/generators/multiplier_generator} \quad\quad
 \tikzfig{figures/generators/weighted_hadamard}
\end{equation*}
We call any such diagram a \emph{single-qupit Clifford diagram} or
\emph{\(\mathscr{C}_1\)-diagram}.
\begin{proposition}
  \label{prop:c1_completeness}
  If \(A \in \ZXp[1,1]\) is a single-qupit Clifford diagram, then \(\ZXeq\)
  proves that
  \begin{equation}
    \tikzfig{figures/equations/c1_normal_form}\quad,
  \end{equation}
  for some \(s,t,u,v \in \Z_p\) and \(w \in \Z_p^*\). Furthermore,
  this form is unique.
\end{proposition}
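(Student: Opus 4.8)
The plan is to proceed in two stages: first establish that every single-qupit Clifford diagram can be brought into the stated normal form (existence), and then show this form is unique (uniqueness).

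\emph{Existence.} The generators of the $\mathscr{C}_1$-diagram fragment are green phases, red phases, multipliers, and H-boxes, composed sequentially. I would first observe that Proposition~\ref{prop:weighted_hadamard} and the \textsc{(Colour)} meta-rule together let me rewrite any red phase in terms of a green phase conjugated by Hadamards, and equation~\eqref{eq:hadamard_multiplier} lets me commute multipliers past H-boxes. Using the multiplier algebra of Proposition~\ref{prop:multiplier} (multipliers presenting the field $\Z_p$, together with the copy/elimination identities), any product of multipliers collapses to a single multiplier labelled by an element of $\Z_p^*$ — this accounts for the factor $w$. Next, H-boxes compose by adding weights (via \eqref{eq:hadamard_multiplier} and \eqref{eq:weighted_hadamard}), and since $H^4 = I$ at the level of the interpretation, the relevant axioms reduce any string of Hadamards modulo $4$; but on the diagrammatic side a single H-box (the generic Hadamard) suffices once phases are allowed on both sides, because $H^2$ is the antipode (up to scalar) by the \textsc{(hadanteul)}-type rules, which can be absorbed into a Pauli phase. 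So the task reduces to: any alternating word in green phases and Hadamards equals, up to elementary scalars, a diagram of the shape green-phase $\circ$ H-box $\circ$ green-phase $\circ$ multiplier (or a similar fixed pattern), with the green phases carrying both a linear (Pauli, $s,u$) and quadratic ($t,v$) part. The quadratic parts combine via \textsc{(Fusion)}; the interaction of a quadratic green phase with a Hadamard is governed precisely by the \textsc{(Gauss)} axiom, which is why the $2^{-1}$ convention in the interpretation matters — this is the step that lets two ``S-like'' gates on opposite sides of an $H$ be merged. Iterating \textsc{(Gauss)} and \textsc{(Colour)} to push all quadratic phases to one side, and all Hadamards together, I reach the claimed normal form $\tikzfig{figures/equations/c1_normal_form}$ with parameters $s,t,u,v \in \Z_p$, $w \in \Z_p^*$.

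\emph{Uniqueness.} Here I would pass to the standard interpretation, which is sound by Theorem~\ref{thm:soundness}, so distinct normal forms with distinct parameters must have distinct interpretations; conversely if two normal forms have the same interpretation, matching up parameters forces them to be syntactically the same. Concretely, $\interp{-}$ sends a $\mathscr{C}_1$-diagram to an element of $\mathscr{C}_1$, the reduced single-qupit Clifford group, and one computes that the five parameters $(s,t,u,v,w)$ biject with $\mathscr{C}_1$ by comparing the action on the $Z$- and $X$-bases: $w$ is read off from how the associated symplectic/affine transformation scales the $X$-label, $t$ and $v$ from the quadratic (``shearing'') part, and $s,u$ from the Pauli translation part, with the global phase pinned down by the reduced-group constraint (special unitary up to $\mathbb{P}_p$). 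Since $|\mathscr{C}_1| = p^2(p^2-1)\cdot|\mathbb{P}_p/\text{(something)}|$ matches the count $p^4(p-1)$ of parameter tuples up to the elementary-scalar ambiguity, the map is a bijection, giving uniqueness. Alternatively — and perhaps more in the spirit of the paper — uniqueness can be obtained purely syntactically by exhibiting a terminating, confluent orientation of the relevant rules restricted to this fragment, but invoking soundness plus a dimension count is cleaner.

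\emph{Main obstacle.} The genuinely delicate step is the existence direction: wrangling the quadratic phase parts through Hadamards using \textsc{(Gauss)}, since this is where the calculus departs most sharply from the qubit case (where $S^2 = Z$ and everything is an involution). One must be careful that every application of \textsc{(Gauss)} and \textsc{(Colour)} produces only \emph{elementary} scalars — which is guaranteed by the proposition above stating every axiom loosens to equality up to an elementary scalar — and that these do not obstruct reaching the normal form; but since Proposition~\ref{prop:c1_completeness} as stated is an \emph{exact} equality, I would actually track the scalars explicitly rather than working up to $\simeq$, using \textsc{(Gauss)}, \textsc{(M-One)}, and the scalar-elimination lemmas to clean them up at the end. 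Getting the bookkeeping of $\chi_p$ (the quadratic-residue obstruction appearing in \textsc{(Char)} and \textsc{(Gauss)}) to land correctly in the final scalar is the fiddly part.
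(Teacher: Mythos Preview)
Your proposal has a genuine gap in the existence direction: you misidentify which axioms carry the weight. \textsc{(Gauss)} is a \emph{scalar} axiom---it evaluates a Gauss sum $\sum_k \omega^{2^{-1}zk^2}$ to a closed form---and it does \emph{not} give a commutation relation between quadratic green phases and Hadamards. The rule that actually lets you trade a Hadamard for a product of alternating quadratic phases is \textsc{(Mult)} (equivalently, the Euler decomposition of the Hadamard, lemma~\ref{lem:hadamard_euler}), and commuting Pauli parts through is handled by \textsc{(Shear)}. So the step ``iterating \textsc{(Gauss)} and \textsc{(Colour)} to push all quadratic phases to one side'' is not a procedure that exists in this calculus; as written, your argument stalls exactly where the real work begins.

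The paper's strategy is structurally different and avoids this problem: it is an induction on the length of the generator word. One shows that precomposing a diagram already in normal form by any single generator (multiplier, Hadamard, red phase, green phase) can be rewritten back into normal form. Multipliers and red phases are easy; the Hadamard case splits on whether the innermost quadratic phase parameter is zero; the green phase case is reduced, via the Euler decomposition, to a Hadamard followed by a red phase, both already handled. This closure-under-generators argument is what you are missing.

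Two smaller points. First, the statement is an equality up to elementary scalar ($\simeq$), not exact equality---the paper's proof says so explicitly---so your worry about tracking scalars through \textsc{(Gauss)} is misplaced. Second, your cardinality count $p^4(p-1)$ is off: the normal form has two shapes (this is why the paper's proof repeatedly refers to the ``first'' and ``second'' normal form), and together they give $p^3(p^2-1)$, which is the order of the single-qupit Clifford group modulo phases. A single five-parameter family cannot match that count.
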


\subsection{Relating stabiliser diagrams to graphs}
                     
In order to use the preceding tools in our completeness proof for the whole
stabiliser fragment, we need to understand how an arbitrary stabiliser diagram
can be related to a graph state diagram. Firstly, we rewrite any stabiliser
diagram \(D\) to a state using the Choi-Jamiolkowski isomorphism:
\begin{equation}
  \label{eq:choi_isomorphism}
  \tikzfig{figures/equations/choi_stabiliser_graph_LHS}
  \quad{\Large\rightsquigarrow}
  \tikzfig{figures/equations/choi_stabiliser_graph_RHS}
\end{equation}
Any diagram \(0 \to n\) obtained from the Choi-Jamiolkowski isomorphism
can be rewritten to a graph state diagram, with single qupit Clifford operations
acting on its output wires:
\begin{definition}[\cite{backens_zx-calculus_2014, wang_qutrit_2018}]
  A \emph{GS+LC diagram} is a \(\ZXp\)-diagram which consists of a graph state
  diagram with arbitrary single-qupit Clifford operations applied to each
  output. These associated Clifford operations are called \emph{vertex
    operators}.
\end{definition}

\begin{proposition}
  \label{prop:GS+LC}
  Every stabiliser state \(\ZXp\)-diagram can be rewritten, up to elementary
  scalars, to GS+LC form under \(\ZXeq\).
\end{proposition}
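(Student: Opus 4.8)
The plan is a structural induction on $D$, peeling off the generators of $\ZXp$ one at a time, with the invariant that every stabiliser state sub-diagram is $\ZXeq$-equal, up to an elementary scalar, to a GS+LC diagram. Concretely, writing $D$ as $(\mathrm{id}_k \otimes g) \circ D'$ (modulo permutations of wires handled by OCM), where $g$ is the last generator applied and $D'$ is a state diagram with strictly fewer generators, the inductive hypothesis gives $D' \simeq (\text{GS+LC})$, so it suffices to show that the class of GS+LC diagrams is closed, modulo $\simeq$ and under $\ZXeq$, under tensoring with and post-composing onto one or two output wires each generator of the language. Using the \textsc{(Colour)} meta-rule I would first trade the red Frobenius generators and the antipode for green generators and Hadamards, and observe that the cup and cap are just green spiders of types $0 \to 2$ and $2 \to 0$; this leaves as genuine cases the green unit, counit, multiplication and comultiplication, the Hadamard, the braid and $\star$.

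The structural cases are immediate: $\star$ contributes an elementary scalar; the braid permutes outputs; post-composing the Hadamard onto an output is absorbed into that output's vertex operator, which is returned to $\mathscr{C}_1$-normal form by Proposition \ref{prop:c1_completeness}. The green unit is, up to an elementary scalar, the one-vertex graph state, and a labelled unit additionally carries a green phase, which is a legitimate vertex operator; so tensoring with a unit merely adjoins an isolated vertex.

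The substance lies in the green multiplication, comultiplication and counit --- i.e.\ merging two outputs, splitting an output, and deleting an output of a GS+LC diagram --- following the scheme of the qubit and qutrit completeness proofs \cite{backens_zx-calculus_2014, wang_qutrit_2018}. Before performing the graph-level operation, one uses the local-Clifford toolkit of Section \ref{sec:states} --- local scaling (Proposition \ref{prop:local_scaling}), local complementation (Proposition \ref{prop:local_complementation}) and the Pauli stabiliser rule (Proposition \ref{prop:pauli_stabiliser}), conveniently packaged in Corollary \ref{cor:GS_red_rotation}) --- to rewrite the vertex operator(s) on the affected wire(s) into a convenient shape, essentially a power of the Hadamard or the identity. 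With the vertex operators trivialised, a green comultiplication splits a vertex $v$ by adjoining one fresh vertex joined to $v$ by an H-edge and carrying a Hadamard vertex operator; a green multiplication merges $v$ and $w$ into a single vertex adjacent to $N(v) \cup N(w)$, any resulting self-loop being removed by Lemma \ref{lem:loop}; and a green counit on $v$ either deletes $v$ outright or, when this is blocked, is first turned into a removable configuration by a local complementation about a neighbour of $v$, after which $v$ is erased via \textsc{(G-Elim)} and the Hopf identity (Lemma \ref{lem:hopf}). A final pass with Proposition \ref{prop:c1_completeness} puts every vertex operator into normal form.

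The main obstacle is precisely this last block of cases: choosing and executing the right sequence of local complementations and scalings so that each vertex operator is in the exact shape the graph move requires, and --- above all --- arranging the reductions so the procedure terminates, e.g.\ by exhibiting a complexity measure (say, the number of vertices carrying $0$ or $\geq 2$ output legs, lexicographically refined by the total size of the vertex operators) that strictly decreases at each step. This is also where the qupit argument genuinely departs from the qubit one, since the graphs now carry $\Z_p^*$-valued edge weights and the local scaling operation --- which has no qubit analogue --- is needed to keep the vertex operators inside the small normal form on which the whole induction depends.
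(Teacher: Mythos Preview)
Your proposal is sound and follows the classical inductive strategy of \cite{backens_zx-calculus_2014}: peel off one generator at a time and show that GS+LC form is closed, up to elementary scalars, under post-composition with each. The paper takes a different, more global route. Rather than inducting on generators, it first rewrites the \emph{entire} diagram in a single pass to ``graph-like'' form---only green spiders, only H-edges between them, at most one output per spider---using \textsc{(Colour)}, \textsc{(Spider)}, Lemma~\ref{lem:H_loop}, the H-edge sum rule of Proposition~\ref{prop:weighted_hadamard}, and a splitting rule for spiders with several outputs. What remains is then a genuine GS+LC diagram plus some ``internal'' green spiders not attached to any output, and the rest of the proof is a uniform procedure for eliminating those internal vertices one by one, using the Euler decomposition of the Hadamard (Lemma~\ref{lem:hadamard_euler}) together with local scalings and local complementations about a neighbouring vertex. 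In effect the paper trades your case analysis on generators for a single elimination step repeated until no internal vertices remain: your hard cases (multiplication merging two outputs, counit deleting one) never appear as such, because a spider with two outputs is simply split and a spider with zero outputs is just another internal vertex to kill. Your approach keeps a cleaner inductive invariant and tracks the qubit argument more closely; the paper's approach avoids having to tame two vertex operators simultaneously and gives termination for free (the number of internal vertices strictly decreases), at the cost of the somewhat ad hoc case split on whether the internal vertex has an output-carrying neighbour.
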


In other words, \(\ZXeq\) proves that, for any stabiliser \(\ZXp\)-diagram \(D : m
\to n\), there is a graph \(G\) on \(m+n\) vertices and a set of vertex
operators \((v_k)_{k=1}^{m+n}\) such that
\begin{equation}
  \tikzfig{figures/equations/stabiliser_graph_LHS}
  \quad\simeq\quad
  \tikzfig{figures/equations/stabiliser_graph_RHS}
\end{equation}
and we need only consider the question of whether \(\ZXeq\) can prove the
equality of two GS+LC diagrams.

\subsection{Completeness modulo elementary scalars}

Now, as a first step, we show that \(\ZXeq\) can normalise any pair of diagrams
with equal interpretations, up to elementary scalars. In particular, as was
shown in the previous section, we can relax \(\ZXeq\) to reason about equality up to
elementary scalars by simply ignoring the scalar part of each rule, and make free
use of the ``scalarless'' equational theory. We will take care of the resulting
scalars in the next section.

This part of the completeness proof follows the general ideas of
\cite{backens_zx-calculus_2014}. The first step on the way to completeness is
to note that, considering a diagram in GS+LC form, where the vertex operators
have been normalised, we can obtain a yet more reduced diagram by absorbing as
much as possible of the vertex operators into local scalings and local
complementations. We then obtain the following form for the vertex operators:
\begin{definition}[\cite{backens_zx-calculus_2014}] 
  \label{def:rGS+LC}
  A \(\ZXp\)-diagram is in \emph{reduced} GS+LC (or rGS+LC) form if it is in
  GS+LC form, and furthermore:
  \begin{enumerate}
  \item All vertex operators belong to the following set:
    \begin{equation}
      R = \left\{ \tikzfig{figures/equations/c1_reduced} \,\middle|\, s,t \in \Z_p \right\}.
    \end{equation}
  \item Two adjacent vertices do not have vertex operators that both include red spiders.
  \end{enumerate}
\end{definition}

\begin{proposition}
  \label{prop:rGS+LC}
  If \(D \in \ZXp\) is a Clifford diagram, then there is a diagram \(G
  \in \ZXp\) in rGS+LC form such that \(\ZXeq \vdash D \simeq G\).
\end{proposition}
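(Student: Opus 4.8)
The plan is to follow the strategy of \cite{backens_zx-calculus_2014}, adapted to qupits using the graph-state manipulation lemmas of Section~\ref{sec:states}. Since $D$ is a Clifford diagram it denotes a stabiliser operation, so we may bend its inputs down to outputs via the cups (the Choi--Jamiolkowski rewrite~\eqref{eq:choi_isomorphism}) and treat it as a stabiliser state; by Proposition~\ref{prop:GS+LC} it then rewrites, up to an elementary scalar, into GS+LC form, i.e.\ a graph state diagram for some $G \in \Z_p^{N\times N}$ with a single-qupit Clifford vertex operator on each output. By Proposition~\ref{prop:c1_completeness} each vertex operator can be put in the canonical $\mathscr{C}_1$-form with parameters $s,t,u,v\in\Z_p$, $w\in\Z_p^*$; in particular, up to $\simeq$, each is a composite of a multiplier, green and red phases, and at most one Hadamard. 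Throughout, all equalities are claimed only up to elementary scalars, so the scalar side-conditions in \textsc{(Gauss)}, \textsc{(Char)}, etc.\ may be dropped by the proposition allowing disconnected scalar parts to be erased.

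First I would absorb the multiplier part of every vertex operator into the underlying graph: a multiplier labelled $w$ on vertex $k$ is removed by Proposition~\ref{prop:local_scaling}, replacing $G$ with $G\overset{w}{\circ}k$. Doing this at every vertex leaves each vertex operator multiplier-free, hence (after normalising again with Proposition~\ref{prop:c1_completeness}) of the form ``green phase, then red phase, then Hadamard, then green phase'' or a Hadamard-free variant. The outermost green phase fuses directly onto the graph-state spider at $k$ by \textsc{(Fusion)}, so what remains is exactly the LHS pattern of Corollary~\ref{cor:GS_red_rotation}.

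Next I would eliminate the Hadamards. Applying Corollary~\ref{cor:GS_red_rotation} about a vertex $k$ whose operator still contains a Hadamard performs a local complementation (together with a local scaling and a Pauli update), replacing the graph by $G\overset{\gamma}{\star}k$ (further corrected by local scalings), turning vertex $k$'s operator into a Hadamard-free element, and depositing green phases onto the neighbours of $k$, which are simply fused into those neighbours' vertex operators and renormalised. After doing this everywhere, every vertex operator lies in the reduced set $R$ of Definition~\ref{def:rGS+LC}, establishing condition~1. To enforce condition~2, suppose $u\sim v$ and both $u,v$ carry a red spider; after a Pauli-stabiliser rewrite (Proposition~\ref{prop:pauli_stabiliser}) to bring $u$'s operator into the exact shape required, a further application of Corollary~\ref{cor:GS_red_rotation} about $u$ removes the red spider from $u$, affecting only the operators of $u$ and its neighbours. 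One then schedules these moves — tracking the set of edges whose endpoints both carry red vertex operators, together with a suitable secondary measure — so that the process terminates with no such edge remaining, and reads off the resulting rGS+LC diagram $G$; bending the outputs back up recovers $D\simeq G$.

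The main obstacle is precisely this last scheduling argument: one must show that the local complementations used to clear red--red edges do not reintroduce such edges indefinitely. This is handled essentially as in the corresponding lemma of \cite{backens_zx-calculus_2014}, using that local complementation about $u$ commutes the red part of $u$ onto its neighbourhood in a controlled way and that the green phases so produced never themselves create red spiders, so a potential argument applies; the qupit-specific bookkeeping (edge weights in $\Z_p$ rather than $\Z_2$, the non-involutivity of $H$, and the extra antipodes) changes the computations but not the overall structure of the termination proof.
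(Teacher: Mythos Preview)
Your overall strategy matches the paper's: reduce to GS+LC via Proposition~\ref{prop:GS+LC}, normalise each vertex operator with Proposition~\ref{prop:c1_completeness}, absorb the multiplier part into the graph by local scaling (Proposition~\ref{prop:local_scaling}), then use local complementations to reach the set $R$, and finally deal with adjacent marked vertices. Two steps, however, are not correct as written.

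First, Corollary~\ref{cor:GS_red_rotation} absorbs a \emph{red phase} sitting directly on a graph vertex; it does not act on a Hadamard, so the claim that one application ``turns vertex $k$'s operator into a Hadamard-free element'' is unjustified. The paper instead frames this step as eliminating the \emph{leftmost red spider} of the normalised vertex operator (which is exactly what the corollary does), and crucially argues termination: the green phases deposited on neighbours, when pushed through their vertex operators and re-normalised via Proposition~\ref{prop:c1_completeness}, never increase the total number of red spiders, so at most $2\lvert V\rvert$ applications suffice. You give no such bound. You also omit the case where the vertex is isolated, which the paper handles separately using Lemma~\ref{lem:unit_rotation_elim}.

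Second, for condition~2 the paper does \emph{not} use a scheduling/potential argument. It exhibits a specific sequence of local complementations at \emph{both} $u$ and $v$ that, after re-normalising, removes the red spiders at both endpoints simultaneously; thus the number of adjacent marked pairs strictly decreases and the process terminates. Your proposal---complement only at $u$ and hope to schedule---does not obviously terminate: local complementation at $u$ rewires the neighbourhood of $u$ and can create new red--red adjacencies elsewhere, and the qubit argument you cite is not one you can import without re-checking it over $\Z_p$ weights and non-involutive Hadamards. The use of Proposition~\ref{prop:pauli_stabiliser} ``to bring $u$'s operator into the exact shape required'' is also misplaced: that proposition inserts Pauli stabilisers of the graph state, it does not reshape vertex operators.
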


Then, given two diagrams with equal interpretations, taking them both to rGS+LC
makes the task of comparing the diagrams considerably easier. In particular, we
can guarantee that the corresponding vertex operators in each diagram always
have matching forms:
\begin{definition}[\cite{backens_zx-calculus_2014}] 
  A pair of rGS+LC diagrams of the same type (i.e. whose graphs have the same vertex set
  \(V\)) is said to be \emph{simplified} if there is no pair of vertices \(q,p
  \in V\) such that \(q\) has a red vertex operator in the first diagram but not
  the second, \(q\) has a red vertex operator in the second diagram but not the
  first, and \(q\) and \(p\) are adjacent in at least one of the diagrams.
\end{definition}

\begin{proposition}
  \label{prop:rGS+LC_simple}
  Any pair \(A,B\) of rGS+LC diagrams of the same type (i.e. on the same vertex
  set) can be simplified.
\end{proposition}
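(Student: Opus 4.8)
The plan is to take an arbitrary pair $A,B$ of rGS+LC diagrams on the same vertex set $V$ and repeatedly apply local complementations (which, by Proposition~\ref{prop:local_complementation} together with Corollary~\ref{cor:GS_red_rotation}, $\ZXeq$ proves can be pushed through the vertex operators at the cost of updating them) in order to destroy every ``bad'' configuration. Call a vertex $q \in V$ \emph{red in $A$} if its vertex operator in $A$ lies in the subset of $R$ containing a red spider, and similarly for $B$; a vertex $q$ is \emph{problematic} if it is red in exactly one of the two diagrams, and a problematic vertex is \emph{active} if it is adjacent (in at least one of the graphs) to some other vertex $p$. The goal is to reach a pair with no active problematic vertex. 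The key observation — exactly as in \cite{backens_zx-calculus_2014} — is that performing a $\gamma$-complementation about a vertex $w$ acts on the vertex operator at $w$ by right-multiplication by a Hadamard-type factor, and hence toggles whether $w$ is red; crucially, after normalising the vertex operators back into $R$ using Proposition~\ref{prop:c1_completeness}, this toggles the ``red-ness'' of $w$ on whichever side we complemented, while leaving the red-ness of all other vertices on that side, and everything on the other side, untouched.

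The main steps, in order, are as follows. First I would set up the bookkeeping: define the finite quantity $N(A,B) = $ (number of active problematic vertices), and show it suffices to drive this to zero. Second, I would establish the \emph{local-complementation toggle lemma}: for any vertex $w$ and any $\gamma \in \Z_p^*$, $\ZXeq$ proves that the rGS+LC pair $(A,B)$ can be rewritten to an rGS+LC pair $(A',B)$ where the underlying graph of $A'$ is $G_A \overset{\gamma}{\star} w$, the vertex operator at $w$ has had its red-ness flipped, the vertex operators at neighbours of $w$ have been updated by green (diagonal) Cliffords — which by the structure of $R$ and Proposition~\ref{prop:c1_completeness} re-normalise without introducing new red spiders — and all other vertex operators are unchanged; and symmetrically for $B$. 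This is where Corollary~\ref{cor:GS_red_rotation} does the heavy lifting, since it tells us precisely that a red phase sitting on vertex $w$ of a graph state, with the appropriate green phases on the neighbours, can be absorbed via local complementation. Third, I would run the termination argument: pick an active problematic vertex $q$, say red in $A$ but not $B$, with a neighbour $p$ in $G_A$ (the case where it is only adjacent in $G_B$ is symmetric, flipping the roles of $A$ and $B$); complement $A$ about $q$ to kill the red spider at $q$, checking that this does not create a new problematic vertex — the only vertex whose red-ness changed is $q$ itself, which ceases to be problematic, and the neighbour updates stay inside $R$ without red spiders, so no new problems are introduced, hence $N$ strictly decreases.

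The step I expect to be the main obstacle is the third one — verifying that complementing about a problematic vertex \emph{genuinely decreases} $N(A,B)$ and does not merely move the problem around. One must be careful that the green-Clifford updates forced on the neighbours of $q$ (and the possible re-normalisation of $q$'s own vertex operator into $R$ via Proposition~\ref{prop:c1_completeness}) never secretly introduce a red spider at a previously non-red, now-adjacent vertex, and that changing the edge set $G_A \mapsto G_A \overset{\gamma}{\star} q$ does not make some previously-inactive problematic vertex active by creating a new adjacency. Handling this cleanly requires fixing a deterministic normalisation procedure for vertex operators (so that ``red in $A$'' is well-defined and stable), and choosing the complementation parameter $\gamma$ so that $q$'s vertex operator lands in the red-free part of $R$; I would follow \cite{backens_zx-calculus_2014} in using the classification of single-qupit Clifford operators from Proposition~\ref{prop:c1_completeness} to organise this case analysis, noting that the qupit case differs only in that the toggling Clifford and the neighbour-updates carry $\Z_p$-weights rather than being involutions, which the machinery of multipliers and H-boxes from Section~\ref{sec:structures} already handles.
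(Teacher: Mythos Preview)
Your strategy diverges from the paper's, and the divergence opens a real gap. The paper does not attempt to eliminate \emph{all} active mismatched vertices; it works directly with a witnessing pair $(u,v)$ from the definition of ``simplified'' --- $u$ marked in $A$ only, $v$ marked in $B$ only, and $u\sim v$ in $A$ --- and performs in $A$ the same local-complementation manoeuvre used in the proof of Proposition~\ref{prop:rGS+LC} to dissolve a pair of adjacent red vertex operators. The effect is that the red marker \emph{transfers} from $u$ to $v$ inside $A$: after renormalising and bringing $A$ back to rGS+LC form, $u$ is unmarked in both diagrams and $v$ is marked in both. One move thus removes the entire offending pair, and iterating over all such pairs terminates.

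Your toggle lemma, by contrast, is misstated and does not obviously do what you need. A $\gamma$-complementation about $w$ pre-composes the vertex operator at $w$ with a red \emph{phase} spider (this is exactly the content of Corollary~\ref{cor:GS_red_rotation}), not a ``Hadamard-type factor''; whether that flips membership in the marked half of $R$ after renormalisation via Proposition~\ref{prop:c1_completeness} is a genuine case analysis on the two shapes in $R$ that you have not carried out, and for the marked shape it need not. More seriously, your invariant $N(A,B)$ tracks a strictly stronger condition than ``simplified'', and your fallback case --- $q$ red in $A$ only but isolated in $G_A$, so you complement $B$ about $q$ to make it red there --- can break rGS+LC condition~2 in $B$, since $q$'s $G_B$-neighbours were allowed to be marked; the subsequent re-reduction via Proposition~\ref{prop:rGS+LC} can then shuffle markers elsewhere and $N(A,B)$ need not decrease. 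The pair-based move sidesteps exactly this: because the target vertex $v$ is already marked on the $B$ side, landing the marker on $v$ in $A$ cannot manufacture a new mismatch.
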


For the sake of clarity, we shall say that the vertex operator (or equivalently,
the vertex itself) is \emph{marked} if it contains a red spider (i.e. it belongs
to the right-hand form of definition~\ref{def:rGS+LC}). Then, two diagrams with
the same interpretation can always be rewriten so that the marked vertices
match:
\begin{proposition}\label{prop:marked}
  Let \(C,D \in \ZXp\) be a simplified pair in rGS+LC form, then \(\interp{C} =
  \interp{D}\) only if the marked vertices in $C$ and $D$ are the same.
\end{proposition}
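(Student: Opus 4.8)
The plan is to compare the two diagrams $C$ and $D$ against a common reference, exploiting the fact that a marked vertex means a red spider appears in the vertex operator there. Since the pair is in simplified rGS+LC form, by Proposition~\ref{prop:rGS+LC_simple} and the definition of "simplified", the obstruction to matching up the marked vertices is confined to non-adjacent vertices, so I would argue vertex-by-vertex. First I would fix a vertex $q \in V$ and suppose, for contradiction, that $q$ is marked in $C$ but not in $D$. The idea is to apply a suitable local Clifford (equivalently, a sequence of local scalings and local complementations together with Pauli stabiliser moves from Propositions~\ref{prop:local_scaling}, \ref{prop:local_complementation}, \ref{prop:pauli_stabiliser} and Corollary~\ref{cor:GS_red_rotation}) to \emph{both} $C$ and $D$ at once, chosen so that on the $D$ side it removes all vertex operators adjacent to $q$ and makes the local picture around $q$ as simple as possible; then measure or project the output wire at $q$ in a Pauli basis and read off a contradiction from the resulting scalars or from the rank of the remaining state.

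Concretely, the key steps in order would be: (1) Use the "simplified" hypothesis to note that no neighbour of $q$ (in either diagram) is marked on the side where $q$ is unmarked, so around $q$ on the $D$-side every vertex operator is a pure green phase (an element of $R$ with no red spider), which can be fused into the graph-state spiders and partly absorbed by local scalings; this reduces the local structure at $q$ in $D$ to a bare graph-state vertex up to a known green phase. (2) On the $C$-side, $q$ carries a vertex operator containing a red spider; using \textsc{(Colour)} and the antipode/Hopf identities, rewrite this as a Hadamard-conjugated green phase, so that the output wire at $q$ in $C$ factors through an $X$-eigenbasis structure rather than a $Z$-eigenbasis one. (3) Plug the effect $\bra{0:Z}$ (or more generally $\bra{k:Z}$) onto the output wire labelled $q$ in both diagrams and compute the interpretations: on the $D$-side this is a simple graph-state vertex deletion, producing a nonzero graph state on $V \setminus \{q\}$ for every value of $k$; on the $C$-side, because the vertex operator is "essentially Hadamarded", the same projection either annihilates the state for all but one $k$ or produces a state of strictly different support/rank. (4) Conclude that for some choice of Pauli effect the two interpretations disagree, contradicting $\interp{C} = \interp{D}$; hence $q$ must be marked on both sides or neither.

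I expect the main obstacle to be step (3): making precise, at the level of the standard interpretation, exactly \emph{which} Pauli effect separates a "graph-state vertex with a trivial/green vertex operator" from one "with a red-containing vertex operator", while ensuring the neighbours' contributions don't accidentally restore agreement. The cleanest way to control this is to first use Proposition~\ref{prop:rGS+LC_simple} and local complementation to arrange that $q$'s neighbourhood is identical in $C$ and $D$ (since the graphs need not be equal, one reduces to comparing the vertex operators against a fixed local graph), and then invoke the uniqueness clause of Proposition~\ref{prop:c1_completeness}: the vertex operator at $q$ is a genuine single-qupit Clifford map, and a red-marked one is provably not equal — even up to the graph-state stabilisers acting at $q$ — to any unmarked (green-phase) one, because the former sends the $Z$-eigenbasis off-diagonally. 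This local rigidity, combined with the fact that a graph state has full Schmidt rank across the cut $\{q\} \mid V\setminus\{q\}$ whenever $q$ has at least one neighbour (and is a product state otherwise, handled separately), forces the interpretations apart unless the marks agree.
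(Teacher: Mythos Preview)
Your plan diverges from the paper's at the crucial step, and the divergence creates a real gap. The projection idea in step~(3) does not separate the two cases cleanly: on the unmarked ($D$) side, $\bra{k:Z}$ passes through a green vertex operator as a scalar phase and then deletes the graph vertex, but on the marked ($C$) side, $\bra{k:Z}$ composed with the red $(0,1)$-phase yields a genuine superposition $\sum_j \omega^{2^{-1}j^2 - jk}\bra{j:X}$ rather than a single Pauli effect, so you are left comparing one deleted-vertex graph state against a complicated sum of such, with the two underlying graphs $G_C$, $G_D$ not even assumed to agree at $q$. Your proposed fix, to first arrange that $q$'s neighbourhood is identical in $C$ and $D$, is not available: any operation must be applied to \emph{both} diagrams simultaneously, so you cannot independently massage the two graphs into alignment. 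The appeal to uniqueness in Proposition~\ref{prop:c1_completeness} also does not close the argument, since different vertex operators can combine with different graphs to give the same global state. (A smaller point: the claim in step~(1) that every neighbour of $q$ on the $D$-side is unmarked is not quite what ``simplified'' gives you; a neighbour of $q$ in $D$ that is not a neighbour in $C$ may be marked in $D$, provided it is also marked in $C$.)

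The paper reaches the Schmidt-rank contradiction you gesture at in your last paragraph, but in one stroke and without any projection. It applies to both diagrams the \emph{same} unitary: a green Clifford phase on the output wire at $v$ together with the appropriate entangling gates chosen according to the neighbourhood of $v$ in $C$. Because the vertex operator at $v$ in $C$ contains a red spider, this combination (working up to local Cliffords on the other wires) leaves $v$ completely disconnected from the rest of the graph in $C$, so the resulting state is a tensor product across $\{v\}\mid V\setminus\{v\}$. In $D$ the vertex operator at $v$ is purely green, so the same unitary merely shifts phases and edge weights; no graph-level disconnection is triggered and $v$ remains entangled with the rest. A product state cannot equal an entangled one, giving the contradiction directly. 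The cases where $v$ is isolated in one or both diagrams are dispatched separately at the outset.
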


Finally, we have enough control over the pair of diagrams to finish the
completeness proof: 
\begin{theorem}\label{thm:comp}
  \(\ZXeq\) is complete for \(\interp{-}\), i.e. if for any pair of
  diagrams \(A,B \in \ZXp [0,n]\) with $n\neq 0$ , \(\interp{A} = \interp{B}\), then \(\ZXeq
  \vdash A \simeq B\).
\end{theorem}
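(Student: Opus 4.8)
The plan is to bolt together the reduction results established above and then close the gap with a uniqueness argument for the normal form. Given $A,B\in\ZXp[0,n]$ with $n\geqslant 1$ and $\interp{A}=\interp{B}$, first apply Proposition~\ref{prop:GS+LC} and then Proposition~\ref{prop:rGS+LC} to rewrite, up to an elementary scalar, $A\simeq A'$ and $B\simeq B'$ with $A',B'$ in rGS+LC form on the common vertex set $V=\{1,\dotsc,n\}$ of output wires. Since these rewrites change the interpretation only by an invertible elementary scalar and both states are nonzero, $\interp{A'}$ and $\interp{B'}$ are proportional; folding the ratio into $B'$ we may assume $\interp{A'}=\interp{B'}$. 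Next, Proposition~\ref{prop:rGS+LC_simple} lets us simplify the pair $(A',B')$, and then Proposition~\ref{prop:marked} forces $A'$ and $B'$ to have the same marked vertex set $M\subseteq V$. What remains is to show that a simplified pair of rGS+LC diagrams on the same vertex set, with the same marked set, and the same interpretation, is literally the \emph{same} diagram up to an elementary scalar (hence $\ZXeq$-provably equal); chaining $A\simeq A'\simeq B'\simeq B$ then yields $A\simeq B$.

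For that last step I would read the graph and the vertex-operator labels off the interpretation. The natural basis in which to do this is the one that is the $X$-eigenbasis on the unmarked wires $V\setminus M$ and the $Z$-eigenbasis on the marked wires $M$, because the unmarked vertex operators are $Z$-diagonal green phases whereas the marked ones are Hadamard-conjugates of such. Expanded in this basis, the amplitude of a string $\vec k$ is, up to a global scalar, a Gauss phase $\omega^{2^{-1}Q(\vec k)}$ for an explicit inhomogeneous quadratic form $Q$ over $\Z_p$ whose coefficients are the off-diagonal edge weights $G_{uv}$ together with the quadratic and linear green-phase labels $(t_v,s_v)$ and the red-phase parameters of the marked vertex operators; moreover, the Hadamard at each marked vertex $v$ restricts the support to the affine subspace cut out by the linear relation reading the $v$-th row of $G$ against $\vec k$. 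Simplified-ness forces the marked--marked block of $G$ to vanish, and -- crucially -- it is exactly the matching of the marked sets that guarantees the two expansions for $A'$ and $B'$ are indexed over the same set, so that amplitudes may be compared coefficient by coefficient; this mirrors the most delicate bookkeeping in the qubit proof of \cite{backens_zx-calculus_2014}.

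Finally, $\interp{A'}=\interp{B'}$ forces the supports and the quadratic forms to coincide. Here oddness of $p$ is indispensable and the unconventional factor $2^{-1}$ in the phase convention pays off: since $2\in\Z_p^*$, a quadratic form over $\Z_p$ is interchangeable with its polarisation $B(\vec x,\vec y)=2^{-1}\left(Q(\vec x+\vec y)-Q(\vec x)-Q(\vec y)\right)$, so equality of the $Q$'s pins down each edge weight $G_{uv}$, each pair $(s_v,t_v)$, and each red parameter individually (the distinguished marked coordinates ensuring that equality of affine supports forces equality of each defining linear relation, hence of each row of $G$ through a marked vertex). Therefore $A'$ and $B'$ have identical graphs and identical vertex operators, i.e.\ are the same diagram up to an elementary scalar. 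I expect this last step -- choosing the common index set and carrying out the quadratic-form recovery over $\Z_p$ while tracking the marked vertices -- to be the real obstacle, since it is where the extra combinatorial freedom of $\Z_p$-weighted graph states (invisible in the qubit case) must be controlled; everything feeding into it, namely Propositions~\ref{prop:GS+LC}--\ref{prop:marked}, is comparatively routine, if laborious.
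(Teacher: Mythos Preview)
Your overall strategy matches the paper through Propositions~\ref{prop:GS+LC}--\ref{prop:marked}: reduce both diagrams to a simplified rGS+LC pair and conclude that the marked sets coincide. The divergence is only in the final uniqueness step. The paper argues diagrammatically rather than analytically: it post-composes both $A'$ and $B'$ with the Clifford unitary that inverts the edge gates of $A'$ (together with the local operators realising the reduced vertex operators), so that $A'$ becomes a completely disconnected product of single-qupit states. Since the same unitary is applied to $B'$ and the interpretations agree, $B'$ must also become a product state, which forces the underlying graphs to coincide; the remaining local phases then match trivially. This is the direct $\Z_p$-weighted analogue of Backens' qubit argument and never leaves the graphical language.

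Your route---expand in a hybrid basis and recover the edge weights and phase labels from the coefficients of a quadratic form over $\Z_p$, invoking polarisation since $2\in\Z_p^*$---is a legitimate alternative and has the virtue of making uniqueness of the rGS+LC normal form fully explicit. Two details need care, though. First, your basis assignment looks inverted: green phases are $Z$-diagonal, so it is the \emph{unmarked} wires that want the $Z$-basis to keep the amplitude Gaussian, not the $X$-basis. Second, the claim that the Hadamard at a marked vertex ``restricts the support to an affine subspace'' only yields a Kronecker delta once the green phase sitting under it has no quadratic part; otherwise the marginal sum is a genuine Gauss sum, and you must track its value rather than a support condition. Both issues are repairable, but they are precisely the bookkeeping the paper's disconnect-and-compare method sidesteps, at the modest cost of being less transparently a uniqueness-of-normal-form statement.
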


\subsection{Completeness of the scalar fragment}


Finally, we are ready to leap-frog off of the previous section into the full
completeness (including scalars). First, we need to find a normal form for
diagrams which evaluate to \(0\). In fact, we need pick one normal form for each
type \(m \to n\):
\begin{proposition}
  \label{prop:zero_normal_forms}
  The ``zero'' scalar ``destroys'' diagrams: for any \(m,n \in \N\) and \(D \in
  \ZXp[m,n]\),
  \begin{equation*}
    \tikzfig{figures/equations/zero_normal_forms}
  \end{equation*}
  We take the RHS diagram to be the ``zero'' diagram of type \(m \to n\).
\end{proposition}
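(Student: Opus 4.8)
The plan is as follows. Write $z \in \ZXp[0,0]$ for the designated ``zero scalar'' (a fixed $0 \to 0$ diagram with $\interp{z} = 0$, e.g.\ a green $0 \to 0$ spider carrying a nonzero linear phase, since $\sum_{k \in \Z_p} \omega^{2^{-1}\lambda k} = 0$ whenever $\lambda \neq 0$), and let $Z_{m,n} \in \ZXp[m,n]$ denote the ``zero'' diagram appearing on the right of the statement, which one checks to be $z$ tensored with a green counit of trivial phase on each of its $m$ inputs and a green unit of trivial phase on each of its $n$ outputs. Then $Z_{m,n}$ contains $z$ as a tensor factor and $\interp{Z_{m,n}} = 0$, so both sides of the claimed equation interpret to the zero linear map: soundness is immediate and the content is the derivability. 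There are two routine reductions. First, bending all the legs of $D$ round with the compact structure (the Choi--Jamiolkowski isomorphism of \eqref{eq:choi_isomorphism}, which is a structural --- hence invertible --- piece of diagram) turns $D : m \to n$ into a state $D' : 0 \to m+n$ and turns $Z_{m,n}$ into $Z_{0,m+n}$, so for $m+n \geq 1$ it suffices to treat states. Second, the remaining case $m = n = 0$ reduces to the case $n = 1$: tensor $D$ with a green unit $u$ of trivial phase, apply the $n=1$ case to the state $D \otimes u$ to obtain $z \otimes D \otimes u = Z_{0,1} = z \otimes u$, and compose the output wire with a red counit $r$ of trivial phase; since $\ZXeq$ proves $r \circ u$ equal to an invertible elementary scalar (a short Fourier/\textsc{(Char)} computation gives $\interp{r \circ u} = \sqrt p$) and invertible elementary scalars cancel, this yields $z \otimes D = z = Z_{0,0}$.

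It remains to prove the statement for states $D : 0 \to n$ with $n \geq 1$. Here $\interp{z \otimes D} = 0 = \interp{Z_{0,n}}$, so by completeness modulo elementary scalars (Theorem~\ref{thm:comp}) $\ZXeq \vdash z \otimes D \simeq Z_{0,n}$; unfolding $\simeq$ and using that every elementary scalar is invertible, there is an elementary scalar $c$ with $z \otimes D = Z_{0,n} \otimes c$. So everything comes down to showing that the zero diagram absorbs elementary scalars, $\ZXeq \vdash Z_{0,n} \otimes c = Z_{0,n}$; and since $Z_{0,n}$ carries $z$ as a tensor factor and every elementary scalar is a finite tensor product of the generators listed in $O_p \cup P \cup Q$, it suffices to prove, for each such generator $c_0$, the single identity $\ZXeq \vdash z \otimes c_0 = z$. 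These are a handful of concrete derivations: the empty diagram is trivial; the $\omega^s$-scalars of $P$ and $\star$ are absorbed using \textsc{(G-Elim)} together with \textsc{(Fusion)} and the vanishing of $\sum_k \omega^{2^{-1}\lambda k}$ for $\lambda \neq 0$; the $\sqrt p^{\,\pm 1}$ scalars of $Q$ are first rewritten via \textsc{(Gauss)} and then absorbed in the same way; and, when $p \equiv 3 \bmod 4$, the scalars $\pm i$ are absorbed by reducing them to $\star$ and a quadratic green spider via \textsc{(M-One)}. Assembling these pieces gives $z \otimes D = Z_{0,n}$, and undoing the two reductions gives the proposition.

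The step I expect to be the main obstacle is exactly this last one: establishing $z \otimes c_0 = z$ for each generator of the elementary scalars. Like the definitions of $\mathbb{P}_p$ and $O_p$ it splits on the residue of $p$ modulo $4$, and it is the one place where a genuine interaction of the axioms is required rather than a soft categorical argument; in particular Theorem~\ref{thm:comp} cannot be bootstrapped for it, since that theorem only handles types $[0,n]$ with $n \neq 0$ and would in any case leave an undetermined elementary scalar dangling. Everything else --- the Choi reduction and its compatibility with $Z_{m,n}$, the invertibility of $r \circ u$, and the monicity-up-to-scalar of the green unit used in the $m = n = 0$ case --- is routine.
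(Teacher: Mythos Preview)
Your overall architecture is sound and matches the paper's implicit strategy: the paper never writes out a proof of this proposition, but it does prove exactly the absorption lemmas you single out as the crux (Lemmas~\ref{lem:zero_elementary}, \ref{lem:zero_amplitudes}, \ref{lem:zero_phases} show that the zero scalar absorbs each generator of the elementary scalars). Your Choi reduction and the $m=n=0$ trick are clean, and the identification $r \circ u \in Q$ is on the nose --- that diagram \emph{is} the $\sqrt{p}$ element of $Q$, so no scalar completeness is needed there.

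There is, however, a genuine gap, and it is not where you expect it. You invoke Theorem~\ref{thm:comp} on the pair $(z \otimes D,\,Z_{0,n})$, both of which interpret to $0$. But the paper's proof of Theorem~\ref{thm:comp} goes through Proposition~\ref{prop:GS+LC}, whose proof asserts that every isolated internal spider ``is an elementary scalar diagram which we can ignore.'' That assertion is false precisely when the isolated spider is a green $0\to 0$ node with phase $(a,0)$, $a \neq 0$ --- i.e.\ a zero scalar --- and this case \emph{must} arise when reducing a state with vanishing interpretation, since GS+LC diagrams and elementary scalars are both nonzero. So Proposition~\ref{prop:GS+LC} (hence Theorem~\ref{thm:comp}) is simply not established for zero states, and patching it there would amount to re-proving the very proposition you are after. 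Your appeal to Theorem~\ref{thm:comp} is therefore circular, not merely a black-box citation.

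The fix is to bypass Theorem~\ref{thm:comp} on zero states. One clean route: run the GS+LC reduction procedure of Proposition~\ref{prop:GS+LC} on $D$ (not on $z\otimes D$) while \emph{tracking} rather than ignoring the separated scalars; this always terminates with $D = s \otimes G$ where $G$ is GS+LC and $s$ is a tensor product of green $0\to0$ spiders (some possibly zero). Then show directly that $z$ absorbs any single green $0\to0$ spider --- this is a small extension of the paper's absorption lemmas, and also covers the case $z \otimes z' = z$ for two zero scalars --- and finally that $z$ disconnects a GS+LC diagram (equivalently, that $z$ tensored with a Hadamard edge or a cup can be split, which is a short derivation from \textsc{(Copy)} and the Hopf law). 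Alternatively, prove the single ``disconnection'' identity $z \otimes \text{id}_1 = z \otimes u \otimes c$ directly and induct on the wires of $D$; this avoids GS+LC entirely. Either way, the work you flagged as the obstacle (absorption of elementary-scalar generators) is indeed needed and is already done by the paper; what is missing is a non-circular bridge from an arbitrary $D$ down to those scalars.
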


Now, we say that a scalar \(C \in \ZXp[0,0]\) is in \emph{normal form} if it is
either the zero diagram, or it belongs to the set
\begin{equation}
  \label{eq:scalar_normal_form_p1}
  \left\{
    \tikzfig{figures/universality/empty},
    \scalebox{1.4}{$\star$}
  \right\}
  \otimes \left\{
    \tikzfig{figures/universality/empty},
    \tikzfig{figures/equations/omega_scalar_normal_form}
    \mid s \in \Z_p^* \right\}
  \otimes \left\{
    \tikzfig{figures/universality/sqrt_d},
    \tikzfig{figures/universality/empty},
    \tikzfig{figures/universality/sqrt_d_inv}
    \mid r \in \Z \right\},
\end{equation}
if \(p \equiv 1 \mod 4\), or to the set
\begin{equation}
  \label{eq:scalar_normal_form_p3}
  \left\{
    \tikzfig{figures/universality/empty},
    \tikzfig{figures/equations/imaginary_scalar_normal_form},
    \scalebox{1.4}{$\star$},
    \tikzfig{figures/equations/imaginary_scalar_minus_normal_form}
  \right\}
  \otimes \left\{
    \tikzfig{figures/universality/empty},
    \tikzfig{figures/equations/omega_scalar_normal_form}
    \mid s \in \Z_p^* \right\}
  \otimes \left\{
    \tikzfig{figures/universality/sqrt_d},
    \tikzfig{figures/universality/empty},
    \tikzfig{figures/universality/sqrt_d_inv}
    \mid r \in \Z \right\}.
\end{equation}
when \(p \equiv 3 \mod 4\). It is straightforward to see, by evaluating
\(\interp{-}\) on each element, that the sets in
equations~\eqref{eq:scalar_normal_form_p1} or \eqref{eq:scalar_normal_form_p3}
contain exactly one diagram for each scalar in \(\mathbb{G}_p \setminus \{0\}\)
(and the zero diagram \(\tikzfig{figures/equations/zero_diagram}\) corresponds
to \(0 \in \mathbb{G}_p\)). Comparing with the definition of \(\mathbb{G}_p\) in
equations~\eqref{eq:Pp_1}-\eqref{eq:Pp_3}, the first factor in the tensor
product corresponds to powers of \(-1\) or powers of the imaginary unit \(i\);
the second factor corresponds to powers of \(\omega\); and the third factor
to powers of \(\sqrt{p}\).

\begin{theorem}
  \label{thm:scalar_completeness}
  \(\ZXeq\) proves any scalar diagram equal to a scalar from either
  equations~\eqref{eq:scalar_normal_form_p1} or \eqref{eq:scalar_normal_form_p3}
  (depending on the congruence of \(p\) modulo \(4\)), or to the zero scalar
  \(\tikzfig{figures/equations/zero_diagram}\).
\end{theorem}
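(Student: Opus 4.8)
The plan is to bootstrap off Theorem~\ref{thm:comp}, which gives completeness modulo elementary scalars, by showing that (i) every scalar diagram can be rewritten to one of the chosen normal forms (equations~\eqref{eq:scalar_normal_form_p1} or \eqref{eq:scalar_normal_form_p3}) or to the zero scalar, and (ii) distinct normal forms have distinct interpretations. Part (ii) is immediate: one simply evaluates \(\interp{-}\) on each listed diagram and observes, from the definition of \(\mathbb{G}_p\) in equations~\eqref{eq:Pp_1}--\eqref{eq:Pp_3}, that the three tensor factors realise respectively the \(-1\)- or \(i\)-powers, the \(\omega\)-powers, and the \(\sqrt{p}\)-powers, each exactly once, so the map from normal forms to \(\mathbb{G}_p\) is a bijection. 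The real work is part (i).

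For part (i), I would argue in two stages. First, handle the zero case: if \(\interp{C} = 0\) for a scalar diagram \(C \in \ZXp[0,0]\), then I claim \(\ZXeq\) proves \(C\) equal to the zero scalar \(\tikzfig{figures/equations/zero_diagram}\). To see this, note that \(C\) tensored with any nonzero diagram still has interpretation \(0\); in particular, one can attach \(C\) to a cup to view it as (a scalar multiple of) a state of nonzero type, apply Proposition~\ref{prop:GS+LC} to bring it to GS+LC form, and observe that the only way a GS+LC diagram---whose interpretation is always a nonzero scalar times a genuine graph state---can interpret to \(0\) is if the original scalar was already forced to contain a zero; more directly, one shows by structural induction using the axioms that any scalar diagram whose interpretation vanishes must contain a sub-diagram provably equal to the zero scalar, and then Proposition~\ref{prop:zero_normal_forms} absorbs everything into the zero normal form. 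Second, for a scalar diagram \(C\) with \(\interp{C} \neq 0\): pick the unique normal-form scalar \(N\) in \eqref{eq:scalar_normal_form_p1}/\eqref{eq:scalar_normal_form_p3} with \(\interp{N} = \interp{C}\) (this exists by the bijection from part (ii) together with universality, Theorem~\ref{thm:universality}). Then \(\interp{C} = \interp{N}\), so by Theorem~\ref{thm:comp} applied at type \(0 \to n\) we would get \(C \otimes (\text{some nonzero state}) \simeq N \otimes (\text{same state})\) modulo elementary scalars; since \(N\) is itself built out of elementary scalars and elementary scalars are invertible (by the Proposition on multiplicative inverses of elementary scalars), we can cancel and conclude \(C \simeq \tikzfig{figures/universality/empty}\), i.e. \(C\) is provably equal to an elementary scalar, which---using axioms \textsc{(Gauss)}, \textsc{(M-One)} and lemmas \ref{lem:scalar_elementary}, \ref{lem:scalar_phase}, \ref{lem:scalar_i_elim}, \ref{lem:scalar_omega_elim}---can be normalised to \(N\).

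The main obstacle I anticipate is the zero case, because Theorem~\ref{thm:comp} is explicitly stated only for types \(0 \to n\) with \(n \neq 0\), so it gives no direct handle on scalars, and because a vanishing interpretation cannot be detected by the graph-state machinery in the same way nonzero interpretations can. The cleanest route is probably to prove directly that \(\ZXeq\) has a ``zero-propagation'' lemma: any diagram containing a red counit composed with a green unit with mismatched phases (or more precisely, any configuration whose interpretation is forced to \(0\)) rewrites to the zero scalar---this is where one must carefully use \textsc{(Char)}, the Hopf rule (lemma~\ref{lem:hopf}), and the spider rules to collapse a zero-interpretation scalar built from the generators. Once that lemma is in hand, the remaining arguments are bookkeeping: reducing a nonzero scalar to a product of the building blocks of \(O_p\), \(P\), \(Q\), then using the stated lemmas and axioms to collect powers of \(-1\) (or \(i\)), \(\omega\), and \(\sqrt{p}\) into the canonical exponents, with the congruence \(p \bmod 4\) dictating whether \(i\) itself is needed as a generator or is already expressible, exactly as in the definition of \(\mathbb{P}_p\).
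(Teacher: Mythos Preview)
Your overall strategy diverges from the paper's and leaves the zero case as an acknowledged but unfilled gap.

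The paper does not split on whether \(\interp{C}=0\). Instead, for each connected component \(A'\) of the scalar it pulls a (co)unit out of some spider in \(A'\), writing \(A'\) as a diagram \(B \in \ZXp[0,1]\) capped by a red counit. Since \(B\) already has type \(0\to1\), Theorem~\ref{thm:comp} applies to it directly---no external state needs to be tensored on---and together with the single-qupit normal form (Proposition~\ref{prop:c1_completeness}) this forces \(B\), up to elementary scalars, into one of a short explicit list of \(0\to1\) shapes. Re-capping with the counit produces a short explicit list of scalar shapes, each of which is then simplified by hand (via \textsc{(Shear)}, \textsc{(M-Elim)}, lemma~\ref{lem:clifford_states}, etc.) to a product of elementary scalars. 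The zero scalar simply appears as one of those shapes, and Proposition~\ref{prop:zero_normal_forms} then absorbs the rest of the tensor product. So zero is not a separate obstacle but a by-product of the same enumeration; the final multiplication lemmas (\textsc{(Gauss)}, \textsc{(M-One)}, lemmas~\ref{lem:scalar_omega_multiplication} and \ref{lem:scalar_gauss_multiplication}) then collect the surviving elementary factors into normal form.

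By contrast, your nonzero argument tensors \(C\) with an external state to reach nonzero type and then must cancel that state afterwards; this can be made to work (pair a green unit with a red counit, whose composite \(\sqrt{p}\) is an invertible elementary scalar), but you leave it implicit. The real problem is your zero case: you invoke Proposition~\ref{prop:GS+LC} on \(C\) tensored with a cup, but that proposition rewrites only up to \emph{elementary} scalars, and elementary scalars are by definition invertible, hence nonzero. So when \(\interp{C}=0\), the conclusion ``the GS+LC part is nonzero, hence the scalar part must contain a zero'' does not follow from anything proved---the proposition simply does not deliver a zero factor. The alternative ``structural induction'' and ``zero-propagation lemma'' you mention are hopes, not arguments. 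The paper's device of extracting a wire \emph{from} the scalar itself, rather than tensoring one on from outside, sidesteps both the cancellation bookkeeping and the separate zero case.
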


The completeness for the whole stabiliser fragment follows immediately:
\begin{theorem}
  The equational theory \(\ZXeq\) is complete for \(\Stab\), i.e. for any
  \(\ZXp\)-diagrams \(A\) and \(B\), if \(\interp{A} = \interp{B}\), then
  \(\ZXeq \vdash A = B\). Put more formally, there is a commutative diagram
  \begin{equation}
    \begin{tikzcd}
      \ZXp & {\ZXp / \ZXeq} \\
      & \Stab
      \arrow["{\interp{-}}"', from=1-1, to=2-2]
      \arrow[from=1-2, to=2-2]
      \arrow["p", twoheadrightarrow, from=1-1, to=1-2]
    \end{tikzcd}
  \end{equation}
  where \(p\) is the projection \(\ZXp \to \ZXp / \ZXeq\) and the vertical arrow
  is an isomorphism of categories.
\end{theorem}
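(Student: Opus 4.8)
The plan is to assemble the functor \(F \colon \ZXp/\ZXeq \to \Stab_p\) induced by \(\interp{-}\) and to show it is an isomorphism of categories by checking that it is bijective on objects and on hom-sets. By Theorem~\ref{thm:soundness}, \(\interp{-}\) factors through the projection \(\ZXp \to \ZXp/\ZXeq\), and since every generator of \(\ZXp\) is sent to a stabiliser operation (a scalar, a Clifford unitary, or a Pauli eigenstate/effect, all lying in \(\Stab_p\)), this factorisation corestricts to a functor \(F\) into \(\Stab_p\); commutativity of the triangle then holds by construction. On objects \(F\) sends \(n\) to \(\C^{pn}\), which is a bijection onto the objects of \(\Stab_p\). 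Fullness of \(F\) is exactly the universality statement, Theorem~\ref{thm:universality}. It therefore remains to prove faithfulness, i.e. that \(\interp{A} = \interp{B}\) implies \(\ZXeq \vdash A = B\) for all \(\ZXp\)-diagrams \(A, B\): this is the full completeness claim, and is where the results of the preceding sections are combined.

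To prove faithfulness I would first use compact closure to reduce arbitrary morphisms to states. Given \(A, B \colon m \to n\), let \(\tilde A, \tilde B \colon 0 \to m+n\) be their images under the (provable, purely structural) Choi--Jamio\l{}kowski folding along the cups. Functoriality of \(\interp{-}\) gives \(\interp{A} = \interp{B}\) iff \(\interp{\tilde A} = \interp{\tilde B}\), and since the folding is invertible in \(\ZXeq\) using the cup/cap rules, \(\ZXeq \vdash A = B\) iff \(\ZXeq \vdash \tilde A = \tilde B\); so it suffices to treat states \(0 \to k\). When \(k = m+n = 0\) the diagrams are scalars, handled below; so assume \(k \geqslant 1\). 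If \(\interp{\tilde A} = \interp{\tilde B} = 0\), then Proposition~\ref{prop:zero_normal_forms} rewrites both \(\tilde A\) and \(\tilde B\) under \(\ZXeq\) to the zero diagram of type \(0 \to k\), whence \(\ZXeq \vdash \tilde A = \tilde B\). If \(\interp{\tilde A} = \interp{\tilde B} \neq 0\), then Theorem~\ref{thm:comp} gives \(\ZXeq \vdash \tilde A \simeq \tilde B\), so there is an elementary scalar \(C\) with \(\ZXeq \vdash \tilde A = \tilde B \otimes C\); applying \(\interp{-}\) and cancelling the nonzero factor \(\interp{\tilde B}\) yields \(\interp{C} = 1\).

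Now I would invoke scalar completeness, Theorem~\ref{thm:scalar_completeness}: the scalar diagram \(C\) is provably equal to one of the normal forms of equation~\eqref{eq:scalar_normal_form_p1} or \eqref{eq:scalar_normal_form_p3} (according to the congruence of \(p\) mod \(4\)), together with the zero diagram; and, as noted there, evaluating \(\interp{-}\) on these normal forms gives exactly one diagram for each element of \(\mathbb{G}_p\). Since \(\interp{C} = 1\) and the empty diagram is the (unique) normal form with interpretation \(1\), we get that \(\ZXeq\) proves \(C\) equal to the empty diagram, hence \(\ZXeq \vdash \tilde A = \tilde B \otimes C = \tilde B\). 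This settles faithfulness for states, and hence for all morphisms except pure scalars. For the remaining case \(A, B \in \ZXp[0,0]\) with \(\interp{A} = \interp{B}\): Theorem~\ref{thm:scalar_completeness} rewrites each of \(A, B\) to a normal form, and since distinct normal forms have distinct interpretations they are rewritten to the \emph{same} normal form, so \(\ZXeq \vdash A = B\). Thus \(F\) is full, faithful, and bijective on objects, i.e. an isomorphism of categories, and the diagram commutes.

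The main obstacle is not in this final assembly, which is essentially bookkeeping, but in the ingredients it rests on: Theorem~\ref{thm:comp} (completeness modulo elementary scalars, obtained by bringing a pair of diagrams into simplified rGS+LC form and matching their marked vertices) and Theorem~\ref{thm:scalar_completeness} (producing a normal form for every scalar and showing the interpretation separates these normal forms). Within the assembly itself, the one point that needs care is the passage from ``equal up to an elementary scalar'' to genuine equality: one must use that elementary scalars are closed under the operations involved and that an elementary scalar is determined by its interpretation, so that \(\interp{C} = 1\) forces \(C\) provably trivial; and the vanishing case must be carved out separately, via Proposition~\ref{prop:zero_normal_forms}, since Theorem~\ref{thm:comp} applies only to diagrams with a nonzero interpretation and at least one output.
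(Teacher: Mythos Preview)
Your assembly is correct and matches the paper's approach---indeed the paper simply says the theorem ``follows immediately'' from Theorem~\ref{thm:comp} and Theorem~\ref{thm:scalar_completeness} without spelling out the bookkeeping, so your write-up is more explicit than the original.

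There is one small gap worth flagging, in exactly the spot you yourself single out as needing care. In the vanishing case you invoke Proposition~\ref{prop:zero_normal_forms} to rewrite a state $\tilde A$ with $\interp{\tilde A}=0$ to the zero diagram, but that proposition only says that the zero \emph{scalar} destroys any diagram tensored with it; it does not assert that an arbitrary diagram with zero interpretation is provably equal to the zero diagram. To close the gap you must first exhibit a zero scalar as a tensor factor. One way: run the GS+LC normalisation procedure from the proof of Proposition~\ref{prop:GS+LC} on $\tilde A$, obtaining a GS+LC diagram tensored with some scalar factor $S$ that collects all the scalar debris produced along the way (not all of which is elementary---an isolated $0\to 0$ green spider with phase $(a,0)$, $a\neq 0$, has interpretation $0$). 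Since every GS+LC diagram has nonzero interpretation, $\interp{S}=0$; Theorem~\ref{thm:scalar_completeness} then rewrites $S$ to the zero scalar, and \emph{now} Proposition~\ref{prop:zero_normal_forms} applies to collapse everything to the zero normal form of type $0\to k$. (The paper is equally loose here: Proposition~\ref{prop:GS+LC} is stated ``up to elementary scalars'', which are all invertible, so strictly speaking its statement does not cover zero-interpretation diagrams either.)
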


\section{Mixed states and relations}
\label{sec:relations}
In this last section we use the work of \cite{DBLP:conf/icalp/CaretteJPV19} to extend our completeness result to the mixed-state case. We then unravel the connection to the Lagrangian relation investigated in \cite{comfort_graphical_2021}.

\subsection{A complete graphical language for $\textup{CPM}( \mathsf{Stab}_p )$ }

We now extend our completeness result form $ \mathsf{Stab}_p $ to $\textup{CPM}( \mathsf{Stab}_p )$, the category of completely positive maps corresponding to mixed state stabiliser quantum mechanics, see \cite{DBLP:journals/entcs/Selinger07a,coecke_picturing_2017} for a formal definition. We will rely on the discard construction of \cite{DBLP:conf/icalp/CaretteJPV19} to define a graphical language \({(\ZXp)}^{\ground}\). It consists in adding to the equational theory one generator, the discard $\ground :1\to 0$ and equations stating that this generator erases all isometries. In \(\mathsf{Stab}_p \), the isometries are generated by the following diagrams:

\begin{center}
	\tikzfig{figures/isometries}
\end{center}

The equations to add are then:

\begin{center}
	\tikzfig{figures/isometrieserase}
\end{center}
	
A new interpretation $\interp{\_}: {\ZXp}^{\ground} \to \textup{CPM}( \mathsf{Stab}_p )$ is defined as $\interp{\tikzfig{ground}}: \rho \mapsto \Tr(\rho)$ for the ground and for all \(\ZXp\)-diagram $D:n\to m$:
	
\begin{center}

$\interp{\tikzfig{figures/D} }^{\ground} : \rho \mapsto \interp{\tikzfig{figures/D} }^{\dagger} \rho \interp{\tikzfig{figures/D} }$.
 	
\end{center}

Corollary 22 of \cite{DBLP:conf/icalp/CaretteJPV19} provides a sufficient condition for the previous construction to extend to a universal and complete graphical language for $\mathsf{Stab}_p $ into a universal complete graphical language for $\textup{CPM}( \mathsf{Stab}_p )$. This condition is for $\mathsf{Stab}_p $ to have \emph{enough isometries}, meaning (we use here a little stronger condition than in \cite{DBLP:conf/icalp/CaretteJPV19}) that for all maps $f:A \to B\otimes X$ and $g:A \to B\otimes Y$ such that:

\begin{center}
	\tikzfig{relcpm}
\end{center}

there is an isometry $V: X\to Y$ in $\mathsf{Stab}_p $ such that:

\begin{center}
	\tikzfig{reliso}
\end{center}

To prove this, we will use arguments similar to the proof that $\mathsf{Stab}_2 $ has enough isometry. Everything relies on the following lemma:

\begin{lemma}\label{lem:bipartition}
	Given any bipartition of the outputs of a $\ZXp$-diagram $D:0\to n+m$. We can find unitaries $A$ and $B$ in $\mathsf{Stab}_p $ such that:

\begin{center}
	\tikzfig{figures/bipartiteform}
\end{center}

\end{lemma}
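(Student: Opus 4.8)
The plan is to reduce the statement to a normal form for stabiliser states under a fixed bipartition of the outputs, exploiting the graph-state machinery developed in section~\ref{sec:states} together with the GS+LC rewriting of Proposition~\ref{prop:GS+LC}. First I would apply the Choi--Jamiolkowski rewrite~\eqref{eq:choi_isomorphism} and Proposition~\ref{prop:GS+LC} to bring \(D : 0 \to n+m\) into GS+LC form (up to an elementary scalar, which is irrelevant here since we only need equality of maps, and the normalisation conventions make these scalars unitary-compatible). So we may assume \(D\) is a graph state \(\ket{G}\) on vertex set \(V = V_1 \sqcup V_2\) with \(|V_1| = n\), \(|V_2| = m\), decorated by single-qupit vertex operators on each output; these vertex operators are local Clifford unitaries and can be absorbed into the \(A\) and \(B\) we are constructing, so the real content is about the bare graph state.

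The key step is then a \emph{cut-rank reduction} of the graph \(G\) across the bipartition \((V_1, V_2)\). Write the adjacency matrix in block form \(G = \begin{pmatrix} G_{11} & G_{12} \\ G_{12}^{\mathsf T} & G_{22}\end{pmatrix}\) over \(\Z_p\). Using local complementations and local scalings within each side (Propositions~\ref{prop:local_complementation} and~\ref{prop:local_scaling}, which are local Clifford operations and hence absorbable into \(A\) on the \(V_1\)-side and \(B\) on the \(V_2\)-side), one can perform Gaussian elimination on the off-diagonal block \(G_{12}\): local complementation about a vertex \(w \in V_1\) adds \(\gamma\,G_{wv}G_{wv'}\) to edges, and in particular acts on the off-diagonal block by a rank-one update controlled by the row of \(w\); combined with pivoting (a sequence of local complementations implementing an edge-local-complementation / ``pivot'' move) one can clear \(G_{12}\) down to a partial-permutation-like matrix whose rank equals the \(\Z_p\)-cut-rank of the bipartition. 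At that point the graph state factors: the vertices not involved in the surviving cross-edges split off as a product state \(\ket{G_1}\otimes\ket{G_2}\) on the two sides, while the surviving cross-edges pair up vertices of \(V_1\) with vertices of \(V_2\) into a collection of maximally entangled ``Bell-like'' \(E\)-gadget pairs. The upshot is the desired picture: \(D = (A \otimes B)\) applied to (a product of local states on each side) tensored with some number of cross-Bell pairs, which is exactly the RHS of the figure \tikzfig{figures/bipartiteform}. Here \(A\) collects all the local-side graph structure, vertex operators, and the local-complementation/scaling unitaries used on \(V_1\), and similarly \(B\) on \(V_2\); each is manifestly a stabiliser unitary.

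I would organise the write-up as: (i) reduce to a bare graph state by Choi--Jamiolkowski and GS+LC; (ii) recall that local complementation, local scaling, and vertex operators are local Cliffords and record precisely how they act on the block form of \(G\); (iii) run the Gaussian-elimination/pivoting argument on \(G_{12}\) to reach a normal form with the off-diagonal block a partial permutation, citing \cite{bahramgiri_efficient_2007} for the fact that local scaling plus local complementation generate all local Clifford equivalences of graph states, so this is achievable; (iv) read off the factorisation into product-state-plus-Bell-pairs and assemble \(A,B\). The main obstacle I expect is step (iii): making the pivoting argument fully rigorous over \(\Z_p\) rather than over \(\F_2\). Over \(\F_2\) this is the standard ``local complementation brings a graph to a normal form determined by the cut-rank'' fact; over a general prime field one must be a little careful that the available moves (scaling edges by units, and the quadratic rank-one updates from local complementation) suffice to implement arbitrary elementary row/column operations on the off-diagonal block while staying within the graph-state formalism. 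I would handle this by showing that the composite ``pivot'' move about an edge \((u,v)\) with \(u\in V_1, v\in V_2\) and \(G_{uv}\in\Z_p^*\) implements, up to local Cliffords, the clearing of the \(u\)-row and \(v\)-column of \(G_{12}\) — essentially a \(\Z_p\)-analogue of the qubit pivot — and then induct on the rank of \(G_{12}\); the residual scalar and vertex-operator bookkeeping is routine and can be folded into \(A\) and \(B\).
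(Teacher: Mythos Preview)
Your proposal is correct and takes a route that is dual to the paper's. Both arguments begin identically: put $D$ into GS+LC form, strip off the vertex operators and all within-side $E$-gates into $A$ and $B$, and reduce to the case of a bare bipartite graph state with adjacency block $G_{12}$ across the cut. From there the two diverge in direction. The paper runs a \emph{constructive} argument: it starts from a product of Bell pairs (plus isolated $\ket{0:X}$ states on the larger side) and shows how to synthesise an arbitrary bipartite graph state from it by repeatedly inserting a within-side $E$-gate and then local-complementing at the Bell partner $x'$ of the target vertex, cleaning up the within-side debris with further $E$-gates at each step. You instead run the inverse, \emph{reductive} argument: Gaussian elimination on $G_{12}$ via local operations takes the given bipartite graph state down to Bell pairs plus product states, with the Schmidt rank appearing explicitly as $\operatorname{rank}_{\Z_p} G_{12}$.

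Your framing is arguably cleaner, since it makes the linear algebra over $\Z_p$ explicit and the inductive invariant transparent. One small refinement: you phrase step (iii) as achievable using only local complementation and local scaling, with $E$-gates entering only at the end to absorb $G_{11},G_{22}$. It is simpler---and closer to what the paper actually does---to allow within-side $E$-gates \emph{during} the elimination as well, since they are equally absorbable into $A,B$. Then to add $\gamma$ times row $u$ of $G_{12}$ to row $u'$ you simply set $G_{11}[u',u]=1$ and the rest of that column of $G_{11}$ to zero by $E$-gates, local-complement at $u$ with parameter $\gamma$, and re-clear $G_{11}$; this yields all elementary row and column operations directly and avoids having to verify a bespoke $\Z_p$-pivot identity. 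The paper's edge-by-edge synthesis is essentially the inverse of this move, so the two proofs are mirror images of one another.
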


Using this we can prove:

\begin{lemma}
	$\mathsf{Stab}_p $ has enough isometries.
\end{lemma}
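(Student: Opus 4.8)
The plan is to deduce "enough isometries" from Lemma~\ref{lem:bipartition} together with the structure theory already developed. Suppose we are given $f : A \to B \otimes X$ and $g : A \to B \otimes Y$ in $\mathsf{Stab}_p$ satisfying the hypothesis displayed above (i.e. they agree after discarding the $X$, respectively $Y$, legs). First I would bend the $A$-input down using the compact structure, turning $f$ and $g$ into states $\hat f : 0 \to A \otimes B \otimes X$ and $\hat g : 0 \to A \otimes B \otimes Y$; the hypothesis becomes the statement that tracing out $X$ from $\hat f$ equals tracing out $Y$ from $\hat g$. Then I would apply Lemma~\ref{lem:bipartition} with the bipartition $(A\otimes B \mid X)$ to $\hat f$, obtaining unitaries putting $\hat f$ in a standard bipartite form, and similarly with the bipartition $(A\otimes B \mid Y)$ to $\hat g$.

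The point of the bipartite form from Lemma~\ref{lem:bipartition} is that it exhibits each state, up to unitaries on each side of the cut, as a standard maximally-correlated-plus-junk shape, so that the reduced state on the $A\otimes B$ side is completely determined by a small amount of combinatorial data (the rank of the correlated part and the residual vertex operators), and the purifying system $X$ (resp. $Y$) splits as a ``correlated'' tensor factor of fixed dimension $p^r$ together with a ``discarded'' factor on which the state is a fixed product state. Since the two reduced states on $A \otimes B$ coincide, the correlated ranks agree, so the correlated factors of $X$ and $Y$ have the same dimension $p^r$; the required isometry $V : X \to Y$ is then built as the identity on the correlated factor composed with whatever Clifford unitary identifies the normalised product state on the discarded part of $X$ with that on the discarded part of $Y$ (padding with $\ket{0:Z}$-preparations if one discarded factor is smaller than the other — this is where an isometry rather than a unitary is genuinely needed). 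One checks, diagrammatically, that $(\mathrm{id}_B \otimes V) \circ f = g$ by composing the bipartite-form decompositions: the $A\otimes B$-side unitaries cancel because the reduced states matched, and the $X$-side data is transported to the $Y$-side data by construction of $V$.

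Concretely I would carry out the steps in this order: (1) use Choi/compact structure to pass from maps to states and restate the hypothesis as equality of partial traces; (2) invoke Lemma~\ref{lem:bipartition} on each state for the appropriate bipartition to fix a normal bipartite form; (3) read off from the common reduced state that the ``entangled rank'' $r$ is the same on both sides, so the correlated subsystems match; (4) define $V : X \to Y$ as identity-on-correlated-part tensored with the Clifford isometry relating the two residual product states, using extra $\ket{0:Z}$ preparations to make dimensions agree; (5) verify $(\mathrm{id}\otimes V)\circ f = g$ by substituting the bipartite forms and cancelling, then bend the $A$-wire back up; (6) conclude by Corollary~22 of \cite{DBLP:conf/icalp/CaretteJPV19} that $(\ZXp)^{\ground}$ is universal and complete for $\mathrm{CPM}(\mathsf{Stab}_p)$.

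The main obstacle I expect is step~(3)–(4): extracting from Lemma~\ref{lem:bipartition} a sufficiently canonical description of the reduced state and its minimal purification so that one can genuinely conclude the correlated factors of $X$ and $Y$ have equal dimension and are related by a \emph{Stab}$_p$ isometry. The bipartite form given by the lemma is stated only up to unitaries $A,B$ on the two sides, and is presumably not literally a Schmidt-type decomposition with a clean uniqueness statement, so some care is needed to argue that the data left after discarding is genuinely a fixed (normalised) stabiliser product state whose ``size'' is an invariant of the reduced state; this is the qupit analogue of the argument that $\mathsf{Stab}_2$ has enough isometries, and the only subtlety beyond the qubit case is bookkeeping of the non-self-inverse Cliffords and the $\Z_p$-valued edge weights, which do not affect the structural argument.
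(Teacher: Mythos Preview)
Your proposal is correct and follows essentially the same approach as the paper: the paper's proof simply says ``The proof is exactly the same as the qubit case, see the proof of Proposition~18 in \cite{DBLP:conf/icalp/CaretteJPV19}'', and what you have written is precisely a spelling-out of that qubit argument, transported to qupits via Lemma~\ref{lem:bipartition}. The only quibble is that your step~(6) (invoking Corollary~22 of \cite{DBLP:conf/icalp/CaretteJPV19} to conclude completeness of $(\ZXp)^{\ground}$) belongs to the \emph{next} theorem, not to the proof of this lemma; the lemma itself stops once you have produced the isometry $V$ and verified $(\mathrm{id}_B \otimes V)\circ f = g$.
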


The proof is exactly the same as the qubit case, see the proof of Proposition 18 in \cite{DBLP:conf/icalp/CaretteJPV19}. It then follows directly from \cite{DBLP:conf/icalp/CaretteJPV19} that:

\begin{theorem}
	\({(\ZXp)}^{\ground}\) is universal and complete for $\textup{CPM}( \mathsf{Stab}_p )$.
\end{theorem}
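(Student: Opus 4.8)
The plan is to obtain this as a direct application of the abstract discard-construction theorem of \cite{DBLP:conf/icalp/CaretteJPV19}, reducing everything to facts already in hand. By Theorems~\ref{thm:universality}, \ref{thm:soundness} and \ref{thm:completeness}, the interpretation functor induces an equivalence of PROPs $\ZXp/\ZXeq \simeq \mathsf{Stab}_p$; that is, $\ZXp$ with the theory $\ZXeq$ is a universal and complete graphical language for $\mathsf{Stab}_p$. Corollary~22 of \cite{DBLP:conf/icalp/CaretteJPV19} then says that whenever a PROP admits such a language \emph{and} has enough isometries, the discard construction applied to that language---adjoining the single generator $\ground : 1 \to 0$ together with the equations forcing it to erase a generating set of isometries---yields a universal and complete graphical language for the associated CPM category.

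First I would verify the two hypotheses of this corollary in our concrete setting. Universality and completeness for $\mathsf{Stab}_p$ is Theorem~\ref{thm:completeness}; the ``enough isometries'' property is exactly the preceding lemma, whose proof follows the qubit argument of \cite{DBLP:conf/icalp/CaretteJPV19} using Lemma~\ref{lem:bipartition}. Next I would dispatch the routine bookkeeping implicit in the statement: that the diagrams displayed as isometry generators do generate every isometry of $\mathsf{Stab}_p$---which follows from Proposition~\ref{proposition:Cn_generators} together with the fact that computational-basis state preparation is among them, so that any stabiliser isometry is one of these up to a unitary, and $\ground \circ U$ is erasable whenever $U$ is---and that the functor $\interp{-}^{\ground}$ defined by $\interp{\ground} = \Tr$ and $\interp{D}^{\ground} : \rho \mapsto \interp{D}^{\dagger}\rho\,\interp{D}$ coincides on the nose with the CPM functor, which is immediate from functoriality and monoidality of the CPM construction.

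With both hypotheses discharged, the theorem follows verbatim from \cite{DBLP:conf/icalp/CaretteJPV19}: $(\ZXp)^{\ground}$ with the augmented equational theory is universal and complete for $\textup{CPM}(\mathsf{Stab}_p)$. The only genuine content is the verification of the categorical hypotheses of Corollary~22---principally ``enough isometries''---and since that has already been established in the lemma immediately above, I do not expect any further obstacle; the substance of this section is entirely concentrated in that lemma and in Lemma~\ref{lem:bipartition} on which it rests.
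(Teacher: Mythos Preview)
Your proposal is correct and follows exactly the same route as the paper: invoke Corollary~22 of \cite{DBLP:conf/icalp/CaretteJPV19}, using the completeness of $\ZXp$ for $\mathsf{Stab}_p$ together with the ``enough isometries'' lemma (itself resting on Lemma~\ref{lem:bipartition}). The paper's own proof is literally the single sentence ``It then follows directly from \cite{DBLP:conf/icalp/CaretteJPV19}'', so your write-up is if anything more explicit about the bookkeeping than the original.
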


\subsection{Co-isotropic relations}

It has been shown in \cite{comfort_graphical_2021, comfort_symplectic_2021,
  comfort_relational_nodate} that \(\textrm{CPM}(\Stab)\) is equivalent to the
category of affine co-isotropic relations up to scalars. More formally, we endow
\(\Z_p^2\) with the symplectic form
\begin{equation}
  \omega\left(
    \begin{bmatrix}
      a \\ b
    \end{bmatrix},
    \begin{bmatrix}
      c \\ d
    \end{bmatrix}
  \right) = ad - bc,
\end{equation}
and \(\Z_p^{2m} = \bigoplus_m \Z_p^2\) with the direct sum symplectic form.
\begin{definition}
  The symmetric monoidal category \(\AffCoIsoRel\) has as objects \(\N\), and as
  morphisms, relations \(R : \Z_p^{2m} \to \Z_p^{2n}\) such that \(R\) viewed as a
  subset of \(\Z_p^m \times \Z_p^n\) is an affine co-isotropic subspace thereof.
\end{definition}
Since \cite{comfort_graphical_2021} works in the scalarless ZX-calculus, we need
to add one extra axiom, which suffices to eliminate all remaining (non-zero)
scalars in \(\Stab\): we impose the rule \textsc{(Mod)} that \(p = 1\).
Diagrammatically, this amounts to quotienting \((\ZXp)^{\ground}\) by the
following rule:
\begin{equation}
  \tikzfig{figures/scalarless_axiom}
\end{equation}
Then we can give an interpretation \(\left[ - \right]\) of
\({(\ZXp)}^{\ground}\) making it universal and complete for \(\AffCoIsoRel\),
and which is defined uniquely by the commutative diagram
\begin{equation}
  \begin{tikzcd}
    {(\ZXp)^{\ground}} & \AffCoIsoRel \\
    {\mathrm{CPM}(\Stab)} & {\mathrm{CPM}(\Stab) / \textsc{(Mod)}}
    \arrow[dashed,"{\left[ - \right]}", from=1-1, to=1-2]
    \arrow["{\interp{-}}"', from=1-1, to=2-1]
    \arrow[twoheadrightarrow, from=2-1, to=2-2]
    \arrow[leftrightarrow, from=1-2, to=2-2]
  \end{tikzcd}
\end{equation}
Explicitly, it is given by the identity on objects, \([m] = m\), and is defined
on morphisms by: for \(x,y \in \Z_p\) and \(z \in \Z_p^*\)
\begin{align*}
  \left[ \tikzfig{figures/generators/r-spider_labelless} \right]
  &= \left\{ \left(
  \bigoplus_{k=1}^m \begin{bmatrix}
    a \\ b_k
  \end{bmatrix} ,
  \bigoplus_{k=1}^n \begin{bmatrix}
    -a \\ -c_k
  \end{bmatrix}
  \right) \,\middle|\, a,b_k,c_k \in \Z_p \qand \sum_k b_k = \sum_k c_k \right\} \\
  \left[ \tikzfig{figures/generators/g-spider_labelless} \right]
  &= \left\{ \left(
  \bigoplus_{k=1}^m \begin{bmatrix}
    a_k \\ c
  \end{bmatrix} ,
  \bigoplus_{k=1}^n \begin{bmatrix}
    b_k \\ c
  \end{bmatrix}
  \right) \,\middle|\, a,b_k,c_k \in \Z_p \qand \sum_k a_k = \sum_k b_k \right\} \\
  \left[ \tikzfig{figures/generators/r-unit} \right]
  &= \left\{ \left(
  \bullet,
    \begin{bmatrix}
      -1 & 0 \\
      -y & -1
    \end{bmatrix} \begin{bmatrix} v \\ 0 \end{bmatrix} + \begin{bmatrix} -x \\ 0 \end{bmatrix}
  \right) \,\middle|\, v \in \Z_p \right\} \\
  \left[ \tikzfig{figures/generators/g-unit} \right]
  &= \left\{ \left(
  \bullet,
    \begin{bmatrix}
      1 & -y \\
      0 & 1
    \end{bmatrix} \begin{bmatrix} 0 \\ v \end{bmatrix} + \begin{bmatrix} 0 \\ x \end{bmatrix}
  \right) \,\middle|\, v \in \Z_p \right\} \\
  \left[ \tikzfig{figures/generators/hadamard} \right]
  &= \left\{ \left(
  \vb{v},
    \begin{bmatrix}
      0 & -1 \\
      1 & 0
    \end{bmatrix} \vb{v}
  \right) \,\middle|\, \vb{v} \in \Z_p^2 \right\} \\
  \left[ \tikzfig{figures/generators/multiplier_generator_z} \right]
  &= \left\{ \left(
  \vb{v},
    \begin{bmatrix}
      z & 0 \\
      0 & z^{-1}
    \end{bmatrix} \vb{v}
  \right) \,\middle|\, \vb{v} \in \Z_p^2 \right\}
\end{align*}
Note that all of these are actually affine \emph{Lagrangian} relations. The only
generator which has a co-isotropic but not Lagrangian semantics is the discard
map:
\begin{equation}
  \left[ \tikzfig{ground} \right]
  = \left\{ \left( \vb{v}, \bullet \right) \,\middle|\, \vb{v} \in \Z_p^2 \right\}
\end{equation}

As pointed out in \cite{baez_props_2018, baez_compositional_2018,
  comfort_graphical_2021}, the related category of affine Lagrangian relations
over the field \(\R[x,y]/(xy-1)\) can be used to represent a fragment of
electrical circuits. We expect that the axiomatisation of
figure~\ref{fig:axioms} can be adapted to that setting, but leave this for a
future article.

\subsection{Geometrical interpretation}

The previous interpretation corresponds to a geometric intuition. All Clifford gates can be interpreted as affine transformations of the torus $\mathbb{Z}_p \times \mathbb{Z}_p $. This torus can be identified as a phase space, the position and momentum coordinates, $q$ and $p$ corresponding respectively to the first and second wire in the previous section semantics. In this section we will take as an example the case $p=5$.

\begin{center}
	\tikzfig{figures/lattice0}
\end{center}

The red triangle here will allow us to illustrate the geometrical action of stab.

\begin{center}
	\tikzfig{figures/lattice}
\end{center}

The Pauli phase gates, acts as translations along the vertical or horizontal axis. The multiplication gates acts as a scaling. The Hadamard gate corresponds to a $\frac{\pi}{2}$-rotation and the antipode to a $\pi$-rotation. Finally, the pure Clifford gates corresponds to shears.


\section*{Conclusion}

We have constructed a ZX-calculus which captures the stabiliser fragment in odd
prime dimensions, whilst retaining many of the ``nice'' features of the qubit
ZX-calculus. Of course, there are a few obvious questions that we leave for
future work.

First amongst these is the question of whether a fully universal calculus can be
obtained from the ideas we used here. The spiders we have used here labelled
with elements \(a,b \in \Z_p \times \Z_p\) and which can be interpreted as
polynomials \(x \mapsto ax+bx^2\) which parametrise the phases of the spider.
Adding one additional term of degree \(3\) is already sufficient to obtain a
universal calculus in prime dimensions (strictly) greater than \(3\)
\cite{howard_qudit_2012}. In fact, one might as well add all higher order of
polynomials (mod \(p\)) since access to such higher degrees will hopefully
prove useful in finding commutation relations for the resulting spiders.

Secondly, it remains to be seen how to formulate a universal ZX-calculus for
non-prime dimensions, even for just the stabiliser fragment. For this, the
methods in this article are clearly inadequate: for example local scaling is no
longer an invertible operation and thus certainly not in the Clifford group.

Finally, the set of axioms we provide here is probably not minimal. It would be
nice to see if a simplified version can be obtained, as was done in
\cite{backens_simplified_2017} for the qubit case.


{\raggedright\printbibliography}

@unpublished{abramsky_categorical_2008,
  title = {Categorical Quantum Mechanics},
  author = {Abramsky, Samson and Coecke, Bob},
  date = {2008-08-07},
  eprint = {0808.1023},
  eprinttype = {arxiv},
  primaryclass = {quant-ph},
  url = {http://arxiv.org/abs/0808.1023},
  archiveprefix = {arXiv}
}

@article{DBLP:journals/entcs/Selinger07a,
	author    = {Peter Selinger},
	title     = {Dagger Compact Closed Categories and Completely Positive Maps: (Extended
	Abstract)},
	journal   = {Electron. Notes Theor. Comput. Sci.},
	volume    = {170},
	pages     = {139--163},
	year      = {2007},
	url       = {https://doi.org/10.1016/j.entcs.2006.12.018},
	doi       = {10.1016/j.entcs.2006.12.018},
	timestamp = {Fri, 12 Feb 2021 22:17:28 +0100},
	biburl    = {https://dblp.org/rec/journals/entcs/Selinger07a.bib},
	bibsource = {dblp computer science bibliography, https://dblp.org}
}

@unpublished{appleby_properties_2009,
  title = {Properties of the Extended {{Clifford}} Group with Applications to {{SIC-POVMs}} and {{MUBs}}},
  author = {Appleby, D. M.},
  date = {2009-09-28},
  eprint = {0909.5233},
  eprinttype = {arxiv},
  primaryclass = {quant-ph},
  url = {http://arxiv.org/abs/0909.5233},
  archiveprefix = {arXiv}
}

@article{backens_there_2021,
  title = {There and Back Again: {{A}} Circuit Extraction Tale},
  shorttitle = {There and Back Again},
  author = {Backens, Miriam and Miller-Bakewell, Hector and de Felice, Giovanni and Lobski, Leo and van de Wetering, John},
  date = {2021-03-25},
  journaltitle = {Quantum},
  volume = {5},
  pages = {421},
  publisher = {{Verein zur Förderung des Open Access Publizierens in den Quantenwissenschaften}},
  doi = {10.22331/q-2021-03-25-421},
  url = {https://quantum-journal.org/papers/q-2021-03-25-421/}
}

@article{backens_zx-calculus_2014,
  title = {The {{ZX-calculus}} Is Complete for Stabilizer Quantum Mechanics},
  author = {Backens, Miriam},
  date = {2014-09-17},
  journaltitle = {New Journal of Physics},
  shortjournal = {New J. Phys.},
  volume = {16},
  number = {9},
  eprint = {1307.7025},
  eprinttype = {arxiv},
  pages = {093021},
  issn = {1367-2630},
  doi = {10.1088/1367-2630/16/9/093021},
  url = {http://arxiv.org/abs/1307.7025},
  archiveprefix = {arXiv}
}

@unpublished{bahramgiri_efficient_2007,
  title = {An {{Efficient Algorithm}} to {{Recognize Locally Equivalent Graphs}} in {{Non-Binary Case}}},
  author = {Bahramgiri, Mohsen and Beigi, Salman},
  date = {2007-07-01},
  eprint = {cs/0702057},
  eprinttype = {arxiv},
  url = {http://arxiv.org/abs/cs/0702057},
  archiveprefix = {arXiv}
}

@article{bonchi_interacting_2017,
  title = {Interacting {{Hopf Algebras}}},
  author = {Bonchi, Filippo and Sobocinski, Pawel and Zanasi, Fabio},
  date = {2017-01},
  journaltitle = {Journal of Pure and Applied Algebra},
  shortjournal = {Journal of Pure and Applied Algebra},
  volume = {221},
  number = {1},
  eprint = {1403.7048},
  eprinttype = {arxiv},
  pages = {144--184},
  issn = {00224049},
  doi = {10.1016/j.jpaa.2016.06.002},
  url = {http://arxiv.org/abs/1403.7048},
  archiveprefix = {arXiv}
}

@inproceedings{DBLP:conf/icalp/CaretteJPV19,
	author    = {Titouan Carette and
	Emmanuel Jeandel and
	Simon Perdrix and
	Renaud Vilmart},
	editor    = {Christel Baier and
	Ioannis Chatzigiannakis and
	Paola Flocchini and
	Stefano Leonardi},
	title     = {Completeness of Graphical Languages for Mixed States Quantum Mechanics},
	booktitle = {46th International Colloquium on Automata, Languages, and Programming,
	{ICALP} 2019, July 9-12, 2019, Patras, Greece},
	series    = {LIPIcs},
	volume    = {132},
	pages     = {108:1--108:15},
	publisher = {Schloss Dagstuhl - Leibniz-Zentrum f{\"{u}}r Informatik},
	year      = {2019},
	url       = {https://doi.org/10.4230/LIPIcs.ICALP.2019.108},
	doi       = {10.4230/LIPIcs.ICALP.2019.108},
	timestamp = {Tue, 11 Feb 2020 15:52:14 +0100},
	biburl    = {https://dblp.org/rec/conf/icalp/CaretteJPV19.bib},
	bibsource = {dblp computer science bibliography, https://dblp.org}
}

@unpublished{carette_colored_2020,
  title = {Colored Props for Large Scale Graphical Reasoning},
  author = {Carette, Titouan and Perdrix, Simon},
  date = {2020-07-07},
  eprint = {2007.03564},
  eprinttype = {arxiv},
  primaryclass = {quant-ph},
  url = {http://arxiv.org/abs/2007.03564},
  archiveprefix = {arXiv}
}

@unpublished{carette_szx-calculus:_2019,
  title = {{{SZX-calculus}}: {{Scalable Graphical Quantum Reasoning}}},
  shorttitle = {{{SZX-calculus}}},
  author = {Carette, Titouan and Horsman, Dominic and Perdrix, Simon},
  date = {2019-04-30},
  eprint = {1905.00041},
  eprinttype = {arxiv},
  primaryclass = {quant-ph},
  url = {http://arxiv.org/abs/1905.00041},
  archiveprefix = {arXiv}
}

@unpublished{carette_when_2021,
  title = {When {{Only Topology Matters}}},
  author = {Carette, Titouan},
  date = {2021-02-04},
  eprint = {2102.03178},
  eprinttype = {arxiv},
  primaryclass = {quant-ph},
  url = {http://arxiv.org/abs/2102.03178},
  archiveprefix = {arXiv},
  version = {1}
}

@unpublished{chancellor_graphical_2018,
  title = {Graphical {{Structures}} for {{Design}} and {{Verification}} of {{Quantum Error Correction}}},
  author = {Chancellor, Nicholas and Kissinger, Aleks and Roffe, Joschka and Zohren, Stefan and Horsman, Dominic},
  date = {2018-01-12},
  eprint = {1611.08012},
  eprinttype = {arxiv},
  primaryclass = {quant-ph},
  url = {http://arxiv.org/abs/1611.08012},
  archiveprefix = {arXiv}
}

@article{clark_valence_2006,
  title = {Valence Bond Solid Formalism for D-Level One-Way Quantum Computation},
  author = {Clark, Sean},
  date = {2006-03-17},
  journaltitle = {Journal of Physics A: Mathematical and General},
  shortjournal = {J. Phys. A: Math. Gen.},
  volume = {39},
  number = {11},
  eprint = {quant-ph/0512155},
  eprinttype = {arxiv},
  pages = {2701--2721},
  issn = {0305-4470, 1361-6447},
  doi = {10.1088/0305-4470/39/11/010},
  url = {http://arxiv.org/abs/quant-ph/0512155},
  archiveprefix = {arXiv}
}

@article{coecke_interacting_2011,
  title = {Interacting {{Quantum Observables}}: {{Categorical Algebra}} and {{Diagrammatics}}},
  shorttitle = {Interacting {{Quantum Observables}}},
  author = {Coecke, Bob and Duncan, Ross},
  date = {2011-04-14},
  journaltitle = {New Journal of Physics},
  shortjournal = {New J. Phys.},
  volume = {13},
  number = {4},
  eprint = {0906.4725},
  eprinttype = {arxiv},
  pages = {043016},
  issn = {1367-2630},
  doi = {10.1088/1367-2630/13/4/043016},
  url = {http://arxiv.org/abs/0906.4725},
  archiveprefix = {arXiv}
}

@book{coecke_picturing_2017,
  title = {Picturing {{Quantum Processes}}: {{A First Course}} in {{Quantum Theory}} and {{Diagrammatic Reasoning}}},
  shorttitle = {Picturing {{Quantum Processes}}},
  author = {Coecke, Bob and Kissinger, Aleks},
  date = {2017},
  publisher = {{Cambridge University Press}},
  location = {{Cambridge}},
  doi = {10.1017/9781316219317},
  url = {https://www.cambridge.org/core/books/picturing-quantum-processes/1119568B3101F3A685BE832FEEC53E52},
  isbn = {978-1-107-10422-8}
}

@unpublished{comfort_graphical_2021,
  title = {A {{Graphical Calculus}} for {{Lagrangian Relations}}},
  author = {Comfort, Cole and Kissinger, Aleks},
  date = {2021-05-13},
  eprint = {2105.06244},
  eprinttype = {arxiv},
  primaryclass = {math-ph, physics:quant-ph},
  url = {http://arxiv.org/abs/2105.06244},
  archiveprefix = {arXiv}
}

@unpublished{de_beaudrap_fast_2020,
  title = {Fast and Effective Techniques for {{T-count}} Reduction via Spider Nest Identities},
  author = {de Beaudrap, Niel and Bian, Xiaoning and Wang, Quanlong},
  options = {useprefix=true},
  date = {2020-04-10},
  eprint = {2004.05164},
  eprinttype = {arxiv},
  primaryclass = {quant-ph},
  url = {http://arxiv.org/abs/2004.05164},
  archiveprefix = {arXiv}
}

@unpublished{de_beaudrap_linearized_2012,
  title = {A Linearized Stabilizer Formalism for Systems of Finite Dimension},
  author = {de Beaudrap, Niel},
  options = {useprefix=true},
  date = {2012-09-10},
  eprint = {1102.3354},
  eprinttype = {arxiv},
  primaryclass = {quant-ph},
  url = {http://arxiv.org/abs/1102.3354},
  archiveprefix = {arXiv}
}

@unpublished{de_beaudrap_zx_2017,
  title = {The {{ZX}} Calculus Is a Language for Surface Code Lattice Surgery},
  author = {de Beaudrap, Niel and Horsman, Dominic},
  options = {useprefix=true},
  date = {2017-04-27},
  eprint = {1704.08670},
  eprinttype = {arxiv},
  primaryclass = {quant-ph},
  url = {http://arxiv.org/abs/1704.08670},
  archiveprefix = {arXiv}
}

@article{duncan_graph-theoretic_2020,
  title = {Graph-Theoretic {{Simplification}} of {{Quantum Circuits}} with the {{ZX-calculus}}},
  author = {Duncan, Ross and Kissinger, Aleks and Perdrix, Simon and van de Wetering, John},
  options = {useprefix=true},
  date = {2020-06-04},
  journaltitle = {Quantum},
  shortjournal = {Quantum},
  volume = {4},
  eprint = {1902.03178},
  eprinttype = {arxiv},
  pages = {279},
  issn = {2521-327X},
  doi = {10.22331/q-2020-06-04-279},
  url = {http://arxiv.org/abs/1902.03178},
  archiveprefix = {arXiv}
}

@article{duncan_verifying_2014,
  title = {Verifying the {{Steane}} Code with {{Quantomatic}}},
  author = {Duncan, Ross and Lucas, Maxime},
  date = {2014-12-27},
  journaltitle = {Electronic Proceedings in Theoretical Computer Science},
  shortjournal = {Electron. Proc. Theor. Comput. Sci.},
  volume = {171},
  eprint = {1306.4532},
  eprinttype = {arxiv},
  pages = {33--49},
  issn = {2075-2180},
  doi = {10.4204/EPTCS.171.4},
  url = {http://arxiv.org/abs/1306.4532},
  archiveprefix = {arXiv},
  version = {2}
}

@article{garvie_verifying_2018,
  title = {Verifying the {{Smallest Interesting Colour Code}} with {{Quantomatic}}},
  author = {Garvie, Liam and Duncan, Ross},
  date = {2018-02-27},
  journaltitle = {Electronic Proceedings in Theoretical Computer Science},
  shortjournal = {Electron. Proc. Theor. Comput. Sci.},
  volume = {266},
  eprint = {1706.02717},
  eprinttype = {arxiv},
  pages = {147--163},
  issn = {2075-2180},
  doi = {10.4204/EPTCS.266.10},
  url = {http://arxiv.org/abs/1706.02717},
  archiveprefix = {arXiv},
  version = {2}
}

@article{gottesman_fault-tolerant_1999,
  title = {Fault-{{Tolerant Quantum Computation}} with {{Higher-Dimensional Systems}}},
  author = {Gottesman, Daniel},
  date = {1999-09},
  journaltitle = {Chaos, Solitons \& Fractals},
  shortjournal = {Chaos, Solitons \& Fractals},
  volume = {10},
  number = {10},
  eprint = {quant-ph/9802007},
  eprinttype = {arxiv},
  pages = {1749--1758},
  issn = {09600779},
  doi = {10.1016/S0960-0779(98)00218-5},
  url = {http://arxiv.org/abs/quant-ph/9802007},
  archiveprefix = {arXiv}
}

@article{howard_qudit_2012,
  title = {Qudit Versions of the Qubit "Pi-over-Eight" Gate},
  author = {Howard, Mark and Vala, Jiri},
  date = {2012-08-15},
  journaltitle = {Physical Review A},
  shortjournal = {Phys. Rev. A},
  volume = {86},
  number = {2},
  eprint = {1206.1598},
  eprinttype = {arxiv},
  pages = {022316},
  issn = {1050-2947, 1094-1622},
  doi = {10.1103/PhysRevA.86.022316},
  url = {http://arxiv.org/abs/1206.1598},
  archiveprefix = {arXiv}
}

@incollection{hutchison_rewriting_2010,
  ids = {hutchison_rewriting_2010-1},
  title = {Rewriting {{Measurement-Based Quantum Computations}} with {{Generalised Flow}}},
  booktitle = {Automata, {{Languages}} and {{Programming}}},
  author = {Duncan, Ross and Perdrix, Simon},
  editor = {Abramsky, Samson and Gavoille, Cyril and Kirchner, Claude and Meyer auf der Heide, Friedhelm and Spirakis, Paul G.},
  date = {2010},
  volume = {6199},
  pages = {285--296},
  publisher = {{Springer Berlin Heidelberg}},
  location = {{Berlin, Heidelberg}},
  doi = {10.1007/978-3-642-14162-1_24},
  url = {http://link.springer.com/10.1007/978-3-642-14162-1_24},
  editorb = {Hutchison, David and Kanade, Takeo and Kittler, Josef and Kleinberg, Jon M. and Mattern, Friedemann and Mitchell, John C. and Naor, Moni and Nierstrasz, Oscar and Pandu Rangan, C. and Steffen, Bernhard and Sudan, Madhu and Terzopoulos, Demetri and Tygar, Doug and Vardi, Moshe Y. and Weikum, Gerhard},
  editorbtype = {redactor},
  isbn = {978-3-642-14161-4 978-3-642-14162-1}
}

@unpublished{jeandel_complete_2017,
  title = {A {{Complete Axiomatisation}} of the {{ZX-Calculus}} for {{Clifford}}+{{T Quantum Mechanics}}},
  author = {Jeandel, Emmanuel and Perdrix, Simon and Vilmart, Renaud},
  date = {2017-05-31},
  eprint = {1705.11151},
  eprinttype = {arxiv},
  primaryclass = {quant-ph},
  url = {http://arxiv.org/abs/1705.11151},
  archiveprefix = {arXiv}
}

@unpublished{kissinger_classical_2022,
  title = {Classical Simulation of Quantum Circuits with Partial and Graphical Stabiliser Decompositions},
  author = {Kissinger, Aleks and van de Wetering, John and Vilmart, Renaud},
  options = {useprefix=true},
  date = {2022-02-18},
  eprint = {2202.09202},
  eprinttype = {arxiv},
  primaryclass = {quant-ph},
  url = {http://arxiv.org/abs/2202.09202},
  archiveprefix = {arXiv}
}

@unpublished{kissinger_reducing_2020,
  title = {Reducing {{T-count}} with the {{ZX-calculus}}},
  author = {Kissinger, Aleks and van de Wetering, John},
  options = {useprefix=true},
  date = {2020-01-17},
  eprint = {1903.10477},
  eprinttype = {arxiv},
  primaryclass = {quant-ph},
  url = {http://arxiv.org/abs/1903.10477},
  archiveprefix = {arXiv}
}

@article{nenhauser_explicit_2002,
  title = {An {{Explicit Construction}} of the {{Metaplectic Representation}} over a {{Finite Field}}},
  author = {Nenhauser, Markus},
  date = {2002},
  journaltitle = {Journal of Lie Theory},
  volume = {12},
  number = {15}
}

@unpublished{ng_universal_2017,
  title = {A Universal Completion of the {{ZX-calculus}}},
  author = {Ng, Kang Feng and Wang, Quanlong},
  date = {2017-06-29},
  eprint = {1706.09877},
  eprinttype = {arxiv},
  primaryclass = {quant-ph},
  url = {http://arxiv.org/abs/1706.09877},
  archiveprefix = {arXiv}
}

@unpublished{townsend-teague_classifying_2021,
  title = {Classifying {{Complexity}} with the {{ZX-Calculus}}: {{Jones Polynomials}} and {{Potts Partition Functions}}},
  shorttitle = {Classifying {{Complexity}} with the {{ZX-Calculus}}},
  author = {Townsend-Teague, Alex and Meichanetzidis, Konstantinos},
  date = {2021-03-11},
  eprint = {2103.06914},
  eprinttype = {arxiv},
  primaryclass = {quant-ph},
  url = {http://arxiv.org/abs/2103.06914},
  archiveprefix = {arXiv}
}

@unpublished{vilmart_near-optimal_2018,
  title = {A {{Near-Optimal Axiomatisation}} of {{ZX-Calculus}} for {{Pure Qubit Quantum Mechanics}}},
  author = {Vilmart, Renaud},
  date = {2018-12-21},
  eprint = {1812.09114},
  eprinttype = {arxiv},
  primaryclass = {quant-ph},
  url = {http://arxiv.org/abs/1812.09114},
  archiveprefix = {arXiv}
}

@unpublished{wang_non-anyonic_2021,
  title = {A Non-Anyonic Qudit {{ZW-calculus}}},
  author = {Wang, Quanlong},
  date = {2021-10-12},
  eprint = {2109.11285},
  eprinttype = {arxiv},
  primaryclass = {quant-ph},
  url = {http://arxiv.org/abs/2109.11285},
  archiveprefix = {arXiv}
}

@article{wang_qudits_2020,
  title = {Qudits and {{High-Dimensional Quantum Computing}}},
  author = {Wang, Yuchen and Hu, Zixuan and Sanders, Barry C. and Kais, Sabre},
  date = {2020},
  journaltitle = {Frontiers in Physics},
  shortjournal = {Front. Phys.},
  volume = {8},
  publisher = {{Frontiers}},
  issn = {2296-424X},
  doi = {10.3389/fphy.2020.589504},
  url = {https://www.frontiersin.org/articles/10.3389/fphy.2020.589504/full}
}

@unpublished{wang_qufinite_2021,
  title = {Qufinite {{ZX-calculus}}: A Unified Framework of Qudit {{ZX-calculi}}},
  shorttitle = {Qufinite {{ZX-calculus}}},
  author = {Wang, Quanlong},
  date = {2021-05-24},
  eprint = {2104.06429},
  eprinttype = {arxiv},
  primaryclass = {quant-ph},
  url = {http://arxiv.org/abs/2104.06429},
  archiveprefix = {arXiv}
}

@article{wang_qutrit_2014,
  title = {Qutrit {{Dichromatic Calculus}} and {{Its Universality}}},
  author = {Wang, Quanlong and Bian, Xiaoning},
  date = {2014-12-28},
  journaltitle = {Electronic Proceedings in Theoretical Computer Science},
  shortjournal = {Electron. Proc. Theor. Comput. Sci.},
  volume = {172},
  pages = {92--101},
  issn = {2075-2180},
  doi = {10.4204/EPTCS.172.7},
  url = {http://arxiv.org/abs/1406.3056v3}
}

@article{wang_qutrit_2018,
  title = {Qutrit {{ZX-calculus}} Is {{Complete}} for {{Stabilizer Quantum Mechanics}}},
  author = {Wang, Quanlong},
  date = {2018-02-27},
  journaltitle = {Electronic Proceedings in Theoretical Computer Science},
  shortjournal = {Electron. Proc. Theor. Comput. Sci.},
  volume = {266},
  eprint = {1803.00696},
  eprinttype = {arxiv},
  pages = {58--70},
  issn = {2075-2180},
  doi = {10.4204/EPTCS.266.3},
  url = {http://arxiv.org/abs/1803.00696},
  archiveprefix = {arXiv}
}

@thesis{zanasi_interacting_2018,
  type = {phdthesis},
  title = {Interacting {{Hopf Algebras}}: The Theory of Linear Systems},
  shorttitle = {Interacting {{Hopf Algebras}}},
  author = {Zanasi, Fabio},
  date = {2018-05-04},
  eprint = {1805.03032},
  eprinttype = {arxiv},
  institution = {{Ecole Normale Superieure de Lyon}},
  url = {http://arxiv.org/abs/1805.03032},
  archiveprefix = {arXiv}
}

@article{zhou_quantum_2003,
  title = {Quantum Computation Based on D-Level Cluster States},
  author = {Zhou, D. L. and Zeng, B. and Xu, Z. and Sun, C. P.},
  date = {2003-12-02},
  journaltitle = {Physical Review A},
  shortjournal = {Phys. Rev. A},
  volume = {68},
  number = {6},
  eprint = {quant-ph/0304054},
  eprinttype = {arxiv},
  pages = {062303},
  issn = {1050-2947, 1094-1622},
  doi = {10.1103/PhysRevA.68.062303},
  url = {http://arxiv.org/abs/quant-ph/0304054},
  archiveprefix = {arXiv}
}

@thesis{comfort_relational_nodate,
  type = {DPhil},
  title = {Relational Semantics for Quantum Algorithms},
  author = {Comfort, Cole},
  institution = {{University of Oxford}},
  location = {{Oxford}},
  date = {To Appear}}

@unpublished{comfort_symplectic_2021,
  title = {A Symplectic Setting for Mixed Stabiliser Circuits},
  author = {Comfort, Cole},
  date = {2021-12-09},
  eventtitle = {Weekly Online {{ZX}} Meeting},
  venue = {{Online}}
}

@article{baez_compositional_2018,
  title = {A {{Compositional Framework}} for {{Passive Linear Networks}}},
  author = {Baez, John C. and Fong, Brendan},
  date = {2018-11-16},
  journaltitle = {Theory and Application of Categories},
  volume = {33},
  number = {38},
  eprint = {1504.05625},
  eprinttype = {arxiv},
  pages = {1158--1222},
  url = {http://arxiv.org/abs/1504.05625},
  archiveprefix = {arXiv}
}

@article{baez_props_2018,
  title = {Props in {{Network Theory}}},
  author = {Baez, John C. and Coya, Brandon and Rebro, Franciscus},
  date = {2018-06-01},
  journaltitle = {Theory and Application of Categories},
  volume = {33},
  number = {25},
  eprint = {1707.08321},
  eprinttype = {arxiv},
  pages = {727--783},
  url = {http://arxiv.org/abs/1707.08321},
  archiveprefix = {arXiv}
}

@unpublished{van_de_wetering_zx-calculus_2020,
  title = {{{ZX-calculus}} for the Working Quantum Computer Scientist},
  author = {van de Wetering, John},
  options = {useprefix=true},
  date = {2020-12-27},
  eprint = {2012.13966},
  eprinttype = {arxiv},
  primaryclass = {quant-ph},
  url = {http://arxiv.org/abs/2012.13966},
  archiveprefix = {arXiv}
}

@article{ranchin_depicting_2014,
  title = {Depicting Qudit Quantum Mechanics and Mutually Unbiased Qudit Theories},
  author = {Ranchin, André},
  date = {2014-12-28},
  journaltitle = {Electronic Proceedings in Theoretical Computer Science},
  shortjournal = {Electron. Proc. Theor. Comput. Sci.},
  volume = {172},
  pages = {68--91},
  issn = {2075-2180},
  doi = {10.4204/EPTCS.172.6},
  url = {http://arxiv.org/abs/1404.1288v3}
}

@article{gheorghiu_standard_2014,
  title = {Standard Form of Qudit Stabilizer Groups},
  author = {Gheorghiu, Vlad},
  date = {2014},
  journaltitle = {Physics Letters A},
  pages = {5}
}

@unpublished{hausmann_consolidating_2021,
  title = {A Consolidating Review of {{Spekkens}}' Toy Theory},
  author = {Hausmann, Ladina and Nurgalieva, Nuriya and del Rio, Lídia},
  options = {useprefix=true},
  date = {2021-05-07},
  eprint = {2105.03277},
  eprinttype = {arxiv},
  primaryclass = {quant-ph},
  url = {http://arxiv.org/abs/2105.03277},
  archiveprefix = {arXiv}
}

@article{backens_making_2015,
  title = {Making the Stabilizer {{ZX-calculus}} Complete for Scalars},
  author = {Backens, Miriam},
  date = {2015-11-04},
  journaltitle = {Electronic Proceedings in Theoretical Computer Science},
  shortjournal = {Electron. Proc. Theor. Comput. Sci.},
  volume = {195},
  eprint = {1507.03854},
  eprinttype = {arxiv},
  pages = {17--32},
  issn = {2075-2180},
  doi = {10.4204/EPTCS.195.2},
  url = {http://arxiv.org/abs/1507.03854},
  archiveprefix = {arXiv},
  version = {2}
}

@article{backens_simplified_2017,
  title = {A {{Simplified Stabilizer ZX-calculus}}},
  author = {Backens, Miriam and Perdrix, Simon and Wang, Quanlong},
  date = {2017-01-01},
  journaltitle = {Electronic Proceedings in Theoretical Computer Science},
  shortjournal = {Electron. Proc. Theor. Comput. Sci.},
  volume = {236},
  eprint = {1602.04744},
  eprinttype = {arxiv},
  pages = {1--20},
  issn = {2075-2180},
  doi = {10.4204/EPTCS.236.1},
  url = {http://arxiv.org/abs/1602.04744},
  archiveprefix = {arXiv}
}

@unpublished{backens_completeness_2021,
  title = {Completeness of the {{ZH-calculus}}},
  author = {Backens, Miriam and Kissinger, Aleks and Miller-Bakewell, Hector and van de Wetering, John and Wolffs, Sal},
  options = {useprefix=true},
  date = {2021-03-11},
  eprint = {2103.06610},
  eprinttype = {arxiv},
  primaryclass = {quant-ph},
  url = {http://arxiv.org/abs/2103.06610},
  archiveprefix = {arXiv}
}

@article{backens_zh_2019,
  title = {{{ZH}}: {{A Complete Graphical Calculus}} for {{Quantum Computations Involving Classical Non-linearity}}},
  shorttitle = {{{ZH}}},
  author = {Backens, Miriam and Kissinger, Aleks},
  date = {2019-01-31},
  journaltitle = {Electronic Proceedings in Theoretical Computer Science},
  shortjournal = {Electron. Proc. Theor. Comput. Sci.},
  volume = {287},
  eprint = {1805.02175},
  eprinttype = {arxiv},
  pages = {23--42},
  issn = {2075-2180},
  doi = {10.4204/EPTCS.287.2},
  url = {http://arxiv.org/abs/1805.02175},
  archiveprefix = {arXiv}
}

@unpublished{bonchi_string_2021,
  title = {String {{Diagram Rewrite Theory II}}: {{Rewriting}} with {{Symmetric Monoidal Structure}}},
  shorttitle = {String {{Diagram Rewrite Theory II}}},
  author = {Bonchi, Filippo and Gadducci, Fabio and Kissinger, Aleks and Sobocinski, Pawel and Zanasi, Fabio},
  date = {2021-04-29},
  number = {arXiv:2104.14686},
  eprint = {2104.14686},
  eprinttype = {arxiv},
  primaryclass = {cs, math},
  publisher = {{arXiv}},
  doi = {10.48550/arXiv.2104.14686},
  url = {http://arxiv.org/abs/2104.14686},
  archiveprefix = {arXiv}
}

@article{bonchi_string_2022,
  title = {String {{Diagram Rewrite Theory I}}: {{Rewriting}} with {{Frobenius Structure}}},
  shorttitle = {String {{Diagram Rewrite Theory I}}},
  author = {Bonchi, Filippo and Gadducci, Fabio and Kissinger, Aleks and Sobocinski, Paweł and Zanasi, Fabio},
  date = {2022-03-10},
  journaltitle = {Journal of the ACM (JACM)},
  publisher = {{ACM}},
  doi = {10.1145/3502719},
  url = {https://dl.acm.org/doi/full/10.1145/3502719},
  annotation = {PUB27 New York, NY}
}

@unpublished{bonchi_string_2022-1,
  title = {String {{Diagram Rewrite Theory III}}: {{Confluence}} with and without {{Frobenius}}},
  shorttitle = {String {{Diagram Rewrite Theory III}}},
  author = {Bonchi, Filippo and Gadducci, Fabio and Kissinger, Aleks and Sobociński, Paweł and Zanasi, Fabio},
  date = {2022-04-18},
  number = {arXiv:2109.06049},
  eprint = {2109.06049},
  eprinttype = {arxiv},
  primaryclass = {cs},
  publisher = {{arXiv}},
  doi = {10.48550/arXiv.2109.06049},
  url = {http://arxiv.org/abs/2109.06049},
  archiveprefix = {arXiv}
}

@article{bian_graphical_2015,
  title = {Graphical {{Calculus}} for {{Qutrit Systems}}},
  author = {Bian, Xiaoning and Wang, Quanlong},
  date = {2015},
  journaltitle = {IEICE Transactions on Fundamentals of Electronics, Communications and Computer Sciences},
  volume = {E98.A},
  number = {1},
  pages = {391--399},
  doi = {10.1587/transfun.E98.A.391}
}

@thesis{carette_wielding_2021,
  type = {Theses},
  title = {Wielding the {{ZX-calculus}}, {{Flexsymmetry}}, {{Mixed States}}, and {{Scalable Notations}}},
  author = {Carette, Titouan},
  date = {2021-11},
  institution = {{Université de Lorraine}},
  url = {https://hal.archives-ouvertes.fr/tel-03468027}
}

\appendix
\section{Some lemmas from elementary number theory}
\label{app:number_theory}

\begin{proposition}[First supplement to Quadratic Reciprocity]
  \label{prop:first_supplement}
  \(-1\) is a square mod \(p\) if and only if \(p\) is congruent to \(1 \mod
  4\).
\end{proposition}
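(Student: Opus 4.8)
The plan is to reduce the statement to two standard facts of elementary number theory --- Fermat's little theorem and Wilson's theorem --- and to prove the two implications separately. (Equivalently, one could invoke Euler's criterion and fold both directions into the single observation that $(-1)^{(p-1)/2}$ equals $1$ or $-1$ according as $(p-1)/2$ is even or odd; but the two-step argument below is self-contained.)

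For the forward direction I would suppose $-1 = y^2$ in $\Z_p$ for some $y$. Since $-1 \neq 0$, we have $y \in \Z_p^*$, so Fermat's little theorem gives $y^{p-1} = 1$. On the other hand $y^{p-1} = (y^2)^{(p-1)/2} = (-1)^{(p-1)/2}$, whence $(-1)^{(p-1)/2} = 1$ in $\Z_p$. As $p$ is an odd prime, $1 \neq -1$ in $\Z_p$, so the integer exponent $(p-1)/2$ must be even, i.e. $4 \mid p-1$, i.e. $p \equiv 1 \bmod 4$.

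For the reverse direction I would suppose $p \equiv 1 \bmod 4$ and set $w \coloneqq \left(\tfrac{p-1}{2}\right)! \in \Z_p$. Pairing each $k \in \{1,\dots,\tfrac{p-1}{2}\}$ with $p - k \equiv -k$, the product of all nonzero residues factors as
\[
  (p-1)! \;=\; \prod_{k=1}^{p-1} k \;=\; \prod_{k=1}^{(p-1)/2} k\,(-k) \;=\; (-1)^{(p-1)/2}\, w^2 .
\]
Since $p \equiv 1 \bmod 4$, the exponent $(p-1)/2$ is even, so $(p-1)! = w^2$ in $\Z_p$; Wilson's theorem gives $(p-1)! = -1$, hence $w^2 = -1$, exhibiting $-1$ as a square.

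I do not expect any genuine obstacle: the whole argument is routine once Fermat and Wilson are granted. The only point needing a moment's care is the step in the forward direction where an equality of signs in $\Z_p$ is promoted to a statement about the parity of the integer $(p-1)/2$ --- this relies precisely on $1 \neq -1$ in $\Z_p$, which is where the hypothesis that $p$ is odd enters.
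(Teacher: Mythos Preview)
Your proof is correct. The paper, however, does not prove this proposition at all: it is stated in the appendix on elementary number theory as a known result (the First Supplement to Quadratic Reciprocity) and used without further justification. Your argument via Fermat's little theorem and Wilson's theorem is the standard textbook proof, so there is nothing to compare against --- you have simply supplied a proof where the paper chose to cite the result as classical.
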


\begin{lemma}
  For any odd prime \(p\) and \(x,y \in \Z_p\), if neither \(x\) nor \(y\) is
  square, then \(xy\) is a square.
\end{lemma}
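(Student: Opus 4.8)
The plan is to use the multiplicative structure of \(\Z_p^*\). First I would observe that a non-square is automatically a unit: since \(0 = 0^2\), any \(x \in \Z_p\) that is not a square lies in \(\Z_p^*\). So it suffices to show that the product of two non-square elements of \(\Z_p^*\) is a square.

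The key fact is that the set \(S \coloneqq \{\, t^2 \mid t \in \Z_p^* \,\}\) of nonzero squares is a subgroup of \(\Z_p^*\) of index exactly \(2\). Indeed, \(S\) is the image of the group homomorphism \(t \mapsto t^2\) on \(\Z_p^*\), whose kernel is \(\{t \in \Z_p \mid t^2 = 1\} = \{\pm 1\}\); this kernel has order \(2\) because \(\Z_p\) is a field and \(p \neq 2\), so \(|S| = (p-1)/2\) and \([\Z_p^* : S] = 2\). Now pass to the quotient \(\Z_p^*/S\), a group with two elements. If \(x\) and \(y\) are both non-squares, then their images \(xS\) and \(yS\) both equal the unique non-trivial coset; since in a group of order \(2\) every element squares to the identity, \((xy)S = (xS)(yS) = (xS)^2 = S\), i.e. \(xy \in S\), which is exactly the claim.

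Alternatively, and avoiding quotients entirely, one can argue by counting. Fix a non-square \(x \in \Z_p^*\) and consider the bijection \(\mu_x : \Z_p^* \to \Z_p^*\), \(z \mapsto xz\). If \(z\) is a square then \(xz\) is a non-square — otherwise \(x\) would be a ratio of two squares, hence a square — so \(\mu_x\) maps the \((p-1)/2\) squares injectively into the set of \((p-1)/2\) non-squares, and therefore bijectively onto it; consequently \(\mu_x\) must send the non-squares onto the squares. In particular \(xy = \mu_x(y)\) is a square whenever \(y\) is a non-square. I do not expect any genuine obstacle here; the one point deserving a word of justification is that \(t^2 = 1\) has exactly two solutions in \(\Z_p\) (equivalently \(|S| = (p-1)/2\)), which holds since \(\Z_p\) is a field of odd characteristic.
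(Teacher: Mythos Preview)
Your proof is correct and follows essentially the same route as the paper: both use that the squaring map on \(\Z_p^*\) has kernel \(\{\pm 1\}\), so the nonzero squares form an index-\(2\) subgroup, and then conclude via the two-element quotient. You are in fact slightly more careful (explicitly noting that a non-square is automatically a unit since \(0=0^2\)), and your alternative counting argument is a nice bonus, but neither departs from the paper's approach.
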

\begin{proof}
  The map \(x \mapsto x^2\) is an endomorphism of \(\Z_p\) with kernel \(\{-1,
  1\}\). In other words, the image of this map, the subgroup \(Q_p\) of squares
  in \(\Z_p^*\), must have index \(2\), and then \(\Z_p^* / Q_p\) is a two
  element group. If neither \(x\) nor \(y\) is square, \(xQ_p = yQ_p\) whence
  \((xy)Q_p = (xQ_p)(yQ_p) = (xQ_p)^2 = Q_p\), so that \(xy \in Q_p\).
\end{proof}

\begin{corollary}
  \label{cor:at_least_one_square}
  For any prime \(p\) and \(x \in \Z_p\), at least one of \(-1\), \(x\) or
  \(-x\) is a square.
\end{corollary}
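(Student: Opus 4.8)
The plan is to reduce the statement to the preceding lemma by a short case analysis on whether \(-1\) is a square. First I would dispose of the degenerate cases: if \(x = 0\) then \(x = 0^2\) is already a square, and if \(p = 2\) every element of \(\Z_2\) is a square, so we may assume from now on that \(p\) is an odd prime and that \(x \in \Z_p^*\).

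Now I would split on the congruence of \(p\) modulo \(4\), using Proposition~\ref{prop:first_supplement}. If \(p \equiv 1 \bmod 4\), then \(-1\) is a square mod \(p\) and we are immediately done. If instead \(p \equiv 3 \bmod 4\), then \(-1\) is not a square; in this case, if \(x\) happens to be a square we are done, and otherwise neither \(-1\) nor \(x\) is a square, so the preceding lemma (applied to the pair \(-1, x\)) gives that the product \((-1)\cdot x = -x\) is a square.

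These cases are exhaustive, so in every case at least one of \(-1\), \(x\), \(-x\) is a square, which is exactly the claim. There is no genuine obstacle here — the argument is a one-line application of the lemma once the right dichotomy is set up; the only point requiring a little care is to handle \(x = 0\) and \(p = 2\) explicitly, since the lemma being invoked is phrased for odd primes and tacitly concerns the nonzero elements of \(\Z_p\).
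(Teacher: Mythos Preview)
Your proof is correct and uses the same key ingredient as the paper, which simply records this as an immediate corollary of the preceding lemma. The detour through Proposition~\ref{prop:first_supplement} is not needed, though: one can argue directly that if neither \(-1\) nor \(x\) is a square then the lemma forces \(-x\) to be one, without ever deciding which case for \(-1\) actually occurs.
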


\section{Proof of universality}
\label{app:universality}

\begin{lproof}[of theorem~\ref{thm:universality}]
	By proposition~\ref{proposition:Cn_generators}, in odd prime dimensions the
	generalised Clifford groups are generated by
	\begin{equation}
    \label{eq:C1_generators}
		\interp{\tikzfig{figures/circuit/S}} = S\qc
		\interp{\tikzfig{figures/circuit/hadamard}} = H \qand
		\interp{\tikzfig{figures/circuit/CZ}} = E.
	\end{equation}
	Since computational basis states and effects are easily represented as:
	\begin{equation}
    \label{eq:stabiliser_states}
		\interp{\tikzfig{figures/circuit/Z_state}} = \ket{x} \qand \interp{\tikzfig{figures/circuit/Z_effect}} = \bra{x},
	\end{equation}
	the standard interpretation is clearly universal for the stabiliser fragment
	by the functoriality of \(\interp{-}\).
\end{lproof}

\section{Proof of soundness}
\label{app:soundness}
\subsection{Proof of soundness}
\label{app:proofs:soundness}

\begin{lproof}[of theorem~\ref{thm:soundness}]
  We prove that each of the equations in figure~\ref{fig:axioms} is sound, and
  this extends to the entire equational theory by functoriality of
  \(\interp{-}\).

  Firstly, straightforward calculation shows that, for any \(x,y \in \Z_p\),
  \begin{equation}
    \label{eq:scalars}
    \interp{\tikzfig{figures/equations/soundness/sqroot_d_inv}} = \frac{1}{\sqrt{p}}
    \qc
    \interp{\tikzfig{figures/equations/soundness/sqroot_d}} = \sqrt{p} \omega^{-2^{-2}xz},
    \qand
    \interp{\tikzfig{figures/equations/soundness/sqroot_d_phaseless}} = \sqrt{p}.
  \end{equation}
  We also have
  \begin{equation}
    \interp{\tikzfig{figures/equations/soundness/d_phase}} = \sum_{k\in\Z_p} \omega^{2^{-1}(xk+yk^2)},
  \end{equation}
  and using the harmonic series for the Kronecker delta,
  \(\interp{\tikzfig{figures/equations/soundness/d_pauli_phase}} = p\delta_{x,0}\).

  \textsc{(Zero)} and \textsc{(One)} then follow immediately from
  equation~\eqref{eq:scalars}.

  \textsc{(Fusion)} For any \(a,b,c,d \in \Z_p\),
  \begin{align}
    \interp{\,\tikzfig{figures/equations/soundness/Fusion_LHS}\,}
    &=\left( \sum_{k\in\Z_p} \omega^{2^{-1}(ak+bk^2)} \ket{k}^{\otimes n}
        \bra{k}^{\otimes m} \otimes \mathrm{id}_{m'-1} \right) \\ 
    &\quad\quad\quad \circ \left( \mathrm{id}_{n-1}
        \otimes \sum_{\ell\in\Z_p} \omega^{2^{-1}(c\ell+d\ell^2)}
        \ket{\ell}^{\otimes n'} \bra{\ell}^{\otimes m'} \right) \nonumber \\
    &= \sum_{k,\ell\in\Z_p} \omega^{2^{-1}(ak+bk^2)} \omega^{2^{-1}(c\ell+d\ell^2)} \ket{k}^{\otimes n}
        \bra{k}^{\otimes m-1} \ip{k}{\ell} \ket{\ell}^{\otimes n'-1} \bra{\ell}^{\otimes m'} \\
    &= \sum_{k,\ell\in\Z_p} \delta_{k,\ell} \omega^{2^{-1}(ak+bk^2)} \omega^{2^{-1}(c\ell+d\ell^2)} \ket{k}^{\otimes n}
        \bra{k}^{\otimes m-1} \ket{\ell}^{\otimes n'-1} \bra{\ell}^{\otimes m'} \\
    &= \sum_{k\in\Z_p} \omega^{2^{-1}((a+c)k+(b+d)k^2)} \ket{k}^{\otimes n+n'-1} \bra{k}^{\otimes m+m'-1} \\
    &= \interp{\,\tikzfig{figures/equations/soundness/Fusion_RHS}}
  \end{align}

  \textsc{(Colour)} For any \(a,b\in\Z_p\),
  \begin{align}
    \interp{\tikzfig{figures/equations/soundness/Colour_LHS}}
    &=\left( \sum_{j\in\Z_p} \dyad{j:X}{j:Z} \otimes \cdots \otimes
          \sum_{j\in\Z_p} \dyad{j:X}{j:Z} \right) \\
        &\quad\quad\quad \circ \sum_{k\in\Z_p} \omega^{2^{-1}(ak+bk^2)} \dyad{k:Z}{k:Z} \nonumber \\
        &\quad\quad\quad \circ \left( \sum_{\ell\in\Z_p} \dyad{\ell:Z}{-\ell:X} \otimes \cdots \otimes
          \sum_{\ell\in\Z_p} \dyad{\ell:Z}{-\ell:X} \right) \nonumber \\
    &= \sum_{k\in\Z_p} \omega^{2^{-1}(ak+bk^2)} \dyad{k:X}{-k:X} \\
    &= \interp{\tikzfig{figures/equations/soundness/Colour_RHS}}
  \end{align}

  \textsc{(Shear)} For any \(a,c,d\in\Z_p\),
  \begin{align}
    \interp{\tikzfig{figures/equations/soundness/Shear_LHS}}
    &= \sqrt{p}\omega^{-2^{-2}ac} \sum_{j,k,\ell\in\Z_p} \omega^{2^{-1}aj} \omega^{2^{-1}(ck+dk^2)} \omega^{2^{-1}a\ell} \dyad{j}{j} \dyad{-k:X}{k:X} \dyad{\ell}{\ell} \\
    &= \sqrt{p}\omega^{-2^{-2}ac} \sum_{j,k,\ell\in\Z_p} \omega^{2^{-1}aj} \omega^{2^{-1}(ck+dk^2)} \omega^{2^{-1}a\ell} \omega^{-jk} \omega^{-k\ell} \dyad{j}{\ell} \\
    &= \sqrt{p}\omega^{-2^{-2}ac} \sum_{j,k,\ell\in\Z_p} \omega^{2^{-1}aj} \omega^{2^{-1}a\ell} \omega^{2^{-1}k(dk+c-2j-2\ell)} \dyad{j}{\ell} \\
    &= \sqrt{p}\omega^{-2^{-2}ac} \sum_{j,k,\ell\in\Z_p} \omega^{2^{-1}aj} \omega^{2^{-1}a\ell} \omega^{2^{-1}(k+2^{-1}a)(d(k+2^{-1}a)+c-2j-2\ell)} \dyad{j}{\ell} \\
    &= \sqrt{p}\omega^{2^{-3}da^2} \sum_{j,k,\ell\in\Z_p} \omega^{2^{-1}k(dk+ad+c-2j-2\ell)} \dyad{j}{\ell} \\
    &= \sqrt{p}\omega^{2^{-3}da^2} \sum_{j,k,\ell\in\Z_p} \omega^{2^{-1}(ck+adk+dk^2)} \omega^{-jk} \omega^{-k\ell}\dyad{j}{\ell} \\
    &= \sqrt{p}\omega^{2^{-3}da^2} \sum_{k\in\Z_p} \omega^{2^{-1}(ck+adk+dk^2)} \dyad{-k:X}{k:X} \\
    &= \interp{\tikzfig{figures/equations/soundness/Shear_RHS}}
  \end{align}
  
  \textsc{(Char)}
  \begin{align}
    \interp{\tikzfig{figures/equations/soundness/Char_LHS}}
    &= \sqrt{p^p} \sum_{j,k\in\Z_p} \dyad{-k:X}{k:X}^{\otimes p} \ket{j}^{\otimes p} \bra{j} \\
    &= \sum_{j,k\in\Z_p} \omega^{-pjk} \ket{-k:X} \bra{j} \\
    &= \sum_{j,k\in\Z_p} \ket{-k:X} \bra{j} \\
    &= \interp{\tikzfig{figures/equations/soundness/Char_RHS}}
  \end{align}

  \textsc{(Bigebra)}
  \begin{align}
    \interp{\tikzfig{figures/equations/soundness/Bigebra_LHS}}
    &= p \sum_{i,j,k,\ell\in\Z_p} \ket{-k:X} \ket{-\ell:X} \ip{k:X}{i} \ip{\ell:X}{i} \ip{k:X}{j} \ip{\ell:X}{j} \bra{i} \bra{j} \\
    &= \frac{1}{p} \sum_{i,j,k,\ell\in\Z_p} \omega^{-ik} \omega^{-i\ell} \omega^{-jk} \omega^{-j\ell} \ket{-k:X} \dyad{-\ell:X}{i} \bra{j} \\
    &= \frac{1}{p^3} \sum_{\substack{ i,j,k,\ell \\ m,n,o,p }\in\Z_p} \omega^{-ik} \omega^{-i\ell} \omega^{-jk} \omega^{-j\ell}
    \omega^{-km}\omega^{-n\ell}\omega^{io}\omega^{pj} \ket{m} \dyad{n}{o:X} \bra{p:X} \\
    &= \frac{1}{p^3} \sum_{\substack{ i,j,k,\ell \\ m,n,o,p }\in\Z_p} \omega^{i(o-k-\ell)} \omega^{j(p-k-\ell)}
    \omega^{-km}\omega^{-n\ell} \ket{m} \dyad{n}{o:X} \bra{p:X} \\
    &= \frac{1}{p} \sum_{\substack{ k,\ell \\ m,n,o,p }\in\Z_p} \delta_{o,k+\ell} \delta_{p,k+\ell}
    \omega^{-km}\omega^{-n\ell} \ket{m} \dyad{n}{o:X} \bra{p:X} \\
    &= \frac{1}{p} \sum_{k,\ell,m,n\in\Z_p} \omega^{-km}\omega^{-n\ell} \ket{m} \dyad{n}{k+\ell:X} \bra{k+\ell:X} \\
    &= \frac{1}{\sqrt{p}} \sum_{k,\ell,m,n\in\Z_p} \omega^{k(n-m)} \ket{m} \ket{n} \ip{n}{-k-\ell:X} \bra{k+\ell:X} \bra{k+\ell:X} \\
    &= \sqrt{p} \sum_{k,\ell,m,n\in\Z_p} \delta_{n,m} \ket{m} \ket{n} \ip{n}{-\ell:X} \bra{\ell:X} \bra{\ell:X} \\
    &= \sqrt{p} \sum_{m,\ell\in\Z_p} \ket{m} \dyad{m}{m} \dyad{-\ell:X}{\ell:X} \bra{\ell:X} \\
    &= \interp{\tikzfig{figures/equations/soundness/Bigebra_RHS}}
  \end{align}

  \textsc{(Copy)}
  \begin{align}
    \interp{\tikzfig{figures/equations/soundness/Copy_LHS}}
    &= \sqrt{p} \sum_{j,k\in\Z_p} \omega^{2^{-1}xj} \ip{k}{-j:X} \ket{k}\ket{k} \\
    &= \sum_{j,k\in\Z_p} \omega^{2^{-1}xj} \omega^{-jk} \ket{k}\ket{k} \\
    &= p \sum_{k\in\Z_p} \delta_{k,2^{-1}x} \ket{k}\ket{k} \\
    &= p \ket{2^{-1}x}\ket{2^{-1}x} \\
    &= \sum_{j\in\Z_p} \omega^{-2^{-1}xj} \ket{j:X} \sum_{k\in\Z_p} \omega^{-2^{-1}xk} \ket{k:X} \\
    &= \sum_{j\in\Z_p} \omega^{2^{-1}xj} \ket{-j:X} \sum_{k\in\Z_p} \omega^{2^{-1}xk} \ket{-k:X} \\
    &= \interp{\tikzfig{figures/equations/soundness/Copy_RHS}}
  \end{align}

  \textsc{(G-Elim)} 
  \begin{equation}
    \interp{\tikzfig{figures/equations/soundness/G-Elim_LHS}}
    =
    \sum_{k\in\Z_p} \dyad{k}{k}
    =
    \interp{\tikzfig{figures/equations/soundness/G-Elim_RHS}}
  \end{equation}

  \textsc{(R-Elim)} 
  \begin{align}
    \interp{\tikzfig{figures/equations/soundness/R-Elim_LHS}}
    &= \sum_{k\in\Z_p} \dyad{-k:X}{k:X} \dyad{-j:X}{j:X}
    = \sum_{k\in\Z_p} \delta_{k,-j} \dyad{-k:X}{j:X} \\
    &= \sum_{k\in\Z_p} \dyad{j:X}{j:X}
    = \interp{\tikzfig{figures/equations/soundness/R-Elim_RHS}}
  \end{align}

  \textsc{(Mult)} For any \(z \in \Z_p^*\),
  \begin{align}
    \interp{\tikzfig{figures/equations/soundness/Mult_RHS}}
    &= \sum_{a,b,c,d\in\Z_p} \omega^{2^{-1}z^{-1}a^2} \omega^{2^{-1}zb^2} \omega^{2^{-1}z^{-1}c^2} \omega^{-ab} \omega^{-bc} \delta_{c,d} \dyad{a:Z}{d:X} \\
    &= \sum_{a,b,c\in\Z_p} \omega^{2^{-1}z^{-1}a^2} \omega^{2^{-1}zb^2} \omega^{2^{-1}z^{-1}c^2} \omega^{-ab} \omega^{-bc} \dyad{a:Z}{c:X} \\
    &= \sum_{a,b,c\in\Z_p} \omega^{2^{-1}z^{-1}(a-zb+c)^2} \omega^{-z^{-1}ac} \dyad{a:Z}{c:X} \\
    &= \left( \sum_{b\in\Z_p} \omega^{2^{-1}z^{-1}b^2} \right) \sum_{a,c\in\Z_p} \omega^{-z^{-1}ac} \dyad{a:Z}{c:X} \\
    &= \left( \sum_{b\in\Z_p} \omega^{2^{-1}z^{-1}b^2} \right) \sqrt{p} \sum_{a\in\Z_p} \dyad{a:Z}{z^{-1}a:Z} \\
    &= \left( \sum_{b\in\Z_p} \omega^{2^{-1}z^{-1}b^2} \right) \sqrt{p} \sum_{a\in\Z_p} \dyad{za:Z}{a:Z} \\
    &= \left( \sum_{b\in\Z_p} \omega^{2^{-1}z^{-1}b^2} \right) \sum_{a,c\in\Z_p} \omega^{-zac} \dyad{c:X}{a:Z} \\
    &= \left( \sum_{b\in\Z_p} \omega^{2^{-1}z^{-1}b^2} \right) \sqrt{p^z} \sum_{a,c\in\Z_p} \dyad{c:X}{c:X}^{\otimes z} \ket{a:Z}^{\otimes z}\bra{a:Z} \\
    &= \interp{\tikzfig{figures/equations/soundness/Mult_LHS}}.
  \end{align}

  \textsc{(Gauss)}
  follows using equation~(73) of \cite{appleby_properties_2009}:
  \begin{equation}
    \interp{\tikzfig{figures/equations/soundness/Gauss_LHS}}
    = \sum_{j\in\Z_p} \omega^{2^{-1}zj^2} 
    = -i^{\frac{p+3}{2}} \ell_p(z) 
    = -i^{\frac{p+3}{2}} (-1)^{\chi_p(z)} 
    = \interp{\tikzfig{figures/equations/soundness/Gauss_RHS}}.
  \end{equation}
  where we have used the fact that \(\ell_p(1) = 1\) (the Legendre symbol mod
  \(p\) of 1), and also that \(\ell_p(z) = (-1)^{\chi_p(z)}\) whenever \(z \neq
  0\).

  \textsc{(M-One)}
  \begin{equation}
    \interp{\tikzfig{figures/equations/soundness/M-One_LHS}}
    = (-1) (-1)
    = 1
    = \interp{\tikzfig{figures/equations/soundness/M-One_RHS}}.
  \end{equation}

  \textsc{(M-Elim)}
  \begin{align}
    \interp{\tikzfig{figures/equations/soundness/M-Elim_LHS}}
    &= \sum_{j,k\in\Z_p} \omega^{2^{-1}(ak+bk^2)} \bra{k:X}\ket{-zj}\bra{j} \\
    &= \sum_{j,k\in\Z_p} \omega^{2^{-1}(ak+bk^2)} \frac{\omega^{2^{-1}k zj}}{\sqrt{p}} \bra{j} \\
    &= \frac{1}{\sqrt{p}} \sum_{j,k\in\Z_p} \omega^{2^{-1}(ak+bk^2)} \omega^{2^{-1}k zj} \bra{j} \\
    &= \sum_{k\in\Z_p} \omega^{2^{-1}(ak+bk^2)} \bra{-zk:X} \\
    &= \sum_{k\in\Z_p} \omega^{2^{-1}(-az^{-1}k+bz^{-2}k^2)} \bra{k:X} \\
    &= \interp{\tikzfig{figures/equations/soundness/M-Elim_RHS}}.
  \end{align}
  
\end{lproof}

\section{Proof of completeness}
\label{app:completeness}
\subsection{Elementary derivations}

\begin{lemma}
  \label{lem:hadamard_product}
  Products of Hadamards are antipodes:
  \begin{equation*}
    \tikzfig{figures/equations/hadamard_product}
  \end{equation*}
\end{lemma}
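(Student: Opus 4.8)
The equation states that the Hadamard generator \(H\) composed with itself equals the red \(1\to1\) spider, i.e. the antipode \(\tikzfig{figures/generators/r-inv}\). Semantically this is immediate: \(\interp{H}\) is a Fourier transform on \(\C^p\), so \(\interp{H}^2\) is the computational-basis negation \(\ket{m}\mapsto\ket{-m}\), which is precisely \(\sum_k\dyad{-k:X}{k:X}\), the standard interpretation of \(\tikzfig{figures/generators/r-inv}\); hence the equation is sound and only its derivability in \(\ZXeq\) is at issue. The plan is to read it off from the colour-change axiom \textsc{(Colour)} applied to the most trivial green spider — this is in fact flagged implicitly earlier in the text, where it is observed that ``colour-changing'' \textsc{(G-Elim)} is exactly what forces the antipode to be non-trivial.

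Concretely, I would instantiate \textsc{(Colour)} with the middle spider taken to be the phaseless green \(1\to1\) spider. On the right-hand side \textsc{(Colour)} produces the phaseless red \(1\to1\) spider; unwinding the red-spider syntactic sugar, together with \textsc{(R-Elim)} to discharge the trivial red phase, identifies this with the antipode generator \(\tikzfig{figures/generators/r-inv}\). On the left-hand side, \textsc{(G-Elim)} rewrites the middle green spider to the bare wire, so the diagram collapses to \(H\) composed with \(H\) (or, depending on exactly how \textsc{(Colour)} is drawn in figure~\ref{fig:axioms}, to \(H\) composed with the barred Hadamard \(\tikzfig{figures/equations/hadamard_inverse}\) on one leg, in which case one additionally rewrites with its defining equation and the structural rules). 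Chaining these equalities gives \(\ZXeq \vdash H \circ H = \tikzfig{figures/generators/r-inv}\). The companion facts in the bundle \(\tikzfig{hadanteul}\) — that two barred Hadamards again compose to the antipode, and that \(\tikzfig{figures/equations/hadamard_inverse}\) is genuinely a two-sided inverse of \(H\) — then follow by pre- and post-composing with further copies of \(H\) and using that the antipode is an involution, which is itself a small consequence of \textsc{(R-Elim)}.

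I do not expect a deep obstacle here; the only point to get right is the reconciliation of conventions: matching the phaseless red \(1\to1\) spider produced by \textsc{(Colour)} with the antipode generator as it appears in the syntactic sugar, and checking which of \(H\) or its inverse the colour axiom attaches to each leg (since, unlike the qubit case, \(H\) is not self-inverse). Note also that, in contrast with most derivations in this appendix, no scalar bookkeeping is required: both sides are honest \(1\to1\) maps of the same norm, so the lemma can be proved on the nose rather than only up to an elementary scalar. The value of the lemma is downstream — it underpins the \(1\to1\) interaction rules used pervasively in the later proofs.
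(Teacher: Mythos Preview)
Your approach is correct and matches the paper's: instantiate \textsc{(Colour)} at the phaseless green \(1\to1\) spider, use \textsc{(G-Elim)} to erase that spider on the left so that only \(H\circ H\) remains, and identify the resulting phaseless red \(1\to1\) spider with the antipode. Your hedge about which Hadamard sits on which leg is unnecessary here — in this calculus \textsc{(Colour)} places an (unbarred) \(H\) on every input and every output, so the left-hand side collapses directly to \(H\circ H\) without any appeal to the inverse-Hadamard shorthand.
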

\begin{lproof}
  \begin{equation*}
    \tikzfig{figures/equations/hadamard_product_proof}
  \end{equation*}
\end{lproof}

\begin{lemma}
  \label{lem:hadamard_antipode}
  Hadamards and antipodes commute:
  \begin{equation*}
    \tikzfig{figures/equations/hadamard_antipode}
  \end{equation*}
\end{lemma}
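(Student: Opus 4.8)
The plan is to obtain this as an almost immediate corollary of Lemma~\ref{lem:hadamard_product}, which identifies the antipode (the $1 \to 1$ red spider) with the sequential composite $H \circ H$ of two Hadamards. First I would take the side of the claimed equation on which a Hadamard precedes an antipode and rewrite that antipode as $H \circ H$ via Lemma~\ref{lem:hadamard_product}; this turns that side into three Hadamards composed along a single wire. Since sequential composition is associative — and, in any case, the structural rules mean a $\ZXp$-diagram is read up to such bracketing — I can regroup the chain the other way and apply Lemma~\ref{lem:hadamard_product} a second time, this time folding the \emph{first} two Hadamards back into an antipode, leaving exactly an antipode followed by a single Hadamard, which is the other side of the displayed equation. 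So the whole derivation is just two applications of Lemma~\ref{lem:hadamard_product}.

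If the statement is also meant to cover the inverse Hadamard, I would additionally use that the antipode is an involution — this follows from the red-spider fragment, e.g.\ as an instance of \textsc{(R-Elim)} — so that composing four Hadamards gives the identity and hence $H^{-1}$ equals $H \circ H \circ H$. Feeding this identity into the same associativity argument yields commutation of the antipode with $H^{-1}$ as well, so both halves of the statement follow in a uniform way.

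The only subtlety worth flagging is bookkeeping about variance: unlike the qubit case, here $H^4 = I$ rather than $H^2 = I$, so $H$ and its inverse are genuinely distinct generators, and I must make sure each application of Lemma~\ref{lem:hadamard_product} is oriented so as to produce precisely the Hadamard occurring in the target diagram rather than its inverse. Once that is tracked correctly there is no real obstacle; the lemma is essentially Lemma~\ref{lem:hadamard_product} together with associativity of composition.
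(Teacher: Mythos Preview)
Your proposal is correct and matches the paper's own proof: the paper's diagrammatic derivation is precisely the chain $H \circ \text{antipode} = H \circ H \circ H = \text{antipode} \circ H$, obtained by two applications of Lemma~\ref{lem:hadamard_product} with a reassociation in between. Your additional remarks about the inverse Hadamard are not needed at this point in the paper (that case is handled separately in Lemma~\ref{lem:hadamard_inverse}), but they are harmless and correct.
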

\begin{lproof}
  \begin{equation*}
    \tikzfig{figures/equations/hadamard_antipode_proof}
  \end{equation*}
\end{lproof}

\begin{lemma}
  \label{lem:hadamard_inverse}
  The inverse Hadamard is a product of Hadamards:
  \begin{equation*}
    \tikzfig{figures/equations/hadamard_inverse}
  \end{equation*}
\end{lemma}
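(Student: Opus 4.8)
The plan is to pin down the shorthand for the inverse Hadamard as the triple composite \(H\circ H\circ H\) and, along the way, to check that this composite really is the two-sided compositional inverse of \(H\). Write \(\alpha\) for the antipode, i.e.\ the \(1\to 1\) red spider, which we have already identified (in the elementary derivations) as the Hopf-algebra antipode of the green--red fragment. The substantive inputs are Lemma~\ref{lem:hadamard_product}, which rewrites \(H\circ H\) as \(\alpha\), and Lemma~\ref{lem:hadamard_antipode}, which lets one commute \(\alpha\) freely past a Hadamard along a wire; flexsymmetry/OCM makes the order of \(1\to 1\) boxes along the wire irrelevant throughout.

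First I would take the shorthand to be, by construction, a Hadamard pre-composed (equivalently, by Lemma~\ref{lem:hadamard_antipode}, post-composed) with an antipode, and substitute \(\alpha = H\circ H\) using Lemma~\ref{lem:hadamard_product}; this immediately yields that the inverse-Hadamard shorthand equals \(H\circ\alpha = H\circ H\circ H\), which is the claimed equation. Then, to justify the name ``inverse'', I would compute the two composites with \(H\): \(\overline H\circ H = \alpha\circ H\circ H = \alpha\circ\alpha\) by Lemma~\ref{lem:hadamard_product}, and symmetrically \(H\circ\overline H = H\circ H\circ\alpha = \alpha\circ\alpha\); so it remains only to see that \(\alpha\circ\alpha = \mathrm{id}\). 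This is exactly the statement that the antipode of this commutative and cocommutative bialgebra is self-inverse, which is forced by the Hopf structure recorded in the central elementary derivations together with \textsc{(R-Elim)}. Since compositional inverses are unique, this also re-derives \(\overline H = H^{3}\) if one prefers to define \(\overline H\) as the genuine inverse. Note that no scalar bookkeeping is needed, since every map in sight is a \(1\to 1\) unitary.

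The one step that is not a formal rearrangement of the two prior Hadamard lemmas is the identity \(\alpha\circ\alpha=\mathrm{id}\): this is where one must actually use the axioms, extracting the fact from the Hopf/bialgebra laws of the green--red fragment (the \(\tikzfig{glaeq}\) rules and \textsc{(R-Elim)}). Everything else --- commuting an antipode past a Hadamard, reassociating the composite, and appealing to OCM for the layout --- is routine, so I expect this to be the only place requiring care.
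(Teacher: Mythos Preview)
Your argument is correct and matches the paper's one-line derivation, which likewise rests on Lemma~\ref{lem:hadamard_product} together with \textsc{(R-Elim)}. One small simplification: the identity $\alpha\circ\alpha=\mathrm{id}$ \emph{is} the axiom \textsc{(R-Elim)} verbatim, so there is no need to route through the Hopf/bialgebra laws to obtain it.
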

\begin{lproof}
  \begin{equation*}
    \tikzfig{figures/equations/hadamard_inverse_proof}
  \end{equation*}
\end{lproof}

\begin{lemma}
  \label{lem:antipode_unit}
  Units absorb antipodes:
  \begin{equation*}
    \tikzfig{figures/equations/antipode_unit}
  \end{equation*}
\end{lemma}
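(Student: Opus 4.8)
The plan is to reduce the antipode to Hadamard boxes and push them through the unit. Recall that the antipode is the phaseless $1\to 1$ red spider and that a unit is a $0\to 1$ green (or red) spider; the claim is that an antipode sitting on the output of a unit is absorbed, leaving a bare unit again (with, in the labelled version, its linear label negated). First I would invoke lemma~\ref{lem:hadamard_product} to replace the antipode by a composite $H\circ H$ of two Hadamards. This reduces the statement to showing that a single Hadamard merely toggles the colour of a unit: $H$ applied to the green unit is the red unit, and $H$ applied to the red unit is the green unit. Chaining the two toggles then reproduces the original unit, and the antipode has disappeared.

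The colour-toggling of a unit by a single Hadamard is an instance of the colour-change rule \textsc{(Colour)}. The point to be careful about --- indeed the only place the argument could become circular --- is that \textsc{(Colour)}, applied to a $0\to 1$ spider, leaves behind \emph{no} residual antipode; otherwise we would be assuming exactly what we want to prove. This is immediate from the shape of the meta-rule: its decorations are the edge transformations, which are vacuous for a spider with no incident edges, together with a sign on the Pauli part of each green vertex, which acts only on the (here, zero) linear label. So \textsc{(Colour)} gives cleanly $H\circ(\text{green unit}) = (\text{red unit})$ and $H\circ(\text{red unit}) = (\text{green unit})$, and one then computes $\text{antipode}\circ(\text{green unit}) = H\circ H\circ(\text{green unit}) = H\circ(\text{red unit}) = (\text{green unit})$, and symmetrically starting from the red unit.

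If one would rather avoid \textsc{(Colour)} entirely, an alternative derivation keeps everything one-coloured: by red-spider fusion --- a consequence of \textsc{(Fusion)} together with the self-loop elimination of lemma~\ref{lem:loop} --- the antipode (a $1\to 1$ red spider) fuses into the red unit, producing a red unit carrying at most a cancelling pair of antipodes on a single leg, which vanish by \textsc{(R-Elim)}; the green case then follows from this using lemma~\ref{lem:hadamard_product} and lemma~\ref{lem:hadamard_antipode} to commute a Hadamard past the antipode. I expect the only genuine obstacle to be the routine bookkeeping of the sign on the linear part of the unit's label and the check that no stray antipode survives --- both of which come out automatically once the statement of \textsc{(Colour)} (or of red-spider fusion) is unwound on a $0\to 1$ spider.
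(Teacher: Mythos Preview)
Your primary approach hides a circularity. The axiom \textsc{(Colour)} is stated in one direction: surrounding a \emph{green} spider by Hadamards yields a \emph{red} spider. For a $0\to1$ spider this gives $H\circ(\text{green unit})=(\text{red unit})$ cleanly, as you say. But the companion step $H\circ(\text{red unit})=(\text{green unit})$ does \emph{not} follow from the axiom directly: if you substitute $(\text{red unit})=H\circ(\text{green unit})$, you get $H\circ(\text{red unit})=H^2\circ(\text{green unit})=\text{antipode}\circ(\text{green unit})$, which is exactly the statement you are trying to prove. Appealing to the \textsc{(Colour)} \emph{meta-rule} does not help, since the meta-rule is a derived proposition and its instance on a red $0\to1$ spider needs precisely this lemma to be established first. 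Your alternative route via red-spider fusion has the same problem: red fusion is not an axiom but is obtained from green \textsc{(Fusion)} via \textsc{(Colour)}, and unwinding that for $(\text{antipode})\circ(\text{red unit})$ leads to $H^3\circ(\text{green unit})$, from which you again cannot escape without the result in hand.

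The paper breaks this loop by bringing in a different axiom: the red case is literally the $z=1$ instance of \textsc{(M-Elim)} (an antipode is the ``multiplier by $1$'' in its raw spider form, so \textsc{(M-Elim)} with $z=1$ reads $\text{antipode}\circ(\text{red unit})_{(a,b)}=(\text{red unit})_{(-a,b)}$). Once the red case is in hand, the green case \emph{does} follow from \textsc{(Colour)} together with lemma~\ref{lem:hadamard_antipode}: write $\text{antipode}\circ(\text{green unit})=H^{-1}\circ\text{antipode}\circ(\text{red unit})$ and apply the red case. So the missing ingredient in your argument is the independent input from \textsc{(M-Elim)}.
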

\begin{lproof}
  The red equation is basically a subcase of \textsc{(M-Elim)}:
  \begin{equation*}
    \tikzfig{figures/equations/antipode_unit_proof_red}
  \end{equation*}
  and the green rule is obtained using \textsc{(Colour)}:
  \begin{equation*}
    \tikzfig{figures/equations/antipode_unit_proof_green}
  \end{equation*}
\end{lproof}
 
\begin{lemma}
  \label{lem:hadamard_euler}
  The Hadamards admit simple Euler decompositions:
  \begin{equation*}
    \tikzfig{figures/equations/hadamard_euler}
  \end{equation*}
\end{lemma}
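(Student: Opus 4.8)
The lemma is the qupit analogue of the familiar qubit fact that the Hadamard has an Euler decomposition as a green--red--green (equivalently, red--green--red) sandwich of quadratic phase gates; the two new features in odd prime dimensions are the antipodes carried by the red spiders and a dimension-dependent scalar correction, which will turn out to lie in $O_p$ and so differ between the cases $p \equiv 1$ and $p \equiv 3 \bmod 4$.

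The plan is to read the decomposition off the \textsc{(Mult)} axiom. That axiom already presents a multiplier as a green--red--green composite of quadratic phases (with quadratic labels $z^{-1}, z, z^{-1}$) up to a scalar; specialising to $z = 1$, where Proposition~\ref{prop:multiplier} tells us the multiplier is just the bare wire, turns \textsc{(Mult)} into an equation whose right-hand side is exactly a green quadratic phase composed with a red quadratic phase composed with a green quadratic phase, equal up to an elementary scalar to the identity. I would then move one of the green phase gates across the equality — using that it is a $1 \to 1$ map whose inverse is the green phase gate with negated label, together with flexsymmetry and compact closure — and likewise peel off the other green phase, isolating on one side a composite which, by direct comparison with the \textsc{(Mult)} computation carried out in the soundness proof, is precisely a Hadamard up to a scalar. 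Rearranging then exhibits $H$ (up to a scalar and antipodes) in green--red--green form.

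There is also a more direct route, which may be the one actually used: combine \textsc{(Shear)} with \textsc{(Copy)} and \textsc{(Char)}. \textsc{(Shear)} lets one conjugate a red quadratic phase by green \emph{linear} phases, turning it into another red quadratic phase plus a $\sqrt{p}$-scaled scalar; choosing the parameters so that the green--red--green pattern collapses — the resulting scalar loop being evaluated via \textsc{(Gauss)} — onto a single Hadamard edge. Either way, once the green--red--green form is established, the dual red--green--red form follows by one application of the \textsc{(Colour)} meta-rule.

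The fiddly part is twofold. First, the scalar bookkeeping: because of the $2^{-1}$ in the interpretation of phase labels, the relevant Gauss sum $\sum_{\ell}\omega^{2^{-1}\ell^2}$ equals $\left(\tfrac{2}{p}\right)$ times $1$ or $i$ according to the residue of $p$ modulo $4$, so the correcting scalar genuinely lives in $O_p$ and must be tracked through \textsc{(Gauss)}, \textsc{(M-One)} and the $\star$ generator; this is also why the two congruence classes of $p$ should be treated separately. Second, one must place the red antipodes consistently: Lemmas~\ref{lem:hadamard_product}, \ref{lem:hadamard_antipode}, \ref{lem:hadamard_inverse} and \ref{lem:antipode_unit} are exactly the tools needed to commute antipodes past Hadamards and inverse Hadamards and to absorb stray antipodes into units, and I expect these to do all of the structural clean-up. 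The scalar computation, rather than the structural manipulation, is the step I expect to be the main obstacle.
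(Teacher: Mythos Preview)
Your instinct to specialise \textsc{(Mult)} to \(z=1\) is exactly the paper's approach, but your description of what \textsc{(Mult)} says is off in a way that makes the ``peeling'' step you describe unnecessary and, as written, incorrect. The axiom does not equate the multiplier with a bare green--red--green composite; it equates (up to scalars) the multiplier with a green--red--green composite \emph{followed by a Hadamard} (you can read this off the soundness computation: the input bra lives in the \(X\)-basis, so there is an explicit \(H^{-1}\) in the diagram). Consequently, at \(z=1\) the multiplier side collapses and you are left directly with an equation of the form \(H^{\pm 1} \simeq \text{green}(0,1)\circ\text{red}(0,1)\circ\text{green}(0,1)\); no ``comparison with the soundness proof'' is needed to conjure the Hadamard --- it is already there. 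The peeling you describe would instead leave you with something like \(\text{green}(0,-2) = \text{scalar}\cdot\text{red}(0,1)\), which is not a Hadamard at all.

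Two smaller points. First, you invoke Proposition~\ref{prop:multiplier} to simplify the \(1\)-multiplier, but in the paper's logical order that proposition is established only after this lemma; at \(z=1\) the multiplier is literally a single green \(1\to1\) node composed with a single red \(1\to1\) node, and the preceding Lemmas~\ref{lem:hadamard_product}--\ref{lem:antipode_unit} are all that is needed to reduce it. Second, your worries about the scalar (tracking \textsc{(Gauss)}, splitting on \(p\bmod 4\), landing in \(O_p\)) are premature here: the paper simply carries along whatever scalar \textsc{(Mult)} supplies, and the clean-up via \textsc{(Gauss)} and Lemma~\ref{lem:scalar_i_elim} is deferred to the later Lemma~\ref{lem:hadamard_euler_better}. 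The second (inverse) decomposition is obtained by the same manipulation with the signs flipped, not via \textsc{(Colour)}.
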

\begin{lproof}
  \begin{equation*}
    \tikzfig{figures/equations/hadamard_euler_proof}
  \end{equation*}
  \begin{equation*}
    \tikzfig{figures/equations/hadamard_euler_proof_inverse}
  \end{equation*}
\end{lproof}

\begin{lemma}
  \label{lem:hopf}
  The Hopf identity is derivable in \(\ZXeq\):
  \begin{equation*}
    \tikzfig{figures/equations/hopf}
  \end{equation*}
\end{lemma}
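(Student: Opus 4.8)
The Hopf identity is the incarnation in \(\ZXeq\) of the familiar law from the theory of interacting (Hopf) bialgebras: a green spider and a red spider joined by a pair of wires can be pulled apart, so that the composite of a green comultiplication with a red multiplication collapses, up to an elementary scalar (a power of \(\sqrt{p}\)), to a green counit stacked above a red unit -- a disconnected, rank-one diagram. The subtlety compared with the qubit case is that the antipode is not trivial: its role is played by the \(1\to 1\) red spider, which negates the label in every Pauli eigenbasis, and it is exactly this twist that makes the law close up. The plan is to derive it from the bialgebra axiom \textsc{(Bigebra)}, together with the unit-elimination rules \textsc{(G-Elim)} and \textsc{(R-Elim)}, the copy rule \textsc{(Copy)} (green spiders copy red Pauli-phase states), and the elementary facts about Hadamards and antipodes collected in Lemmas~\ref{lem:hadamard_product}--\ref{lem:antipode_unit}; one may first route through the ``antipode/Hopf-structure'' rules that these elementary derivations establish.

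Concretely, I would proceed as follows. First, using \textsc{(Fusion)} and flexsymmetry, present the relevant sub-diagram as a green \(1\to 2\) comultiplication whose two legs feed a red \(2\to 1\) multiplication, with the appropriate antipode on one leg as dictated by the built-in antipodes of red spiders. Next, apply \textsc{(Bigebra)} to this green--red pattern: after expanding the \(1\to 2\to 1\) shape into a \(2\to 2\) configuration using the special Frobenius identities, \textsc{(Bigebra)} rearranges the connectivity and, crucially, exposes a bare unit of one colour sitting on a leg of a spider of the opposite colour. At that point \textsc{(Copy)} applies, duplicating (or, in the degenerate case, deleting) that unit through the spider; the resulting copies are absorbed by \textsc{(G-Elim)}/\textsc{(R-Elim)}, collapsing the spider. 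The \(1\to1\) antipodes that appear along the way are reorganised using Lemma~\ref{lem:hadamard_product} (a product of two Hadamards is an antipode) and Lemma~\ref{lem:hadamard_antipode} (Hadamards and antipodes commute), while any antipode left sitting on a unit is swallowed by Lemma~\ref{lem:antipode_unit}; Lemma~\ref{lem:hadamard_euler} may also be convenient if intermediate diagrams are recast via Euler decompositions. The bookkeeping scalar produced is a power of \(\sqrt{p}\), hence elementary, and can be carried explicitly to get the stated equation on the nose or simply discarded if only the \(\simeq\) version is needed.

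As a sanity check at each step one can evaluate \(\interp{-}\): the left-hand side computes to \(\ket{0:Z}\bra{0:X}\) up to normalisation, which is visibly rank one and disconnected, confirming the target. The main obstacle is not conceptual but combinatorial: because every red spider carries hidden antipodes, one must keep careful track of where the \(k\mapsto -k\) twists and the accompanying Hadamards migrate after each use of \textsc{(Bigebra)} and \textsc{(Copy)}, and the delicate point is to arrange that these twists cancel in pairs rather than accumulating -- which is exactly what the preceding string of Hadamard/antipode lemmas is there to support.
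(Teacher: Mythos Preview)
Your plan is essentially the paper's own approach: the proof in the paper is a short diagrammatic chain that unfolds the green--red composite, applies \textsc{(Bigebra)} together with \textsc{(Copy)} and the Frobenius/unit rules, and tidies up the stray antipodes using exactly the preceding block of lemmas (Lemmas~\ref{lem:hadamard_product}--\ref{lem:antipode_unit}); the colour-swapped version of the identity is then obtained by the \textsc{(Colour)} meta-rule, which you should mention explicitly. One small point of economy: the Euler decomposition (Lemma~\ref{lem:hadamard_euler}) is not needed here---the paper's derivation stays entirely at the bialgebra/copy level without ever opening a Hadamard into rotations---so you can safely drop that detour.
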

\begin{lproof}
  \begin{equation*}
    \tikzfig{figures/equations/hopf_proof}
  \end{equation*}
  Then, the second rule follows using the colour change meta-rule.
\end{lproof}

\begin{lemma}
  \label{lem:scalar_elementary}
  The following rules hold between the ``elementary'' scalars:
  \begin{equation*}
    \tikzfig{figures/equations/scalar_elementary}
  \end{equation*}
\end{lemma}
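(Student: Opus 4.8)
The plan is to read each equation in the statement as an identity in the commutative monoid \(\ZXp[0,0]\) — using the standard fact that for scalar diagrams \(\otimes\) and \(\circ\) agree — and to discharge them one at a time by short rewrite sequences built from the core scalar axioms of figure~\ref{fig:axioms}, chiefly \textsc{(Zero)}, \textsc{(One)} and \textsc{(Fusion)}, with \textsc{(Char)}, \textsc{(Bigebra)}, \textsc{(Mult)} and \textsc{(M-One)} in reserve. Since no inputs or outputs are present, flexsymmetry and the compact structure let me reshape each scalar freely; in particular I may bend any \(\sqrt{p}\)-type scalar into a capped spider and back whenever convenient. If any of the stated equations is phrased with red spiders, the colour-change meta-rule \textsc{(Colour)} reduces it to its green counterpart (up to antipodes, removed via lemma~\ref{lem:hopf}), so it suffices to treat the green case.

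First I would dispatch the cancellation identity, that the \(\sqrt{p}\)-scalar composed with the \(1/\sqrt{p}\)-scalar is the empty diagram: these are exactly the two normalisation factors occurring in \textsc{(One)} (see the values of \(\interp{-}\) computed in equation~\eqref{eq:scalars}), so, after fusing the underlying spiders with \textsc{(Fusion)} if needed, a single application of that axiom closes it; the analogous statement for \textsc{(Zero)} handles the vanishing case. For the "doubling" identities — combining two decorated \(\sqrt{p}\)-scalars into one phase-free \(p\)-scalar together with an \(\omega\)-correction — I would glue the two scalar spiders along a cap, fuse with \textsc{(Fusion)} into a single spider whose label is the sum of the two labels, and then recognise the resulting capped configuration via \textsc{(Char)} (or \textsc{(Bigebra)}) as the phase-free \(p\)-scalar, whose interpretation is \(p\,\delta_{x,0}\). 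The residual \(\omega\)-factor is produced precisely by the \textsc{(Shear)} and \textsc{(Mult)} rules, whose sound interpretations already carry the \(\omega^{-2^{-2}xz}\)-type corrections, so the bookkeeping matches by construction.

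The main obstacle I expect is not any individual rewrite but the phase accounting. Because the spider labels are interpreted with the extra factor \(2^{-1}\), every invocation of \textsc{(Shear)} or \textsc{(Mult)} injects quadratic \(\omega\)-exponents — the \(2^{-3}da^2\)- and \(2^{-2}xz\)-type terms visible in the soundness proof — and one must be scrupulous that these recombine to exactly the \(\omega\)-phases asserted and not, say, to a \textsc{(Gauss)}-style Legendre-symbol sign. I would therefore keep a running tally of the exponents after each rewrite and confirm at the end that only the "square root of \(p\)" content has been consumed, deliberately leaving the genuinely sign-valued and \(\omega\)-valued scalars to be handled separately in the subsequent lemmas and in axiom \textsc{(Gauss)}.
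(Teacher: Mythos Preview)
Your proposal is not a proof but a hedged strategy document: you list a menu of axioms you might deploy and describe in broad terms how you would ``dispatch'' various cases, without committing to a single concrete rewrite sequence. The paper's actual proof is a one-line diagrammatic derivation; the content of the lemma is correspondingly modest, and the argument uses only the most basic axioms (essentially \textsc{(Fusion)}, \textsc{(Copy)}, and the Hopf law of lemma~\ref{lem:hopf}) applied directly to the scalar pictures.

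The specific overreach is your second and third paragraphs. You anticipate ``doubling'' identities requiring \textsc{(Shear)}, \textsc{(Mult)}, \textsc{(Bigebra)} and careful tracking of \(2^{-3}da^2\)- and \(2^{-2}xz\)-type exponents, and you flag phase accounting as the ``main obstacle''. None of this is relevant here. Those quadratic-phase subtleties do arise, but only in the later lemmas \ref{lem:scalar_gauss_elim}, \ref{lem:scalar_gauss_multiplication}, and in the treatment of \textsc{(Gauss)}; the present lemma sits \emph{before} lemma~\ref{lem:loop} in the dependency order and is deliberately elementary. Invoking \textsc{(Shear)} or \textsc{(Mult)} at this stage would be circular in spirit, since the point of isolating these ``elementary'' scalar identities early is precisely to have them available as building blocks for the heavier scalar normalisation later. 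Your instinct that the \(\sqrt{p}\cdot 1/\sqrt{p}\) cancellation goes through essentially by \textsc{(Fusion)} and a unit axiom is the right one; the rest of your plan should be discarded.
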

\begin{lproof}
  \begin{equation}
    \tikzfig{figures/equations/scalar_elementary_proof}
  \end{equation}
\end{lproof}

\begin{lemma}
  \label{lem:loop}
  Self-loops on green spiders can be eliminated:
  \begin{equation*}
    \tikzfig{figures/equations/loop}
  \end{equation*}
  We include the colour-swapped version of this rule for completeness, even
  though it no longer includes a genuine self-loop.
\end{lemma}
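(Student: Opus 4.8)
The plan is to prove the green equation and then obtain the colour-swapped one by the colour-change meta-rule \textsc{(Colour)}. For the green case I would first use flexsymmetry together with the compact structure to deform the picture so that the self-loop joins a single output leg of one green spider $S$ directly to one of its input legs; by the inductive definition of green spiders and \textsc{(Fusion)}, $S$ then factors as a green spider $S_0$ of type $(n{-}1) \to (m{-}1)$ with a trivially-labelled green comultiplication attached to the relevant output and a trivially-labelled green multiplication attached to the relevant input, the loop now joining the free leg of the comultiplication to the free leg of the multiplication. It then suffices to show that this local $\delta$--loop--$\mu$ gadget collapses to a bare wire, since \textsc{(Fusion)} reattaches it to $S_0$ to give exactly the right-hand side, and the standard interpretation confirms that no scalar survives.

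The heart of the argument, and the step I expect to be the main obstacle, is precisely this collapse: it cannot be carried out with \textsc{(Fusion)} and the compact structure alone, since any connected all-green diagram containing a cycle simply fuses back to a single green spider carrying a self-loop. The extra ingredient must be the green--red interaction. Concretely, I would insert on the loop wire a suitable cancelling pair of $1\to1$ maps (e.g.\ $H$ and $H^{-1}$, or two red antipodes, cf.\ \textsc{(R-Elim)}), use \textsc{(Colour)} and \textsc{(Bigebra)} to turn the loop configuration into a parallel green--red connection, and then annihilate that connection with the Hopf identity (lemma~\ref{lem:hopf}); \textsc{(Fusion)} and \textsc{(G-Elim)} then reduce what remains to a plain wire. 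Throughout one must keep careful track of the scalars generated by \textsc{(Bigebra)}, \textsc{(Zero)} and \textsc{(One)} and verify that they cancel, which the interpretation guarantees is possible.

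Finally, for the colour-swapped version I would feed the green equation through \textsc{(Colour)}. Since the colour-change transformation sends a plain green--green edge to an edge carrying an antipode (a product of two Hadamards, lemma~\ref{lem:hadamard_product}) and adds a sign to the Pauli part of each green spider, the green self-loop becomes a self-loop through an antipode on the red spider---which is exactly why the colour-swapped rule no longer contains a genuine self-loop---while the extra Pauli sign is absorbed into the red spider's label, yielding the stated red equation.
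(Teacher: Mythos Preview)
Your plan is correct and matches the paper's approach. The paper's proof (a single diagrammatic chain hidden in a tikz figure) sits immediately after lemma~\ref{lem:hopf} precisely because the collapse of the loop gadget is carried by the Hopf identity, exactly as you anticipate: one inserts a cancelling pair on the loop wire, recognises the resulting configuration as a green spider meeting a red one along two parallel wires (with the requisite antipode), applies lemma~\ref{lem:hopf} to disconnect, and cleans up the leftover unit/counit and scalars with \textsc{(Fusion)} and lemma~\ref{lem:scalar_elementary}. The red version is then obtained verbatim by the \textsc{(Colour)} meta-rule, as the paper states in one line.

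Two small remarks. First, your invocation of \textsc{(Bigebra)} is redundant here: it is the engine behind lemma~\ref{lem:hopf}, but once Hopf is in hand you need not reopen it. Second, your factorisation into a $\delta$--loop--$\mu$ gadget of type $1\to 1$ works, but it is slightly cleaner (and closer to the one-line proof the paper gives) to bend both loop legs to the same side and split off a single green $1\to 2$ comultiplication capped on its two outputs; inserting the antipode pair on one leg then turns the cap directly into the red $2\to 0$ spider by flexsymmetry, and Hopf applies immediately without any intermediate \textsc{(Bigebra)} manipulation.
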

\begin{lproof}
  \begin{equation*}
    \tikzfig{figures/equations/loop_proof}
  \end{equation*}
  The red version is obtained using \textsc{(Colour)}.
\end{lproof}

\begin{lemma}
  \label{lem:scalar_phase}
  The following rules hold between the ``phase'' scalars: for any \(a,b \in
  \Z_p\),
  \begin{equation*}
    \tikzfig{figures/equations/scalar_phase}
  \end{equation*}
\end{lemma}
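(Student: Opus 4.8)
The plan is to read the ``phase scalars'' as the closed diagrams whose standard interpretation is a power of $\omega$ — built out of one or two Pauli-labelled spiders with no free legs — and to prove the stated identities by first sliding the Pauli phases together and then collapsing the leftover closed structure. Concretely, the two identities to establish are that a product of two such scalars is the phase scalar whose label is the sum of the two labels, and that the label-$0$ phase scalar is the empty diagram; together with the additive group structure of $\Z_p$ this is exactly what the later proof needs in order to invert elements of $P$.

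First I would use the derived rules for moving Pauli phases from Section~\ref{sec:structures}, together with flexsymmetry (OCM), to bring the Pauli phases of the two scalar diagrams onto adjacent spiders of the \emph{same} colour, inserting antipodes only where the colour bookkeeping forces them. Then \textsc{(Fusion)} merges those spiders into a single Pauli-labelled spider whose label is the sum of the originals; any self-loop produced along the way is removed with lemma~\ref{lem:loop}, and any parallel green--red edge is broken using the Hopf identity (lemma~\ref{lem:hopf}). This already gives multiplicativity. For the second identity, when the accumulated label is $0$ the surviving spider carries the trivial Pauli phase, so \textsc{(G-Elim)} — or, via \textsc{(Colour)}, its red counterpart \textsc{(R-Elim)} — deletes it outright, leaving the empty diagram.

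The genuinely delicate part is the scalar accounting, not any conceptual step. Because of the $2^{-1}$ convention in the interpretation of labels, a Pauli label $a$ on a green spider contributes $\omega^{2^{-1}a}$ per eigenbranch rather than $\omega^{a}$; I would neutralise this by fixing the phase-scalar normal form with the $2^{-1}$ already built in, so that the \textsc{(Fusion)} step is literally addition in $\Z_p$. The other thing to watch is stray $\sqrt p$-type factors: each use of \textsc{(Copy)} or of the Hopf identity that deletes an edge must be paired on the spot with the identities of lemma~\ref{lem:scalar_elementary} and the \textsc{(One)}/\textsc{(Zero)} axioms, so that the final diagram is a bare power of $\omega$ and nothing else. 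I expect the whole derivation to compress into a short diagrammatic chain once these conventions are pinned down, with this bookkeeping being the only real obstacle.
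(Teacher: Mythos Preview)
Your high-level reading of what this lemma should accomplish --- that the $\omega$-type phase scalars behave like a group under tensor, so that each has an inverse --- is sensible, but you have most likely conflated this lemma with the later lemmas~\ref{lem:scalar_omega_multiplication} and~\ref{lem:scalar_omega_elim}, which are the ones that actually prove the additive law and the label-$0$ case for the specific $\omega$-scalar normal form. Lemma~\ref{lem:scalar_phase} sits much earlier in the dependency chain and is correspondingly more elementary: the paper's proof is a single short rewrite chain together with the one-line remark that one of the claimed identities --- the one equating a red scalar dot with its green counterpart --- is \emph{immediate} from the \textsc{(Colour)} axiom. Your sketch never invokes \textsc{(Colour)} as a direct equality of scalars (only in passing, to transport \textsc{(G-Elim)} to a red version, which is unnecessary since \textsc{(R-Elim)} is itself an axiom), so you are missing the actual mechanism behind at least one of the equations.

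There is also a dependency hazard in your plan. You appeal to ``the derived rules for moving Pauli phases from Section~\ref{sec:structures}'', but in the paper's logical order several of those rules --- in particular lemmas~\ref{lem:pauli_copy_phase} and~\ref{lem:antipode_phase}, the latter being precisely what the proof of lemma~\ref{lem:scalar_omega_multiplication} relies on --- are established \emph{after} the present lemma. At this stage you only have the axioms of figure~\ref{fig:axioms} together with lemmas~\ref{lem:hadamard_product}--\ref{lem:loop}. The paper's argument stays inside that toolkit; the general Pauli-movement machinery and the elaborate $\sqrt{p}$-bookkeeping you anticipate are neither available nor needed here.
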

\begin{lproof}
  \begin{equation}
    \tikzfig{figures/equations/scalar_phase_proof}
  \end{equation}
  The rule \(\tikzfig{figures/equations/red_green_dot}\) is immediate using
  \textsc{(Colour)}.
\end{lproof}

\begin{lemma}
  \label{lem:unit_rotation_elim}
  Green units absorb red rotations and vice-versa:
  \begin{equation*}
    \tikzfig{figures/equations/unit_rotation_elim}
  \end{equation*}
\end{lemma}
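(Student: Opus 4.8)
The plan is to establish the ``green'' half of the statement — that a red \(1 \to 1\) spider placed on the output of a green unit is absorbed, leaving a green unit with modified labels, up to an elementary scalar — and then to read off the ``vice versa'' (red) half by applying the colour-change meta-rule \textsc{(Colour)} to both sides, which sends the green unit to a red unit and the red \(1\to1\) spider to a green one; any antipodes introduced along the way (since the underlying map \(S\) is only functorial after post-composition with antipodes) are immediately absorbed using lemma~\ref{lem:antipode_unit}.

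To prove the green half, I would first dispose of the linear (Pauli) part of the red spider. By \textsc{(Fusion)}, a green unit with labels \((x,y)\) factors as a green unit \((0,y)\) followed by a green \(1 \to 1\) spider carrying the Pauli phase \((x,0)\); and by \textsc{(Shear)} a green Pauli phase commutes past a red \(1\to1\) spider at the price of an elementary \(\sqrt{p}\)-type scalar and a shift of the red spider's linear label. Sliding the red spider's own linear label onto the free output wire and back into the unit afterwards (using \textsc{(Colour)} together with lemma~\ref{lem:antipode_unit}), the problem reduces to the purely quadratic core: a red \(1 \to 1\) spider with labels \((0,b)\) sitting on a green unit with labels \((0,y)\).

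For this core case I would decompose the red spider, via \textsc{(Colour)} and lemma~\ref{lem:hadamard_inverse}, into a green \(1\to1\) spider flanked by a Hadamard box, its inverse, and an antipode. The Hadamard adjacent to the green unit is absorbed via the identity \(H \circ (\text{green unit}) = (\text{red unit})\) (itself an instance of \textsc{(Colour)}, with the Pauli label negated); re-colouring the resulting red unit back with \textsc{(Colour)} and fusing with the green \(1\to1\) spider by \textsc{(Fusion)} then produces a single green unit, the residual antipode being absorbed by lemma~\ref{lem:antipode_unit}. The genuinely delicate point is the interaction of the two quadratic labels: completing the square in the relevant Gaussian exponent shows the new quadratic label must be \(y(1-by)^{-1}\), so the inverse \((1-by)^{-1}\) has to be manufactured diagrammatically — this is exactly where a single use of \textsc{(Shear)} together with \textsc{(Char)} and proposition~\ref{prop:multiplier} enters, representing \((1-by)^{-1} \in \Z_p\) as a multiplier. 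The degenerate case \(by = 1\), where no such inverse exists, needs separate treatment: there the red spider collapses the green unit onto a single \(Z\)-eigenvector, so the answer is a Pauli green unit (quadratic label \(0\)), obtained by fusing the two quadratic green phases on the wire into a \((\,\cdot\,,0)\) spider and eliminating it with \textsc{(Copy)} applied to the red unit produced on the other side. The remaining work is bookkeeping of the accumulated powers of \(\omega\) and \(\sqrt p\), collected using lemmas~\ref{lem:scalar_elementary} and~\ref{lem:scalar_phase} and axiom \textsc{(Gauss)}, which is precisely why the equality is only asserted up to an elementary scalar.
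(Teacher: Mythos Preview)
You appear to have over-read the statement. In the paper, ``green unit'' means the \emph{unlabelled} \(0\to1\) green spider, and the lemma is the simple fact that any red \(1\to1\) spider fixes it --- semantically, \(\ket{0:X}\) is an eigenvector of every red rotation with eigenvalue \(1\), so the equality is exact with no change of labels and no residual scalar. The paper's proof is accordingly a one-liner: an unfusion of the rotation, an application of \textsc{(Copy)} (or its colour-swapped form) to copy the unit through, and a cancellation of the two resulting scalars using the immediately preceding lemma~\ref{lem:scalar_phase}. The ``vice versa'' half then follows by \textsc{(Colour)}, exactly as you say.

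Your proposal instead tackles the substantially harder statement in which the green unit carries arbitrary labels \((x,y)\). That pulls in proposition~\ref{prop:multiplier}, which in the paper's development is established only after this lemma, and --- more seriously --- contains a genuine error in the degenerate branch \(by=1\): there the composite is proportional to a single \(Z\)-eigenvector, hence a \emph{red} Pauli unit, not ``a Pauli green unit (quadratic label \(0\))'' as you write; a green unit with labels \((s,0)\) is an \(X\)-eigenstate, not a \(Z\)-eigenstate. So even for the stronger statement you have in mind, your argument breaks at that point, and the conclusion ``green unit absorbs red rotation'' would in fact be false for general labels.
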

\begin{lproof}
  \begin{equation*}
    \tikzfig{figures/equations/unit_rotation_elim_proof_green}
  \end{equation*}
\end{lproof}

\begin{lemma}
  \label{lem:bigebra_simplified}
  The scalars in the bigebra law can be simplified to:
  \begin{equation*}
    \tikzfig{figures/equations/bigebra_simplified}
  \end{equation*}
\end{lemma}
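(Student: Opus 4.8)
The plan is to take the \textsc{(Bigebra)} rule exactly as it appears in figure~\ref{fig:axioms} and rewrite only its scalar decorations, leaving the ``wiring'' part --- the green and red spiders together with the edges between them --- untouched. Every scalar subdiagram is disconnected from the rest of the picture, so by flexsymmetry it may be slid off to one side and treated as an element of \(\ZXp[0,0]\); the task then reduces to showing that this element rewrites, under \(\ZXeq\), to the scalar appearing in the claimed simplified form.

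First I would expand the scalar attached to \textsc{(Bigebra)} into a tensor product of the elementary pieces (\(\sqrt{p}\)-gadgets, their inverses, \(\star\), and Pauli-phase dots), using spider fusion and the elimination rules, together with lemma~\ref{lem:loop} to absorb any self-loops that the wiring manipulations create. Then, using lemma~\ref{lem:scalar_elementary}, I would cancel each \(\sqrt{p}\) against a \((\sqrt{p})^{-1}\), invoking \textsc{(Zero)} and \textsc{(One)} where a bare product \(\sqrt{p}\cdot(\sqrt{p})^{-1}\) appears, and collapse the remaining copies of \(\star\) via \textsc{(M-One)}. Any phase dots that survive are combined or deleted using lemma~\ref{lem:scalar_phase}. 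The arithmetic identity that makes all of this work --- namely that the product of normalisation factors on the two sides agrees --- is exactly what was verified in the soundness proof of \textsc{(Bigebra)} in appendix~\ref{app:soundness}; here it only has to be replayed as a sequence of rewrites rather than as a computation over \(\C\).

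The main obstacle I expect is purely bookkeeping: keeping track of which \(\sqrt{p}\)-gadget carries a nontrivial phase label (recall from equation~\eqref{eq:scalars} that one such gadget has interpretation \(\sqrt{p}\,\omega^{-2^{-2}xz}\)) so that the phases are routed correctly through lemma~\ref{lem:scalar_phase}, and making sure the congruence class of \(p\) modulo \(4\) --- which governs whether \(\mathbb{P}_p\) contains \(-1\) or \(i\) --- does not leave behind a stray factor of \(i\). Since all of this takes place inside \(\ZXp[0,0]\), where lemmas~\ref{lem:scalar_elementary} and~\ref{lem:scalar_phase} already supply enough relations to rewrite any such product to a single elementary scalar, the two sides can be compared unambiguously, which gives the claimed identity.
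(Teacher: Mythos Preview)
Your plan is essentially the paper's: the proof there is a single chain of diagrammatic rewrites that takes the scalar decorations attached to \textsc{(Bigebra)} and reduces them using the elementary scalar identities (lemmas~\ref{lem:scalar_elementary} and~\ref{lem:scalar_phase}) together with \textsc{(Fusion)} and the unit-elimination rules, leaving the wiring untouched. So the overall approach matches.

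Two remarks. First, your worry about the congruence class of \(p\) modulo \(4\) and a possible stray factor of \(i\) is misplaced here: the scalars attached to \textsc{(Bigebra)} are built only from the \(\sqrt{p}\)- and \((\sqrt{p})^{-1}\)-gadgets of equation~\eqref{eq:scalars} with no quadratic phase labels, so no Gauss-sum behaviour enters and the reduction is a plain cancellation of powers of \(\sqrt{p}\). The lemmas you really need are just~\ref{lem:scalar_elementary} and~\ref{lem:scalar_phase}; \textsc{(Gauss)}, \textsc{(M-One)}, and the \(O_p\)-dependent machinery play no role. Second, what you have written is a plan rather than a proof: to turn it into one you would need to actually identify the scalar subdiagrams on both sides of \textsc{(Bigebra)} and exhibit the rewrite sequence, which in this case is short and mechanical once you unpack the \(\sqrt{p}\)-gadgets.
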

\begin{lproof}
  \begin{equation*}
    \tikzfig{figures/equations/bigebra_simplified_proof}
  \end{equation*}
\end{lproof}

\begin{lemma}
  \label{lem:antipode_copy}
  The green co-multiplication copies antipodes:
  \begin{equation*}
    \tikzfig{figures/equations/antipode_copy}
  \end{equation*}
\end{lemma}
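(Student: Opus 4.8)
The statement is the Hopf-algebraic fact that the antipode is an \emph{anti}-morphism of the green comonoid; since the green comonoid is cocommutative (the swap on the two outputs of a green node is free by flexsymmetry), the twist drops out and the claim to prove is just $\Delta_g \circ a = (a\otimes a)\circ\Delta_g$, where $a$ is the red $1\to 1$ spider and $\Delta_g$ the green comultiplication. The plan is to prove it directly by pushing Hadamards through the green comultiplication. By lemma~\ref{lem:hadamard_product} the antipode equals $H\circ H$. Substituting this on the \emph{input} wire of $\Delta_g$ and applying the colour-change meta-rule \textsc{(Colour)} turns ``green comultiplication with one Hadamard on its input'' into a red comultiplication; because red spiders in this calculus carry intrinsic antipodes on their legs, the colour-changed diagram is a bare red comultiplication with antipodes distributed on its two output legs (plus an elementary scalar). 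Applying \textsc{(Colour)} a second time to move the remaining Hadamard converts the red comultiplication back to a green one, now with the antipodes already sitting on the two outputs — which is the right-hand side.

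Concretely, the steps in order are: (i) rewrite $a$ as $H\circ H$ via lemma~\ref{lem:hadamard_product}, using lemma~\ref{lem:hadamard_antipode} and lemma~\ref{lem:hadamard_euler} to slide Hadamards past the built-in antipodes of the red nodes as needed; (ii) apply \textsc{(Colour)} to the green comultiplication carrying one Hadamard on its input, producing a red comultiplication with leg-antipodes and an elementary scalar; (iii) apply \textsc{(Colour)} again to absorb the second Hadamard, recovering a green comultiplication; (iv) fuse and cancel the leftover antipodes on the two output legs using the Hopf-type rules of section~4.1 together with spider fusion, and collapse the accumulated scalars (powers of $\sqrt p$ and $\omega$-phases from the two colour changes) to the empty diagram with lemma~\ref{lem:scalar_elementary}, lemma~\ref{lem:scalar_phase} and lemma~\ref{lem:bigebra_simplified}. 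An alternative, Hadamard-free route is the standard convolution argument: on $\ZXp[1,2]$, with the green comonoid on the source and the green monoid on each output factor, the Hopf identity lemma~\ref{lem:hopf} says $a$ is the convolution inverse of $\mathrm{id}$, and one checks that both $\Delta_g\circ a$ and $(a\otimes a)\circ\Delta_g$ are convolution inverses of $\Delta_g$ — the latter using the bialgebra law \textsc{(Bigebra)} and \textsc{(Copy)} — so they coincide by uniqueness of inverses in a monoid.

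The main obstacle I expect is bookkeeping rather than conceptual: tracking exactly where the intrinsic antipodes of the red spiders land after each application of \textsc{(Colour)}, and checking that the elementary scalars generated by the two colour changes really cancel, so that the end diagram is $\Delta_g$ with a \emph{single} antipode on the input and not, say, with a stray extra antipode or a residual Pauli phase. Cocommutativity of the green comonoid is what makes the symmetric placement of antipodes on the two output legs match up with no further correction.
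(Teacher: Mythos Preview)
Your first approach---substitute $a=H^2$ via lemma~\ref{lem:hadamard_product} and push the Hadamards through with \textsc{(Colour)}---is along the lines of the paper's proof, but you are over-engineering it. The bookkeeping you worry about does not arise: applying \textsc{(Colour)} to the unlabelled green comultiplication introduces no scalar and no stray leg-antipodes (the harvestman antipodes are already absorbed into the definition of the red spider, so the colour-swapped diagram is literally $\Delta_r$, not $\Delta_r$ with extra decoration). Concretely, one pass of \textsc{(Colour)} gives $\Delta_g\circ H = (H^{-1}\otimes H^{-1})\circ\Delta_r$; a second pass gives $\Delta_r\circ H = (H^{-1}\otimes H^{-1})\circ\Delta_g$; chaining these yields $\Delta_g\circ a = (H^{-2}\otimes H^{-2})\circ\Delta_g$, and $H^{-2}=a$ since $H^4=\mathrm{id}$ (lemma~\ref{lem:hadamard_product} and \textsc{(R-Elim)}). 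There is nothing to cancel with the Hopf identity, nothing from lemma~\ref{lem:bigebra_simplified}, and no scalar to collapse---step (iv) of your plan is superfluous.

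Your convolution alternative is also correct but is a genuinely different route from the paper's. It trades the two colour changes for one use each of \textsc{(Bigebra)}, \textsc{(Copy)} and lemma~\ref{lem:hopf}, together with uniqueness of inverses in the convolution monoid. That is the textbook Hopf-algebra proof and has the virtue of being Hadamard-free and of generalising verbatim to any cocommutative Hopf algebra; the paper's Hadamard route is shorter here because \textsc{(Colour)} packages the whole bialgebra interaction into a single rewrite.
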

\begin{lproof}
  \begin{equation*}
    \tikzfig{figures/equations/antipode_copy_proof}
  \end{equation*}
\end{lproof}

\begin{lemma}
  \label{lem:bigebra_mn}
  The bigebra law holds for arbitrary arities: for any \(m,n \in \N\),
  \begin{equation*}
    \tikzfig{figures/equations/bigebra_arbitrary}\quad,
  \end{equation*}
  where in the diagram on the LHS, there are \(m\) green and \(n\) red spiders,
  and each green spider is connected to each red spider by a single wire.
\end{lemma}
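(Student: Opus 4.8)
The plan is to derive the general bigebra law by a nested induction that peels off one spider at a time, in each step reducing to the elementary bialgebra axiom \textsc{(Bigebra)} (with scalars already normalised by Lemma~\ref{lem:bigebra_simplified}) together with spider fusion \textsc{(Spider)}. I would split the argument into two stages. \emph{Stage one:} prove the restricted ``\(2\times n\)'' case --- two green spiders each joined to \(n\) red spiders --- by induction on \(n\). The base case \(n=2\) is exactly \textsc{(Bigebra)}, and the degenerate cases \(n\le 1\) reduce to \textsc{(G-Elim)} and Lemma~\ref{lem:antipode_copy} (a single green spider feeding red antipodes). \emph{Stage two:} prove the full \(m\times n\) statement by induction on \(m\), now using the \(2\times n\) result of stage one as the elementary move in place of \textsc{(Bigebra)}; the base case \(m=1\) is again a green spider \(1\to n\) feeding \(n\) antipodes, which matches the claimed right-hand side after transporting the antipode across the green spider via Lemma~\ref{lem:antipode_copy}.

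For the inductive step of stage one I would first use green spider fusion (run backwards) to split each of the two green spiders so as to expose a clean \(K_{2,2}\) pattern on the last two red spiders \(R_{n-1},R_n\); applying \textsc{(Bigebra)} collapses that sub-pattern to a red \(2\to 1\) spider \(\mu\) connected by one wire to a green \(1\to 2\) spider \(\delta\). Crucially, \(\mu\) together with the untouched red spiders \(R_1,\dots,R_{n-2}\) now forms a \(2\times(n-1)\) pattern with the two residual green spiders, so the induction hypothesis applies; a final green--green fusion merges the resulting green spider with \(\delta\), producing the desired red \(2\to 1\) / green \(1\to n\) shape. The inductive step of stage two is entirely analogous, but dualised: one unfuses the red spiders to expose a \(2\times n\) sub-pattern on the green spiders \(g_{m-1},g_m\), collapses it using stage one, and then fuses the two red spiders produced. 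Throughout, OCM/flexsymmetry lets me ignore the planar layout while performing these unfuse/refuse manipulations, and the self-loop that may momentarily appear is removed by Lemma~\ref{lem:loop}.

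I expect the principal difficulty to be bookkeeping rather than anything conceptual: every application of \textsc{(Bigebra)} (or of the stage-one lemma) contributes one factor of \(\sqrt{p}\) via Lemma~\ref{lem:bigebra_simplified} and reshuffles antipodes, so one must verify that, after all the unfusings and refusings, the accumulated scalar and the positions of the antipodes agree \emph{exactly} with the right-hand side of the stated equation and not merely up to an elementary scalar. I would control this by tracking, at each inductive step, the single \(\sqrt{p}\) that enters against the net change in the number of edges, and by using Lemma~\ref{lem:antipode_copy} and the antipode cases of \textsc{(Spider)} to normalise all antipodes onto the connecting wire. If the scalar turns out to be fiddly, a clean fallback is to first prove the identity up to an elementary scalar --- where the stages above go through verbatim --- and then pin down the scalar separately by specialising to the smallest arities, where the equation is a finite check.
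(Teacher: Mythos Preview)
Your proposal is correct and matches the paper's approach: the paper handles the degenerate cases \(m=0,n=0,m=1,n=1\) via the copy/antipode-copy rules, notes that \(m=n=2\) is Lemma~\ref{lem:bigebra_simplified}, and then simply appeals to ``a straightforward induction (which furthermore is analogous to the qubit case)'' --- precisely the nested induction you spell out in your two stages. Your extra care with the scalar and antipode bookkeeping is prudent but not something the paper dwells on; it treats the induction as routine.
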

\begin{lproof}
  The cases \(m=0\) or \(n=0\) correspond to the copy rules.
  The case \(n=1\) is the antipode copy rule (lemma~\ref{lem:antipode_copy}) and
  the case \(m=1\) is trivial by the green identity rule.
  The case \(n=2,m=2\) is lemma~\ref{lem:bigebra_simplified}.
  The general case follows from a straightforward induction (which furthermore
  is analogous to the qubit case).
\end{lproof}

\begin{lemma}
  \label{lem:antipode_multiplier}
  The antipode can be rewritten as a multiplication:
  \begin{equation*}
    \tikzfig{figures/equations/antipode_multiplier}
  \end{equation*}
\end{lemma}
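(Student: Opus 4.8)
The plan is to turn the multiplier back into the antipode by unfolding its inductive definition and then collapsing the resulting \((p-1)\)-fold structure with \textsc{(Char)}. First I would use the mod-\(p\) periodicity of multipliers (equation~\eqref{eq:multiplier_modular}) to represent the label \(-1 \in \Z_p\) by unfolding the inductive scheme at \(p-1 \in \N\). Unfolding that recursion, the cascade of elementary green co-multiplications on the ``copy'' side fuses --- by ordinary green spider fusion, i.e.\ \textsc{(Fusion)} together with lemma~\ref{lem:loop} --- into a single phaseless green \(1 \to p-1\) spider, and the cascade of elementary red multiplications on the ``add'' side fuses, by the red version of spider fusion (again \textsc{(Fusion)} plus lemma~\ref{lem:loop}, now dragging along the antipodes that red fusion produces), into a single phaseless red \(p-1 \to 1\) spider. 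So, up to those antipodes, the \((-1)\)-multiplier equals the ``\((p-1)\)-edge theta'': a green \(1 \to p-1\) spider joined to a red \(p-1 \to 1\) spider by \(p-1\) parallel wires, which matches the target semantics \(\lvert k \rangle \mapsto \lvert (p-1)k \rangle = \lvert -k \rangle\).

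Next I would pad this picture to a \(p\)-legged one. Fusing a phaseless green counit onto the green spider and a phaseless red unit onto the red spider --- both trivial, since a counit or unit is just a \(0\)-ary spider --- rewrites it to a green \(1 \to p\) spider and a red \(p \to 1\) spider joined by \(p-1\) wires, with the extra green leg and the extra red leg ``closed off'' against each other through the counit and unit. This is precisely the shape \textsc{(Char)} is designed for: it exploits \(\omega^p = 1\) to collapse such a \(p\)-fold bipartite green/red interaction, and applied here it should deliver exactly a single red \(1 \to 1\) antipode. Whatever disconnected scalar diagrams are left over get rewritten to the empty diagram using lemma~\ref{lem:scalar_elementary} (and \textsc{(One)}), absorbing the implicit normalisation factors carried by the multiplier sugar; equation~\eqref{eq:multiplier_explicit} can be used to keep track of these.

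The main obstacle is bookkeeping rather than conceptual: because the red spiders of \(\ZXp\) carry antipodes in order to be flexsymmetric, every use of red spider fusion and of \textsc{(Char)} shuffles antipodes around, and one must check that --- since \(a\) is an involution (lemma~\ref{lem:hadamard_product} together with \(H^4 = \mathrm{id}\)) and \(p\) is odd, so \(a^p = a\) --- they collapse to exactly one; similarly the powers of \(\sqrt p\) must be counted and cancelled. If the direct \textsc{(Char)} route turns out to be awkward, an alternative is to verify that the \((p-1)\)-multiplier satisfies the defining convolution-inverse (Hopf) equation of the antipode --- again an unfolding plus \textsc{(Char)}/lemma~\ref{lem:hopf} computation showing the \((p-1)\)-edge theta composed with one further ``add'' is a rank-one diagram --- and then run the standard uniqueness-of-antipode diagram chase, which needs only the bialgebra law (lemma~\ref{lem:bigebra_mn}) and the Frobenius/counit identities.
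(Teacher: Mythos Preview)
Your main line of attack --- padding to a \(p\)-legged picture and invoking \textsc{(Char)} --- does not go through as written. After un-fusing a green counit and a red unit you indeed have a green \(1\to p\) spider and a red \(p\to 1\) spider, but only \(p-1\) of the green outputs feed into the red inputs; the remaining green leg is capped by a counit and the remaining red leg by a unit, and these two caps are \emph{disconnected} from each other. The axiom \textsc{(Char)} is the specific equation whose left-hand side has \emph{all} \(p\) wires running directly from the green spider to the red spider. Your padded diagram is not of that shape, and there is no rewrite that turns the separate counit/unit pair into a direct wire (semantically, green counit followed by red unit is rank one, not the identity). So \textsc{(Char)} simply does not apply to the padded picture, and the step ``applied here it should deliver exactly a single red \(1\to 1\) antipode'' is unjustified.

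What does work --- and is essentially the paper's argument --- is the route you list as an alternative. Rather than capping the extra legs, \emph{wrap} the \((p-1)\)-theta \(M\) in a further green comultiplication and red multiplication: by spider fusion this produces exactly the \(p\)-wire \textsc{(Char)} configuration, so \(M * \mathrm{id} = \text{disconnect}\) in the convolution algebra. Since Hopf (lemma~\ref{lem:hopf}) gives \(\mathrm{antipode} * \mathrm{id} = \text{disconnect}\), both \(M\) and the antipode are convolution inverses of \(\mathrm{id}\), and the standard short uniqueness chase
\[
M = M * (\mathrm{id} * \mathrm{antipode}) = (M * \mathrm{id}) * \mathrm{antipode} = \mathrm{antipode}
\]
goes through diagrammatically using only \textsc{(Fusion)} for associativity of \(*\) and the unit/counit laws for the neutral element. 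The paper runs this as a single chain of equalities; the bookkeeping with antipodes and \(\sqrt{p}\) factors you anticipate is handled by lemma~\ref{lem:scalar_elementary} and \textsc{(R-Elim)} along the way.
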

\begin{lproof}
  \begin{equation*}
    \tikzfig{figures/equations/antipode_multiplier_proof}
  \end{equation*}
\end{lproof}

\begin{lemma}
  \label{lem:antipode_phase}
  For any \(x,y \in \Z_p\),
  \begin{equation*}
    \tikzfig{figures/equations/antipode_phase}
  \end{equation*}
\end{lemma}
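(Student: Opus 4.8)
The plan is to reduce the claimed identity to the multiplier calculus of Section~\ref{sec:structures}, exploiting that the antipode is nothing but the $(-1)$-multiplier. Concretely, I would first invoke Lemma~\ref{lem:antipode_multiplier} to replace the red $1\to1$ antipode by the $(-1)$-multiplier. Since a labelled spider is, by definition, obtained by sliding the label of a $0\to1$ generator along a wire, the whole statement then amounts to computing how a multiplier acts on such a label.

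That computation is essentially the content of \textsc{(M-Elim)}, together with the copy and elimination identities of Proposition~\ref{prop:multiplier}: a unit carrying phase $(a,b)$, composed with a $z$-multiplier, is equal in \(\ZXeq\) to the unit carrying phase $(-az^{-1},\,bz^{-2})$. Specialising to $z=-1$ and using $(-1)^{-1}=-1$ and $(-1)^{-2}=1$ in $\Z_p$, the quadratic (Clifford) part of the phase is left untouched while the linear (Pauli) part changes sign, which is precisely what the statement records. For the dual-coloured version I would pass to \textsc{(Colour)}, conjugating with Hadamards; Lemma~\ref{lem:hadamard_antipode} guarantees that the antipode slides through Hadamards without creating extra vertices, so no clutter is introduced, and Lemma~\ref{lem:scalar_phase} (or Lemma~\ref{lem:antipode_unit}) absorbs any residual $0\to1$ pieces that appear along the way.

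I expect the only real obstacle to be arithmetic bookkeeping rather than anything structural: one must track the ubiquitous factor of $2^{-1}$ in the phase convention and the signs produced by taking $z=-1$, and verify carefully that it is exactly the linear term that flips while the quadratic term survives unchanged. Beyond that, the argument is a short, mechanical diagram chase through rules already established above.
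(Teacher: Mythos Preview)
Your plan can be made to work, but the step you flag as mere ``arithmetic bookkeeping'' hides a real gap: substituting $z=-1$ into your own formula $(-az^{-1},\,bz^{-2})$ yields $(a,b)$, not the sign-flipped $(-a,b)$ you assert. The antipode is indeed the $(-1)$-multiplier by Lemma~\ref{lem:antipode_multiplier}, but the left-hand side of \textsc{(M-Elim)} already carries an extra sign from the red-spider convention where the multiplier meets the co-unit, so the specialisation that actually recovers the bare antipode there is $z=1$, not $z=-1$ (compare the proof of Lemma~\ref{lem:antipode_unit}, which is declared to be exactly that subcase). This is fixable once noticed, but your sketch as written does not land on the right phase.

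There is also an ordering point: the copy and elimination identities of Proposition~\ref{prop:multiplier} that you need in order to slide the multiplier onto the label are established only in the subsequent ``Multipliers'' subsection, after the present lemma. No circularity results, since those proofs do not invoke Lemma~\ref{lem:antipode_phase}, but this is why the paper argues more directly. It proves the green case first by a short diagram chase using only prior lemmas---splitting off the label as a $0\to1$ unit, pushing the antipode through the unlabelled spider (Lemma~\ref{lem:antipode_copy}), and absorbing it into the unit via Lemma~\ref{lem:antipode_unit}---and then derives the red case from the green one by Hadamard conjugation, exactly along the lines you propose via \textsc{(Colour)} and Lemma~\ref{lem:hadamard_antipode}.
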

\begin{lproof}
  \begin{equation*}
    \tikzfig{figures/equations/antipode_phase_proof_green}
  \end{equation*}
  The rule for red spiders is obtained from the green rule:
  \begin{equation*}
    \tikzfig{figures/equations/antipode_phase_proof_red}\quad.
  \end{equation*}
\end{lproof}

\begin{lemma}
  \label{lem:antipode_spider}
  For any \(x,y \in \Z_p\) and \(m,n \in \N\),
  \begin{equation*}
    \tikzfig{figures/equations/antipode_spider}
  \end{equation*}
\end{lemma}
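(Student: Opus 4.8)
The plan is to prove Lemma~\ref{lem:antipode_spider} by induction on the total arity $m+n$, unfolding the inductive definition of the green spider and sliding the antipodes outward one leg at a time. The base cases are exactly what Lemma~\ref{lem:antipode_phase} provides: for a $0\to 1$ spider (a labelled green unit) the claim is the green half of that lemma, for a $1\to 0$ spider it is its transpose, and for the $1\to 1$ spider one first uses \textsc{(Fusion)} to write the green $1\to1$ spider as a unit composed with a counit and then applies Lemma~\ref{lem:antipode_phase} at both ends. The fully degenerate case $m=n=0$ reduces to \textsc{(Fusion)} together with the $0\to 1$ base case, so the $\chi_p$-free bookkeeping of signs is confined entirely to the leaves of the induction.

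For the inductive step, assume without loss of generality that $n\geq 2$ (the case $m\geq 2$ is handled symmetrically, see below). By the definition of the green spider, the $(x,y)$-labelled spider of type $m\to n$ is the $(x,y)$-labelled spider of type $m\to n-1$ post-composed with a green co-multiplication on one of its output legs. The antipodes appearing in the statement sit on each of the $n$ outputs, so two of them land on the two output legs of that co-multiplication; applying Lemma~\ref{lem:antipode_copy} (the green co-multiplication copies antipodes) in reverse merges these two antipodes into a single one on the \emph{input} leg of the co-multiplication, i.e.\ onto the last output wire of the smaller spider. We are now looking at the $(x,y)$-spider of type $m\to n-1$ carrying an antipode on each of its $m+n-1$ legs, so the induction hypothesis turns it into the $(-x,y)$-spider of type $m\to n-1$; re-attaching the (now antipode-free) co-multiplication and reabsorbing it with \textsc{(Fusion)} yields the $(-x,y)$-spider of type $m\to n$, as required. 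The mirror argument for an input leg uses the transpose of Lemma~\ref{lem:antipode_copy}, which is immediate by bending wires through the cup and cap (the self-dual compact-closed structure) and by flexsymmetry.

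The red version of the identity is then obtained mechanically from the green one by the colour-change meta-rule \textsc{(Colour)}, exactly as in the proofs of Lemmas~\ref{lem:antipode_phase} and~\ref{lem:loop}: applying \textsc{(Colour)} to the established green equation produces a diagram cluttered with Hadamard boxes and spurious antipodes, which one then normalises using Lemma~\ref{lem:hadamard_product}, Lemma~\ref{lem:hadamard_antipode}, and the fact that the red spider already carries antipodes on its legs by definition.

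The main obstacle is purely one of bookkeeping: one must state precisely the induction invariant ``an antipode on each of the legs of an $(x,y)$-spider equals the corresponding $(-x,y)$-spider'' and be careful that each application of Lemma~\ref{lem:antipode_copy} deposits exactly one antipode on the intended external wire and leaves nothing on the new internal wire created by peeling off a (co)multiplication. Once this invariant is in place, every step of the induction is a single use of \textsc{(Fusion)}, Lemma~\ref{lem:antipode_copy} (or its transpose), or a base case from Lemma~\ref{lem:antipode_phase}, and no new diagrammatic input is needed; this parallels the way Lemma~\ref{lem:bigebra_mn} bootstraps the arbitrary-arity bigebra law from its $2\to 2$ instance.
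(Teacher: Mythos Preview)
Your proposal is correct and follows essentially the same route as the paper: push the antipodes through the (co)multiplications using Lemma~\ref{lem:antipode_copy}, collapse them onto the central $1\to 1$ labelled spider, and finish with Lemma~\ref{lem:antipode_phase}; the red version is then obtained via \textsc{(Colour)}, exactly as the paper states. The only quibble is that your $0\to 1$ base case is not literally ``the green half'' of Lemma~\ref{lem:antipode_phase} (which concerns the $1\to 1$ spider) but rather that lemma combined with Lemma~\ref{lem:antipode_unit} after splitting off the unlabelled unit via \textsc{(Fusion)}; this is a one-line fix and does not affect the structure of your argument.
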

\begin{lproof}
  \begin{equation*}
    \tikzfig{figures/equations/antipode_spider_proof}
  \end{equation*}
  The rule for red spiders is obtained using \textsc{(Colour)} like in the proof
  of lemma~\ref{lem:antipode_phase}.
\end{lproof}

\begin{lemma}
  \label{lem:pauli_copy_phase}
  Green spiders copy red Pauli phases, and vice-versa: for any \(x \in
  \Z_p\),
  \begin{equation*}
    \tikzfig{figures/equations/pauli_copy_phase}
  \end{equation*}
\end{lemma}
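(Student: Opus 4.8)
The plan is to view this lemma as the arity-generalisation of the \textsc{(Copy)} axiom, which already handles a red Pauli phase copied through the binary green comultiplication; the ``vice versa'' half will then follow by applying the \textsc{(Colour)} meta-rule. First I would use \textsc{(Fusion)} together with the compact structure to reduce the statement to its essential shape: a red Pauli phase $(x,0)$ sitting on a single leg of a green spider, which we may take to be of type $1 \to n$ with the Pauli on the input. It then suffices to prove that such a diagram equals a green $1 \to n$ spider with red Pauli phases on all $n$ outputs (up to a scalar and an update of the green phase).

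I would prove this by induction on $n$. The small-arity base cases are immediate: $n = 1$ is just the statement that a Pauli phase moves through a $1 \to 1$ green spider, which is a special case of the elementary ``Pauli move'' rules (equivalently of \textsc{(Shear)}), and $n = 2$ is \textsc{(Copy)} itself, once the scalar and the phase labels on the two output legs are reconciled using lemma~\ref{lem:scalar_elementary} and lemma~\ref{lem:scalar_phase}. For the inductive step, I would use the inductive definition of green spiders to split the $1 \to n+1$ green spider as a green comultiplication feeding a $1 \to n$ green spider on one output and a bare wire on the other; apply \textsc{(Copy)} to push the red Pauli across the comultiplication onto both branches; apply the induction hypothesis to the $1 \to n$ branch; and re-fuse everything with \textsc{(Fusion)}.

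The main obstacle, and the genuinely qupit-specific difficulty, is antipode bookkeeping: since the red $1 \to 1$ spiders here are true antipodes rather than identities, each use of \textsc{(Copy)} produces red spiders carrying both the Pauli label and a sign flip, and one must check these flips land consistently on all $n$ legs and do not accumulate. I would control them with lemma~\ref{lem:antipode_copy} (the green comultiplication copies antipodes), lemma~\ref{lem:antipode_phase} and lemma~\ref{lem:antipode_spider} (to commute antipodes past phases and spiders), and lemma~\ref{lem:loop} to discard any self-loops that appear; residual scalars are swept up by lemma~\ref{lem:scalar_elementary} and lemma~\ref{lem:scalar_phase}. One also checks that the green spider's phase is transported correctly --- for a pure Pauli $(x,0)$ this introduces at most a shift of the linear part and a scalar, with no new quadratic term. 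Finally, the second half of the statement, that red spiders copy green Pauli phases, is obtained by applying \textsc{(Colour)} to both sides of the green version: colour-changing the equation swaps the two colours and rewrites the edges and antipodes according to the colour-change proposition, and after the routine cleanup this is exactly the red-spider identity.
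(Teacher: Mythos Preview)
Your overall plan---reduce to a canonical shape, establish the green-spider case, then get the red-spider case by \textsc{(Colour)}---is fine, and the last step matches the paper exactly. But there is a genuine gap in the core of the argument.

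The \textsc{(Copy)} axiom concerns a red Pauli \emph{state} (type $0\to 1$) fed into the green comultiplication, not a red Pauli \emph{phase} (type $1\to 1$). Your claimed base case ``$n=2$ is \textsc{(Copy)} itself'' and your inductive step ``apply \textsc{(Copy)} to push the red Pauli across the comultiplication'' both place a $1\to 1$ red spider where \textsc{(Copy)} expects a $0\to 1$ red unit. These are not the same diagram, and nothing you cite (lemmas~\ref{lem:scalar_elementary}, \ref{lem:scalar_phase}) bridges that type mismatch; those lemmas only reconcile scalars, not the missing input wire.

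The fix is to first unfuse the red $(x,0)$ phase as a red $(x,0)$ state attached via a red multiplication (that is what \textsc{(Fusion)} gives you on the red side, keeping track of the antipode), then use \textsc{(Bigebra)}---or better, the arbitrary-arity version lemma~\ref{lem:bigebra_mn}, which is already available at this point---to commute the red multiplication through the green spider, and only then apply \textsc{(Copy)} to the genuine red state that now meets a green comultiplication. With lemma~\ref{lem:bigebra_mn} in hand you do not even need an induction on the arity: one unfusion, one application of the general bialgebra law, one \textsc{(Copy)}, and the antipode cleanup via lemmas~\ref{lem:antipode_copy} and~\ref{lem:antipode_spider} finishes the job in a single chain. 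This direct route is what the paper's short diagrammatic proof is doing.
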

\begin{lproof}
  \begin{equation*}
    \tikzfig{figures/equations/pauli_copy_phase_proof}
  \end{equation*}
  The other rule is obtained using \textsc{(Colour)}.
\end{lproof}

\subsection{Multipliers}

\begin{lemma}
  \label{lem:multiplier_sum}
  Parallel multipliers sum: for any \(x,y \in \Z_p\):
  \begin{equation*}
    \tikzfig{figures/equations/multiplier_sum}
  \end{equation*}
\end{lemma}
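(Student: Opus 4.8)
The plan is to unfold the inductive definition of the multipliers and then argue by induction on the label~$y$, regarded as a non-negative integer; the passage from $\Z$-labels back to $\Z_p$-labels at the end is absorbed by equation~\eqref{eq:multiplier_modular} (itself a consequence of \textsc{(Char)}), so it suffices to treat $x,y\in\N$. The base case is $y=0$: here the statement says that the $0$-multiplier, placed in parallel with the $x$-multiplier, is absorbed, leaving the $x$-multiplier alone. Since the $0$-multiplier is built only from green (co)units, this follows from the commutative (co)monoid structure of the green fragment — spider fusion, equation~\eqref{eq:spider_wars} — together with \textsc{(G-Elim)} and the unit laws.

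For the inductive step, I would peel off one elementary green (co)multiplication gadget from the $(y{+}1)$-multiplier using its inductive definition, so that the parallel configuration becomes the $y$-case with a single extra green node inserted on that branch. Fusing this node into the ambient green structure with \textsc{(Fusion)}, transporting it across the opposing red spider via the generalised bialgebra law (lemma~\ref{lem:bigebra_mn}), and then deleting the redundant edge/self-loop this creates with the Hopf identity (lemma~\ref{lem:hopf}), one recovers exactly the definitional shape of the $(x{+}y{+}1)$-multiplier; applying the induction hypothesis to the remaining $x$-with-$y$ part closes the induction.

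The main obstacle is the antipode bookkeeping: in $\ZXeq$ the red spiders carry antipodes, so every use of \textsc{(Fusion)} and \textsc{(Bigebra)} drags antipodes through the diagram, and these must be tracked and cancelled using the antipode-copy rule (lemma~\ref{lem:antipode_copy}), lemma~\ref{lem:hopf}, and the elementary $1\to1$ interaction rules. Keeping the diagram in a fixed canonical shape throughout the induction — so that the output of each step is literally the next multiplier's definition, with no stray antipodes or scalars left over — is the delicate part; modulo this, the algebraic content is the same interacting-bialgebra computation as in the qubit setting of \cite{bonchi_interacting_2017}.
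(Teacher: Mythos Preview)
Your inductive strategy is valid, but it is considerably more elaborate than what the paper does, and it invokes machinery that is not actually needed.

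The paper's proof is a single sentence: ``This is a straightforward consequence of \textsc{(Spider)}.'' The point is that once both multipliers are unfolded to their explicit multi-edge form (equation~\eqref{eq:multiplier_explicit}), the parallel configuration is nothing but two green--red multi-edge gadgets sharing their outer green and red spiders; fusing all greens and all reds via the \textsc{(Spider)} meta-rule leaves a single green spider connected to a single red spider by $x+y$ wires, which is by definition the $(x+y)$-multiplier (with equation~\eqref{eq:multiplier_modular} absorbing the reduction modulo~$p$). No induction is performed, and the antipode bookkeeping you flag as the ``delicate part'' is exactly what the \textsc{(Spider)} meta-rule was introduced to encapsulate.

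Your induction on~$y$ does go through, but the appeal to the generalised bialgebra law (lemma~\ref{lem:bigebra_mn}) and the Hopf identity in the inductive step is superfluous. After peeling one layer off the $(y{+}1)$-multiplier, the inner and outer green comultiplications merge by coassociativity, and the inner and outer red multiplications merge by associativity --- both instances of plain spider fusion. Re-bracketing the resulting $1\to 3$ green spider and $3\to 1$ red spider immediately isolates a parallel $(x,y)$ pair to which the induction hypothesis applies, yielding the definitional shape of the $(x{+}y{+}1)$-multiplier directly. There is no need to transport anything across a spider of the opposite colour, so bigebra never enters; likewise no self-loop is created, so Hopf is not used either. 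In short: the induction is correct, but the real content of each step is just \textsc{(Spider)}, which is why the paper dispenses with the induction altogether.
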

\begin{lproof}
  This is a straightforward consequence of \textsc{(Spider)}.
\end{lproof}

\begin{lemma}
  \label{lem:multiplier_elim}
  For any \(z \in \Z_p^*\),
  \begin{equation*}
    \tikzfig{figures/equations/multiplier_elim}
  \end{equation*}
\end{lemma}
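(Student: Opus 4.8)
The plan is to read the claimed identity as the unit/co-unit analogue of the axiom \textsc{(M-Elim)} for the opposite colour, and to derive it by transporting \textsc{(M-Elim)} across a Hadamard using the colour-change meta-rule. I would first dispose of the degenerate case in which the multiplier's label is not a unit: by equation~\eqref{eq:multiplier_explicit} together with equation~\eqref{eq:multiplier_modular} such a multiplier collapses to a ``copy--discard'' pattern, and the elimination is then an instance of a copy rule and \textsc{(G-Elim)} rather than of \textsc{(M-Elim)}. So assume from now on \(z \in \Z_p^*\).

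For the generic case I would start on the side of the equation carrying the multiplier and apply \textsc{(Colour)} so that the spider absorbing the multiplier becomes a Hadamard-conjugated spider of the opposite colour. The multiplier can then be slid through the Hadamard wire using equation~\eqref{eq:hadamard_multiplier}, which trades a \(z\)-multiplier on one side of a Hadamard for a (sign-adjusted) multiplier on the other; at that point the local picture is precisely an instance of \textsc{(M-Elim)}, which absorbs the multiplier and performs the label substitution \((a,b)\mapsto(\pm z^{\mp 1}a,\;z^{-2}b)\) (the exact signs and the ubiquitous factor \(2^{-1}\) in the phase convention are what need to be tracked carefully here). Finally I would colour-change back and tidy up the residual Hadamard wires and antipodes using lemmas~\ref{lem:hadamard_product}, \ref{lem:hadamard_inverse}, \ref{lem:hadamard_euler} and the Hopf identity (lemma~\ref{lem:hopf}), arriving at a single spider of the original colour carrying the transformed phase. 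If several multipliers appear in parallel during the Hadamard-commutation step they can be merged by lemma~\ref{lem:multiplier_sum}.

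An alternative, colour-change-free route is to expand the multiplier directly via equation~\eqref{eq:multiplier_explicit} (which is essentially a repackaging of \textsc{(Mult)}), absorb the resulting green and red phase gates into the ambient spider by \textsc{(Fusion)}, and collapse the Hadamard wires with the Euler decompositions of lemma~\ref{lem:hadamard_euler} and the Hopf identity; this is longer but serves as a useful cross-check, and it feeds directly into the proof of proposition~\ref{prop:multiplier}. The main obstacle I anticipate is not conceptual but bookkeeping: multipliers are \emph{not} flexsymmetric, so the wire carrying the multiplier cannot be bent freely and \textsc{(M-Elim)} must be applied on the correct side; and since this is an \emph{exact} equation, the \(\sqrt{p}\) factors contributed by the Hadamard wires and by \textsc{(Mult)} must cancel on the nose, which the stated form presumably arranges but which has to be checked explicitly.
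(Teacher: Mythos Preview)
Your overall strategy---prove one colour and transport to the other via \textsc{(Colour)}---is exactly what the paper does, so at that level you are aligned. The difference is in how the first colour is handled. The paper dispatches the green case in a single diagrammatic chain (one figure) and then simply writes ``the right rule is obtained using \textsc{(Colour)} as previously'', mirroring the pattern already used for lemma~\ref{lem:antipode_unit}. By contrast, your primary route inserts Hadamards via \textsc{(Colour)}, commutes the multiplier through them using equation~\eqref{eq:hadamard_multiplier}, applies \textsc{(M-Elim)}, and then cleans up with lemmas~\ref{lem:hadamard_product}--\ref{lem:hadamard_euler} and the Hopf identity. This would work, but it is considerably longer than necessary; your alternative route via direct expansion of the multiplier is much closer in spirit to the paper's argument, though even there the Euler decomposition and Hopf identity are superfluous.

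One concrete concern with your primary route: it relies on equation~\eqref{eq:hadamard_multiplier}, but in the paper's logical order that identity is stated only after proposition~\ref{prop:multiplier}, which is itself proved \emph{from} lemma~\ref{lem:multiplier_elim} (see the sentence immediately following lemma~\ref{lem:multiplier_inverse} in the appendix). So appealing to~\eqref{eq:hadamard_multiplier} here risks a circular dependency unless you first establish it independently from more primitive rules. The bookkeeping worries you flag---non-flexsymmetry of multipliers, exact scalar tracking---are real but routine; the dependency issue is the one that would actually obstruct the argument as written.
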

\begin{lproof}
  \begin{equation*}
    \tikzfig{figures/equations/multiplier_elim_proof_green}
  \end{equation*}
  The right rule is obtained using \textsc{(Colour)} as previously.
\end{lproof}

\begin{lemma}
  \label{lem:multiplier_product}
  For any \(x,y \in \N\),
  \begin{equation*}
    \tikzfig{figures/equations/multiplier_product}
  \end{equation*}
\end{lemma}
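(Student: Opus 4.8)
The plan is to induct on $x \in \N$, exploiting the fact that multiplication is iterated addition and that the additive structure of multipliers is already in hand via Lemma~\ref{lem:multiplier_sum}. Since the statement concerns labels in $\N$, no reduction modulo $p$ is needed in the argument itself; that is taken care of separately by equation~\eqref{eq:multiplier_modular}.

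First I would dispatch the base case $x = 0$. By the inductive definition $\tikzfig{figures/generators/multiplier}$, the $0$-multiplier is the disconnected map obtained from a green counit followed by a green unit, and the copy and elimination identities of Proposition~\ref{prop:multiplier} show that precomposing it with any $y$-multiplier collapses it back to the $0$-multiplier, which is the $(0\cdot y)$-multiplier as required.

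For the inductive step, I would unfold the $(x+1)$-multiplier via its defining equation, which presents it as a green comultiplication whose two outputs carry the $x$-multiplier and a bare identity wire respectively, recombined by a red multiplication. Precomposing with the $y$-multiplier, the crucial move is to push the $y$-multiplier through the green comultiplication so that it is copied onto both branches: unfolding the $y$-multiplier into its green--red form and applying the arbitrary-arity bigebra law (Lemma~\ref{lem:bigebra_mn}), together with the antipode-copy identity (Lemma~\ref{lem:antipode_copy}) to absorb the antipodes baked into the red spiders, shows that the green comultiplication duplicates a multiplier. One branch then reads ``$x$-multiplier after $y$-multiplier'', which the induction hypothesis rewrites to the $xy$-multiplier, while the other reads ``identity after $y$-multiplier'', i.e. the $y$-multiplier. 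Finally, a green comultiplication feeding the $xy$-multiplier and the $y$-multiplier into a red multiplication is, by Lemma~\ref{lem:multiplier_sum}, the $(xy+y)$-multiplier, which is the $(x+1)y$-multiplier. This closes the induction.

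The main obstacle is the copying step, that is, showing that the green comultiplication duplicates an arbitrary multiplier. This is the one place where the qupit calculus genuinely departs from the qubit bookkeeping: because the red spiders carry antipodes, the naive bigebra rewrite leaves stray antipodes that must be resolved using Lemma~\ref{lem:antipode_copy} and the spider-fusion rules of \textsc{(Spider)}. Once this normalisation is handled carefully, the remainder of the argument is routine diagrammatic manipulation.
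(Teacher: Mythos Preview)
Your inductive strategy is sound in outline, but there is a circularity you should fix: in the base case you invoke Proposition~\ref{prop:multiplier}, yet the paper explicitly states that Lemmas~\ref{lem:multiplier_sum}--\ref{lem:multiplier_inverse} (including the present one) are what \emph{prove} Proposition~\ref{prop:multiplier}. For the base case you should instead appeal directly to Lemma~\ref{lem:multiplier_elim} (for $y \in \Z_p^*$) together with equation~\eqref{eq:multiplier_modular} to reduce a general $y \in \N$ to the invertible or zero case.

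That aside, your route differs from the paper's. The paper does not induct: it unfolds both multipliers into their explicit form~\eqref{eq:multiplier_explicit} (a green $1 \to x$ spider wired to a red $x \to 1$ spider), so that the composite reads green--red--green--red; a single application of the arbitrary-arity bigebra law (Lemma~\ref{lem:bigebra_mn}) to the inner red--green pair, followed by \textsc{(Spider)} fusion at both ends, yields a green $1 \to xy$ spider wired to a red $xy \to 1$ spider, i.e.\ the $xy$-multiplier. Your induction works too, but it repackages the same bigebra step inside the ``copy the $y$-multiplier through the comultiplication'' manoeuvre, which you then have to repeat $x$ times rather than once. The paper's direct argument is shorter and avoids the antipode bookkeeping you flag as the main obstacle; on the other hand, your approach tracks the inductive definition of multipliers more faithfully and does not require first passing through the explicit form.
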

\begin{lproof}
  \begin{equation*}
    \tikzfig{figures/equations/multiplier_product_proof}
  \end{equation*}
\end{lproof}

\begin{lemma}
  \label{lem:multiplier_inverse}
  For any \(x \in \Z_p^*\),
  \begin{equation*}
    \tikzfig{figures/equations/multiplier_inverse}
  \end{equation*}
\end{lemma}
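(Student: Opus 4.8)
The plan is to reduce the statement to the explicit description of multipliers in equation~\eqref{eq:multiplier_explicit}, the composition law of lemma~\ref{lem:multiplier_product}, and the field-presentation of proposition~\ref{prop:multiplier}, and then to finish by uniqueness of inverses. By construction, the inverted multiplier is the transpose of the \(x\)-multiplier in the compact-closed structure of \(\ZXp\): it is the \(1\to1\) diagram obtained by bending the input wire of the \(x\)-multiplier up over a cup and its output wire down over a cap. Since cups, caps and the green spiders are all flexsymmetric, this transpose is again the interpretation of a multiplier, so the whole content of the lemma is that its label is \(x^{-1}\) and not \(x\).

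First I would establish the two yanking (snake) equations relating the \(x\)-multiplier and the \(x^{-1}\)-multiplier. By lemma~\ref{lem:multiplier_product} together with the modular reduction of equation~\eqref{eq:multiplier_modular}, the sequential composite of the \(x^{-1}\)-multiplier with the \(x\)-multiplier is the \(1\)-multiplier, which is the identity wire by proposition~\ref{prop:multiplier}; the same holds with the two factors exchanged, so the \(x^{-1}\)-multiplier is a two-sided compositional inverse of the \(x\)-multiplier. Next I would show that the inverted multiplier is also a two-sided inverse of the \(x\)-multiplier: unfolding its definition, the composite of the \(x\)-multiplier with its transpose is a cups-and-caps diagram which, after expanding the \(x\)-multiplier via \eqref{eq:multiplier_explicit}, reorganising with flexsymmetry and spider fusion, and cancelling the \(z\)-labelled piece against its reverse using \textsc{(Mult)}, \textsc{(M-Elim)} and \textsc{(M-One)}, collapses to the identity wire. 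Since a \(1\to1\) diagram has at most one two-sided inverse in a monoidal category, the inverted multiplier must equal the \(x^{-1}\)-multiplier, which is the claim. A direct alternative, avoiding the appeal to uniqueness of inverses, is to expand both sides of the claimed equation with \eqref{eq:multiplier_explicit}, use \textsc{(M-Elim)} to turn the bent \(z\)-labelled generator into the \(z^{-1}\)-labelled one, and then match the two explicit forms by spider fusion; this route is more computational but entirely mechanical.

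The main obstacle is exactly this cancellation. Because the multipliers are not flexsymmetric — OCM genuinely fails for them, as noted right after proposition~\ref{prop:multiplier} — bending the wires of the \(x\)-multiplier really does produce a different diagram, so one cannot dispose of the transpose by deformation alone: the fact that it is the honest compositional inverse has to be extracted by descending to the explicit form of equation~\eqref{eq:multiplier_explicit} and grinding through \textsc{(Mult)} and \textsc{(M-Elim)}. Once that one computation is in hand, everything else is routine bookkeeping with spider fusion and the field structure of the multipliers.
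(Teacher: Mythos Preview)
Your strategy via uniqueness of two-sided inverses is workable in outline, but as written it is circular: you invoke proposition~\ref{prop:multiplier} to conclude that the \(1\)-multiplier is the identity wire, yet in the paper's dependency order that proposition is deduced \emph{from} lemmas~\ref{lem:multiplier_sum}--\ref{lem:multiplier_inverse} (the paper says so explicitly in the sentence immediately following the proof of lemma~\ref{lem:multiplier_inverse}). The repair is easy---the \(1\)-multiplier being identity follows directly from the inductive definition together with \textsc{(G-Elim)} and \textsc{(R-Elim)}---but the proposal as stated begs the question.

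Setting that aside, your route is also more roundabout than the paper's. The paper proves the lemma by a direct diagrammatic rewrite of the bent multiplier into the explicit \(x^{-1}\)-multiplier, and then obtains the second stated equality from the first via the \textsc{(Colour)} meta-rule. Your step~2---showing that the transpose composes with the \(x\)-multiplier to the identity---requires unfolding to the explicit form and performing essentially the same diagram chase, after which you still need the uniqueness-of-inverses argument on top; so the detour does not actually save the core computation. Your ``direct alternative'' is much closer in spirit to what the paper does, but note that \textsc{(M-Elim)} is a rule about (co)units absorbing multipliers, not about bending a labelled generator through a cup or cap; the structural work here is done by flexsymmetry of the green and red spiders (which handles the bending) together with the multiplier composition of lemma~\ref{lem:multiplier_product} and the modular reduction of equation~\eqref{eq:multiplier_modular}, and the paper's use of \textsc{(Colour)} is what yields the companion equality without repeating the calculation.
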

\begin{lproof}
  \begin{equation*}
    \tikzfig{figures/equations/multiplier_inverse_proof}
  \end{equation*}

  The second equality follows from the \textsc{(Colour)} meta-rule.
\end{lproof}

Lemmas \ref{lem:multiplier_sum}-\ref{lem:multiplier_inverse} suffice to prove
proposition~\ref{prop:multiplier} and \ref{prop:weighted_hadamard}, so we
consider those rules proved from this point on, and adopt the multiplier
notation.

\begin{lemma}
  \label{lem:multiplier_copy}
  Spiders copy invertible multipliers: for any \(x \in \Z_p^*\),
  \begin{equation*}
    \tikzfig{figures/equations/multiplier_copy}
  \end{equation*}
\end{lemma}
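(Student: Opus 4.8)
The plan is to prove the green version by induction on the arity of the spider, bootstrapping from the $1 \to 2$ copy identity and the elimination identity for the (co)unit already recorded in proposition~\ref{prop:multiplier}; the red version then follows from the \textsc{(Colour)} meta-rule, exactly as in the preceding lemmas. In effect, the statement is just the arbitrary-arity generalisation of the copy rule of proposition~\ref{prop:multiplier}, in the same way that lemma~\ref{lem:bigebra_mn} generalises the $2\to2$ bigebra law.

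First I would reduce to the two base cases: the comultiplication $1\to 2$ and the (co)unit $1 \to 0$. Recall that a green spider $1\to n+1$ is by definition the comultiplication $1\to2$ with a green spider $1\to n$ grafted onto one of its legs, and that the $1 \to 1$ green spider is the identity; by flexsymmetry of the green spiders, any chosen leg of any green spider can be placed in the role of the distinguished leg. So, given a green spider carrying a $\times x$ multiplier on one leg, I push the multiplier across the outermost comultiplication using the $1\to 2$ copy identity of proposition~\ref{prop:multiplier} (producing a copy of $\times x$ on each of its legs), and then recurse into the smaller spider with the induction hypothesis, using spider fusion (\textsc{Spider}) to reassemble. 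The $n = 0$ case is the counit, where a $\times x$ with $x$ invertible is absorbed by the elimination identity of proposition~\ref{prop:multiplier} — this is, incidentally, the only place where invertibility of $x$ is genuinely used. (Alternatively, one can avoid citing proposition~\ref{prop:multiplier} and instead unfold the explicit bigebra-gadget form~\eqref{eq:multiplier_explicit} of $\times x$ in the base case, fuse the two green spiders, slide the red spider across the branching with lemma~\ref{lem:bigebra_mn}, and re-fuse; this is the more primitive route but amounts to the same manipulation.)

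Each individual step is a direct appeal to an already-proved identity, so no new computation is required. The one point that needs care is that multipliers are \emph{not} flexsymmetric: unlike the spiders themselves they cannot be deformed freely, so the argument must be phrased as an induction that peels off one comultiplication at a time, keeping the legs in a fixed order, rather than as a single global rearrangement. Correspondingly, the main (minor) obstacle is simply stating the induction hypothesis strongly enough — namely, that the multiplier may sit on \emph{any} single leg of an \emph{arbitrary}-arity green spider — so that the recursive call always applies.
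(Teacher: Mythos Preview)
Your proposal is correct, but the route differs from the paper's in both halves.

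For part (a), the paper does not induct on arity but instead unfolds the multiplier directly via its explicit form~\eqref{eq:multiplier_explicit} and computes---essentially the ``more primitive route'' you mention parenthetically as an alternative. Your inductive argument from the \(1\to 2\) copy identity of proposition~\ref{prop:multiplier} is perfectly valid and arguably cleaner once that proposition is in hand; the paper's direct unfolding is more self-contained but requires re-doing a small bigebra manipulation.

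For part (b), you invoke the \textsc{(Colour)} meta-rule as in the preceding lemmas, whereas the paper instead derives the red case from the green case (a) together with already-established multiplier identities (inversion and the Hadamard--multiplier commutation). These are really the same argument at different levels of abstraction: applying \textsc{(Colour)} to a diagram containing a multiplier forces you to track what happens to the multiplier under conjugation by Hadamards, and the answer is precisely the multiplier property the paper cites. So your one-line appeal to \textsc{(Colour)} is fine, but be aware that the meta-rule does not leave the multiplier untouched---you implicitly need equation~\eqref{eq:hadamard_multiplier} (or equivalently proposition~\ref{prop:multiplier}) to recognise that what comes out is again a multiplier of the right weight. The paper just makes that step explicit.
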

\begin{lproof}
  (a) follows from direct calculation using the definition of multipliers:
  \begin{equation*}
    \tikzfig{figures/equations/multiplier_copy_proof_green}
  \end{equation*}
  Then (b) is obtained from (a) using already established properties of the
  multipliers:
  \begin{equation*}
    \tikzfig{figures/equations/multiplier_copy_proof_red}
  \end{equation*}
\end{lproof}

\begin{lproof}[of proposition~\ref{prop:local_scaling}]
  This proposition now follows straightforwardly using
  lemma~\ref{lem:multiplier_copy}, the definition of H-boxes (equation
  \eqref{eq:H_box_definition}) and proposition~\ref{prop:multiplier}.
\end{lproof}

\begin{lemma}
  \label{lem:multiplier_spider}
  The action of multipliers on spiders is given by, for any \(x \in \Z_p^*\),
  \begin{equation*}
    \tikzfig{figures/equations/multiplier_spider}
  \end{equation*}
\end{lemma}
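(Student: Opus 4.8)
The plan is to prove the green case by induction on the total arity of the spider, and then to obtain the red case from it via the \textsc{(Colour)} meta-rule, exactly as was done in the proofs of lemmas~\ref{lem:antipode_phase} and~\ref{lem:multiplier_elim}. Since the category is self-dual compact closed, I would first bend all of the spider's input legs around to outputs, so that it suffices to treat a green spider of type $0 \to k$ with phase $(a,b)$ carrying a multiplier $x \in \Z_p^*$ on one of its legs. Expanding this spider through the inductive definition of green spiders, the phase sits on a single pendant $0 \to 1$ ``phase unit'' and the remaining structure is built from (phaseless) comultiplications. The base case $k = 1$ — a multiplier on the leg of a green unit with phase $(a,b)$ — is then precisely the content of the axiom \textsc{(M-Elim)}, up to the antipode bookkeeping handled by lemma~\ref{lem:antipode_unit}, and this is the step that produces the prescribed relabelling of the phase.

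For the inductive step I would peel off one green comultiplication, writing the $0 \to k$ spider as a $0 \to k-1$ spider with the same phase, post-composed by a comultiplication on one leg. If the multiplier $x$ sits on one of the two legs produced by that comultiplication, lemma~\ref{lem:multiplier_copy} slides it across, replacing it by a multiplier on the sibling leg together with a multiplier feeding the smaller spider; the induction hypothesis applied to the latter then distributes this as multipliers on all of its remaining legs and carries out the rescaling of the phase. Collecting the multipliers appearing on every leg, re-bending the legs that originated as inputs back into inputs (each cap crossed turns a multiplier into its inverse, matching the inverted-multiplier convention), and using proposition~\ref{prop:multiplier} together with lemma~\ref{lem:multiplier_product} to normalise any composed multiplier labels, yields the claimed identity. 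The red case then follows by applying \textsc{(Colour)} — which recolours the spider and transforms the multiplier edge accordingly — and cleaning up the resulting antipodes with lemmas~\ref{lem:hadamard_antipode} and~\ref{lem:antipode_phase}.

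The main obstacle is bookkeeping rather than anything conceptually deep: one must verify that sliding $x$ through the spider rescales the quadratic-phase label $(a,b)$ by exactly the powers of $x$ dictated by the substitution $k \mapsto xk$ in the standard interpretation, and not by some exponent off by the ubiquitous factor $2^{-1}$ — this is the same subtlety that forced the unconventional $2^{-1}$-normalisation of spider phases and the rule \textsc{(Gauss)}. One must likewise track the direction-sensitivity of multipliers (they are not flexsymmetric), so the induction must distinguish whether the multiplier enters or leaves the spider, with each cap crossed flipping a multiplier to its inverse; keeping these orientations consistent through the recursion, and matching the antipodes generated by \textsc{(M-Elim)} and \textsc{(Colour)} against those in the target figure, is where the bulk of the work lies. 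Cross-checking each intermediate equation against the standard interpretation via theorem~\ref{thm:soundness} pins down the correct exponents and signs at every step.
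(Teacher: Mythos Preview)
Your approach is essentially the same as the paper's: the paper's proof is a one-line ``This follows straightforwardly using lemma~\ref{lem:multiplier_copy} and \textsc{(M-Elim)}'', and your inductive argument is precisely a detailed unpacking of that sentence --- multiplier\_copy handles propagation through the (co)multiplication skeleton, and \textsc{(M-Elim)} handles the rescaling of the phase label.

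One small point of bookkeeping: the axiom \textsc{(M-Elim)} as stated in figure~\ref{fig:axioms} is phrased for a \emph{red} labelled spider, not a green one (see its soundness proof). So your base case for the green spider is not literally \textsc{(M-Elim)} but its colour-swapped analogue, obtained via the \textsc{(Colour)} meta-rule; alternatively, you could run the induction for the red spider first and then colour-swap to green. Either way this is a trivial fix and does not affect the structure of your argument.
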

\begin{lproof}
  This follows straightforwardly using lemma~\ref{lem:multiplier_copy} and
  \textsc{(M-Elim)}.
\end{lproof}

\begin{lemma}
  \label{lem:clifford_states}
  Any pure-Clifford states can be represented in both the red and green
  fragment: for any \(x \in \Z_p^*\),
  \begin{equation*}
    \tikzfig{figures/equations/clifford_states}
  \end{equation*}
\end{lemma}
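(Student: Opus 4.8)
The plan is to reduce the statement to the behaviour of a single Hadamard pushed through a green quadratic-phase unit. First I would observe that, since the red spider is defined from the green one post-composed with a Hadamard, and since the Hadamard is invertible in $\ZXeq$ (Lemma~\ref{lem:hadamard_inverse}), it is enough to prove one ``Gaussian Fourier transform'' identity: for $x \in \Z_p^*$, composing a Hadamard with a green unit of quadratic phase $x$ equals, up to an invertible multiplier and an elementary scalar, a green unit of quadratic phase $-x^{-1}$ (with its linear part rescaled accordingly). Both directions of the lemma then follow by composing this identity with a Hadamard or with its inverse, and using Lemma~\ref{lem:multiplier_spider} together with \textsc{(M-Elim)} to absorb the resulting multiplier into the phase label of the unit; the passage between the purely-green and purely-red presentations is then just an application of \textsc{(Colour)}.

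To prove the Fourier-transform identity, I would expand the Hadamard via its Euler decomposition (Lemma~\ref{lem:hadamard_euler}) as a green phase, then a red phase, then a green phase, up to an elementary scalar. The outer green phase fuses into the green unit by \textsc{(Fusion)}. The middle red phase is then slid past the green unit using \textsc{(Shear)}: this is precisely the diagrammatic form of completing the square $2^{-1}xk^2 + nk = 2^{-1}x(k + x^{-1}n)^2 - 2^{-1}x^{-1}n^2$, and it is exactly here that the hypothesis $x \in \Z_p^*$ is needed, since both the shift $x^{-1}n$ and the residual quadratic term $-2^{-1}x^{-1}n^2$ invert $x$. What remains after the slide is the innermost green phase acting on a green quadratic-phase unit with coefficient $-x^{-1}$, multiplied by the scalar contributed by \textsc{(Shear)} and by the Euler decomposition.

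The final step is to verify that this residual scalar is elementary, hence invertible: it is a product of a $\sqrt{p}\,\omega^{(\cdots)}$-type factor coming from \textsc{(Shear)}, a Gauss-sum factor which \textsc{(Gauss)} rewrites to a power of $i$ (or of $-1$), and the Euler-decomposition scalar, all of which are brought to normal form using Lemmas~\ref{lem:scalar_elementary} and~\ref{lem:scalar_phase} and the axioms \textsc{(Gauss)} and \textsc{(M-One)}. The main obstacle is entirely bookkeeping: tracking how the linear (Pauli) part of the phase label transforms --- it is multiplied by $-x^{-1}$ under the completion of the square --- and pinning down the residual scalar precisely enough to recognise it as the elementary scalar appearing on the right-hand side of the figure. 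Once the invertible-coefficient case is settled, a general green pure-Clifford state (with arbitrary Pauli part and a leading multiplier) reduces to it by first absorbing the multiplier via Lemma~\ref{lem:multiplier_spider} and then applying the identity together with \textsc{(Colour)}.
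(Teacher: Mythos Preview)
Your overall plan --- prove a single ``Gaussian Fourier'' identity by expanding the Hadamard via its Euler decomposition, then extend to arbitrary invertible $x$ with Lemma~\ref{lem:multiplier_spider} and pass to the other colour by \textsc{(Colour)} --- shares its last two steps with the paper. The central step, however, has a genuine gap.

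The axiom \textsc{(Shear)} does not perform the manipulation you describe. In $\ZXeq$, \textsc{(Shear)} slides a green \emph{Pauli} phase $(a,0)$ through a red phase $(c,d)$, shifting only the linear part of the latter; its colour-swapped form slides a red Pauli phase through a green one. It does \emph{not} allow you to push the red \emph{quadratic} phase $(0,\pm1)$ coming from the Euler decomposition of $H$ past a green unit carrying a quadratic label. Your completing-the-square identity $2^{-1}xk^{2}+nk = 2^{-1}x(k+x^{-1}n)^{2}-2^{-1}x^{-1}n^{2}$ is of course correct in $\Z_p$, but it models a red \emph{Pauli} shift acting on a green Gaussian; the middle Euler factor is a red $(0,\pm1)$ phase, which in the $Z$-basis acts as a Gaussian convolution, not as a linear phase $\omega^{nk}$. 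After the fusion step you are left with a composite of the form $\text{green}(0,\pm1)\circ\text{red}(0,\pm1)\circ\bigl(\text{green unit}(0,x\pm1)\bigr)$, and no single axiom reduces the red-quadratic-on-green-quadratic-unit part --- that reduction is essentially the content of the lemma itself, not an instance of \textsc{(Shear)}.

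The paper proceeds differently: it first establishes the special case $x=1$ of part~(a) by a direct diagrammatic manipulation, then uses Lemma~\ref{lem:multiplier_spider} to transport this to general $x\in\Z_p^{*}$, and finally obtains part~(b) from part~(a) via \textsc{(Colour)}, exactly as you suggest for the last step.
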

\begin{lproof}
  Firstly, we prove the subcase \(x=1\) of (a):
  \begin{equation*}
    \tikzfig{figures/equations/clifford_states_proof_green}
  \end{equation*}
  Then the general case for any invertible \(x\) follows using
  lemma~\ref{lem:multiplier_spider}.

  (b) follows once again using \textsc{(Colour)}.
\end{lproof}

\begin{lemma}
  \label{lem:scalar_i_elim}
  \begin{equation*}
    \tikzfig{figures/equations/scalar_i_elim}
  \end{equation*}
\end{lemma}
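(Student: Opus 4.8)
The plan is to read off the stated identities from the behaviour of the \textsc{(Gauss)} axiom on a single quadratic ``bubble''. Recall that $\tikzfig{figures/equations/imaginary_scalar_normal_form}$, and similarly $\tikzfig{figures/equations/imaginary_scalar_minus_normal_form}$, is (up to the rewrites already established in this subsection) a green $0 \to 0$ spider carrying a purely quadratic phase together with a copy of $\tikzfig{figures/universality/sqrt_d_inv}$; its standard interpretation is a normalised quadratic Gauss sum, which for $p \equiv 3 \bmod 4$ has modulus $1$ and equals $\pm i$ — so the equalities in the statement are trivially true under $\interp{-}$, and the content of the lemma is to turn them into derivations in $\ZXeq$.

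First I would prepare the quadratic coefficient of the bubble so that the sign produced by \textsc{(Gauss)} is under control. Using lemmas~\ref{lem:multiplier_copy} and~\ref{lem:multiplier_spider}, an invertible multiplier on the bubble's leg can be traded for a rescaling of that coefficient, the surplus multiplier being absorbed elsewhere; this lets me reduce to the case where the coefficient is a perfect square, so that the factor $(-1)^{\chi_p(\cdot)}$ appearing in \textsc{(Gauss)} is trivial. Applying \textsc{(Gauss)} then rewrites the (normalised) bubble to an element of $O_p$: a power of $\scalebox{1.4}{$\star$}$ if $p \equiv 1 \bmod 4$, and a power of $\tikzfig{figures/equations/imaginary_scalar_normal_form}$ if $p \equiv 3 \bmod 4$, the exponent being fixed by $p \bmod 8$. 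What remains is pure scalar arithmetic in the fragment already understood: cancelling $\tikzfig{figures/universality/sqrt_d}$ against $\tikzfig{figures/universality/sqrt_d_inv}$ by \textsc{(One)}, reducing powers of $\scalebox{1.4}{$\star$}$ modulo $2$ by \textsc{(M-One)}, and reducing $\omega$-powers by lemma~\ref{lem:scalar_phase}. Chaining these rewrites yields each equality of the lemma; in the variants where a non-square coefficient is unavoidable, the same computation goes through carrying the extra $\chi_p$-sign, which is then identified using proposition~\ref{prop:first_supplement} and the other number-theoretic lemmas of appendix~\ref{app:number_theory}.

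The main obstacle is exactly this residue bookkeeping: \textsc{(Gauss)} outputs a $\chi_p$-dependent sign, $\chi_p$ is multiplicative rather than additive, so one must stay disciplined about which coefficients are squares at every step — in particular about whether $2$ is a square mod $p$, i.e. about $p \bmod 8$, which is the arithmetical reason the definition of $\mathbb{P}_p$ splits between equations~\eqref{eq:Pp_1} and~\eqref{eq:Pp_3} in the first place. Once the signs are pinned down, the remaining diagram manipulations mirror the qubit case, only with the extra $\sqrt{p}$-normalisation carried along.
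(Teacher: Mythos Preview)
Your instinct to reach for \textsc{(Gauss)} is right, but you are solving a much harder problem than the one in front of you. The paper's proof is a single line: ``Since $1$ is always a square:'' followed by one application of \textsc{(Gauss)} with the quadratic coefficient equal to $1$. Because $\chi_p(1)=0$ unconditionally, the $(-1)^{\chi_p(z)}$ factor in \textsc{(Gauss)} is trivially $1$, and the identity drops out with no residue bookkeeping whatsoever.

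By contrast, you propose to (i) push multipliers through the bubble via lemmas~\ref{lem:multiplier_copy} and~\ref{lem:multiplier_spider} to force the coefficient to be a square, (ii) track signs through a $p \bmod 8$ case split governed by whether $2$ is a square, and (iii) invoke the number-theoretic appendix to reconcile the pieces. None of this is needed here: the coefficient \emph{is already} $1$, which is always a square, so there is nothing to force and no sign to track. The machinery you describe is exactly what the paper deploys for the genuinely harder lemma~\ref{lem:scalar_gauss_elim} (where the coefficient $z$ is arbitrary and one really does case-split on whether $z$, $-z$, or $-1$ is a square), and you also seem to be conflating the present lemma with lemma~\ref{lem:scalar_imaginary_elim}, which is the $p\equiv 3 \bmod 4$ statement about the imaginary-unit normal form and has its own separate proof. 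So your route is not wrong in principle---it would eventually close---but it obscures the fact that this particular lemma is immediate.
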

\begin{lproof}
  Since \(1\) is always a square:
  \begin{equation*}
    \tikzfig{figures/equations/scalar_i_elim_proof}
  \end{equation*}
\end{lproof}

\begin{lemma}
  \label{lem:scalar_gauss_elim}
  For any \(z\in\Z_p^*\),
  \begin{equation*}
    \tikzfig{figures/equations/scalar_gauss_elim}
  \end{equation*}
\end{lemma}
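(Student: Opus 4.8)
The plan is to derive this identity from the axiom \textsc{(Gauss)}, which is the only rule of \(\ZXeq\) that introduces a quadratic Gauss sum as a scalar, and then to reorganise the resulting scalar into the claimed normal form using the elementary-scalar lemmas already in hand. So the first move is to write down the green \(0 \to 0\) spider with quadratic phase \((0,z)\) and apply \textsc{(Gauss)} (for the base value \(z = 1\), say), together with the fact that \(\chi_p(1) = 0\) recorded in the proof of Lemma~\ref{lem:scalar_i_elim}.

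The substantive step is to pass from the base case to an arbitrary \(z \in \Z_p^*\). For this I would use the multiplier calculus: by Lemma~\ref{lem:multiplier_spider} together with \textsc{(M-Elim)} and Proposition~\ref{prop:multiplier}, pre- and post-composing a scalar green spider by a \(z\)-multiplier rescales its quadratic coefficient, so the \((0,z)\)-scalar and the \((0,1)\)-scalar differ only by the action of an invertible multiplier, which is itself an elementary scalar. The sign \((-1)^{\chi_p(z)}\) appears precisely because this rescaling pulls out the quadratic character of \(z\); to resolve the resulting case split — whether \(z\) (respectively \(-1\) or \(-z\)) is a square — I would invoke Proposition~\ref{prop:first_supplement} and Corollary~\ref{cor:at_least_one_square}, which reduce it to a bounded number of configurations, each settled by \textsc{(Gauss)} and \textsc{(M-One)} directly. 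The leftover debris — powers of \(\omega\) produced by \textsc{(M-Elim)}, powers of \(i\) or \(-1\) produced by \textsc{(Gauss)} according to the congruence of \(p\) modulo \(4\), and powers of \(\sqrt p\) from the normalisations — is then collected into the prescribed factors using Lemma~\ref{lem:scalar_elementary} for the interaction of \(\star\) with the \(i\)-type scalars, Lemma~\ref{lem:scalar_phase} for the \(\omega\)-type scalars, Lemma~\ref{lem:scalar_i_elim} to absorb square \(i\)-powers, and \textsc{(M-One)} to cancel \(\star \otimes \star\). Since every scalar appearing here is an invertible elementary scalar, these factors can be moved freely across the equation.

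I expect the main obstacle to be bookkeeping rather than any conceptual difficulty: one has to track the power of \(i\) (equivalently, the value of \(-i^{(p+3)/2}\) from the soundness computation) together with the Legendre sign \((-1)^{\chi_p(z)}\) carefully enough that the residual scalar is \emph{exactly} the claimed element of the normal-form set and not merely equal to it up to a further hidden factor, and to check that the cases \(p \equiv 1 \bmod 4\) and \(p \equiv 3 \bmod 4\) can be run through uniformly while staying inside the group of phases \(\mathbb{P}_p\). Once \textsc{(Gauss)} is available, no individual rewrite is hard; the work is in making the diagrammatic computation reproduce the number-theoretic identity cleanly.
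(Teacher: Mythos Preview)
Your plan is correct and lands on essentially the same argument as the paper, though the paper organises it more directly. Rather than fixing a base case \(z=1\) and then rescaling, the paper simply splits into three cases from the outset: (i) \(z\) is a square, so \(z=\alpha^2\) and a change of variable (i.e.\ exactly your multiplier move) reduces the \((0,z)\)-scalar to the \((0,1)\)-scalar, after which \textsc{(Gauss)} applies; (ii) \(-z\) is a square, handled symmetrically; (iii) neither is a square, whence by Corollary~\ref{cor:at_least_one_square} \(-1\) is a square and one reduces again. So the case split you anticipated \emph{is} the whole proof, not a wrinkle inside a base-case-plus-rescaling scheme; your ``rescaling via multipliers'' is precisely what makes case (i) work, but it only goes through when \(z\) is already a square, which is why the trichotomy is unavoidable.

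Two small points of tidying. First, your phrase ``pre- and post-composing a scalar green spider by a \(z\)-multiplier'' is loose: a \(0\to 0\) diagram cannot be composed with a \(1\to 1\) multiplier, so what you really mean is to unfold the scalar as a unit followed by a counit and slide the multiplier through via Lemma~\ref{lem:multiplier_spider} and Lemma~\ref{lem:multiplier_elim}. Second, the paper does not need to branch on \(p\bmod 4\) or track powers of \(i\) explicitly here; the only auxiliary scalar lemma invoked is Lemma~\ref{lem:scalar_i_elim}, and the dependence on \(p\) is entirely absorbed into the statement of \textsc{(Gauss)} itself. The heavier scalar-bookkeeping lemmas you list are not needed for this particular identity.
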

\begin{lproof}
  If \(z\) is a square, then there is some \(\alpha \in \Z_p\) such that
  \begin{equation*}
    \tikzfig{figures/equations/scalar_gauss_elim_proof}
  \end{equation*}
  If \(-z\) is a square, then there is again some \(\alpha \in \Z_p\) such that
  \begin{equation*}
    \tikzfig{figures/equations/scalar_gauss_elim_proof_2}
  \end{equation*}
  Now, if neither \(z\) nor \(-z\) is a square, then by
  corollary~\ref{cor:at_least_one_square}, \(-1\) must be a square. We then
  have
  \begin{equation*}
    \tikzfig{figures/equations/scalar_gauss_elim_proof_3}
  \end{equation*}
\end{lproof}

\begin{lemma}
  \label{lem:scalar_omega_elim}
  For any \(z\in\Z_p\),
  \begin{equation*}
    \tikzfig{figures/equations/scalar_omega_elim}
  \end{equation*}
\end{lemma}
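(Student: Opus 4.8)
The plan is to establish the identity by direct diagrammatic rewriting in \(\ZXeq\), distinguishing the degenerate case \(z = 0\) from the generic case \(z \in \Z_p^*\). In both cases the left-hand side is a closed diagram whose only non-structural content is a Pauli phase; the whole point is to transport that phase onto a single dangling loop, where it can be read off as a power of \(\omega\), while showing that the auxiliary scalars spun off along the way all cancel or collapse to already-normalised pieces.

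For \(z = 0\) the claimed right-hand side is the empty diagram, so it suffices to eliminate the trivial phase scalar: I would push the phase through with lemma~\ref{lem:pauli_copy_phase}, absorb the leftover units and antipodes via lemmas~\ref{lem:antipode_unit} and~\ref{lem:unit_rotation_elim}, and discard the remaining phaseless scalar with lemma~\ref{lem:scalar_phase}.

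For \(z \neq 0\) I would exploit that \(z\) is a unit of \(\Z_p\): insert a multiplier and its inverse on the relevant wire (proposition~\ref{prop:multiplier} and lemma~\ref{lem:multiplier_copy}), then use \textsc{(Char)} together with the Pauli-move rules and, if needed, a completed-square step in the spirit of \textsc{(Shear)}, to rescale and slide the phase around the loop until the diagram becomes the bare \(\omega\)-scalar \tikzfig{figures/equations/omega_scalar_normal_form} with parameter \(z\). The sign, \(\sqrt p\), and (for \(p \equiv 3 \bmod 4\)) power-of-\(i\) factors emitted by \textsc{(Char)} and the square completion are then closed off using lemmas~\ref{lem:scalar_phase}, \ref{lem:scalar_elementary}, \ref{lem:scalar_i_elim} and~\ref{lem:scalar_gauss_elim}.

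The main obstacle I anticipate is precisely this scalar bookkeeping: unlike the qubit case, the rewrites of \(\ZXeq\) generically emit \(\sqrt p\)'s, signs and imaginary units, so the derivation must be routed so that they recombine into a shape the earlier elimination lemmas can absorb -- in particular one may need the quadratic-residue case split underlying corollary~\ref{cor:at_least_one_square}, exactly as in the proof of lemma~\ref{lem:scalar_gauss_elim}. Everything else is routine flexsymmetric deformation plus \textsc{(Fusion)}.
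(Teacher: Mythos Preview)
Your proposal is substantially more elaborate than the paper's argument, and several of the tools you reach for do not fit the problem.

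The paper's proof is a single chain of diagrammatic rewrites whose only non-structural ingredient is lemma~\ref{lem:antipode_phase}, used repeatedly to push antipodes through phased spiders (flipping the sign of the Pauli label). There is no case split on \(z\), no multipliers, no \textsc{(Char)}, no \textsc{(Shear)}, and no quadratic-residue analysis. Because the phase involved is purely Pauli (first coordinate only, second coordinate zero), the whole computation stays in the ``linear'' part of the calculus: fuse, commute an antipode through, fuse again, and the two halves cancel directly. The case \(z=0\) is not singled out; it falls out of the same rewriting.

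By contrast, the machinery you line up for \(z \neq 0\) is mismatched with a Pauli-only scalar. \textsc{(Char)} concerns \(p\)-fold multi-edges between spiders of opposite colour and has no evident role in a two-node closed diagram. \textsc{(Shear)} governs the interaction of a green Pauli phase with a \emph{quadratic} red phase; here there is no quadratic part, so there is no square to complete. The quadratic-residue case split behind lemma~\ref{lem:scalar_gauss_elim} is likewise about Gauss sums, again a degree-two phenomenon. The scalar bookkeeping you flag as the main obstacle---stray \(\sqrt{p}\)'s, signs, powers of \(i\)---is entirely self-inflicted by those choices: none of these are emitted if you stick to antipode commutation and spider fusion. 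I would drop the multiplier/\textsc{(Char)}/\textsc{(Shear)} route and look for a direct cancellation via lemma~\ref{lem:antipode_phase}.
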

\begin{lproof}
  \begin{equation*}
    \tikzfig{figures/equations/scalar_omega_elim_proof}
  \end{equation*}
  where we have freely used lemma~\ref{lem:antipode_phase} to commute antipodes
  and spiders throughout.
\end{lproof}

\begin{lemma}
  \label{lem:scalar_imaginary_elim}
  If \(p \equiv 3 \mod 4\),
  \begin{equation*}
    \tikzfig{figures/equations/scalar_imaginary_elim}
  \end{equation*}
\end{lemma}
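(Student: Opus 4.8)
The plan is to reduce the stated identity to the arithmetic fact that $\bigl(-i^{(p+3)/2}\bigr)^2 = -1$ whenever $p \equiv 3 \pmod 4$ — which is precisely why the hypothesis is needed, since for $p\equiv 1\pmod 4$ the same square is $+1$ — and then to realise this fact inside $\ZXeq$ using the \textsc{(Gauss)} axiom.

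First I would identify the imaginary scalar with the Gauss scalar whose interpretation is $\sum_{j\in\Z_p}\omega^{2^{-1}j^2}$: applying \textsc{(Gauss)} with $z=1$, and using that $1$ is a square so that $\chi_p(1)=0$, is exactly the statement already established in lemma~\ref{lem:scalar_i_elim}. Dually, applying \textsc{(Gauss)} with $z=-1$ and invoking proposition~\ref{prop:first_supplement} (so $-1$ is a non-square and $\chi_p(-1)=1$) identifies the minus-imaginary scalar with $\sum_{j}\omega^{-2^{-1}j^2}$ up to a $\star$. This already pins down the two $O_p$-generators for $p\equiv 3\pmod 4$ as concrete Gauss-type scalars, and recasts the equation(s) of the lemma as relations of the shape ``(imaginary scalar)$\,\otimes\,$(imaginary scalar) $=\star$'' and ``(imaginary scalar)$\,\otimes\,$(minus-imaginary scalar) $=$ empty''.

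The crux is deriving such a relation diagrammatically. The obstacle is that a product of two disjoint $0\to 0$ green spiders cannot be fused — doing so would be unsound, a product of two Gauss sums not being a single Gauss sum — so the relation $i^2=-1$ must be produced indirectly, through a single spider to which \textsc{(Gauss)} can legitimately be applied. The natural device is a multiplier: by lemma~\ref{lem:multiplier_spider} (equivalently \textsc{(M-Elim)}) together with spider fusion, a green $(0,1)$ effect post-composed with a multiplier $t$ and then with a green $(0,1)$ unit collapses to a single green $(0,t^2+1)$ scalar, whose argument can be steered into a non-square by an appropriate choice of $t$; the quadratic-residue accounting here is supplied by corollary~\ref{cor:at_least_one_square}. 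Running \textsc{(Gauss)} on that single spider then extracts a $\star$, and combining this with the $z=1$ case, with lemma~\ref{lem:scalar_gauss_elim} and \textsc{(M-One)} to manage the residual $\star$'s and $\sqrt p$-powers, with lemma~\ref{lem:scalar_omega_elim} to absorb spurious $\omega$-powers, and with lemma~\ref{lem:scalar_elementary} to tidy up, produces the desired equation(s).

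The main obstacle is the final bookkeeping across the two sub-families $p\equiv 3\pmod 8$ and $p\equiv 7\pmod 8$: the rewrite rules are blind to whether a given element of $\Z_p$ is a quadratic residue, so the derivation has to be arranged so that the $\star$'s tracked by $\chi_p$ cancel because the algebra forces it, not by case inspection, and one has to check that the $p\bmod 8$-dependent power of $i$ coming out of \textsc{(Gauss)} squares to $-1$ consistently in both sub-families. Everything else is a short chain of previously-established scalar identities.
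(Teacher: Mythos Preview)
Your opening paragraph already contains the paper's entire argument: apply \textsc{(Gauss)} at $z=-1$ and invoke the first supplement (proposition~\ref{prop:first_supplement}) so that $\chi_p(-1)=1$, which introduces exactly one extra $\star$ relative to the $z=1$ instance handled in lemma~\ref{lem:scalar_i_elim}. That is all the paper does --- its proof is a single diagrammatic chain whose only number-theoretic input is ``$-1$ is not a square''.

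Everything from ``The crux is deriving such a relation diagrammatically'' onward is superfluous. There is no obstacle about fusing two disconnected Gauss scalars, no multiplier trick via lemma~\ref{lem:multiplier_spider}, no appeal to corollary~\ref{cor:at_least_one_square}, and --- importantly --- no $p\bmod 8$ case split. You have talked yourself into believing the lemma requires synthesising a product identity like $i\cdot i=-1$ from two separate $0\to 0$ spiders; in fact the equation follows from a \emph{single} instance of \textsc{(Gauss)} at $z=-1$, and the worry that ``the rewrite rules are blind to whether a given element of $\Z_p$ is a quadratic residue'' is exactly why the hypothesis $p\equiv 3\pmod 4$ is there: it fixes $\chi_p(-1)$ once and for all, so the axiom applies uniformly with no residual bookkeeping. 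Your ``Dually, applying \textsc{(Gauss)} with $z=-1$\ldots'' sentence is the whole proof; you should have stopped there.
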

\begin{proof}
  If \(p \equiv 3 \mod 4\), \(-1\) is \emph{not} a square, so that
  \begin{equation*}
    \tikzfig{figures/equations/scalar_imaginary_elim_proof}
  \end{equation*}
\end{proof}

\begin{lemma}
  \label{lem:zero_elementary}
  All the elementary ``zero'' diagrams are equal:
  \begin{equation*}
    \tikzfig{figures/equations/zero_elementary}
  \end{equation*}
\end{lemma}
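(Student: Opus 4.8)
The plan is to pick one canonical representative of the zero scalar --- say, the green \(0\to0\) spider with Pauli label \((1,0)\), whose interpretation \(\sum_{k\in\Z_p}\omega^{2^{-1}k}\) vanishes since it is a nontrivial additive character sum --- and to rewrite every diagram in the list to it under \(\ZXeq\). I would proceed in three stages: first reduce each listed diagram to a single green \(0\to0\) spider carrying a nonzero Pauli label and no quadratic part, possibly tensored with an invertible elementary scalar; then show that all such spiders are provably equal irrespective of the label; and finally absorb the leftover invertible scalar into the canonical one.

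For the first stage the route depends on the shape of the diagram. A red zero diagram is turned green by the \textsc{(Colour)} meta-rule, which acts trivially on the underlying \(0\to0\) picture and so merely recolours the spiders while preserving nontriviality of the Pauli part, up to a sign handled by lemma~\ref{lem:antipode_phase}; a zero arising from a cancellation between opposite-coloured spiders is first collapsed using the Hopf identity (lemma~\ref{lem:hopf}) or \textsc{(Bigebra)}; and within the green fragment \textsc{(Fusion)} amalgamates units, counits and (co)multiplications into a single green \(0\to0\) spider with an aggregate label \((a,b)\). Here I expect \(b=0\) to hold automatically for each listed diagram, since a green \(0\to0\) spider with \(b\neq0\) has a Gauss-sum interpretation and is not zero; if a genuinely quadratic contribution is produced en route it must be peeled off first as an invertible Gauss-sum scalar via \textsc{(Gauss)}, \textsc{(Char)} and lemma~\ref{lem:scalar_gauss_elim}.

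The second stage is the clean core of the argument: writing the green \(0\to0\) spider labelled \((a,0)\) as a green counit labelled \((a,0)\) post-composed with a green unit labelled \((0,0)\) via \textsc{(Fusion)}, inserting a multiplier and its inverse on the connecting wire (lemma~\ref{lem:multiplier_inverse}), and then pushing the multiplier through using lemma~\ref{lem:multiplier_spider} and \textsc{(M-Elim)}, one sees that the label \((a,0)\) may be replaced by \((za,0)\) for any \(z\in\Z_p^*\); taking \(z=a^{-1}b\) shows all nonzero Pauli labels give the same diagram, and no scalar is introduced because the two multipliers cancel. Finally, the residual invertible scalars --- powers of \(\omega\), powers of \(\sqrt p\), \(\star\), and (when \(p\equiv3\bmod4\)) the imaginary-unit scalar --- are absorbed into the canonical zero spider using lemmas~\ref{lem:scalar_elementary}, \ref{lem:scalar_phase}, \ref{lem:scalar_i_elim}, \ref{lem:scalar_omega_elim} and \ref{lem:scalar_imaginary_elim} together with the rescaling of the previous step, which in particular lets the spider swallow an arbitrary power of \(\omega\). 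The main obstacle is bookkeeping rather than any single hard step: the listed diagrams are ``zero for structurally different reasons'', so each needs its own short reduction to the green \((1,0)\) spider, and one must check in every case --- especially in the quadratic-residue case split inherited from lemma~\ref{lem:scalar_gauss_elim} --- that the scalar shed along the way is invertible and hence reabsorbable.
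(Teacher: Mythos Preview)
Your approach is correct but considerably more elaborate than what the paper does. The paper's proof is a single short diagrammatic chain: it simply rewrites each listed ``elementary zero'' into the next by a handful of moves (essentially \textsc{(Colour)}, \textsc{(Fusion)}, and the antipode lemmas), without ever invoking multipliers or the Gauss machinery. In particular, the ``elementary zero diagrams'' in the statement are just the monochromatic \(0\to0\) spiders carrying a nonzero Pauli label, in green and in red; there are no bicoloured compositions in the list, so your anticipated Hopf/\textsc{(Bigebra)} step never arises, and no quadratic part ever appears, so your Gauss-sum contingency in Stage~1 is vacuous.

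That said, your Stage~2 --- the multiplier rescaling argument showing that the green \(0\to0\) spider labelled \((a,0)\) is provably equal to the one labelled \((za,0)\) for any \(z\in\Z_p^*\) --- is a genuinely cleaner and more uniform way to identify all the nonzero-Pauli zeros than the paper's ad hoc chain, and it costs nothing extra since, as you note, the inserted multiplier pair cancels exactly with no residual scalar. So your proof trades brevity for generality: the paper connects the specific diagrams listed; you prove once and for all that \emph{every} green \(0\to0\) spider with nonzero Pauli label is the same morphism, then reduce the red ones to green by \textsc{(Colour)}. Either is fine here; your version would actually be preferable if the lemma were stated for arbitrary \(a\in\Z_p^*\) rather than a fixed list. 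The only cosmetic slip is the ``\(z=a^{-1}b\)'' at the end of Stage~2, where \(b\) has already been used for the quadratic part of the label; you mean the target Pauli label, not the quadratic one.
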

\begin{lproof}
  \begin{equation*}
    \tikzfig{figures/equations/zero_elementary_proof}
  \end{equation*}
\end{lproof}

\begin{lemma}
  \label{lem:zero_amplitudes}
  The ``zero'' diagram absorbs unlabelled elementary scalars:
  \begin{equation*}
    \tikzfig{figures/equations/zero_amplitudes}
  \end{equation*}
\end{lemma}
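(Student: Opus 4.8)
The plan is to lean twice on lemma~\ref{lem:zero_elementary}. First, that lemma already shows that the phase-type unlabelled elementary scalars --- \(\scalebox{1.4}{$\star$}\), and the two \(i\)-scalars when \(p\equiv 3\bmod 4\) --- are absorbed by the zero diagram, so it remains to prove \(\tikzfig{figures/equations/zero_diagram}\otimes\tikzfig{figures/universality/sqrt_d} = \tikzfig{figures/equations/zero_diagram}\) and the same equation with \(\tikzfig{figures/universality/sqrt_d_inv}\). Second, it suffices to rewrite each side to \emph{some} diagram that is recognisably a zero diagram: by \textsc{(Zero)} this is a green spider carrying a non-trivial linear Pauli phase and feeding an \(X\)-basis effect, and ``zeroness'' is exactly the presence of that non-trivial linear phase. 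Every rewrite below will preserve this feature, pushing the \(\sqrt p^{\pm 1}\) into the scalar bookkeeping and then recognising the result as a zero diagram via lemma~\ref{lem:zero_elementary}.

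Concretely, for \(\tikzfig{figures/universality/sqrt_d_inv}\) (the \(\tfrac1{\sqrt p}\)-scalar) I would re-present the green part of the zero diagram in the red fragment using lemma~\ref{lem:clifford_states} and \textsc{(Colour)}: passing between the green and red presentations of a pure-Clifford state is precisely where a \(\sqrt p\)-worth of scalar is produced, and this cancels the \(\tfrac1{\sqrt p}\) --- using \textsc{(One)} and the invertibility of elementary scalars --- leaving a diagram that still carries the non-trivial linear Pauli phase, hence a zero diagram by lemma~\ref{lem:zero_elementary}. For \(\tikzfig{figures/universality/sqrt_d}\) itself the same idea runs in the other direction: attach \(\sqrt p\) to the green spider, collapse the resulting pile of \(X\)-effects with \textsc{(Char)} (the rule carrying the \(\sqrt p\)-amplitudes), eliminate the self-loop this produces with lemma~\ref{lem:loop}, and absorb the residual quadratic phase with \textsc{(Gauss)} and lemma~\ref{lem:scalar_gauss_elim}; the linear Pauli phase is untouched throughout, so the net diagram is once more a zero diagram.

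The main obstacle is making the \(\sqrt p\) (resp.\ \(\tfrac1{\sqrt p}\)) \emph{materialise} inside the zero gadget in the first place: unlike a disconnected phase scalar it is an invertible, genuinely \(p\)-dependent diagram and cannot simply be erased, so the argument has to route it through the \(X\)-effect of the zero diagram using \textsc{(Char)}, \textsc{(Gauss)} and loop elimination, tracking the Gauss-sum amplitude and the \(i^{(p+3)/2}(-1)^{\chi_p(z)}\) factors --- the same congruence-of-\(p\)-modulo-\(4\) case split that pervades the scalar fragment of the proof. Once \(\sqrt p^{\pm 1}\) has been disposed of in this way, the lemma follows, and combined with lemma~\ref{lem:zero_elementary} it gives the full claim.
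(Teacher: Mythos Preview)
Your proposal has genuine gaps at each of its three stages.

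First, you misread lemma~\ref{lem:zero_elementary}: that lemma shows that the various \emph{representations} of the zero scalar coincide; it says nothing about the zero diagram absorbing \(\star\) or the \(i\)-scalars. (In any case, those are handled by lemma~\ref{lem:zero_phases}, the companion lemma for the \emph{phase} scalars; the present lemma is about the amplitude scalars \(\sqrt p\) and \(1/\sqrt p\).)

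Second, your route for \(1/\sqrt p\) via lemma~\ref{lem:clifford_states} does not apply. That lemma converts between green and red presentations of states carrying a purely \emph{quadratic} phase \((0,z)\) with \(z\in\Z_p^*\); the zero diagram is built from a spider with a purely \emph{linear} Pauli phase \((a,0)\), so the hypothesis fails. More to the point, colour-changing a \(0\)-ary spider via \textsc{(Colour)} introduces no Hadamards (there are no legs), hence produces no \(\sqrt p\) factor at all --- the mechanism you describe for ``producing a \(\sqrt p\) to cancel'' simply does not fire.

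Third, your route for \(\sqrt p\) via \textsc{(Char)}, lemma~\ref{lem:loop}, and \textsc{(Gauss)} is not a coherent derivation. \textsc{(Char)} eliminates \(p\) parallel edges between opposite-colour spiders; there are none in \(\text{zero}\otimes\sqrt p\). No self-loop is created anywhere in the picture, so lemma~\ref{lem:loop} is moot; and no quadratic phase appears, so \textsc{(Gauss)} and lemma~\ref{lem:scalar_gauss_elim} are irrelevant. You seem to be listing every rule whose statement mentions a \(\sqrt p\) somewhere, without checking that their hypotheses are met.

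The paper's argument is much shorter and more direct: it first records a single small identity (essentially a rewriting of the zero scalar that exposes an explicit \(\sqrt p^{\pm 1}\) factor --- this is where \textsc{(Copy)} / \textsc{(Zero)} do the work, since those are the axioms that actually carry a bare \(\sqrt p\)), and then both absorption statements follow immediately by cancelling against the inverse amplitude scalar using lemma~\ref{lem:scalar_elementary} and lemma~\ref{lem:zero_elementary}. The moral is that you should look for the one place in the axiom set where \(\sqrt p\) already appears attached to a linear-phase state, rather than trying to manufacture it from quadratic-phase machinery.
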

\begin{lproof}
  First note that
  \begin{equation*}
    \tikzfig{figures/equations/zero_amplitudes_proof_1}
  \end{equation*}
  so that
  \begin{equation*}
    \tikzfig{figures/equations/zero_amplitudes_proof_2}
  \end{equation*}
  and
  \begin{equation*}
    \tikzfig{figures/equations/zero_amplitudes_proof_3}
  \end{equation*}
\end{lproof}

\begin{lemma}
  \label{lem:zero_phases}
  The zero diagram absorbs the phase scalars: for any \(x \in \Z_p\) and \(z \in
  \Z_p^*\),
  \begin{equation*}
    \tikzfig{figures/equations/zero_phases}
  \end{equation*}
\end{lemma}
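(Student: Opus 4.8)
The plan is to unfold the canonical zero diagram into a shape with a spare wire, slide the phase scalar onto that wire, absorb it by spider fusion, and then renormalise back to the zero diagram using the scalar lemmas already proved. By lemma~\ref{lem:zero_elementary} I am free to choose whichever representative of the zero diagram is most convenient; I take it to be a closed green spider carrying a non-trivial linear Pauli phase, whose interpretation is the vanishing geometric sum $\sum_{k\in\Z_p}\omega^{k}$. Expanding this $0\to 0$ spider through the green spider definition, I may present it as a green $1\to 1$ spider with the same Pauli label whose two legs are joined by a cap--cup pair, which supplies a wire on which to act.

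For the $x$-parametrised generator (a power of $\omega$), I would first replace it, using lemma~\ref{lem:scalar_omega_elim} together with lemma~\ref{lem:zero_amplitudes}, by a $\sqrt p$-weighted gadget of green units and counits times a compensating amplitude factor; the amplitude factor is immediately reabsorbed into the zero spider by lemma~\ref{lem:zero_amplitudes}. The remaining unit/counit gadget slides onto the spare wire and is merged into the closed green spider by \textsc{(Fusion)}; since fusion only adds to the linear and quadratic labels, the result is again a closed green spider, now carrying the original label shifted by the gadget's contribution. I then use lemmas~\ref{lem:antipode_phase} and~\ref{lem:pauli_copy_phase} to push the added Pauli part off the closed spider, where it evaluates trivially, and clear any residual $\sqrt p$ with lemma~\ref{lem:zero_amplitudes} once more, recovering the canonical zero diagram. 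For the $z$-parametrised generator, built from a multiplier, the same outline applies, but the absorption step instead uses lemmas~\ref{lem:multiplier_copy} and~\ref{lem:multiplier_spider} to pass the multiplier into the closed spider, after which it merely rescales the already-vanishing phase and is eliminated.

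The main obstacle I expect is the scalar bookkeeping rather than any single rewrite: fusing a phase gadget into the zero spider typically generates not only the intended shift of the Pauli label but also stray $\sqrt p$ and $\omega$ factors, and one must check that these can always be funnelled back into the zero diagram through lemmas~\ref{lem:zero_elementary} and~\ref{lem:zero_amplitudes} instead of accumulating. Getting the order right --- clearing amplitude factors immediately before and after the fusion, and only then dealing with the residual phase --- is the delicate part, even though each individual step is routine. A minor additional point is that the $x=0$ and $z=1$ instances are degenerate, the scalar being the empty diagram, and are best dispatched at the outset.
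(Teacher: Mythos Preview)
Your overall strategy—open the zero spider to expose a wire, slide the phase scalar onto it, fuse, then clean up with the already-proved scalar lemmas—is the right shape and close in spirit to the paper's argument, which is a short direct diagrammatic rewrite in each case.

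The $x$-parametrised case is fine: your route through lemma~\ref{lem:scalar_omega_elim} and lemma~\ref{lem:zero_amplitudes} is more circuitous than the paper's one-line chain, but it would go through once the bookkeeping is done.

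The $z$-parametrised case, however, contains a misidentification that would block the argument as written. In this lemma the $z\in\Z_p^*$ scalar is the \emph{quadratic} (Gauss-type) phase: a legless green spider carrying the label $(0,z)$, the object governed by the \textsc{(Gauss)} axiom and by lemma~\ref{lem:scalar_gauss_elim}. It is not ``built from a multiplier''. The lemmas you cite, \ref{lem:multiplier_copy} and \ref{lem:multiplier_spider}, describe how a multiplier slides along a wire and through a spider with legs; they give you no mechanism for merging one closed $0\to 0$ scalar into another, because there is no wire for the multiplier to travel on. Even if you first unfold the Gauss scalar as a unit--counit pair and try to extract a multiplier via \textsc{(M-Elim)}, a multiplier only rescales the quadratic part of a label by a \emph{square}, so an arbitrary $z$ cannot be reduced to a fixed case that way, and you are still left with a residual non-trivial quadratic phase your outline does not dispose of. Your sentence ``it merely rescales the already-vanishing phase'' is the step that fails: after the rescaling you do not have a pure Pauli label any more, and a closed $(a,z)$ spider with $z\neq 0$ is a (non-zero) Gauss sum, not zero.

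A route that does work is to invoke \textsc{(Gauss)} at the outset to trade the quadratic scalar for an amplitude factor and a power of $\star$, absorb the amplitude with lemma~\ref{lem:zero_amplitudes}, and then deal with $\star$ separately; the paper's own derivation is of this direct flavour.
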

\begin{lproof}
  \begin{equation*}
    \tikzfig{figures/equations/zero_phases_proof_1}
  \end{equation*}
  \begin{equation*}
    \tikzfig{figures/equations/zero_phases_proof_2}
  \end{equation*}
\end{lproof}

\begin{lemma}
  \label{lem:scalar_gauss_multiplication}
  Quadratic phases satisfy the following multiplication law: for any \(x,y \in
  \Z_p^*\)
  \begin{equation*}
    \tikzfig{figures/equations/scalar_gauss_multiplication}
  \end{equation*}
\end{lemma}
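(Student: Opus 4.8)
The plan is to reduce the identity to the single instance $x=y=1$ by multiplier reindexing, evaluate that instance with the \textsc{(Gauss)} axiom, and then repackage the square-versus-non-square conditions using the number theory of the appendix. Throughout, write $g_z$ for the (normalised) legless green spider with phase $(0,z)$ — the left-hand side of \textsc{(Gauss)} — which interprets to the quadratic Gauss sum $\sum_k\omega^{2^{-1}zk^2}$ and which \textsc{(Gauss)} equates to a fourth root of unity times $(-1)^{\chi_p(z)}$.

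First I would show that, in $\ZXeq$, the scalar $g_z$ depends only on whether $z$ is a square. If $z=\alpha^2$, lemma~\ref{lem:multiplier_spider} lets an $\alpha$-multiplier rescale the quadratic label of the green spider from $1$ to $z$; being legless, the spider then absorbs the leftover multiplier via the multiplier identities (proposition~\ref{prop:multiplier}, lemma~\ref{lem:multiplier_copy}), so $g_z=g_1$. If $z$ is a non-square, then by proposition~\ref{prop:first_supplement} and the appendix lemma that a product of two non-squares is a square one can similarly reindex to a fixed non-square representative; combined with corollary~\ref{cor:at_least_one_square}, this is exactly the case analysis performed in lemma~\ref{lem:scalar_gauss_elim}, whose upshot is that $g_z$ equals $g_1$ tensored with $\star^{\chi_p(z)}$. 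Hence $g_x\otimes g_y$ equals $g_1\otimes g_1$ tensored with $\star^{\chi_p(x)+\chi_p(y)}$, and it remains only to compute $g_1\otimes g_1$.

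For that base case, apply \textsc{(Gauss)} to each copy of $g_1$ — here $\chi_p(1)=0$, so $g_1$ is rewritten to a $p$-dependent fourth root of unity, namely a power of $\star$ when $p\equiv 1\bmod 4$ and a power of $\star$ times the imaginary-unit scalar when $p\equiv 3\bmod 4$. Multiplying the two copies and simplifying with the scalar-arithmetic lemmas~\ref{lem:scalar_elementary}, \ref{lem:scalar_i_elim}, \ref{lem:scalar_omega_elim} and \ref{lem:scalar_imaginary_elim} — in particular the reductions $\star\otimes\star=$ empty and $i\otimes i=\star$ — squares the fourth root of unity down to $\pm 1$: $g_1\otimes g_1$ becomes the empty diagram when $p\equiv 1\bmod 4$ and $\star$ when $p\equiv 3\bmod 4$, i.e. the elementary scalar $(-1)^{\chi_p(-1)}$ by proposition~\ref{prop:first_supplement}. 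Folding this back, $g_x\otimes g_y$ equals $(-1)^{\chi_p(-1)+\chi_p(x)+\chi_p(y)}=(-1)^{\chi_p(-xy)}$ by multiplicativity of the Legendre symbol ($\chi_p(ab)\equiv\chi_p(a)+\chi_p(b)\bmod 2$, from the appendix), which is the asserted multiplication law: $\star$ when $-xy$ is a non-square, and the empty diagram otherwise.

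The one delicate point is the $i$-bookkeeping in the base case: \textsc{(Gauss)}'s output is a $p$-dependent fourth root of unity whose exact value (through $i^{(p+3)/2}$) varies with $p\bmod 8$, and one must check that the two congruence classes $p\equiv 1$ and $p\equiv 3\bmod 4$ — which carry genuinely different phase groups $\mathbb{P}_p$ and elementary-scalar sets $O_p$ — both collapse to the same uniform right-hand side. What rescues the computation is that squaring erases the $p\bmod 8$ dependence, since $(\pm 1)^2=1$ and $(\pm i)^2=-1$; the genuine content is thus just the clean identity $g_1\otimes g_1=(-1)^{\chi_p(-1)}$, with everything else being routine multiplier manipulation and Legendre-symbol arithmetic.
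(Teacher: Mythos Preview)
Your proof is correct, and its core is the same as the paper's: apply \textsc{(Gauss)} and then simplify the product of the resulting elementary scalars using lemma~\ref{lem:scalar_i_elim}. The paper's proof is literally just that one line.

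Your preliminary multiplier-reindexing step—reducing $g_z$ to $g_1\otimes\star^{\chi_p(z)}$ before invoking \textsc{(Gauss)}—is redundant: \textsc{(Gauss)} is stated for arbitrary $z\in\Z_p^*$ and its right-hand side already \emph{is} this decomposition (a fixed $p$-dependent scalar times $\star^{\chi_p(z)}$). So you can skip directly to applying \textsc{(Gauss)} to $g_x$ and $g_y$ and multiplying. Your explicit treatment of the $i$-bookkeeping, and the observation that squaring the fourth root of unity erases the $p\bmod 8$ dependence and leaves exactly $(-1)^{\chi_p(-1)}$, is a helpful unpacking of what the paper's terse proof leaves to the reader.
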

\begin{lproof}
  This follows from applications of \textsc{(Gauss)} and
  lemma~\ref{lem:scalar_i_elim}.
\end{lproof}

\begin{lemma}
  \label{lem:scalar_omega_multiplication}
  For any \(x,y\in\Z_p\),
  \begin{equation*}
    \tikzfig{figures/equations/scalar_omega_multiplication}
  \end{equation*}
\end{lemma}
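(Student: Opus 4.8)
The plan is to unfold both omega-scalar diagrams to their defining gadgets, merge their Pauli-phase parts by fusion so that the two parameters add, and then reconcile the leftover normalisation scalars. Recall that the omega scalar with parameter \(s\) is a closed diagram built out of a quadratic Gauss-sum gadget together with a copy of the \(\sqrt p^{-1}\) scalar, tuned so that its interpretation is \(\omega^{s}\) (cf. equation~\eqref{eq:scalars} and lemma~\ref{lem:scalar_omega_elim}). Placing the diagrams for parameters \(x\) and \(y\) side by side — which for scalars is the same as composing them — I would first use the structural rules (OCM, flexsymmetry) together with lemma~\ref{lem:scalar_omega_elim} to bring the two Pauli-phase pieces adjacent to one another.

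Second, I would apply \textsc{(Fusion)} (or the \textsc{(Spider)} meta-rule together with lemma~\ref{lem:loop} to absorb any self-loop produced) to merge the two Pauli-phase green spiders into a single spider whose Pauli label is \(x+y\); since Pauli phases add under fusion, this is precisely where the additive law comes from. In the edge case \(x+y \equiv 0 \pmod p\) the fused spider carries the trivial phase and can be removed using \textsc{(G-Elim)} / \textsc{(R-Elim)} (or the Hopf identity, lemma~\ref{lem:hopf}), so that the diagram collapses to the empty diagram, matching the convention that the omega scalar with parameter \(0\) is the empty diagram.

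Third, I would reconcile the bookkeeping of the \(\sqrt p^{\pm 1}\) normalisation factors: the side-by-side composite contains two copies of each normalisation scalar, whereas the target omega-scalar gadget with parameter \(x+y\) contains only one. Here I would use \textsc{(One)} to cancel a \(\sqrt p\) against a \(\sqrt p^{-1}\) and lemma~\ref{lem:scalar_elementary} to rewrite the remaining \(\sqrt p\)-type scalars; should the manipulation throw up a quadratic Gauss sum, I would absorb it using \textsc{(Gauss)} and lemma~\ref{lem:scalar_i_elim}, exactly as in the proof of lemma~\ref{lem:scalar_gauss_multiplication}.

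The main obstacle I anticipate is precisely this scalar bookkeeping: tracking the \(2^{-1}\) and \(2^{-2}\) coefficients appearing in the phases carefully enough that the fused parameter comes out as literally \(x+y\) rather than some rescaling, and making sure the counts of \(\sqrt p\) versus \(\sqrt p^{-1}\) balance to yield exactly the normal-form gadget and not something off by a power of \(\sqrt p\). Everything else is routine spider manipulation already licensed by the earlier lemmas.
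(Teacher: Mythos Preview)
Your high-level strategy --- unfold the two omega-scalar gadgets, fuse the phase-carrying spiders so the Pauli labels add, then tidy up --- is correct and is essentially what the paper does. However, you over-engineer the plan. The paper's proof is a single short diagrammatic computation whose only explicitly noted non-trivial ingredient is lemma~\ref{lem:antipode_phase} (commuting antipodes through phased spiders); there is no appeal to \textsc{(Gauss)}, no quadratic Gauss sum appears, and no delicate $\sqrt p$-versus-$\sqrt p^{-1}$ balancing is required. In particular, your description of the omega scalar as ``a quadratic Gauss-sum gadget'' misidentifies its structure: the omega scalar encodes a \emph{Pauli} (linear) phase, not a Clifford (quadratic) one, so the machinery you propose to import from lemma~\ref{lem:scalar_gauss_multiplication} is irrelevant here. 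The ``main obstacle'' you anticipate --- tracking $2^{-1}$ and $2^{-2}$ coefficients and counting powers of $\sqrt p$ --- simply does not arise once you have the right picture of the gadget; the only bookkeeping is pushing antipodes past spiders so that the two Pauli-labelled pieces sit next to each other and fuse.
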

\begin{lproof}
  \begin{equation*}
    \tikzfig{figures/equations/scalar_omega_multiplication_proof}
  \end{equation*}
  where we have freely used lemma~\ref{lem:antipode_phase} to commute antipodes
  and spiders throughout.
\end{lproof}

\begin{lemma}
  \label{lem:hadamard_euler_better}
  The Euler decomposition of the Hadamards can be ``improved'' to:
  \begin{equation*}
    \tikzfig{figures/equations/hadamard_euler_better}
  \end{equation*}
\end{lemma}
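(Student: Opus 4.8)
The plan is to bootstrap from the Euler decomposition already proved in Lemma~\ref{lem:hadamard_euler}, which expresses $\tikzfig{figures/generators/hadamard}$ as a chain of three alternating-colour phase spiders --- a ``green-shear, red-shear, green-shear'' factorisation mirroring the identity $\left(\begin{smallmatrix} 0 & -1 \\ 1 & 0 \end{smallmatrix}\right) = \left(\begin{smallmatrix} 1 & -1 \\ 0 & 1 \end{smallmatrix}\right)\left(\begin{smallmatrix} 1 & 0 \\ 1 & 1 \end{smallmatrix}\right)\left(\begin{smallmatrix} 1 & -1 \\ 0 & 1 \end{smallmatrix}\right)$ in $SL_2(\Z_p)$ --- together with a disconnected scalar that is the residue of the applications of \textsc{(Shear)} and \textsc{(Mult)} made in that proof. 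The ``improvement'' is to collapse that scalar into normal form and, if needed, to rebalance the phase labels on the three spiders; so everything reduces to scalar arithmetic carried out with the lemmas proved since Lemma~\ref{lem:hadamard_euler}.

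Concretely, I would first rewrite the leftover scalar as a product of the ``atomic'' scalars $\star$, the quadratic Gauss-sum scalars, and the $\omega$-phase scalars, using \textsc{(Char)}, lemma~\ref{lem:scalar_elementary} and lemma~\ref{lem:scalar_phase}. Then \textsc{(Gauss)} turns each quadratic Gauss sum $\sum_k \omega^{2^{-1} z k^2}$ into $-i^{(p+3)/2}(-1)^{\chi_p(z)}$ --- this is exactly where the unconventional $2^{-1}$ in the interpretation of phases is needed --- after which lemmas~\ref{lem:scalar_i_elim}, \ref{lem:scalar_gauss_elim}, \ref{lem:scalar_gauss_multiplication} and \ref{lem:scalar_omega_multiplication} let me fold the resulting powers of $i$, of $-1 = \interp{\star}$, and of $\omega$ together, while lemma~\ref{lem:scalar_omega_elim} normalises the $\omega$-part. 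Throughout I would move antipodes past spiders freely via lemma~\ref{lem:antipode_phase}, and apply \textsc{(Shear)} once more to push the phase labels of the three spiders to their target values.

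The main obstacle is bookkeeping, but it is genuine: the surviving scalar depends on the congruence of $p$ modulo $4$, since \textsc{(Gauss)} contributes a factor of $i$ and the generator $\star$ plays different structural roles in the two cases (compare the two definitions of $O_p$ and of $\mathbb{P}_p$), and because whether a given $z \in \Z_p^*$ or $-z$ is a square enters through the $(-1)^{\chi_p(z)}$ factors. So the argument splits into the cases $p \equiv 1$ and $p \equiv 3 \pmod 4$, and in each one must carefully track Legendre-symbol signs and powers of $\omega^{2^{-1}}$ from start to finish, leaning on Proposition~\ref{prop:first_supplement} and Corollary~\ref{cor:at_least_one_square} to decide the relevant quadratic residues. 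Once the scalar is pinned down in normal form, the diagrammatic content of the statement is immediate from Lemma~\ref{lem:hadamard_euler}.
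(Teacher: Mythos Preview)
Your approach starts in the right place --- from Lemma~\ref{lem:hadamard_euler} --- and the scalar-normalisation machinery you cite would in principle pin down the residual scalar, since those lemmas collectively prove scalar completeness. But the paper's proof is a two-liner: apply Lemma~\ref{lem:clifford_states} to the decomposition of Lemma~\ref{lem:hadamard_euler}, then simplify the resulting scalar with Lemma~\ref{lem:scalar_i_elim}. You never invoke Lemma~\ref{lem:clifford_states}, and that is the crucial shortcut you are missing: it converts a pure-Clifford state between its red and green representations with an explicit and simple scalar factor, which bypasses all the Gauss-sum arithmetic, Legendre-symbol tracking, and the case split on $p \bmod 4$ that you propose. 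In particular, the split into $p \equiv 1$ and $p \equiv 3 \pmod 4$ is unnecessary --- the paper's argument is uniform in $p$, the only scalar simplification being Lemma~\ref{lem:scalar_i_elim}, which holds for all odd primes. So your route is not wrong, but it replaces a single colour-change identity with a full normalisation of a Gauss-sum scalar, and the ``genuine bookkeeping obstacle'' you anticipate is an artefact of having missed the lemma that dissolves it.
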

\begin{lproof}
  This follows from applying lemma~\ref{lem:clifford_states} to the
  decomposition of lemma~\ref{lem:hadamard_euler}, then using
  lemma~\ref{lem:scalar_i_elim} to simplify the scalar.
\end{lproof}

\begin{lemma}
  \label{lem:H_loop}
  Hadamard loops correspond to pure-Clifford operations: for any \(x \in \Z_p\)
  and \(z \in \Z_p^*\),
  \begin{equation*}
    \tikzfig{figures/equations/H_loop}
  \end{equation*}
\end{lemma}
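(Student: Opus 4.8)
The plan is to open up the weighted Hadamard box, fuse away the green spiders that appear at its boundary, and reduce the remaining loop to a single quadratic phase together with an explicit scalar.

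First I would use equation~\eqref{eq:H_box_definition} together with~\eqref{eq:hadamard_multiplier} to write the weight-$z$ $H$-box as a plain Hadamard followed by a $z$-multiplier; since $z \in \Z_p^*$, all of the multiplier rules of Proposition~\ref{prop:multiplier} and Lemmas~\ref{lem:multiplier_copy}--\ref{lem:multiplier_spider} apply, so the multiplier can be pushed into the neighbouring green spider, leaving a plain Hadamard self-loop decorated by a rescaled phase. Then I would replace that plain Hadamard by its Euler decomposition, Lemma~\ref{lem:hadamard_euler} (or the tidier version Lemma~\ref{lem:hadamard_euler_better}), expressing it as a green--red--green chain of quadratic phases. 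Both green endpoints of this chain are attached to the spider carrying the loop, so \textsc{(Fusion)} merges them into it; the self-loop on a green spider that this creates is eliminated by Lemma~\ref{lem:loop}.

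What remains is a green spider with a red $1\to1$ quadratic-phase spider forming a loop across two of its legs. I would evaluate this using the bigebra law, Lemma~\ref{lem:bigebra_mn}, together with the antipode and Pauli-commutation identities of Lemmas~\ref{lem:antipode_phase}, \ref{lem:antipode_spider} and~\ref{lem:pauli_copy_phase}: the red quadratic phase, contracted against the $\ket{k}\ket{k}$ legs of the green spider, collapses to a green quadratic phase whose coefficient is dictated by the field arithmetic of the multipliers, at the cost of a Gauss-sum scalar. That scalar is exactly the one controlled by \textsc{(Gauss)}, so it can be normalised using Lemmas~\ref{lem:scalar_gauss_elim}, \ref{lem:scalar_gauss_multiplication}, \ref{lem:scalar_i_elim}, \ref{lem:scalar_omega_elim} and~\ref{lem:scalar_elementary}; reading off the resulting diagram then gives the claimed pure-Clifford form with its $(x,z)$-dependence.

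The step I expect to be the real obstacle is the contraction of the red quadratic-phase loop in the third paragraph: one must track, simultaneously, the \emph{non-degeneracy} of the quadratic form (which is why $z$ must be a unit---otherwise the loop degenerates to a Kronecker $\delta$ rather than a phase), the Gauss-sum scalar it produces, and the Pauli correction induced by completing the square, which is where the unconventional $2^{-1}$ normalisation of the spider phases and \textsc{(Shear)} come into play. Once that single computation is done, the remaining rewrites are all routine applications of rules already established above.
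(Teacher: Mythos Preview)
Your core plan—Euler-decompose the Hadamard, fuse the green endpoints into the host spider, and then collapse the remaining red quadratic loop—matches what the paper does for the base case. Two points of divergence are worth flagging.

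\textbf{Generalisation.} The paper proves only the weight-$1$ green case directly, and then obtains arbitrary weight $x\in\Z_p$ by decomposing the weight-$x$ $H$-box into $x$ parallel weight-$1$ $H$-edges via the sum rule of Proposition~\ref{prop:weighted_hadamard}; each loop contributes the same quadratic phase and these simply add. Your route—peel off a $z$-multiplier and absorb it into the spider via Lemma~\ref{lem:multiplier_spider}—is a legitimate alternative, but it only fires for invertible weight, so for the green equation (where $x$ ranges over all of $\Z_p$) you still need to dispose of $x=0$ separately (trivial: a weight-$0$ $H$-box disconnects). The paper then derives the red equation from the green one in one shot using the \textsc{(Colour)} meta-rule together with Proposition~\ref{prop:weighted_hadamard}; this is exactly where the hypothesis $z\in\Z_p^*$ is used, and it saves redoing the whole computation. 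Your write-up conflates the two equations into a single ``$(x,z)$-dependence'' and does not isolate this step.

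\textbf{A small topological slip.} After fusing both green endpoints of the Euler chain into the host spider, no \emph{plain} self-loop appears—the two freed legs are precisely the ones attached to the middle red $(0,1)$ node—so there is nothing for Lemma~\ref{lem:loop} to act on at that point. What remains is exactly the red-through-green loop you describe next, and your plan for discharging it (bigebra/\textsc{(Shear)}/completing the square, with the scalar controlled by \textsc{(Gauss)}) is on target; this is indeed where the work is.
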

\begin{lproof}
  The case \(x=0\) is clear by decomposing the H-box. We begin by proving the
  case \(x=1\):
  \begin{equation*}
    \tikzfig{figures/equations/H_loop_proof_green}
  \end{equation*}
  The general case can be obtained by decomposing the weighted H-box into \(x\)
  H-loops using the sum rule from proposition~\ref{prop:weighted_hadamard}.
  Then, under the assumption that the weight is invertible, the red version once
  again follows using \textsc{(Colour)} and the equations of
  proposition~\ref{prop:weighted_hadamard}:
  \begin{equation*}
    \tikzfig{figures/equations/H_loop_proof_red}
  \end{equation*}
\end{lproof}

\begin{lproof}[of proposition~\ref{prop:pauli_stabiliser}]
  \begin{equation*}
    \tikzfig{figures/equations/pauli_stabiliser_proof}
  \end{equation*}
\end{lproof}

\begin{lproof}[of proposition~\ref{prop:local_scaling}]
  This follows straightforwardly using proposition~\ref{prop:multiplier}.
\end{lproof}

\begin{lemma}
  \label{lem:local_complementation_triangle}
  For any \(x,y \in \Z_p^*\),
  \begin{equation*}
    \tikzfig{figures/equations/local_complementation_triangle}
  \end{equation*}
\end{lemma}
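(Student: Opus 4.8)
The plan is to treat this lemma as the elementary (three-vertex) case of local complementation, Proposition~\ref{prop:local_complementation}: a central vertex \(w\) carrying a red \((0,\gamma)\)-phase vertex operator, joined by H-edges of weights \(x\) and \(y\) to two further green vertices \(u\) and \(v\) which carry pre-placed green \((0,-\gamma x^2)\)- and \((0,-\gamma y^2)\)-phases, the claim being that this equals the completed triangle with a new \(u\)--\(v\) H-edge of weight \(\gamma x y\). I would derive it directly from the axioms of Figure~\ref{fig:axioms} and the already-established elementary derivations, the crucial ingredients being the Euler decomposition of the Hadamard (Lemmas~\ref{lem:hadamard_euler} and~\ref{lem:hadamard_euler_better}) and the generalised bigebra law (Lemma~\ref{lem:bigebra_mn}).

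Concretely: (1) replace each of the two H-edges incident to \(w\) by its Euler form, a green--red--green sandwich, and use \textsc{(Fusion)} to absorb the two outer green spiders into \(u\), \(v\) and the two inner ones into \(w\); this leaves \(w\) as a green spider carrying a quadratic self-phase, joined by plain wires to two red spiders, with the red \((0,\gamma)\)-operator still on its free output. (2) Use \textsc{(Shear)} to slide the linear parts of the phases past the red spiders, and \textsc{(Colour)}/\textsc{(Mult)} (via Lemma~\ref{lem:multiplier_spider} and Proposition~\ref{prop:multiplier}) to bring the configuration into the shape needed to apply the bigebra. (3) Apply Lemma~\ref{lem:bigebra_mn} to the green spider \(w\) together with its two adjacent red spiders: this ``crosses'' the connections and, after re-fusing and running the Euler decompositions backwards, produces exactly the new \(u\)--\(v\) H-edge of weight \(\gamma x y\) demanded by the definition of \(G \overset{\gamma}{\star} w\). (4) The quadratic ``square'' terms produced in this step land on \(u\) and \(v\) as green \((0,+\gamma x^2)\)- and \((0,+\gamma y^2)\)-phases, which cancel the pre-placed \((0,-\gamma x^2)\), \((0,-\gamma y^2)\) green phases. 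An equivalent, possibly shorter, route is to use Lemma~\ref{lem:H_loop} to rewrite the red \((0,\gamma)\)-operator as an H-loop on \(w\), and then ``push the loop through'' the two incident H-edges, which distributes it as the new \(u\)--\(v\) edge plus self-loops on \(u\) and \(v\).

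Throughout, every rewrite step produces Gauss-sum scalars; I would collect these and discharge them at the end using \textsc{(Gauss)} together with Lemmas~\ref{lem:scalar_gauss_elim}, \ref{lem:scalar_omega_elim} and \textsc{(M-One)} — or, if the lemma is only needed up to elementary scalars, as is typical for the intermediate results feeding the completeness proof, simply drop the scalar part of each axiom used.

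The main obstacle I expect is not the topology but the arithmetic: one must verify that the cascade of Gauss sums and \(\omega\)-powers generated by steps (1)--(3) cancels exactly, and that the residual quadratic phases on \(u\) and \(v\) come out to be precisely \(\gamma x^2\) and \(\gamma y^2\), with the right sign and the right factor of \(2^{-1}\). This is where the unconventional \(2^{-1}\) in the interpretation of spider labels and the precise form of \textsc{(Gauss)} do real work. A secondary, qupit-specific complication absent from the qubit proof is bookkeeping the antipodes that the Euler decomposition introduces on the red spiders: one has to commute them through using Lemma~\ref{lem:antipode_phase} and its consequences, and check that the \(X\)-label sign flips they cause are consistent with the sign conventions in the graph-state diagram and in the definition of local complementation.
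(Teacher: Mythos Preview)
Your proposal is correct and follows essentially the same route as the paper: expand the H-edges via the Euler decomposition, fuse, apply the bigebra/copy machinery to cross the connections, and recombine into the new \(u\)--\(v\) H-edge of weight \(\gamma xy\). The paper's proof is a single diagrammatic chain, and the only textual remark is that the leftover scalar is a single \(\tikzfig{figures/equations/scalar_i}\)-type factor, eliminated by Lemma~\ref{lem:scalar_i_elim}. So the scalar bookkeeping is lighter than you anticipate: the Gauss-sum factors pair up cleanly and you do not need the full battery of Lemmas~\ref{lem:scalar_gauss_elim}, \ref{lem:scalar_omega_elim} and \textsc{(M-One)} that you list --- a single application of Lemma~\ref{lem:scalar_i_elim} suffices. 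Your identification of the antipode bookkeeping as a secondary issue is accurate but, again, in practice it is routine here.
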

\begin{lproof}
  \begin{equation*}
    \tikzfig{figures/equations/local_complementation_triangle_proof}
  \end{equation*}
  and we can eliminate the \(\tikzfig{figures/equations/scalar_i}\) scalar using
  lemma~\ref{lem:scalar_i_elim}.
\end{lproof}

\begin{lemma}
  \label{lem:local_complementation_tree}
  Let \(\Sigma\) be a \(\Z_p\)-weighted star graph on \(N \in \N\) vertices,
  i.e.\, it is a tree with \(N-1\) leaves. Order the vertices such that the
  first vertex is the only internal vertex, and it follows that all of the edges
  have weights \(\Sigma_{1w}\) for \(w\) ranging from \(2\) to \(N\). Then,
  \begin{equation*}
    \tikzfig{figures/equations/local_complementation_tree}
  \end{equation*}
  where \(\Sigma \overset{\gamma}{\star} 1\) is the graph obtained by adding an
  edge weighted in \(\gamma \Sigma_{1m}\Sigma{1n}\) between the wires connected
  to each pair of vertices \(m,n \neq 1\).
\end{lemma}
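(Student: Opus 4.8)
The plan is to prove this by induction on the number of vertices $N$. The cases $N \le 3$ are handled directly: for $N \le 2$ there are no pairs of leaves $m,n \neq 1$, so $\Sigma \overset{\gamma}{\star} 1 = \Sigma$ and the identity reduces, after absorbing the red phase into the central spider via lemma~\ref{lem:H_loop}, to a trivial rewrite; and $N=3$, where the path $2-1-3$ is complemented to a triangle, is precisely lemma~\ref{lem:local_complementation_triangle}, with the green vertex operators and multipliers deposited on the leaves being exactly those produced there.

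For the inductive step I would single out one leaf, say vertex $N$ (with edge weight $\Sigma_{1N}$), and use the Frobenius co-multiplication to fork the central green spider into a pair of green spiders joined by a plain wire: one branch retains the H-edge to leaf $N$, the other retains the H-edges to leaves $2,\dots,N-1$. The red $(0,\gamma)$ phase on vertex $1$ then has to be pushed across this fork; since it is a pure-Clifford, not a Pauli, red phase it does not simply copy, but using \textsc{(Spider)}, \textsc{(Bigebra)} and the weighted-Hadamard calculus of proposition~\ref{prop:weighted_hadamard} the forked configuration can be rearranged so that the $N$-leaf complementation is expressed as the combination of (i) the complementation of the sub-star on $\{1,2,\dots,N-1\}$, which the induction hypothesis rewrites to the complete graph on $\{2,\dots,N-1\}$ with edge weights $\gamma\Sigma_{1m}\Sigma_{1n}$, together with (ii) a ``fan'' of triangles pairing leaf $N$ with $1$ and with each of $2,\dots,N-1$, each discharged by lemma~\ref{lem:local_complementation_triangle} and contributing an H-edge of weight $\gamma\Sigma_{1N}\Sigma_{1n}$ between $N$ and $n$. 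Together these produce exactly the complete graph on $\{2,\dots,N\}$ with the claimed weights.

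It remains to collect the vertex operators: each application of the triangle lemma and of the induction hypothesis leaves a green quadratic phase (of the form controlled by $\gamma\Sigma_{1n}^2$) and a multiplier on each affected leaf, and these accumulate. I would merge them using green spider fusion, the field arithmetic of the multipliers (proposition~\ref{prop:multiplier}) and lemma~\ref{lem:multiplier_spider}, absorb residual H-loops via lemma~\ref{lem:H_loop}, and discharge any Gauss sums that appear along the way using lemma~\ref{lem:scalar_gauss_elim}, working throughout only up to elementary scalars. The main obstacle is precisely this bookkeeping: one must check that the nested inductions and the many triangle applications deposit vertex-operator phases and multipliers that recombine associatively to the single closed form prescribed in the statement, independently of the order in which leaves were peeled off — the diagrammatic moves themselves are routine given the lemmas already established.
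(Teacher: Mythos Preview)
Your overall induction-on-$N$ strategy, peeling off one leaf and invoking the hypothesis on the remaining sub-star, matches the paper's. The gap is in the inductive step itself: the decomposition you describe as ``(i) sub-star complementation plus (ii) a fan of triangles, each discharged by lemma~\ref{lem:local_complementation_triangle}'' does not go through as stated. Each invocation of the triangle lemma consumes a red Clifford phase $(0,\gamma)$ at the apex, but the diagram carries exactly one such phase, and step (i) already consumes it. After the induction hypothesis has fired, vertex $1$ no longer carries any red spider, so there is nothing left with which to apply lemma~\ref{lem:local_complementation_triangle} even once, let alone $N-2$ times. Your sentence ``using \textsc{(Spider)}, \textsc{(Bigebra)} and the weighted-Hadamard calculus \ldots\ the forked configuration can be rearranged'' is precisely where the real work lives, and you have not said how; the non-Pauli red phase does \emph{not} split across a green fork, and you acknowledge as much.

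The paper's mechanism is different at exactly this point: after applying the induction hypothesis to the sub-star on leaves $3,\dots,N$ (which absorbs the red phase and produces the complete graph on those leaves), it does \emph{not} use the triangle lemma again. Instead it pulls out the head of the complemented sub-star and applies the generalised bigebra law, lemma~\ref{lem:bigebra_mn}, in one shot to create the fan connecting the removed leaf to all the others; the resulting red spiders and weighted Hadamards are then pushed through each other (colour-changing and sign-flipping the weights) and fused. That single bigebra step is the missing idea in your sketch. A smaller point: your $N\le 2$ base case cites lemma~\ref{lem:H_loop}, which is about Hadamard self-loops and is not the relevant rewrite here.
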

\begin{lproof}
  The proof is by induction on the size \(N\) of the tree. Assume the lemma is
  true for any tree of size \(N-1\).
  First, bend all the wires with green phases into inputs:
  \begin{equation*}
    \tikzfig{figures/equations/local_complementation_tree_proof_bent}
  \end{equation*}
  Then,
  \begin{equation*}
    \tikzfig{figures/equations/local_complementation_tree_proof_1}
  \end{equation*}
  Then, we recognise the subtree with head \(1\) and leaves \(3\) to \(N\),
  which is thus a tree of size \(N-1\), to which we can apply the inductive
  hypothesis:
  \begin{equation*}
    \tikzfig{figures/equations/local_complementation_tree_proof_2}
  \end{equation*}
  In the last step, we have pulled out the ``head'' of the \(N-1\) tree before
  applying lemma~\ref{lem:bigebra_mn}, since the \(\Sigma_{1w}\)-weighted
  Hadamards on each edge to the head are left unchanged by the local
  complementation.
  
  Then, copy each \(\Sigma_{1w}\)-weighted Hadamard through the corresponding
  red spider on its right (which changes its colour and multiplies the weight by
  \(-1\)), and copy the multiplier and antipode through the green spiders below
  and to the left respectively. Fusing the resulting green spiders we get:
  \begin{equation*}
    \tikzfig{figures/equations/local_complementation_tree_proof_3}
  \end{equation*}
  Bending the inputs back to outputs completes the proof.
\end{lproof}

\begin{lproof}[of proposition~\ref{prop:local_complementation}]
  The proof of this proposition is rather cumbersome to write in standard
  \(\ZXp\). We give a sketch of the proof, and leave writing it out formally to
  future work, since it has a much clearer form in the scalable
  \(\ZXp\)-calculus.
  
  The idea is to split out the subtree of \(G\) with head \(w\) and leaves each
  of the neighbours of \(w\). Then, one applies
  lemma~\ref{lem:local_complementation_tree}, and adds the resulting edges into
  \(G\) using the addivitity of parallel H-boxes
  (proposition~\ref{prop:weighted_hadamard}).
\end{lproof}

\subsection{Main completeness proofs}

\begin{lproof}[of proposition~\ref{prop:c1_completeness}]
  The single-qupit Clifford group is generated by the invertible generators
  under sequential composition. It therefore suffices to show that the
  composition of either of the above diagrams with such a generator can be
  rewritten to the claimed form. Now, every rewrite rule, except \textsc{(Zero)}
  can be interpreted as an equality up to an invertible scalar, if we simply
  ignore the parts of the equations disconnected from both the inputs and the
  outputs. Thus, we can freely use any of these rules in our proof of
  normalisation up to invertible scalars, and any equation derivable without
  \textsc{(Zero)}.

  \emph{(Multipliers)}
  For the multipliers this is very straightforward: for any
  \(x \in \Z_d^*\),
  \begin{equation*}
    \tikzfig{figures/equations/c1_multiplier_forms}
  \end{equation*}

  \emph{(Hadamard)}
  We can ignore the weight since this can be extracted as
  a multiplier and then the previous proof applies. Then, for the first normal
  form,
  \begin{equation*}
    \tikzfig{figures/equations/c1_hadamard_forms_10}
  \end{equation*}
  we need to split subcases based on the value of \(v\). If \(v=0\), then we
  have
  \begin{equation*}
    \tikzfig{figures/equations/c1_hadamard_forms_11}
  \end{equation*}
  where in the last step we have used \textsc{(Shear)} and \textsc{(Spider)}
  several times to commute the Paulis on the right back into the leftmost red
  and green spiders.

  If \(v \neq 0\), then,
  \begin{equation*}
    \tikzfig{figures/equations/c1_hadamard_forms_12}
  \end{equation*}

  The second normal form can be done in one go.
  \begin{equation*}
    \tikzfig{figures/equations/c1_hadamard_forms_2}
  \end{equation*}

  \emph{(Rotations)}
  The red phase is once again very straightforward: let \(x,y \in \Z_d\), then,
  \begin{equation*}
    \tikzfig{figures/equations/c1_phase_forms_red}
  \end{equation*}

  The green phases are more involved. Firstly, note that we can ignore the
  multiplier, since we have:
  \begin{equation*}
    \tikzfig{figures/equations/c1_phase_forms_green_multipliers}
  \end{equation*}
  and \(x,y\) are arbitrary. Then, the Pauli part can be straightforwardly taken
  care of using \textsc{(Shear)} to commute it through to the green spider on
  the right. Finally, we can view \(\tikzfig{figures/equations/green_phase_0y}\)
  as the \(y\)-fold composition of
  \(\tikzfig{figures/equations/green_phase_01}\), so that we have reduced the
  proof to the normalisation of:
  \begin{equation*}
    \tikzfig{figures/equations/c1_phase_forms_green}
  \end{equation*}
  On the right (in dashed lines) we recognise the composition of a Hadamard and
  a normal form, which we have already shown can be normalised. This done, we
  obtain the composition of a red spider and a normal form, which we have also
  already shown to be normalisable. Thus, we are done.

  As for the second normal form, we have:
  \begin{equation*}
    \tikzfig{figures/equations/c1_phase_forms_green_2}
  \end{equation*}
  and we have again reduced to previously solved cases.

  Unicity follows from the fact that none of these forms are equivalent, and
  there are therefore \(p^3(p^2 - 1)\) distinct forms, which matches the
  cardinality of the single qupit Clifford group.
\end{lproof}

\begin{lproof}[of proposition~\ref{prop:GS+LC}]
  Assume w.l.o.g. that the diagram \(D\) has type \(0 \to n\) (i.e.\, we
  consider that the diagram has already been turned into a state via
  equation~\eqref{eq:choi_isomorphism}), and furthermore that it contains no
  multipliers (equivalently, they have been unpacked into the sugarless calculus
  using equation~\eqref{eq:multiplier_explicit}). Then, transform \(D\) as
  follows:
  \begin{enumerate}
  \item Use \textsc{(Colour)} to change every red spider into a green one,
    surrounded by Hadamards. The resulting diagram consists of only green
    spiders and Hadamards.
  \item Use \textsc{(G-Elim)} to add a green spider between any two subsequent
    Hadamards. The diagram now consists of only green spiders, connected either
    by plain edges or H-edges.
  \item Use \textsc{(Spider)} to fuse any two spiders connected by plain edges,
    to eliminate any loops, and use lemma~\ref{lem:H_loop} to eliminate any
    H-edge loops. The resulting diagram contains no loops, and furthermore
    spiders are connected only by H-edges.
  \item Use the following rule from proposition~\ref{prop:weighted_hadamard}:
    \begin{equation}
      \tikzfig{figures/equations/hadamard_sum}
    \end{equation}
    to fuse all H-edges between two spiders into a single weighted H-edge or no
    edge.
  \item Use \textsc{(G-Elim)} and proposition~\ref{prop:weighted_hadamard} to
    obtain
    \begin{equation}
      \tikzfig{figures/equations/spider_graph_split}\quad,
    \end{equation}
    which allows one to split any spider connected to more than one output into
    several green spiders, each connected to exactly one output.
  \end{enumerate}
  The resulting diagram is nearly in GS+LC form, except it contains ``internal''
  vertices which are not connected to an output. Diagrams in this form have been
  called \emph{graph-like} in the literature. However, we can eliminate these
  internal vertices as follows. Let \(u\) be such a vertex,, and let \((a,b)\)
  be  its phase. If \(u\) has no neighbours, then it is an elementary scalar
  diagram which we can ignore.

  Otherwise, assume \(u\) has a neighbour \(v\) which is also connected to an
  output of the diagram. Then, we can eliminate the phase at \(u\) through the
  following manipulation. Throughout, we use local dilations
  (proposition~\ref{prop:local_scaling}) and local complementations
  (corollary~\ref{cor:GS_red_rotation}) on the rest of the graph freely. We can
  ignore the green spiders introduced by corollary~\ref{cor:GS_red_rotation}
  since they are absorbed into the phases of the graph vertices, and track only
  the red spiders introduced which are the only ``problematic'' spiders that
  take us away from GS+LC form.
  Then, we can eliminate vertex \(u\) as follows:
  \begin{equation*}
    \tikzfig{figures/equations/eliminate_internal}
  \end{equation*}
  where in the second step we have used the Euler decomposition of the Hadamard
  (lemma~\ref{lem:hadamard_euler}). The resulting green units are connected via
  a normal wire to a single other green spider, to which they can be fused
  using \textsc{(Spider)}.

  We still need to treat the case where \(u\) has only neighbours which are not
  connected to an output. In that case, letting \(v\) be any neighbour of \(u\),
  we can still eliminate \(u\) using a very similar argument and relying on the
  normal form for single-qupit operators. The first step is analogous to the
  previous case:
  \begin{equation*}
    \tikzfig{figures/equations/eliminate_internal_2}
  \end{equation*}
  Normalising the \(\mathscr{C}_1\)-diagram on \(v\) using
  proposition~\ref{prop:c1_completeness}, we obtain one of two forms which are
  essentially equivalent since we have
  \begin{equation*}
    \tikzfig{figures/equations/eliminate_internal_3}
    \quad \qand \quad
    \tikzfig{figures/equations/eliminate_internal_4}
  \end{equation*}
  for some \(u,v,s,t \in \Z_p\) and \(w \in \Z_p^*\). Then, pulling the
  multiplier and red spider into a local scaling and local complementation
  acting on the rest of the graph, we get:
  \begin{equation*}
    \tikzfig{figures/equations/eliminate_internal_5}
  \end{equation*}

  As a result, by the end of this procedure, we have eliminated \(u\) without
  reintroducing any new internal vertices, so we can repeat this process to
  eliminate each internal vertex in the diagram. The resulting diagram will then
  be in GS+LC form.
\end{lproof}

\begin{lproof}[of proposition~\ref{prop:rGS+LC}]
  By proposition~\ref{prop:GS+LC}, every stabiliser diagram is equivalent to
  some GS+LC diagram, and furthermore, the vertex operators can be brought into
  the normal form of proposition~\ref{prop:c1_completeness}. Using
  proposition~\ref{prop:local_scaling}, the multiplier part of this normal form
  can immediately be absorbed into a local scaling of the graph.

  We now need to eliminate the leftmost red spider of the normal form. Consider
  the vertex operator acting on vertex \(u\) of the graph. If \(u\) has no
  neighbours, we can use lemma~\ref{lem:unit_rotation_elim} to eliminate the red
  spider (up to a scalar in \(\mathbb{G}\)). Otherwise, \(u\) has at least one
  neighbour. In this case, we can use lemma~\ref{cor:GS_red_rotation} to
  ``copy'' the red spider into a green spider on each neighbour of \(u\).

  As in the qubit case, the set \(R\) is not stable under pre-composition by a
  green spider. However, by the proof of proposition~\ref{prop:c1_completeness},
  we see that, after normalisation, this operation applied to any element of
  \(R\) does not increase the number of red spiders. Thus, applying this
  procedure at most \(2\abs{V}\) times maps every vertex operator to one of the
  forms in \(R\).

  In order to prove the second part of the simplification, assume that the
  preceding procedure has been applied to the diagram, so that all vertex
  operators are in \(R\). Let \(u,v\) be neighbouring vertices such that the
  vertex operators of both \(u\) and \(v\) contain a red spider. We are going to
  use a sequence of local complementations at \(u\) and \(v\) to eliminate the
  red spiders on both \(u,v\). Since we have just shown that the green spider
  these operations enact on neighbours can be corrected and the vertex operators
  brought back to \(R\), and that local complementations and scalings preserve
  the GS+LC form, we ignore this action and consider only the part of the
  diagram connected to \(u\) and \(v\): 
  \begin{equation*}
    \tikzfig{figures/equations/rGS+LC_paired_reds}
  \end{equation*}
  Then, normalising the vertex operators, we obtain
  \begin{equation*}
    \tikzfig{figures/equations/rGS+LC_paired_reds_2}
  \end{equation*}
  The red spiders can be eliminated using the previous strategy for mapping the
  vertex operators to \(R\), and so we have removed the pair of neighbouring red
  spiders. Repeating this process for each such pair renders a diagram in rGS+LC
  form.
\end{lproof}

\begin{lproof}[of proposition~\ref{prop:rGS+LC_simple}]
  Suppose this is not the case, i.e. there are vertices \(u,v\) such that \(u\)
  has a red vertex operator in \(A\) but not \(B\), \(v\) has a red vertex
  operator in \(B\) but not \(A\), and \(u,v\) are neighbours in \(A\). Then we
  perform a manipulation entirely analogous to the proof of
  proposition~\ref{prop:rGS+LC} in the diagram \(A\):
  \begin{equation*}
    \tikzfig{figures/equations/rGS+LC_simple}
  \end{equation*}
  Then, after normalisation (proposition~\ref{prop:c1_completeness}) and using
  proposition~\ref{prop:rGS+LC} to bring the diagram back to rGS+LC form, the
  vertex operator for \(u\) no longer contains a red vertex in \(A\) and the
  vertex operator for \(v\) contains a red vertex. Since we can do this for any
  pair of such red vertices in \(A\) and \(B\), it is clear that we can can
  simplify the pair of diagrams.
\end{lproof}

\begin{lproof}[of theorem~\ref{thm:scalar_completeness}]
  Let \(A \in \ZXp[0,0]\), then \(A\) consists of a tensor product of connected
  scalar diagrams. We first show that \(A\) can be rewritten to a tensor product
  of elementary scalars. If this is not already the case, pick some connected
  scalar diagram \(A'\) contained in \(A\). This diagram \(A'\) contains at
  least one red or green spider, from which we can extract a unit:
  \begin{equation}
    \tikzfig{figures/equations/scalar_normalisation_units}
  \end{equation}
  where \(B \in \ZXp[0,1]\). By theorem~\ref{thm:comp} and the normal form for
  the single-qupit Clifford group (proposition~\ref{prop:c1_completeness}), we
  can rewrite \(B\) to one of the forms
  \begin{equation}
    \tikzfig{figures/equations/scalar_normalisation_unit_forms}
  \end{equation}
  since we know that these forms cover all single qupit Clifford states.
  Simplifying a bit, we therefore only need to consider the following forms:
  \begin{equation}
    \label{eq:scalar_normalisation_all_scalars}
    \tikzfig{figures/equations/scalar_normalisation_all_scalars}
  \end{equation}
  The first of these is already an elementary scalar, and if \(s\) or \(t\) is
  \(0\), the second is also. If \(s=0\), the third diagram is also elementary.
  Thus, we can assume that \(t\neq 0 \neq s\) and treat both middle diagrams in
  one go:
  \begin{equation}
    \tikzfig{figures/equations/scalar_normalisation_green}
  \end{equation}
  and both of these diagrams can be normalised using \textsc{(M-Elim)} and
  \textsc{(Shear)}.
  Finally, the last scalar from
  equation~\eqref{eq:scalar_normalisation_all_scalars} can be rewritten to:
  \begin{equation}
    \tikzfig{figures/equations/scalar_normalisation_last}.
  \end{equation}

  We have shown that \(A\) can be rewritten to a tensor product of elementary
  scalars. If this tensor product contains a zero diagram, then the whole
  diagram equal to the zero diagram by proposition~\ref{prop:zero_normal_forms}.
  Otherwise, \(A\) can further be rewritten to a scalar in normal form using the
  multiplication rules for elementary scalars, which are given by
  \textsc{(M-One)}, lemmas~\ref{lem:scalar_omega_multiplication} and
  \ref{lem:scalar_gauss_multiplication}, up to applying \textsc{(Gauss)} to
  decompose quadratic scalars.
\end{lproof}

\begin{lproof}[of theorem~\ref{lem:bipartition}]
	
	First, $D$ can be put in GS-LC form. This allows to decompose $D$ in the following form:
	
	\begin{center}
		\tikzfig{figures/bipartite1}
	\end{center}
	
	Where $X$ and $Y$ are compositions of E-gates constituting the edges within the same partition together with local Clifford gates, and $G$ is a bipartite graph-state gathering all vertices having a neighbor in a different partition and all corresponding edges. Thus, we see that without loss of generality we can restrict to the case where $D$ is a bipartite graph-state.
	
	Assuming that $D$ is a bipartite graph-state with partitions of size $n$ and $m$ with $n\leq m$ we show how to find $A$ and $B$ such that:
	
	\begin{center}
		\tikzfig{figures/bipartite2}
	\end{center}
	
	Let's start with $n$ Bell's pair. Plugging Generalized Hadamard boxes we get the following graph state:
	
	\begin{center}
		\tikzfig{figures/bipartite3}
	\end{center}
	
	Now, let's say we want to add an edge with weight $w$ between the vertices $x$ and $y$ of this graph, to do so we first add an edge of weight $1$ between $x'$ and $y$ and apply the right local complementation to $x'$ to obtain the desired edge. Finally we clean up all the unwanted edges inside the partition by using the conveniant $E$-gates. The process is sketched in the following diagrams:
	
	\begin{center}
		\tikzfig{figures/bipartite4}
	\end{center}
	
	At the end, when all desired edges have been obtained, we still have to take care of the edges between $x$ and $x'$, but since we only kept vertices which are linked to the other partition in the final $D$, we know there is a neighbor $t$ that can be use to take care of this edge in the same way we used $x'$ before, first we add an edge between $t$ and $x'$, then we apply the needed local complementation on $t$ and finally we clean up the edges internal to the partitions.
	
	This give up the wanted bipartite graph-state $D$ which conclude the proof.
	
\end{lproof}

\begin{lproof}[of \ref{prop:marked}]
	Let's assume that there is a marked vertices $v$ in $C$ which is not marked in $D$. If $v$ is isolated in $C$ and not in $D$ then the two interpretations are obviously different. The same goes if $v$ is isolated in both diagram since then we can clearly differentiate the two corresponding states. So the only case remaining is when $v$ has neighbors in both diagrams.
	
	We then apply to both diagram the same unitary, first a green Clifford map on the chosen vertex and appropriate gates for each edges in the neighborhood of $v$ in $C$. In $C$ we have:
	
	\begin{center}
		\tikzfig{figures/testproof}
	\end{center}
	
	And in $D$:
	
	\begin{center}
		\tikzfig{figures/testproof1}
	\end{center}
	
	Here we compute up to local Cliffords on the wires. Thus, we see that in $C$, $v$ is now disconnected while it isn't in $D$. Then $\interp{D}\neq \interp{C}$.
	
\end{lproof}

\begin{lproof}[of Theorem \ref{thm:comp}]
	We only show it for state by map-state duality.
	
	First we rewrite $A$ and $B$ into a simplified pair of rGS-LC diagrams $A'$ and $B'$. So, the marked vertices are the same. It only remain to show that the graph state part with . But if we apply the gates corresponding to the edges in $A'$ to both diagrams, then we obtain a completely disconected graph in $A'$. And the two interpretation can only be equals if $B'$ is also now completely disconnected. It then follows that the edges in both diagrams were the same.
	
	\begin{center}
		\tikzfig{completenessproof}
	\end{center}
	
	Finally, the only way for the interpretations to be the same is that all phases in the green vertices are equals, and then $A'$ can be rewritten in $B'$. So \(\Stab
	\vdash A = B\).
\end{lproof}

\end{document}